\theoremstyle{definition}
\newtheorem{definitionInt}{Definition}[section]
\newtheorem{attackInt}{Attack}
\newtheorem{exampleInt}{Example}[section]
\newtheorem{decisionproblem}{Problem}
\theoremstyle{plain}
\newtheorem{theorem}{Theorem}[section]
\newtheorem{proposition}{Proposition}[section]
\newtheorem{fact}{Fact}[section]
\pgfplotsset{compat=newest}
\newcommand{\thickhline}{%
    \noalign {\ifnum 0=`}\fi \hrule height 1.5pt
    \futurelet \reserved@a \@xhline
}
\newcolumntype{"}{@{\hskip\tabcolsep\vrule width 1.5pt\hskip\tabcolsep}}
\newcommand{\xRightarrow}[2][]{\ext@arrow 0359\Rightarrowfill@{#1}{#2}}
 \newenvironment{example}
 {\begin{exampleInt} } 
 { $\hfill \blacksquare$\end{exampleInt} }
  \newenvironment{definition}
 {\begin{definitionInt} } 
 { $\hfill \square$\end{definitionInt} }
   \newenvironment{problem}
 {\begin{decisionproblem} } 
 { $\hfill \square$\end{decisionproblem} }
 \newcommand{\concat}{\cdot}
 \newcommand{\confidentiality}{data confidentiality}
  \newcommand{\toolText}{Angerona}
 \newcommand{\tool}{\textsc{\toolText{}}}
 \newcommand{\problog}{\textsc{ProbLog}}
 \newcommand{\atklog}{\textsc{AtkLog}}
 \newcommand{\datalog}{\textsc{Datalog}}
  \newcommand{\acf}{PDP}
  \newcommand{\atom}[2]{{#1}\mathalpha{::}{#2}}
\newcommand{\shortVersion}{false} %set to true to generate the short paper version
\newcommand{\techReportAppendix}[1]{%
    \IfEqCase{\shortVersion}{%
        {false}{Appendix \ref{#1}}%
        {true}{\cite{technicalReport}}%
        % you can add more cases here as desired
    }
}%
\newcommand{\techReportAppendices}[2]{%
    \IfEqCase{\shortVersion}{%
        {false}{Appendices \ref{#1}--\ref{#2}}%
        {true}{\cite{technicalReport}}%
        % you can add more cases here as desired
    }
}%
\newcommand{\onlyTechReport}[1]{%
    \IfEqCase{\shortVersion}{%
        {false}{#1}%
        % you can add more cases here as desired
    }
}%
\newcommand{\onlyShortVersion}[1]{%
    \IfEqCase{\shortVersion}{%
        {true}{#1}%
        % you can add more cases here as desired
    }
}%
\newcommand{\para}[1]{\smallskip\noindent{\bf #1.}\hspace{2pt}}%{\smallskip\noindent{\bf #1.}\hspace{2pt}}
  \mathchardef\mathcomma\mathcode`\,
\title{
Securing  Databases from Probabilistic Inference
 }
\author{\IEEEauthorblockN{Marco Guarnieri}
\IEEEauthorblockA{
	 Institute of Information Security\\
       Department of Computer Science\\
       ETH Zurich, Switzerland\\
       {\small {\tt marco.guarnieri@inf.ethz.ch}}}
\and
\IEEEauthorblockN{Srdjan Marinovic}
\IEEEauthorblockA{The Wireless Registry, Inc.\\
		Washington DC, US\\
       	{\small {\tt srdjan@wirelessregistry.com}}}
\and
\IEEEauthorblockN{David Basin}
\IEEEauthorblockA{Institute of Information Security\\
       Department of Computer Science\\
       ETH Zurich, Switzerland\\
       {\small {\tt basin@inf.ethz.ch}}}
}
\begin{document}

\maketitle

\begin{abstract}
Databases can leak confidential information when users combine query results with probabilistic data dependencies and prior knowledge.
Current research offers mechanisms that  either handle a limited class of dependencies or lack tractable enforcement algorithms. 
We propose a foundation for Database Inference Control based on \problog{}, a probabilistic logic programming language. 
We leverage this foundation to develop \tool{}, a provably secure enforcement mechanism that prevents information leakage in the presence of probabilistic dependencies. 
We then provide a tractable inference algorithm for a practically relevant fragment of \problog{}.
We empirically evaluate \tool{}'s performance showing that it scales to relevant security-critical problems.
\end{abstract}

\section{Introduction}

Protecting the confidentiality of sensitive data stored in data\-bases requires protection from both \textit{direct} and \textit{indirect} access.
The former happens when a user observes query results, and the latter happens when a user infers sensitive information by combining results with  {external} information, such as data dependencies or prior knowledge. 
Controlling indirect access to data is often referred to as \emph{Data\-base Inference Control}~\cite{farkas2002inference} (DBIC).
This topic has attracted considerable attention in recent years, and current research considers different sources of external information, such as the data\-base schema~\cite{chen2007protection,hale1997catalytic,hinke1997protecting,su1991controlling,qian1993detection,su1987data,guarnieri2016strong}, the system's semantics~\cite{guarnieri2016strong}, statistical information~\cite{dobkin1979secure, chin1982auditing, domingo2002inference, adam1989security, denning1980secure}, exceptions~\cite{guarnieri2016strong}, error messages~\cite{Kabra:2006:RIL:1142473.1142489}, user-defined functions~\cite{Kabra:2006:RIL:1142473.1142489},  and data dependencies~\cite{bonatti1995foundations,toland2010inference,brodsky2000secure,yip1998data,morgenstern1987security,morgenstern1988controlling, thuraisingham1987security}.

An important and relevant class of data dependencies are probabilistic dependencies, such as those found in genomics~\cite{humbert2013addressing,lauritzen2003graphical, koller2009probabilistic},  social networks~\cite{he2006inferring}, and location tracking~\cite{mathew2012predicting}. 
Attackers can exploit these dependencies to infer sensitive information with high confidence.
To effectively prevent probabilistic inferences, DBIC mechanisms should
\begin{inparaenum}[(1)]
\item support a large class of probabilistic dependencies, and
\item  have tractable runtime performance.
\end{inparaenum}
The former is needed to express different attacker models. The latter is necessary for mechanisms to scale to real-world databases.

Most existing DBIC mechanisms support only precise data dependencies~\cite{bonatti1995foundations,toland2010inference,brodsky2000secure,thuraisingham1987security,yip1998data} or just limited classes of probabilistic dependencies~\cite{morgenstern1987security,morgenstern1988controlling,katos2011framework,chen2007protection,chen2006database,wiese2010keeping,
hale1997catalytic}.
As a result, they cannot reason about the complex probabilistic dependencies that exist in many realistic settings.
Mardziel et al.'s mechanism~\cite{mardziel2013dynamic} instead supports arbitrary probabilistic dependencies, but no 
complexity bounds have been established and  
their algorithm appears to be intractable.

\para{Contributions}
We develop a tractable and practically useful DBIC mechanism based on probabilistic logic programming. 

First, we develop \atklog{}, a language for formalizing users' beliefs and how they evolve while interacting with the system.
\atklog{} builds on \problog{}~\cite{de2007problog,fierens2015inference, de2015probabilistic}, a state-of-the-art probabilistic extension of \datalog{}, and extends its semantics by building on three key ideas from~\cite{clarkson2005belief,mardziel2013dynamic,kenthapadi2005simulatable}: (1) users' beliefs can be represented as probability distributions, (2) belief revision can be performed by conditioning the probability distribution based on the users' observations, and (3) rejecting queries as insecure may leak information. 
By combining \datalog{} with probabilistic models and belief revision based on users' knowledge,  \atklog{} provides a natural and expressive language to model users' beliefs  and thereby serves as a foundation for DBIC in the presence of probabilistic inferences.

Second, we identify acyclic \problog{} programs, a class of programs where probabilistic inference's data complexity is \textsc{PTime}.
We precisely characterize this class and  develop a dedicated inference engine.
Since \problog{}'s inference is intractable in general, we see acyclic programs as an {essential} building block to effectively using \atklog{} for DBIC.

Finally,  we present  \tool{}\footnote{\toolText{} is the Roman goddess of silence and secrecy, and 
She is the keeper of the city's sacred, and secret, name.}, a novel DBIC mechanism that secures databases against probabilistic inferences.
We prove that \tool{} is secure with respect to any \atklog{}-attacker. 
In contrast to existing mechanisms, \tool{} provides precise tractability and completeness guarantees for a practically relevant class of  attackers.
We empirically show that \tool{} scales to relevant problems of interest.

\para{Structure}
In \S\ref{sect:motivating:example}, we illustrate the security risks associated with probabilistic data dependencies.
In \S\ref{sect:system:model}, we present our system model, which we formalize in \S\ref{sect:formal:model}.
We introduce \atklog{} in \S\ref{sect:language} and in \S\ref{sect:inference} we present our inference engine for acyclic programs.
In  \S\ref{sect:enforcement}, we present  \tool{}.
We discuss related work in  \S\ref{sect:related:work} and  draw conclusions in~\S\ref{sect:conclusion}.
\onlyShortVersion{
An extended version of this paper  with proofs of all results is available at~\cite{technicalReport}, whereas a prototype of our enforcement mechanism is available at~\cite{prototype}.
}
\onlyTechReport{
A prototype of our enforcement mechanism is available at~\cite{prototype}.
}

\section{Motivating Example}\label{sect:motivating:example}\label{sect:motivating:example:medical}

Hospitals and medical research centres store  large quantities of health-related information for purposes ranging from diagnosis to research.
As this information is extremely sensitive, the databases used must be carefully secured~\cite{hipaa,eulaw}.
This task is, however, challenging due to the dependencies between health-related data items.
For instance, information about someone's hereditary diseases or genome can  be inferred from information about her relatives. 
Even seemingly non-sensitive information, such as someone's job or habits, may leak sensitive health-related information such as her predisposition to diseases.
Most of these dependencies can be formalized using probabilistic models developed by medical researchers.

Consider a database storing information about the smoking habits of patients and whether they have been diagnosed with lung cancer. 
The database contains the tables $\mathit{patient}$, $\mathit{smokes}$, $\mathit{cancer}$, $\mathit{father}$, and $\mathit{mother}$.
The first table contains all patients, the second contains all regular smokers, the third contains all diagnosed patients, and the last two associate patients with their parents.
Now consider the following probabilistic model:
\begin{inparaenum}[(a)]
\item every patient has a $5\%$ chance of developing cancer,
\item for each parent with cancer, the likelihood that a child develops cancer increases by $15\%$, and
\item if a patient smokes regularly, his probability of developing cancer increases by $25\%$.
\end{inparaenum}
We intentionally work with a simple model since, despite its simplicity, it illustrates the challenges of securing data with probabilistic dependencies. 
We refer the reader to medical research for more realistic probabilistic models~\cite{pmid7895211,pmid23534801}.

The database is shared between different medical researchers, each conducting a research study on a subset of the patients.
All researchers have access to the $\mathit{patient}$, $\mathit{smokes}$, $\mathit{father}$, and $\mathit{mother}$ tables.
Each researcher, however, has access only to the subset of the $\mathit{cancer}$ table associated with the patients that opted-in to his research study.
We want to protect our database against a malicious  researcher whose goal is to infer the health status of patients not participating in the study.
This is challenging since restricting direct access to the $\mathit{cancer}$ table is insufficient.
Sensitive information may be leaked even by queries involving only authorized data.
For instance, the attacker may know that the patient $\mathit{Carl}$, which has not disclosed his health status, smokes regularly.
From this, he can infer that $\mathit{Carl}$'s probability of  developing lung cancer is, at least, $30\%$.
If, additionally, $\mathit{Carl}$'s parents opted-in to the research study and both have cancer, the attacker can directly infer that the probability of $\mathit{Carl}$ developing lung cancer is $60\%$ by accessing his parents' information.

Security mechanisms that ignore such probabilistic dependencies allow attackers to infer sensitive information.
An alternative is to use standard DBIC mechanisms and encode all dependencies as precise, non-probabilistic, dependencies.
This, however, would result in an unusable system.
Medical researchers, even honest ones, would  be able to access the health-related status only of those patients whose relatives also opted-in to the user study, independently of the amount of leaked information, which may be negligible.
Hence, to secure the database and retain usability, it is essential to reason about the probabilistic dependencies.

\section{System Model}\label{sect:system:model}

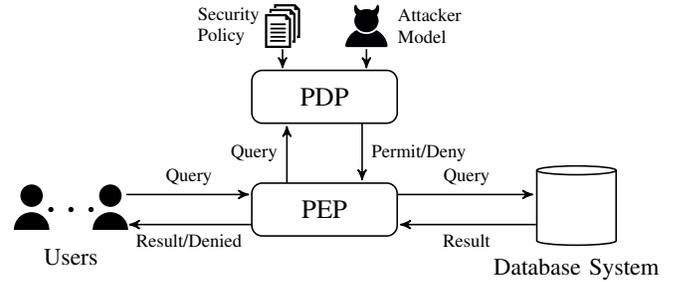
\begin{figure}
\centering

\begin{tikzpicture}[->,>=stealth',shorten >=1pt,auto, semithick]

%%% PDP
\node[fill=none,draw=black, shape = rectangle, rounded corners, inner sep=0pt, outer sep=0pt, minimum height = 20pt, minimum width = 55pt]  (pep) at (0,0) {PEP};

\node[fill=none,draw=black, shape = rectangle, rounded corners, inner sep=0pt, outer sep=0pt, minimum height = 20pt, minimum width = 55pt]  (pdp) at ($(pep) + (0,+1.5)$) {PDP};

%%%% SECURITY POLICY
\node[anchor = north west] (accessControlPolicy) at ($(pdp.north west)+(0,1)$) {\pgfuseimage{files}};
\node[anchor = east,text width=1cm, font=\scriptsize] (accessControlPolicyLabel) at ($(accessControlPolicy)$) {Security Policy};
% 
%
%%%%% PROBABILISTIC MODEL 
%%
\node[anchor = north east] (probabilisticModel) at ($(pdp.north east) + (0,1)$) {\pgfuseimage{dice}}; 
\node[anchor = west,text width=1cm, font=\scriptsize] (probabilisticModelLabel) at ($(probabilisticModel.west) + (.7,0)$) {Attacker Model};
\draw[->, black] ($(probabilisticModel.south) + (0,.1)$) -- ($(probabilisticModel) + (0,-.6)$);
\draw[->, black]  let \p1 = (probabilisticModel), \p2 = (accessControlPolicy), \p3 = ($(probabilisticModel) + (0,-.6)$) in ($(accessControlPolicy.south) + (0,.1)$) -- (\x2,\y3);

%%% LAST USER
\node[shape=circle,fill=none,inner sep=0pt, minimum size=2pt, anchor=south, outer sep=0pt, left = 52.5pt of pep](hiddenNeck1)  {}; 
\node[shape=circle,fill=none,inner sep=0pt, minimum size=2pt, anchor=south, outer sep=0pt, left = 47.5pt of pep](hiddenNeck3)  {}; 
\node[shape=semicircle,fill=black,inner sep=3pt, anchor=south, outer sep=0pt, below =0pt of hiddenNeck1](body1)  {}; 
\node[shape=circle,fill=black,inner sep=3pt, anchor=south, outer sep=0pt, above=0pt of hiddenNeck1](head1)  {};
%%%% DOTS
\node[fill = black, shape = circle,inner sep=0pt,minimum size=2pt,  left = 5pt of hiddenNeck1] (dot1) {};

\node[fill = black, shape = circle,inner sep=0pt,minimum size=2pt,  left = 5pt of dot1] (dot2) {};

\node[fill = black, shape = circle,inner sep=0pt,minimum size=2pt,  left = 5pt of dot2] (dot3) {};
%%% FIRST USER
\node[shape=circle,fill=none,inner sep=0pt, minimum size=2pt, anchor=south, outer sep=0pt, left = 5pt of dot3](hiddenNeck2)  {};
\node[shape=semicircle,fill=black,inner sep=3pt, anchor=south, outer sep=0pt, below = 0pt of hiddenNeck2](body2)  {}; 
\node[shape=circle,fill=black,inner sep=3pt, anchor=south, outer sep=0pt, above=0pt of hiddenNeck2](head2)  {};

%%% USER LABEL
   \node[below = 10pt of dot2] (userLabel) {{\small{Users}}};

%%% DATABASE
\node[shape = cylinder, shape border rotate=90, aspect = 0.5, fill=none,draw=black,  minimum height = 30pt, minimum width = 30pt, right = 52.5pt of pep ]  (db) {};
%%% DATABASE LABEL
   \node[below = 1pt of db] (databaseLabel) {\small{Database System}};

\path ($(hiddenNeck3.east) + (0,.2)$) edge[] node[above] { {\scriptsize Query}} ($(pep.west) + (0,.2)$);   
\path ($(pep.west) + (0,-.2)$) edge[] node[below] { {\scriptsize Result/Denied}}  ($(hiddenNeck3.east) + (0,-.2)$) ;  

\path ($(pep.east) + (0,.2)$) edge[] node[above] { {\scriptsize Query}} ($(db.west) + (0,.2)$);   
\path ($(db.west) + (0,-.2)$) edge[] node[below] { {\scriptsize Result}}  ($(pep.east) + (0,-.2)$);  

\path ($(pep.north) + (-.5,0)$) edge[] node[left] { {\scriptsize Query}} ($(pdp.south) + (-.5,0)$);   
\path ($(pdp.south) + (.5,0)$)  edge[] node[right] { {\scriptsize Permit/Deny}} ($(pep.north) + (+.5,0)$);  

\end{tikzpicture}
\caption{System model.}
\label{fig:systemModel}
\end{figure}

Figure~\ref{fig:systemModel} depicts our system model.
Users interact with two components: a database system and an inference control system, which consists of a Policy Decision Point (PDP) and a Policy Enforcement Point (PEP).
We assume that all communication between users and the components and between the components themselves is over secure channels.

\para{Database System}
The database system manages the system's data.
Its state is a mapping from tables to sets of tuples.

\para{Users}
Each user has a unique account used to retrieve information from the database system by issuing \texttt{SELECT} commands.
Note that these commands do not change the database state. 
This reflects settings where users have only read-access to a database.
Each command is checked by the inference control system and is executed if and only if the command is authorized by the security policy.

\para{Security policy}
The system's security policy consists of a set of \emph{negative permissions} specifying information to be kept secret.
These permissions express bounds on users' beliefs, formalized as probability distributions, about the actual database content. 
Negative permissions are formalized using commands of the form $\mathtt{SECRET}\ q\ \mathtt{FOR}\ u\  \mathtt{THRESHOLD}\ l$, where $q$ is a query, $u$ is a user identifier, and $l$ is a rational number, $0 \leq l \leq 1$. 
This represents the requirement that 
``A user $u$'s belief in the result of $q$ must be less than $l$.''
Namely, the probability assigned by $u$'s belief to $q$'s result must be less than $l$. 
Requirements like ``A user $u$ is not authorized to know the result of $q$'' can be formalized as $\mathtt{SECRET}\ q\ \mathtt{FOR}\ u\  \mathtt{THRESHOLD}\ 1$.
The system also supports commands of the form $\mathtt{SECRET}\ q\ \mathtt{FOR}\ \mathtt{USERS}\ \mathtt{NOT\ IN}\ \{u_1, \ldots, u_n\}\ \mathtt{THRESHOLD}\ l$, which represents the requirement that 
``For all users $u \not\in  \{u_1, \ldots, u_n\}$, $u$'s belief in the result of $q$ must be less than $l$.''

\para{Attacker}
An attacker is a system user with an assigned user account, and each user is a potential attacker.
An attacker's goal is to violate the security policy, that is, to read or infer information about one of the \texttt{SECRET}s with a probability of at least the given threshold.
An attacker can interact with the system and observe its behaviour in response to his commands.
Furthermore, he can reason about this information and infer information by exploiting domain-specific relationships between  data items.
We assume that  attackers know the database schema as well as any integrity constraints on it.

\para{Attacker Model}
An attacker model represents each user's
 initial beliefs about the actual database state and 
 how he updates his beliefs by  interacting with the system and observing its behaviour in response to his commands.
These beliefs may reflect the attacker's knowledge of domain-specific relationships between the data items or prior knowledge.

\para{Inference Control System}
The inference control system protects the confidentiality of database data. 
It consists of a PEP and a PDP, configured with a security policy $P$ and an attacker model $\mathit{ATK}$.
For each user, the inference control system keeps track of the user's beliefs according to $\mathit{ATK}$.

The system intercepts all commands issued by the users. 
When a user $u$ issues a command $c$, the inference control system decides whether $u$ is authorized to execute $c$.
If $c$ complies with the policy, i.e., 
the users' beliefs still satisfy $P$ even after executing $c$, then  the system forwards the command to the database, which executes $c$ and returns its result to $u$. 
Otherwise, it raises a \emph{security exception} and rejects~$c$.

\section{Formal Model}\label{sect:formal:model}

\subsection{Database Model}\label{sect:database:model}

We introduce here background and notation for data\-bases and queries.
Our formalization follows~\cite{abiteboul1995foundations}.

Let ${\cal R}$ be a  countably infinite set representing identifiers of relation schemas.
A \emph{data\-base schema} $D$ is a pair $\langle \Sigma, \mathbf{dom} \rangle$, where $\Sigma$ is a first-order signature and $\mathbf{dom}$ is a fixed domain.
For simplicity, we consider just a single domain.
Extensions to the many-sorted case are straightforward~\cite{abiteboul1995foundations}.
The signature $\Sigma$ consists of a set of \emph{relation schemas} $R \in {\cal R}$, each schema with arity $|R|$, and one constant symbol for each constant in $\mathbf{dom}$.
We interpret constants by themselves in the semantics. 

A \emph{state} $s$ of $D$ is a finite $\Sigma$-structure with domain $\mathbf{dom}$ that interprets each relation schema $R$ by an $|R|$-ary relation over $\mathbf{dom}$.
We denote  by $\Omega_{D}$ the set of all states.
Given a schema $R \in D$, $s(R)$ denotes the tuples that belong to (the interpretation of) $R$ in the state $s$.
We assume that the domain $\mathbf{dom}$ is finite, as is standard for many application areas combining databases and probabilistic reasoning~\cite{suciu2011probabilistic,koller2009probabilistic,getoor2007introduction, de2015probabilistic}. 
In this case, the set of all  states $\Omega_D$ is finite.

A \emph{query} $q$ over a schema $D$ is of the form $\{\overline{x}\,|\,\phi \}$, where $\overline{x}$ is a sequence of variables, $\phi$ is a relational calculus formula over $D$, and $\phi$'s free variables are those in $\overline{x}$.
A \emph{boolean query} is a query $\{\,|\, \phi\}$, also written  as $\phi$, where $\phi$ is a sentence. 
The result of executing a query $q$ on a state $s$, denoted by $[q]^{s}$, is a boolean value in $\{\top,\bot\}$, if $q$ is a boolean query, or a set of tuples otherwise.
Furthermore, given a sentence $\phi$, $\llbracket \phi \rrbracket$ denotes the set  $\{ s \in \Omega_D \mid [\phi]^s = \top \}$.
We denote by $\mathit{RC}$ (respectively $\mathit{RC}_{\mathit{bool}}$) the set of all  relational calculus queries (respectively sentences).
We consider only \emph{domain-independent queries} and we employ the standard relational calculus semantics~\cite{abiteboul1995foundations}.

An \emph{integrity constraint over $D$} is a  relational calculus sentence $\gamma$ over $D$.
Given a state $s$, we say that \emph{$s$ satisfies the constraint $\gamma$} iff $[\gamma]^{s} = \top$.
Given a set of constraints $\Gamma$, $\Omega_{D}^{\Gamma}$ denotes the set of all states satisfying the constraints in $\Gamma$, i.e., $\Omega_{D}^{\Gamma} = \{s \in \Omega_{D}\,|\, \bigwedge_{\gamma \in  \Gamma} [\gamma]^{s} = \top\}$.

\begin{example}\label{example:formal:model:database:schema}
The database associated with the example in \S\ref{sect:motivating:example:medical} consists of five relational schemas $\mathit{patient}$, $\mathit{smokes}$, $\mathit{cancer}$, $\mathit{father}$, and $\mathit{mother}$, where the first three schemas have arity 1 and the last two have arity 2.
We assume that there are only three patients $\mathtt{Alice}$, $\mathtt{Bob}$, and $\mathtt{Carl}$, so the domain $\mathbf{dom}$ is $\{\mathtt{Alice}, \mathtt{Bob}, \mathtt{Carl}\}$.
The integrity constraints are as follows:
\begin{compactitem}
\item $\mathtt{Alice}$, $\mathtt{Bob}$, and $\mathtt{Carl}$ are patients.
\[
	\mathit{patient}(\mathtt{Alice}) \wedge \mathit{patient}(\mathtt{Bob}) \wedge \mathit{patient}(\mathtt{Carl})
\]
\item $\mathtt{Alice}$ and $\mathtt{Bob}$ are $\mathtt{Carl}$'s parents.
\begin{align*}
	& \forall x,y.\, \left(\mathit{father}(x,y) \leftrightarrow \left( x = \mathtt{Bob} \wedge y = \mathtt{Carl} \right) \right)\wedge \\
	& \forall x,y.\, \left(\mathit{mother}(x,y) \leftrightarrow \left( x = \mathtt{Alice} \wedge y = \mathtt{Carl} \right)\right)
\end{align*}
\item $\mathtt{Alice}$ does not smoke, whereas  $\mathtt{Bob}$ and $\mathtt{Carl}$ do.
\[
	\neg\mathit{smokes}(\mathtt{Alice}) \wedge \mathit{smokes}(\mathtt{Bob}) \wedge \mathit{smokes}(\mathtt{Carl})
\]
\end{compactitem}
Given these constraints, there are just $8$ possible database states in $\Omega_D^\Gamma$, which differ only in their $\mathit{cancer}$ relation.
The content of the $\mathit{cancer}$ relation is a subset of $\{\mathtt{Alice}, \mathtt{Bob}, \mathtt{Carl}\}$, whereas the content of the other tables is shown in Figure~\ref{figure:example:database:content:template}. % 
We denote each possible world as $s_{C}$, where the set $C \subseteq \{\mathtt{Alice}, \mathtt{Bob}, \mathtt{Carl}\}$ denotes the users having cancer.
\end{example}

\begin{figure}

\centering
{\small
\begin{tikzpicture}[->,>=stealth',shorten >=1pt,auto, semithick]
 \tikzstyle{every state}=[anchor=north west,rectangle, rounded corners,fill=none,draw=none,text=black,minimum height=1em,
           inner sep=1pt, ultra thin]

\node[state] (patient) at (0,0) { {
	\begin{tabular}{ c } 
		\textbf{patient} \\
		\hline  Alice \\
		\hline  Bob \\
		\hline  Carl \\
	\end{tabular}
}};

\node[state] (smokes) at ($(patient.north east) + (.5,0)$) { {
	\begin{tabular}{ c } 
		\textbf{smokes} \\
		\hline  Bob \\
		\hline Carl \\
	\end{tabular}
}};

\node[state] (father) at ($(smokes.north east) + (.5,0.25)$) { {
		\begin{tabular}{ c  c } 
			\multicolumn{2}{c}{\textbf{father}} \\
			\hline  \multicolumn{1}{c|}{Bob} & Carl \\
	\\
			\multicolumn{2}{c}{\textbf{mother}} \\
			\hline  \multicolumn{1}{c|}{Alice} & Carl \\
		\end{tabular}
}};

\end{tikzpicture}
}
\caption{The template for all database states, where the content of the \textit{cancer} table is left unspecified.
}\label{figure:example:database:content:template}
\end{figure}

\subsection{Security Policies}\label{sect:security:policies}

Existing access control models for data\-bases are inadequate to formalize security requirements capturing probabilistic dependencies.
For example, SQL cannot express statements like 
``A user $u$'s belief that $\phi$ holds must be less than $l$.''
We present a simple framework, inspired by knowl\-edge-based policies~\cite{mardziel2013dynamic}, for expressing such requirements.

A \emph{$D$-secret} is a tuple $\langle U, \phi, l \rangle$, where $U$ is either a finite set of users in ${\cal U}$ or a co-finite set of users, i.e., $U =  {\cal U} \setminus U'$ for some finite $U' \subset {\cal U}$, $\phi$ is  a relational calculus sentence over $D$, and  $l$ is rational number $0 \leq l \leq 1$ specifying the uncertainty threshold.
Abusing notation, when $U$ consists of a single user $u$, we write $u$ instead of $\{u\}$.
Informally, $\langle U, \phi, l \rangle$ represents that 
for each user $u \in U$, $u$'s belief that  $\phi$ holds in the actual database state must be less than $l$. 
Therefore, a command of the form $\mathtt{SECRET}\ q\ \mathtt{FOR}\ u\  \mathtt{THRESHOLD}\ l$ can be represented as $\langle u, q, l\rangle$, whereas a command $\mathtt{SECRET}\ q\ \mathtt{FOR}\ \mathtt{USERS}\ \mathtt{NOT\ IN}\ \{u_1, \ldots, u_n\}\ \mathtt{THRESHOLD}\ l$ can be represented as $\langle {\cal U} \setminus \{u_1, \ldots, u_n\}, q, l \rangle$.
Finally,  a \emph{$D$-security policy}  is a finite set of $D$-secrets.
Given a $D$-security policy $P$, we denote by $\mathit{secrets}(P,u)$ the set of $D$-secrets associated with the user $u$, i.e., $\mathit{secrets}(P,u) = \{ \langle u, \phi, l \rangle \mid \langle U, \phi, l \rangle \in P \wedge u \in U\}$.
Note that the function $\mathit{secrets}$ is computable since the set $U$ is always either finite or co-finite.

Our framework also allows  the specification of lower bounds.
Requirements of the form 
``A user $u$'s belief that $\phi$ holds must be greater than $l$''
 can be formalized as 
 $\langle u, \neg\phi, 1-l\rangle$ (since the probability of $\neg \phi$ is $1 - { P}(\phi)$, where ${ P}( \phi)$ is $\phi$'s probability).
Security policies can be extended to support secrets over non-boolean queries.
A secret $\langle u, \{ \overline{x} \mid \phi(\overline{x}) \},l \rangle$ can be seen as a shorthand for the set $\{ \langle u, \phi[\overline{x} \mapsto \overline{t}], l \rangle \mid \overline{t} \in \bigcup_{s \in \Omega_D^\Gamma} [\{ \overline{x} \mid \phi(\overline{x}) \}]^s\}$, i.e., 
$u$'s belief in any tuple $\overline{t}$ being in the query's result must be less than $l$.

\begin{example}\label{example:formal:model:access:control:policy}
Let $\mathit{Mallory}$ denote the malicious researcher from \S\ref{sect:motivating:example:medical} and $D$ be the schema from Example~\ref{example:formal:model:database:schema}.
Consider the requirement from \S\ref{sect:motivating:example:medical}:
$\mathit{Mallory}$'s belief in a patient having cancer must be less than $50\%$.
This can be formalized as
$\langle \mathit{Mallory}, \mathit{cancer}(\mathtt{Alice}), \sfrac{1}{2} \rangle$, $\langle \mathit{Mallory}, \mathit{cancer}(\mathtt{Bob}), \sfrac{1}{2} \rangle$, and $\langle \mathit{Mallory}, \mathit{cancer}(\mathtt{Carl}), \sfrac{1}{2} \rangle$,
or equivalently as 
$\langle \mathit{Mallory}, \{ p \!\mid\! \mathit{cancer}(p)\}, \sfrac{1}{2} \rangle$.
In contrast, the requirement 
``For all users $u$ that are not $\mathit{Carl}$, $u$'s belief in $\mathit{Carl}$ having cancer must be less than $50\%$''
 can be formalized as 
$\langle {\cal U} \setminus \{\mathit{Carl}\}, \mathit{cancer}(\mathtt{Carl}), \sfrac{1}{2} \rangle$, where $\mathit{Carl}$ denotes the user identifier associated with Carl.
\end{example}

\subsection{Formalized System Model}\label{sect:formal:model:system:model}
\onlyTechReport{
We now formalize our system model.
We first define a system configuration, which describes the database schema and the integrity constraints.
Afterwards, we define the system's state.
Finally, we define a system run, which represents a possible interaction of users with the system.
}

A \emph{system configuration} is a tuple $\langle D, \Gamma\rangle$, where $D$ is a database schema and $\Gamma$ is a set of $D$-integrity constraints.
Let $C = \langle D, \Gamma\rangle$ be a system configuration.
A \emph{$C$-system state} is a tuple $\langle \mathit{db}, U, P\rangle$, where $\mathit{db} \in \Omega_D^\Gamma$ is a database state, $U \subset {\cal U}$ is a finite set of users, and $P$ is a $D$-security policy.
A \emph{$C$-query} is a pair $\langle u, \phi \rangle$ where $u \in {\cal U}$ is a user and $\phi$ is a relational calculus sentence over $D$.\footnote{Without loss of generality, we focus only on boolean queries \cite{abiteboul1995foundations}.
We can support non-boolean queries  as follows.
Given a database state $\mathit{s}$ and a query $q:=\{ \overline{x}\ |\ \phi\}$, if the inference control mechanism authorizes the boolean query $\bigwedge_{\overline{t} \in [q]^{\mathit{s}}} \phi[\overline{x} \mapsto \overline{t}] \wedge (\forall \overline{x}.\, \phi \rightarrow \bigvee_{\overline{t} \in [q]^{\mathit{s}}} \overline{x} = \overline{t})$,   then we return  $q$'s result, and otherwise we reject  $q$ as unauthorized.}
We denote by $\Omega_C$ the set of all system states and by ${\cal Q}_{C}$ the set of all queries.

A \emph{$C$-event} is a triple $\langle q, a, \mathit{res}\rangle$, where $q$ is a $C$-query in ${\cal Q}_C$, $a \in \{\top, \bot\}$ is a security decision, where $\top$ stands for ``authorized query'' and $\bot$ stands for ``unauthorized query'', and $\mathit{res} \in \{\top,\bot, \dagger\}$ is the query's result, where $\top$ and $\bot$ represent the usual boolean values and $\dagger$ represents that the query was not  executed as  access was denied.
Given a $C$-event $e = \langle q, a, \mathit{res}\rangle$, we denote by $q(e)$ (respectively $a(e)$ and $\mathit{res}(e)$) the query $q$ (respectively the decision $a$ and the result $\mathit{res}$).
A \emph{$C$-history} is a finite sequence of $C$-events.
We denote by ${\cal H}_{C}$ the set of all possible $C$-histories.
Moreover, given a sequence $h$, $|{h}|$ denotes its length, ${h}(i)$ its $i$-th element, and ${h}^i$ the sequence containing the first $i$ elements of ${h}$.
We also denote by $h^0$ the empty sequence $\epsilon$, and  $\cdot$ denotes the concatenation operator.

We now formalize Policy Decision Points.
A \emph{$C$-\acf{}} is a function $f: \Omega_{C} \times {\cal Q}_{C} \times {\cal H}_{C} \to \{\top,\bot\}$ taking as input a system state, a query, and a history and returning the security decision, accept ($\top$) or deny ($\bot$). % stands for ``accept'' and $\bot$ stands for ``deny''.

Let $C$ be a system configuration, $s = \langle \mathit{db}, U, P\rangle$ be a $C$-state, and $f$ be a $C$-\acf{}.
A $C$-history $h$ is \emph{compatible with $s$ and $f$} iff for each $1 \leq i \leq |{h}|$,
(1) $f(s, q(h(i)), {h}^{i-1}) = a({h}(i))$, 
(2) if $a({h}(i)) = \bot$, then $\mathit{res}({h}(i)) = \dagger$, and
(3) if $a({h}(i)) = \top$, then $\mathit{res}({h}(i)) = [\phi]^{\mathit{db}}$, where $q({h}(i)) = \langle u,\phi\rangle$.
In other words,  $h$ is compatible with $s$ and $f$ iff it was generated by the \acf{} $f$ starting in state $s$.

A \emph{$(C,f)$-run} is a pair $\langle s, {h} \rangle$, where $s$ is a system state in $\Omega_{C}$ and ${h}$ is a history in ${\cal H}_{C}$  compatible with $s$ and $f$.
Since all queries are \texttt{SELECT} queries, the system state does not change along the run.
Hence, our runs consist of a state and a history instead of e.g., an alternating sequence of states and actions (as is standard for runs).
We denote by $\mathit{runs}(C,f)$ the set of all $(C,f)$-runs.
Furthermore, given a run $r = \langle \langle \mathit{db}, U, P\rangle, {h} \rangle$, 
we denote by $r^i$ the run $ \langle \langle \mathit{db}, U, P\rangle, {h}^i \rangle$, and we use dot notation to access to $r$'s components.
For instance, $r.\mathit{db}$ denotes the database state $\mathit{db}$ and $r.{h}$ denotes the history.

\begin{exampleInt}\label{example:system:model:run}
Consider the run $r = \langle \langle \mathit{db}, U, P\rangle, h \rangle$, where the database state $\mathit{db}$ is the state $s_{\{\mathtt{A},\mathtt{B}, \mathtt{C}\} }$, where $\mathtt{Alice}$, $\mathtt{Bob}$, and $\mathtt{Carl}$ have cancer, the policy $P$ is defined in Example~\ref{example:formal:model:access:control:policy}, the set of users $U$ contains only $\mathit{Mallory}$, and the history ${h}$ is as follows (here we assume that all queries are authorized):
\begin{compactenum}
\item $\mathit{Mallory}$ checks whether \emph{Carl} smokes.
Thus, ${h}(1) = \langle \langle \mathit{Mallory}, \mathit{smokes}(\mathtt{Carl}) \rangle, \top,\top\rangle$.

\item $\mathit{Mallory}$  checks whether \emph{Carl} is \emph{Alice}'s and \emph{Bob}'s son.
Therefore, ${h}(2)$ is $\langle \langle \mathit{Mallory}, \mathit{father}(\mathtt{Bob}, \mathtt{Carl}) \wedge \mathit{mother}(\mathtt{Alice}, \mathtt{Carl}) \rangle, \top, \top\rangle$.

\item  $\mathit{Mallory}$ checks whether \emph{Alice} has cancer. 
Thus, ${h}(3) = \langle \langle \mathit{Mallory}, \mathit{cancer}(\mathtt{Alice}) \rangle, \top, \top \rangle$.

\item $\mathit{Mallory}$ checks whether \emph{Bob} has cancer. 
Thus, ${h}(4) = \langle \langle \mathit{Mallory}, \mathit{cancer}(\mathtt{Bob}) \rangle, \top, \top \rangle$. $\hfill \blacksquare$
\end{compactenum}
\end{exampleInt}

\newcommand{\atk}{\mathit{ATK}}

\subsection{Attacker Model}\label{sect:probabilistic:attacker:model}

To reason about DBIC, it is essential to precisely define
(1) how users interact with the system,
(2) how they reason about the system's behaviour, 
(3) their initial beliefs about the database state, and 
(4) how these beliefs change by observing the system's behaviour.
We formalize this in an attacker model.

Each user has an initial belief about the database state.
Following~\cite{evfimievski2010epistemic,mardziel2013dynamic, clarkson2005belief, clarkson2009quantifying}, we represent a user's beliefs as a probability distribution over all data\-base states.
Furthermore, users observe the system's behaviour and derive information about the database content.
We formalize a user's observations as an equivalence relation over runs, where two runs are equivalent iff the user's observations are the same in both runs, as is standard in information-flow~\cite{askarov2012learning, askarov2007gradual}.
A user's knowledge is the set of all database states that he considers possible given his observations. 
Finally, we use Bayesian conditioning to update a user's beliefs given his knowledge.

Let $C = \langle D, \Gamma \rangle$ be a system configuration and $f$ be a $C$-\acf{}.
A \emph{$C$-probability distribution} is a discrete probability distribution given by a function $P : \Omega_D^\Gamma \to [0,1]$ such that $\sum_{\mathit{db} \in \Omega_D^\Gamma} {P}(\mathit{db}) = 1$.
Given a set $E \subseteq \Omega_D^\Gamma$, $P(E)$ denotes $\sum_{s \in E} P(s)$.
Furthermore, given two sets  $E', E'' \subseteq \Omega_D^\Gamma$ such that $P(E') \neq 0$, $P(E'' \mid E')$ denotes $\sfrac{P(E'' \cap E')}{P(E')}$ as is standard.
We denote by ${\cal P}_C$ the set of all possible $C$-probability distributions.
Abusing notation, we extend probability distributions to formulae: $P(\psi) =  P(\llbracket \psi \rrbracket)$, where $\llbracket \psi \rrbracket = \{ \mathit{db} \in \Omega_D^\Gamma \mid [\psi]^{\mathit{db}} = \top\}$.

We now introduce \textit{indistinstinguishability}, an equivalence relation used in information-flow control~\cite{hedin2011}.
Let $C$ be a system configuration and $f$ be a $C$-\acf{}.
Given a history $h$ and a user $u \in {\cal U}$,  $h|_u$ denotes the history obtained from $h$ by removing all $C$-events from users other than $u$, namely
 $\epsilon|_u = \epsilon$, and 
if $h = \langle \langle u', q\rangle, a, \mathit{res} \rangle  \cdot h'$, then $h|_u = h'|_u$ in case $u \neq u'$, and $h|_u = \langle \langle u, q\rangle, a, \mathit{res} \rangle \cdot h'|_u$ if $u = u'$.
Given two runs $r = \langle \langle db, U, P\rangle, h\rangle$ and $r' = \langle \langle db', U', P'\rangle, h'\rangle$ in $\mathit{runs}(C,f)$ and a user $u \in {\cal U}$, we say that $r$ and $r'$ are \textit{indistinguishable for $u$}, written $r \sim_u r'$, iff $h|_u = h'|_u$.
This means that $r$ and $r'$ are indistinguishable for a user $u$ iff the system's behaviour in response to $u$'s commands is the same in both runs.
Note that $\sim_u$ depends on both $C$ and $f$, which we generally leave implicit.
Given a run $r$, $[r]_{\sim_u}$ is the equivalence class of $r$ with respect to $\sim_u$, i.e,  $[r]_{\sim_u} = \{r' \in \mathit{runs}(C,f) \mid r' \sim_u r\}$, whereas $\llbracket r \rrbracket_{\sim_u}$ is set of all databases associated to the runs in $[r]_{\sim_u}$, i.e., $\llbracket r \rrbracket_{\sim_u} = \{\mathit{db} \mid \exists U,P, h.\ \langle \langle \mathit{db}, U, P  \rangle, h \rangle \in [r]_{\sim_u} \}$.

\begin{definition}
Let $C = \langle D, \Gamma \rangle$ be a configuration and $f$ be a $C$-\acf{}.
A \emph{$(C,f)$-attacker model} is a function $\atk : {\cal U} \to {\cal P}_C$ associating to each user $u \in {\cal U}$ a $C$-probability distribution representing $u$'s initial beliefs. 
Additionally, for all users  $u \in {\cal U}$ and all states $s \in \Omega_D^\Gamma$,  we require that  $\atk(u)(s) > 0$.
The \emph{semantics of} $\atk$ is  $\llbracket \mathit{ATK} \rrbracket(u,r) = \lambda s \in \Omega_D^\Gamma.\, \atk(u)( s \mid \llbracket r \rrbracket_{\sim_u})$, where $u \in {\cal U}$ and $r \in \mathit{runs}(C,f)$.
\end{definition}

The semantics of an attacker model $\mathit{ATK}$ associates to each user $u$ and each run $r$ the probability distribution obtained by updating $u$'s initial beliefs given his knowledge with respect  to the run $r$.
We informally refer to $\llbracket \mathit{ATK} \rrbracket(u,r)(\llbracket \phi\rrbracket)$ as $u$'s beliefs in a sentence $\phi$ (given a run $r$).

\begin{example}\label{example:formal:model:attacker:model}
The attacker model for the example from \S\ref{sect:motivating:example:medical} is as follows.
Let $X_{\texttt{Alice}}$, $X_{\texttt{Bob}}$, and $X_{\texttt{Carl}}$ be three boolean random variables, representing the probability that the corresponding patient has cancer.
They define the following joint probability distribution, which represents a user's initial beliefs about the actual database state:
	${P}(X_{\texttt{Alice}},X_{\texttt{Bob}}, X_{\texttt{Carl}}) = {P}(X_{\texttt{Alice}}) \cdot {P}(X_{\texttt{Bob}}) \cdot {P}(X_{\texttt{Carl}} \mid X_{\texttt{Alice}},X_{\texttt{Bob}}).$
The probability distributions of these variables are given in Figure~\ref{figure:example:distribution:table:example} and they are derived from the probabilistic model in \S\ref{sect:motivating:example:medical}.
We associate each outcome $(x,y,z)$ of $X_{\texttt{Alice}},X_{\texttt{Bob}}, X_{\texttt{Carl}}$ with the corresponding database state $s_{C}$, where $C$ is the set of patients such that the outcome of the corresponding variable is $\top$.
For each user $u \in {\cal U}$, the distribution $P_u$ is defined as $P_u(s_{C}) = P(X_{\texttt{Alice}} = x,X_{\texttt{Bob}} = y, X_{\texttt{Carl}}=z)$, where $x$ (respectively $y$ and $z$) is $\top$ if $\mathtt{Alice}$ (respectively $\mathtt{Bob}$ and $\mathtt{Carl}$) is in $C$ and $\bot$ otherwise.
Figure~\ref{figure:example:distribution:possible:worlds} shows the probabilities associated with each state in $\Omega_D^\Gamma$, i.e., a user's initial beliefs.
Finally, the attacker model is $\atk = \lambda u \in {\cal U}. P_u$.
\end{example}

\begin{figure}

\centering
{\small 
\begin{tikzpicture}[->,>=stealth',shorten >=1pt,auto, semithick]
 \tikzstyle{every state}=[anchor=north west,rectangle, rounded corners,fill=none,draw=none,text=black,minimum height=1em,
           inner sep=1pt, ultra thin]

\node[state] (s1) at (0,0) { {
	\begin{tabular}{ c | c } 
		& $X_{\mathtt{Alice}}$ \\
		\hline  $\top$ & $\sfrac{1}{20}$ \\
		  $\bot$ & $\sfrac{19}{20}$ \\
	\end{tabular}
}};

\node[state] (s2) at ($(s1.south west) + (0,-.2)$) { {
	\begin{tabular}{ c | c } 
		& $X_{\mathtt{Bob}}$ \\
		\hline  $\top$ & $\sfrac{6}{20}$ \\
		  $\bot$ & $\sfrac{14}{20}$ \\
	\end{tabular}
}};

\node[state] (s3) at ($(s1.north east) + (.5,0)$) { {
\begin{tabular}{c c | c c }
& & \multicolumn{2}{c}{$X_{\mathtt{Carl}}$} \\
$X_{\mathtt{Alice}}$ & $X_{\mathtt{Bob}}$ & $\top$ & $\bot$  \\
\hline
$\top$ & $\top$ & $\sfrac{12}{20}$ & $\sfrac{8}{20}$ \\
$\top$ & $\bot$ & $\sfrac{9}{20}$ & $\sfrac{11}{20}$ \\
$\bot$ & $\top$ & $\sfrac{9}{20}$ & $\sfrac{11}{20}$ \\
$\bot$ & $\bot$ & $\sfrac{6}{20}$ & $\sfrac{14}{20}$ \\
\end{tabular}
}};

\end{tikzpicture}
}

\caption{Probability distribution for the random variables $X_{\mathtt{Alice}}$, $X_{\mathtt{Bob}}$, and $X_{\mathtt{Carl}}$ from Example~\ref{example:formal:model:attacker:model}.}
\label{figure:example:distribution:table:example}
\end{figure}

\begin{figure}%[!hbtp]

\centering
{\small
\begin{tikzpicture}[->,>=stealth',shorten >=1pt,auto, semithick]
 \tikzstyle{every state}=[anchor=north west,rectangle, rounded corners,fill=none,draw=none,text=black,minimum height=1em,
           inner sep=1pt, ultra thin]

\node[state] (s1) at (0,0) { {
		\begin{tabular}{c | c }
	State & Probability  \\\hline
	$s_{\emptyset}$  & $0.4655$ 	\\%$
	$s_{\{\mathtt{A}\}}$ & $0.01925$	\\
	$s_{\{\mathtt{B}\}}$ & $0.15675$	\\
	$s_{\{\mathtt{C}\}}$ & $0.1995$		
	\end{tabular}
}};

\node[state] (s2) at ($(s1.north east) + (.5,0)$) { {
		\begin{tabular}{c | c }
	State & Probability  \\\hline
	$s_{\{\mathtt{A},\mathtt{B}\}}$  & $0.006$ 	\\%$
	$s_{\{\mathtt{A},\mathtt{C}\}}$ & $0.01575$	\\
	$s_{\{\mathtt{B},\mathtt{C}\}}$ & $0.12825$	\\
	$s_{\{\mathtt{A},\mathtt{B}, \mathtt{C}\}}$ & $0.009$		
	\end{tabular}
}};
\end{tikzpicture}
}
\caption{Probability distribution over all database states.
Each state is denoted as $s_C$, where $C$ is the content of the $\mathit{cancer}$ table. Here we denote the patients' names with their initials. %, e.g., $\mathtt{A}$ instead of $\mathtt{Alice}$.
}
\label{figure:example:distribution:possible:worlds}
\end{figure}

\subsection{Confidentiality}\label{sect:confidentiality}

We first define the notion of a secrecy-preserving run for a secret $\langle u, \phi, l \rangle$ and an attacker model $\mathit{ATK}$.
Informally, a run $r$ is secrecy-preserving for $\langle u, \phi, l \rangle$ iff whenever an attacker's belief in the secret $\phi$ is below the threshold $l$, then there is no way for the attacker to increase his belief in $\phi$ above the threshold.
Our notion of secrecy-preserving runs is inspired by existing security notions for query auditing~\cite{evfimievski2010epistemic}.

\begin{definition}
Let $C = \langle D, \Gamma\rangle$ be a configuration, $f$ be a $C$-\acf{}, and $\mathit{ATK}$ be a $(C,f)$-attacker model.    
A run~$r$~is \emph{secrecy-preserving} for a secret $\langle u, \phi, l \rangle$ and $\mathit{ATK}$ iff
for all $0 \leq i <\!|r|$, 
 $\llbracket \mathit{ATK} \rrbracket (u,r^i)( \phi ) < l$ implies $\llbracket \mathit{ATK} \rrbracket (u,r^{i+1})( \phi ) < l$.
\end{definition}

We now formalize our confidentiality notion.
A \acf{} provides \confidentiality{} for an attacker model $\mathit{ATK}$ iff all runs are secrecy-preserving for $\mathit{ATK}$.
Note that our security notion can be seen  as a probabilistic generalization of opacity~\cite{schoepe2015understanding} for the database setting. 
Our notion is also inspired by the semantics of knowledge-based policies~\cite{mardziel2013dynamic}.

\begin{definition}
Let $C = \langle D, \Gamma\rangle$ be a system configuration, $f$ be a $C$-\acf{}, and $\mathit{ATK}$ be a $(C,f)$-attacker model.    
We say that the \acf{} \emph{$f$ provides \confidentiality{} with respect to $C$ and $\mathit{ATK}$} iff for all runs $r = \langle \langle \mathit{db}, U, P\rangle, h \rangle$ in  $\mathit{runs}(C,f)$,
for all users $u \in U$,
for all secrets $s \in \mathit{secrets}(P,u)$,
$r$ is secrecy-preserving for $s$ and $\mathit{ATK}$.
\end{definition}

A \acf{} providing confidentiality ensures that if an  attacker's initial belief in a secret $\phi$ is below the corresponding threshold, then there is no way for the attacker to increase his belief in $\phi$ above the threshold by interacting with the system.
This guarantee does not however apply to \emph{trivial non-secrets}, i.e., those secrets an attacker knows with a probability at least the threshold even before interacting with the system.
No \acf{} can prevent their disclosure since the disclosure does not depend on the attacker's interaction with the database.

\begin{example}\label{example:confidentiality}
Let $r$ be the run given in Example~\ref{example:system:model:run}, $\mathit{ATK}$ be the attacker model in Example~\ref{example:formal:model:attacker:model}, and $u$ be the user \emph{Mallory}.
In the following, $\phi_1$, $\phi_2$, and $\phi_3$ denote $\mathit{cancer}(\mathtt{Carl})$, $\mathit{cancer}(\mathtt{Bob})$, and $\mathit{cancer}(\mathtt{Alice})$ respectively, i.e., the three secrets from Example~\ref{example:formal:model:access:control:policy}.
Furthermore, we assume that the policy contains an additional secret $\langle \mathit{Mallory}, \phi_4, \sfrac{1}{2} \rangle$, where $\phi_4 := \neg\mathit{cancer}(\mathtt{Alice})$.

Figure~\ref{figure:confidentiality:example} illustrates 
\emph{Mallory}'s beliefs about $\phi_1, \ldots, \phi_4$ and
whether the run is secrecy-preserving for the secrets $\phi_1, \ldots, \phi_4$.
The probabilities in the tables can be obtained by combining the states in $\llbracket r^i \rrbracket_{\sim_u}$, for $0 \leq i \leq 4$, and $\llbracket \phi_j \rrbracket$, for $1 \leq j \leq 4$, with the probabilities from Figure~\ref{figure:example:distribution:possible:worlds}.
As shown in Figure~\ref{figure:confidentiality:example}, the run is not secrecy-preserving for the secrets $\phi_1$ and $\phi_2$ as it completely discloses that $\mathit{Alice}$ and $\mathit{Bob}$ have cancer, in the third and fourth steps respectively.
Secrecy-preservation is also violated for the secret $\phi_1$, even though $r$ does not directly disclose any information about $\mathit{Carl}$'s health status.
Indeed, in the last step of the run, \emph{Mallory}'s belief in $\phi_1$ is $0.6$, which is higher than the threshold $\sfrac{1}{2}$, even though his belief in $\phi_1$ before learning that $\mathit{Bob}$ had cancer was below the threshold.
Note that $\phi_4$ is a trivial non-secret: even before interacting with the system, \emph{Mallory}'s belief in $\phi_4$ is $0.95$. 
\end{example}

\subsection{Discussion}

Our approach assumes that the attacker's capabilities are well-defined.
While this, in general, is a strong assumption, there are many domains where such information is known.
There are, however, domains where this information is lacking.
In these cases, security engineers must  
\begin{inparaenum}[(1)]
	\item determine the appropriate beliefs capturing the desired attacker models, and
	\item formalize them.
\end{inparaenum}
The latter can be done, for instance, using \atklog{} (see \S\ref{sect:language}).
Note however that precisely eliciting the attackers' capabilities is still an open problem in DBIC.\looseness=-1

\begin{figure*}
\centering
	{\small 
	\begin{tabular}{ c | c  c  c c | c c c c | c c c c }
 \multirow{2}{*}{$i$} &
 \multicolumn{4}{c|}{$\llbracket \mathit{ATK} \rrbracket(u,r^i)(\llbracket \phi \rrbracket )$} & 
 \multicolumn{4}{c|}{$\llbracket \mathit{ATK} \rrbracket(u,r^{i+1})(\llbracket \phi \rrbracket)$} & 
 %\multicolumn{4}{c}{Confidentiality} \\
 \multicolumn{4}{c}{Secrecy} \\

 & $\phi_1$ & $\phi_2$ & $\phi_3$ & $\phi_4$ &  $\phi_1$ & $\phi_2$ & $\phi_3$ & $\phi_4$ & $\phi_1$ & $\phi_2$ & $\phi_3$ & $\phi_4$ \\
 \hline\hline
 
 $0$ & $0.3525$  & $0.3$ & $0.05$ & $0.95$ & $0.3525$  & $0.3$ & $0.05$ & $0.95$ & $\checkmark$ & $\checkmark$  & $\checkmark$ & $*$ \\ \hline
 $1$ & $0.3525$  & $0.3$ & $0.05$ & $0.95$ & $0.3525$  & $0.3$ & $0.05$ & $0.95$ & $\checkmark$ & $\checkmark$  & $\checkmark$ & $*$\\ \hline
 $2$ & $0.3525$  & $0.3$ & $0.05$ & $0.95$ & $0.495$  & $0.3$ & $1$ & $0$ & $\checkmark$ & $\checkmark$  & ${\cal X}$ & $*$\\ \hline
 $3$ & $0.495$  & $0.3$ & $1$ & $0$ & $0.6$  & $1$ & $1$ & $0$  & ${\cal X}$ & ${\cal X}$ & ${\cal X}$ & $*$\\ \hline
 $4$ & $0.6$  & $1$ & $1$ & $0$  & -- & -- & -- & -- & -- & -- & -- & --\\ 
	\end{tabular}
	  }

  \caption{
Evolution of \emph{Mallory}'s beliefs in the secrets $\phi_1, \ldots, \phi_4$ for the run $r$ and the attacker model $\mathit{ATK}$ from  Example~\ref{example:confidentiality}. 
In the table, ${\cal X}$ and $\checkmark$ denote that secrecy-preservation is violated and satisfied respectively, whereas  $*$ denotes trivial secrets.
}\label{figure:confidentiality:example}
\end{figure*}

\section{\atklog{}}\label{sect:language}

\subsection{Probabilistic Logic Programming}\label{sect:language:problog}

\problog{}~\cite{de2007problog,fierens2015inference, de2015probabilistic} is a probabilistic logic programming language with associated  tool support.
An exact inference engine for \problog{} is available at~\cite{problog}.

Conventional logic programs are constructed from terms, atoms, literals,  and rules. 
In the following, we consider only function-free logic programs, also called \datalog{} programs.
In this setting,  terms are either variable identifiers or constants.\looseness=-1

Let $\Sigma$ be a first-order signature, $\mathbf{dom}$ be a finite domain, and $\mathit{Var}$ be a countably infinite set of variable identifiers.
A \textit{$(\Sigma,\mathbf{dom})$-atom} $R(v_1, \ldots, v_n)$ consists of a predicate symbol $R \in \Sigma$ and arguments $v_1, \ldots, v_n$ such that $n$ is the arity of $R$, and each $v_i$, for $1 \leq i \leq n$, is either a variable identifier in $\mathit{Var}$ or a constant in $\mathbf{dom}$.
We denote by ${\cal A}_{\Sigma,\mathbf{dom}}$ the set $\{ R(v_1, \ldots, v_{|R|}) \mid  R \in \Sigma \wedge v_1, \ldots, v_{|R|} \in \mathbf{dom} \cup \mathit{Var} \}$ of all $(\Sigma,\mathbf{dom})$-atoms.
A \emph{$(\Sigma,\mathbf{dom})$-literal} $l$ is either a $(\Sigma,\mathbf{dom})$-atom  $a$ or its negation $\neg a$, where $a \in {\cal A}_{\Sigma,\mathbf{dom}}$.
We denote by ${\cal L}_{\Sigma, \mathbf{dom}}$ the set ${\cal A}_{\Sigma, \mathbf{dom}} \cup \{ \neg a \mid a \in {\cal A}_{\Sigma, \mathbf{dom}}\}$ of all $(\Sigma,\mathbf{dom})$-literals.
Given a literal $l$, $\mathit{vars}(l)$ denotes the set of its variables, $\mathit{args}(l)$ the list of its arguments, and $\mathit{pred}(l)$ the predicate symbol used in $l$.
As is standard, we say that a literal $l$ is \emph{positive} if it is an atom in ${\cal A}_{\Sigma, \mathbf{dom}}$ and \emph{negative} if it is the negation of an atom.
Furthermore, we say that a literal $l$ is \emph{ground} iff $\mathit{vars}(l) = \emptyset$.

A \textit{$(\Sigma,\mathbf{dom})$-rule} is of the form $h \leftarrow l_1, \ldots, l_n, e_1, \ldots, e_m$, where 
$h \in {\cal A}_{\Sigma,\mathbf{dom}}$ is a $(\Sigma,\mathbf{dom})$-atom,
$l_1, \ldots, l_n \in {\cal L}_{\Sigma,\mathbf{dom}}$ are $(\Sigma,\mathbf{dom})$-literals, and
$e_1, \ldots, e_m$ are equality and inequality constraints over the variables in $h,l_1, \ldots, l_m$.\footnote{Without loss of generality, we assume that equality constraints involving a variable $v$ and a constant $c$ are of the form $v = c$.}
Given a rule $r$, we denote by $\mathit{head}(r)$ the atom $h$, by $\mathit{body}(r)$ the literals $l_1, \ldots, l_n$, by $\mathit{cstr}(r)$ the constraints $e_1, \ldots, e_m$, and by $\mathit{body}(r,i)$ the $i$-th literal in $r$'s body, i.e., $\mathit{body}(r,i) = l_i$.
Furthermore, we denote by $\mathit{body}^+(r)$ (respectively $\mathit{body}^-(r)$) all positive (respectively negative) literals in $\mathit{body}(r)$.
As is standard, we assume that the free variables in a rule's head are a subset of the free variables of the positive literals in the rule's body, i.e., $\mathit{vars}(\mathit{head}(r)) \subseteq \bigcup_{l \in \mathit{body}^+(r)} \mathit{vars}(l) \cup \bigcup_{(x = c) \in \mathit{cstr}(r)\wedge c \in \mathbf{dom}} \{x\}$.
Finally, a \textit{$(\Sigma,\mathbf{dom})$-logic program} is a set of $(\Sigma,\mathbf{dom})$-ground atoms and $(\Sigma,\mathbf{dom})$-rules.
We consider only programs $p$ that do not contain negative cycles in the rules as is standard for stratified \datalog{}~\cite{abiteboul1995foundations}.\looseness=-1
To reason about probabilities, \problog{} extends logic programming with probabilistic atoms.
A \textit{$(\Sigma,\mathbf{dom})$-prob\-a\-bi\-lis\-tic atom} is a $(\Sigma,\mathbf{dom})$-atom $a$ annotated with a value $0 \leq v \leq 1$, denoted $\atom{v}{a}$.
\problog{} supports both probabilistic ground atoms and rules having probabilistic atoms in their heads.
\problog{} also supports \emph{annotated disjunctions} $\atom{v_1}{a_1}; \ldots; \atom{v_n}{a_n}$, where $a_1, \ldots, a_n$ are ground atoms and $\left( \sum_{1 \leq i\leq n} v_i \right) \leq 1$, which denote that $a_1, \ldots, a_n$ are mutually exclusive probabilistic events happening with probabilities $v_1, \ldots, v_n$.
Annotated disjunctions can either be used as ground atoms or as heads in rules.
Both annotated disjunctions and probabilistic rules are just syntactic sugar and can be expressed using ground probabilistic atoms and standard rules~\cite{de2007problog,fierens2015inference, de2015probabilistic}; see Appendix~\ref{app:problog:introduction}.

A $(\Sigma,\mathbf{dom})$-\problog{} program $p$ defines a probability distribution over all possible $(\Sigma,\mathbf{dom})$-structures, denoted $\llbracket p \rrbracket$.
Note that we consider only function-free \problog{} programs.
Hence, in our setting, \problog{} is a probabilistic extension of \textsc{Datalog}. 
Appendix~\ref{app:problog:introduction} contains a formal account of \problog{}'s semantics.

\para{Medical Data}
We formalize the probability distribution from Example~\ref{example:formal:model:attacker:model} as a \problog{} program.
We reuse the database schema and the domain from Example~\ref{example:formal:model:database:schema} as the first-order signature and the domain for the \problog{} program.
First, we  encode the template shown in Figure~\ref{figure:example:database:content:template} using ground atoms:
$\mathit{patient}(\mathtt{Alice})$,
$\mathit{patient}(\mathtt{Bob})$,
$\mathit{patient}(\mathtt{Carl})$,
$\mathit{smokes}(\mathtt{Bob})$, 
$\mathit{smokes}(\mathtt{Carl})$,
$\mathit{father}(\mathtt{Bob},\mathtt{Carl})$, and
$\mathit{mother}(\mathtt{Alice},\mathtt{Carl})$.
Second, we encode the probability distribution associated with the possible values of the $\mathit{cancer}$ table using the following \problog{} rules, which have probabilistic atoms in their heads:
\begingroup
\allowdisplaybreaks
\begin{align*}
\atom{\sfrac{1}{20}}{\mathit{cancer}(x)} &\leftarrow \mathit{patient}(x)\\
\atom{\sfrac{5}{19}}{\mathit{cancer}(x)} &\leftarrow smokes(x)\\
\atom{\sfrac{3}{14}}{\mathit{cancer}(y)}  &\leftarrow \mathit{father}(x,y), \mathit{cancer}(x), \\*
				& \qquad \mathit{mother}(z,y), \neg \mathit{cancer}(z)\\
\atom{\sfrac{3}{14}}{\mathit{cancer}(y)}  &\leftarrow \mathit{father}(x,y), \neg \mathit{cancer}(x), \\*
				& \qquad \mathit{mother}(z,y), \mathit{cancer}(z)\\
\atom{\sfrac{3}{7}}{\mathit{cancer}(y)}  &\leftarrow \mathit{father}(x,y), \mathit{cancer}(x), \\*
				& \qquad \mathit{mother}(z,y), \mathit{cancer}(z)
\end{align*}
\endgroup
The coefficients in the above example are derived from \S\ref{sect:motivating:example}. 
For instance, the probability that a smoking patient $x$ whose parents are not not in the $\mathit{cancer}$ relation has cancer is $30\%$.
The coefficient in the first rule is $\sfrac{1}{20}$ since each patient has a $5\%$ probability of having cancer.
The coefficient in the second rule is $\sfrac{5}{19}$, which is $\left( \sfrac{6}{20} - \sfrac{1}{20}\right) \cdot \left(1-\sfrac{1}{20}\right)^{-1}$, i.e., the probability that $\mathit{cancer}(x)$ is derived from the second rule given that it has not been derived from the first rule.
This ensures that the overall probability of deriving $\mathit{cancer}(x)$ is $\sfrac{6}{20}$, i.e., $30\%$.
The coefficients for the last two rules are derived analogously.

Informally, a probabilistic ground atom $\atom{\sfrac{1}{2}}{\mathit{cancer}(\mathtt{Bob})}$ expresses that $\mathit{cancer}(\mathtt{Bob})$ holds with a probability $\sfrac{1}{2}$.
Similarly, the rule $\atom{\sfrac{1}{20}}{cancer(x)} \leftarrow \mathit{patient}(x)$ states that, for any $x$ such that $\mathit{patient}(x)$ holds, then $\mathit{cancer}(x)$ can be derived with probability $\sfrac{1}{20}$.
This program yields the probability distribution shown in Figure~\ref{figure:example:distribution:possible:worlds}.

\subsection{\atklog{}'s Foundations}

We first introduce belief programs, which formalize an  attacker's initial beliefs.
Afterwards, we formalize \atklog{}.

\para{Belief Programs}
A belief program formalizes an attacker's beliefs as a probability distribution over the database states.

A database schema $D' = \langle \Sigma', \mathbf{dom}\rangle$ extends a schema $D = \langle \Sigma, \mathbf{dom}\rangle$ iff $\Sigma'$ contains all relation schemas in $\Sigma$. % and  $\mathbf{dom} = \mathbf{dom}'$.
The extension $D'$ may extend  $\Sigma$ with additional predicate symbols necessary to encode probabilistic dependencies.
Given an extension $D'$, a $D'$-state $s'$ \emph{agrees} with a $D$-state $s$ iff $s'(R) = s(R)$ for all $R$ in $D$.
Given a $D$-state $s$, we denote by $\mathit{EXT}(s,D,D')$ the set of all $D'$-states that agree with $s$.

A $(\Sigma', \mathbf{dom})$-\problog{} program $p$, where $D' = \langle \Sigma', \mathbf{dom}\rangle$ extends $D$, is a \emph{belief program over $D$}.
The \emph{$D$-semantics of $p$} is 
$\llbracket p\rrbracket_D = \lambda s \in \Omega_D.\, \sum_{s' \in \mathit{EXT}(s, D, D')} \llbracket p \rrbracket(s')$. 
Given a system configuration $C = \langle D, \Gamma\rangle$, a belief program $p$ over $D$ \textit{complies with $C$} iff $\llbracket p\rrbracket_D $ is a $C$-probability distribution.
With a slight abuse of notation, we lift the semantics of belief programs to sentences: $\llbracket p\rrbracket_D = \lambda \phi \in {\cal RC}_{\mathit{bool}}.\, \sum_{s' \in \{ s \in \Omega_D \mid [\phi]^{s} = \top\}} \llbracket p \rrbracket_D(s')$.

\para{\atklog{}}
An \atklog{} model specifies the initial beliefs of all users in ${\cal U}$ using belief programs.

Let $D$ be a database schema and $C = \langle D, \Gamma \rangle$ be a system configuration.
A \emph{$C$-\atklog{} model} $\mathit{ATK}$ is a function associating to each user $u \in U$, where $U \subset {\cal U}$ is a finite set of users, a belief program $p_u$ and to all users $u \in {\cal U}\setminus U$ a belief program $p_0$,  such that for  all users $u \in {\cal U}$, $\llbracket  \mathit{ATK}  (u)\rrbracket_D $ complies with $C$ and  for all database states $s \in \Omega_D^\Gamma$, $\llbracket  \mathit{ATK}(u) \rrbracket_D (s) >0$, i.e., all database states satisfying the integrity constraints are possible.
Informally, a $C$-\atklog{} model associates a distinct belief program to each user in $U$, and it associates to each user in ${\cal U} \setminus U$ the same belief program $p_0$.

Given a $C$-\acf{} $f$, a $C$-\atklog{} model $\mathit{ATK}$ defines the $(C,f)$-attacker model $\lambda  u \in {\cal U}. \llbracket \mathit{ATK}(u) \rrbracket_D$ that associates to each user $u \in {\cal U}$ the probability distribution defined by the belief program $\mathit{ATK}(u)$.
The semantics of this $(C,f)$-attacker model is:  
$\lambda u \in {\cal U}. \lambda r \in \mathit{runs}(C,f). \lambda s \in \Omega_D^\Gamma.\, \llbracket \mathit{ATK}(u) \rrbracket_D ( s \mid \llbracket  r \rrbracket_{\sim_u})$.
Informally, given a $C$-\atklog{} model $\mathit{ATK}$,  a $C$-\acf{} $f$, and a user $u$, $u$'s belief in a database state $s$, given a run $r$, is obtained by  conditioning the probability distribution defined by the belief program $\mathit{ATK}(u)$ given the set of database states corresponding to all runs $r' \sim_u r$.

\newcommand{\rg}{\mathit{rg}}
\newcommand{\bn}{\mathit{bn}}
\newcommand{\graph}{\mathit{graph}}
\newcommand{\tup}[1]{\langle #1 \rangle}

\section{Tractable Inference for \problog{} programs}\label{sect:inference}

Probabilistic inference in \problog{} is intractable in general.
Its data complexity, i.e., the complexity of  inference  when only the programs' probabilistic ground atoms are part of the input and the rules are considered fixed and not part of the input, is $\#P$-hard; see~\techReportAppendix{app:atklog:lite:extended}.
This limits the practical applicability of \problog{} (and \atklog{}) for DBIC.
To address this, we define acyclic \problog{} programs,  a class of programs where  the data complexity of inference is \textsc{PTime}.

Given a \problog{} program $p$, our inference algorithm consists of three steps:
(1) we compute all of $p$'s derivations,
(2) we compile these derivations into a Bayesian Network (BN) $\mathit{bn}$, and
(3) we perform the inference over $\mathit{bn}$.
To ensure tractability, we leverage two key insights.
First, we exploit guarded negation~\cite{barany2012queries} to develop a sound over-approximation, called the relaxed grounding, of all derivations of a program that is independent of the presence (or absence) of the probabilistic atoms.
This ensures that whenever a ground atom can be derived from a program (for a possible assignment to the probabilistic atoms), the atom is also part of this program's relaxed grounding.
This avoids grounding $p$ for each possible assignment to the probabilistic atoms.
Second, we introduce syntactic constraints (acyclicity) that ensure that $\mathit{bn}$ is a forest of poly-trees.
This ensures tractability since inference for poly-tree BNs can be performed in polynomial time in the network's size~\cite{koller2009probabilistic}.\looseness=-1

We also precisely characterize the expressiveness of acyclic \problog{} programs.
In this respect, we prove that acyclic programs are as expressive as forests of poly-tree BNs, one of the few classes of BNs with tractable inference.

As mentioned in \S\ref{sect:language}, probabilistic rules and annotated disjunctions are just syntactic sugar.
Hence, in the following we consider \problog{} programs consisting just of probabilistic ground atoms and non-probabilistic rules.
Note also that we treat ground atoms as rules with an empty body.

\subsection{Preliminaries}\label{sect:inference:preliminaries}

\para{Negation-guarded Programs}
A rule $r$ is \textit{negation-guarded}~\cite{barany2012queries} iff all the variables occurring in negative literals also occur in positive literals, namely for all negative literals $l$ in  $\mathit{body}^-(r)$, $\mathit{vars}(l) \subseteq \bigcup_{l' \in \mathit{body}^+(r)} \mathit{vars}(l')$.
To illustrate, the rule $C(x) \leftarrow A(x), \neg B(x)$ is negation-guarded, whereas $C(x) \leftarrow A(x), \neg B(x,y)$ is not since the variable $y$ does not occur in any positive literal.
We say that a program $p$ is \emph{negation-guarded} if all rules $r\in p$ are.

\para{Relaxed Grounding}
The relaxed grounding of a program $p$ is obtained by considering all probabilistic atoms as certain and by grounding all positive literals.
For all negation-guarded programs, the relaxed grounding of $p$ is a sound over-approximation of  all possible derivations in $p$.
Given a program $p$ and a rule $r\in p$, $\rg(p)$ denotes $p$'s relaxed grounding and $\rg(p,r)$ denotes the set of $r$'s ground instances.
We formalize relaxed groundings in~\techReportAppendix{app:atklog:lite:extended}.

\begin{example}\label{example:acyclic:1}
Let $p$ be the program consisting of the facts
$\atom{\sfrac{1}{2}}{A(1)}$, $A(2)$, $A(3)$, $D(1)$, $E(2)$, $F(1)$, $O(1,2)$, and $\atom{\sfrac{2}{3}}{O(2,3)}$, 
and the rules $r_a= B(x) \leftarrow A(x),D(x)$, $r_b=B(x) \leftarrow A(x),E(x)$,  and $r_c= B(y) \leftarrow B(x), \neg F(x), O(x,y)$. 
The relaxed grounding of $p$ consists of the initial facts together with $B(1)$, $B(2)$, and $B(3)$, whereas $\rg(p,r_c)$ consists of 
$B(2) \leftarrow B(1), \neg F(1), O(1,2)$ and $B(3) \leftarrow B(2), \neg F(2), O(2,3)$.
\end{example}

\para{Dependency and Ground Graphs}
The \textit{dependency graph} of a program  $p$, denoted $\graph(p)$, is the directed labelled graph having as nodes all the predicate symbols in $p$ and having an edge $a \xrightarrow{r,i} b$ iff there is a rule $r$ such that $a$ occurs in $i$-th literal in $r$'s body and $b$ occurs in $r$'s head.
Figure~\ref{figure:dependency:graph} depicts the dependency graph from Example~\ref{example:acyclic:1}.
The \textit{ground graph} of a program  $p$ is the graph obtained from its relaxed grounding.
Hence, there is an edge $a \xrightarrow{r,\mathit{gr},i} b$ from the ground atom $a$ to the ground atom $b$ iff there is a rule $r$ and a ground rule $\mathit{gr} \in \rg(p,r)$  such that  $\mathit{body}(\mathit{gr},i) \in \{a,\neg a\}$ and $\mathit{head}(\mathit{gr}) = b$. 
Figure~\ref{figure:ground:graph} depicts the ground graph from Example~\ref{example:acyclic:1}.
Note that there are no incoming or outgoing edges from $A(3)$ because the node is not involved in any derivation.

\begin{figure}
\centering
\begin{tikzpicture}
	\node[shape=circle, draw=black, minimum width=2.2em, anchor=center] (A) at (0,0) {$A$};
	\node[shape=circle, draw=black, minimum width=2.2em, anchor=center] (B) at ($(A)+(0,-2)$) {$B$};

	\node[shape=circle, draw=black, minimum width=2.2em, anchor=center] (F) at ($(B)+(-2,-1)$) {$F$};
	\node[shape=circle, draw=black, minimum width=2.2em, anchor=center] (D) at ($(A)+(2,-1)$) {$D$};
	\node[shape=circle, draw=black, minimum width=2.2em, anchor=center] (O) at ($(B)+(+2,-1)$) {$O$};
	\node[shape=circle, draw=black, minimum width=2.2em, anchor=center] (E) at ($(A)+(-2,-1)$) {$E$};
	\path[->]
		(A) edge[bend left] node[right]  {{\small $r_a,1$}} (B)
		(A) edge[bend right] node[left] {{\small $r_b,1$}} (B)

		(B) edge[loop below] node {\small $r_c,1$} (B)
		(F) edge node[pos=0.5, below, sloped] {{\small $r_c,2$}} (B)

	    (O) edge node[pos=0.5, below, sloped] {{\small $r_c,3$}} (B)
		(E) edge node[pos=0.5, below, sloped] {{\small $r_b,2$}} (B)
		(D) edge node[pos=0.5, below, sloped] {{\small $r_a,2$}} (B);
\end{tikzpicture}

\caption{Dependency graph for the program in Example~\ref{example:acyclic:1}.}\label{figure:dependency:graph}
\end{figure}
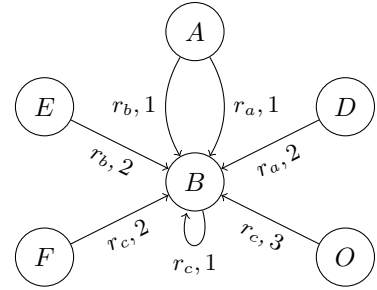

\begin{figure}
	\centering
	\begin{tikzpicture}
		\node[shape=rounded rectangle, draw=black, minimum width=2.2em, anchor=center] (A1) at (0,0) 			{{\small $A(1)$}};
		\node[shape=rounded rectangle, draw=black, minimum width=2.2em, anchor=center] (D1) at ($(A)+(4,0)$) 	{{\small $D(1)$}};
		\node[shape=rounded rectangle, draw=black, minimum width=2.2em, anchor=center] (B1) at ($(A)+(2,-1)$) 	{{\small $B(1)$}};
		\node[shape=rounded rectangle, draw=black, minimum width=2.2em, anchor=center] (A2) at ($(A)+(0,-2)$) 	{{\small $A(2)$}};
		\node[shape=rounded rectangle, draw=black, minimum width=2.2em, anchor=center] (E2) at ($(A)+(4,-2)$) 	{{\small $E(2)$}};
		\node[shape=rounded rectangle, draw=black, minimum width=2.2em, anchor=center] (F1) at ($(A)+(-1,-3)$)	{{\small $F(1)$}};
		\node[shape=rounded rectangle, draw=black, minimum width=2.2em, anchor=center] (O12) at ($(A)+(5,-3)$) 	{{\small $O(1,2)$}};
		\node[shape=rounded rectangle, draw=black, minimum width=2.2em, anchor=center] (B2) at ($(A)+(2,-3)$) 	{{\small $B(2)$}};
		\node[shape=rounded rectangle, draw=black, minimum width=2.2em, anchor=center] (F2) at ($(A)+(0,-4)$) 	{{\small $F(2)$}};
		\node[shape=rounded rectangle, draw=black, minimum width=2.2em, anchor=center] (A3) at ($(A)+(-2,-4)$) 	{{\small $A(3)$}};
		\node[shape=rounded rectangle, draw=black, minimum width=2.2em, anchor=center] (O23) at ($(A)+(4,-4)$) 	{{\small $O(2,3)$}};
		\node[shape=rounded rectangle, draw=black, minimum width=2.2em, anchor=center] (B3) at ($(A)+(2,-5)$) 	{{\small $B(3)$}};
		
		\path[->]
			(A1) 	edge node[sloped,anchor=north,auto=false] {{\tiny $r_a, r_a', 1$}} 	(B1)
			(D1) 	edge node[sloped,anchor=north,auto=false] {{\tiny $r_a, r_a', 2$}} 	(B1)
			(F1) 	edge node[sloped,anchor=north,auto=false] {{\tiny $r_c, r_c^1, 2$}}	(B2)
			(A2)	edge node[sloped,anchor=north,auto=false] {{\tiny $r_b, r_b', 1$}} 	(B2)
			(B1)	edge node[sloped,anchor=south,auto=false] {{\tiny $r_c, r_c^2, 1$}}	(B2)
			(E2)	edge node[sloped,anchor=north,auto=false] {{\tiny $r_b, r_b', 2$}}	(B2)
			(O12)	edge node[sloped,anchor=north,auto=false] {{\tiny $r_c, r_c^1, 3$}}	(B2)
			(F2)	edge node[sloped,anchor=north,auto=false] {{\tiny $r_c, r_c^2, 2$}}	(B3)
			(B2)	edge node[sloped,anchor=south,auto=false] {{\tiny $r_c, r_c^2, 1$}}	(B3)
			(O23)	edge node[sloped,anchor=north,auto=false] {{\tiny $r_c, r_c^2, 3$}}	(B3);
	\end{tikzpicture}
	\caption{Ground graph for the program in Example~\ref{example:acyclic:1}.
	The ground rules $r_a'$, $r_b'$, $r_c^1$, and $r_c^2$ are as follows:
	$r_a' = B(1) \leftarrow A(1),D(1)$,
	$r_b' = B(2) \leftarrow A(2), E(2)$,
	$r_c^1 = B(2) \leftarrow B(1), \neg F(1), O(1,2)$, and
	$r_c^2 = B(3) \leftarrow B(2), \neg F(2), O(2,3)$.}\label{figure:ground:graph}
\end{figure}
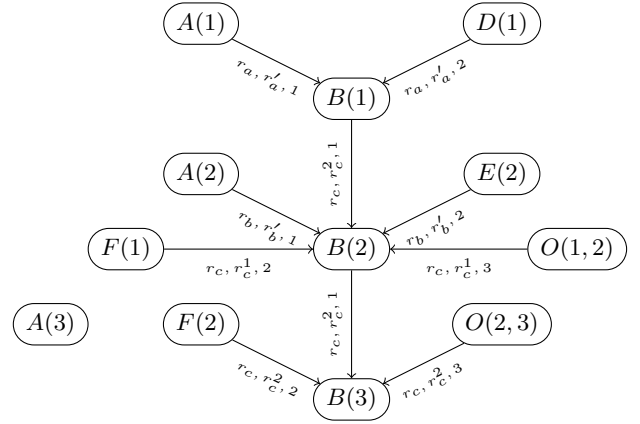

\para{Propagation Maps}
We use propagation maps to track how information flows inside rules.
Given a rule $r$ and a literal $l\in\mathit{body}(r)$, the \textit{$(r,l)$-vertical map} is the mapping $\mu$ from $\{1, \ldots, |l|\}$ to $\{1, \ldots, |\mathit{head}(r)|\}$ such that $\mu(i) = j$ iff $\mathit{args}(l)(i) = \mathit{args}(\mathit{head}(r))(j)$ and $\mathit{args}(l)(i) \in \mathit{Var}$. 
Given a rule $r$ and literals $l$ and $l'$ in $r$'s body, the \textit{$(r,l,l')$-horizontal map} is the mapping $\mu$ from $\{1, \ldots, |l|\}$ to $\{1, \ldots, |l'|\}$ such that $\mu(i) = j$ iff $\mathit{args}(l)(i) = \mathit{args}(l')(j)$ and $\mathit{args}(l)(i) \in \mathit{Var}$.

We say that a path links to a literal $l$ if information flows along the rules to $l$.
This can be formalized by posing constraints on the mapping obtained by combining horizontal and vertical maps along the path. 
Formally, given a literal $l$ and a mapping  $\nu : \mathbb{N} \to \mathbb{N}$, 
 a directed path $\mathit{pr}_1 \xrightarrow{r_1,i_1} \ldots \xrightarrow{r_{n-1}, i_{n-1}} \mathit{pr}_n$ \emph{$\nu$-downward links to $l$} iff
there is a $0 \leq j < n-1$ such that the function $\mu := \mu' \circ \mu_{j} \circ \ldots \circ \mu_1$ 
satisfies $\mu(k) = \nu(k)$ for all $k$ for which $\nu(k)$ is defined,
%is defined for all $k \in K$, 
where for $1 \leq h \leq j$, $\mu_h$ is the $(r_h,\mathit{body}(r_h,i_h))$-vertical map, %connecting $\mathit{body}(r_h,i_h)$ and $r_h$, 
and $\mu'$ is the horizontal map connecting $\mathit{body}(r_{j+1},i_{j+1})$ with $l$.
Similarly, a directed path $\mathit{pr}_1 \xrightarrow{r_1,i_1} \ldots \xrightarrow{r_{n-1}, i_{n-1}} \mathit{pr}_n$ \emph{$\nu$-upward links to $l$} iff
there is a $1 \leq j \leq n-1$ such that the function $\mu :=\mu'^{-1} \circ \mu_{j+1}^{-1} \circ \ldots \circ \mu_{n-1}^{-1}$ 
satisfies $\mu(k) = \nu(k)$ for all $k$ for which $\nu(k)$ is defined,
 where $\mu_h$ is the $(r_h,\mathit{body}(r_h,i_h))$-vertical map, for $j < h \leq n-1$, 
and $\mu'$ is the $(r_{j}, l)$-vertical map. 
A path \textit{$P$ links to a predicate symbol $a$} iff there is an atom $a(\overline{x})$ such that $P$ links to $a(\overline{x})$.

\begin{example}\label{example:acyclic:5}
The horizontal map connecting $A(x)$ and $D(x)$ in $r_a$, i.e., the $(r_a,A(x),D(x))$-horizontal map, is  $\{1 \mapsto 1\}$.
The horizontal map connecting $A(x)$ and $E(x)$ in $r_b$ is $\{1 \mapsto 1\}$ as well.
Hence, the path $A \xrightarrow{r_a,1} B$ downward links to $D$ and the path  $A \xrightarrow{r_b,1} B$ downward links to $E$ for the mapping $\{1 \mapsto 1\}$. 
Furthermore, the path $B \xrightarrow{r_c,1} B$ downward links to $O$ for $\{1 \mapsto 1\}$ since the $(r_c,B(x),O(x,y))$-horizontal map is $\{1 \mapsto 1\}$.
Finally, the path $B \xrightarrow{r_c,1} B$ upward links to $O$ for $\{2 \mapsto 1\}$ since the $(r_c,O(x,y))$-vertical map is $\{2 \mapsto 1\}$.
\end{example}

\subsection{Acyclic \problog{} programs}\label{sect:inference:acyclic}

A sufficient condition for tractable inference is that $p$'s ground graph is a forest of poly-trees.
This requires that $p$'s ground graph neither contains  directed nor undirected cycles, or, equivalently, the undirected version of $p$'s ground graph is acyclic.
To illustrate, the ground graph in Figure~\ref{figure:ground:graph} is a poly-tree.
The key insight here is that a cycle among ground atoms is caused by a (directed or undirected) cycle among $p$'s predicate symbols.
In a nutshell, acyclicity requires that all possible cycles in $\mathit{graph}(p)$ are guarded.
This ensures that cycles in $\mathit{graph}(p)$ do not lead to cycles in the ground graph.
Additionally, acyclicity requires that programs are negation-guarded.
This ensures that the relaxed grounding and the ground graph are well-defined.
In the following, let $p$ be a $(\Sigma,\mathbf{dom})$-\problog{} program.

\newcommand{\order}[1]{\mathit{ORD}(#1)}
\newcommand{\disjoint}[2]{\mathit{DIS}(#1,#2)}
\newcommand{\unique}[2]{\mathit{UNQ}(#1,#2)}
\para{Annotations}
Annotations represent properties of the relations induced by the program $p$, and they are syntactically derived by analysing $p$'s ground atoms and rules.

Let $a,a' \in \Sigma$ be two predicate symbols such that $|a| = |a'|$.
A \textit{disjointness annotation} $\disjoint{a}{a'}$ represents that the relations induced by $a$ and $a'$ (given $p$'s relaxed grounding) are disjoint.
We say that $\disjoint{a}{a'}$ can be derived from $p$ iff no rules in $p$ contain $a$ or $a'$ in their heads, and there is no $\overline{v} \in \mathbf{dom}^{|a|}$ where both $a(\overline{v})$ and $a'(\overline{v})$ appear as (possibly probabilistic) ground atoms in $p$.
Hence, the relations induced by $a$ and $a'$ are disjoint.

Let $n \in \mathbb{N}$ and $A \subseteq \Sigma$ be a set of predicate symbols such that  $|a| = 2n$ for all $a \in A$.
An \emph{ordering annotation} $\order{A}$ represents that the transitive closure of the union of the relations induced by predicates in $A$ given $p$'s relaxed grounding is a strict partial order over $\mathbf{dom}^{n}$.
The annotation $\order{A}$ can be derived from the program $p$ iff there is no rule $r \in p$ that contains any of the predicates in $A$ in its head and 
the transitive closure of $\bigcup_{a \in A} \{ ((v_1, \ldots, v_{n}),(v_{n +1}, \ldots, v_{2n})) \mid \exists k.\ \atom{k}{a(v_1, \ldots, v_{2n})} \in p \}$ is a strict partial order over  $\mathbf{dom}^{n}$.
Hence, the closure of the relation $\bigcup_{a \in A}  \{ ((v_1, \ldots, v_{n}),(v_{n +1}, \ldots, v_{2n})) \mid a(v_1, \ldots, v_{2n}) \in \rg(p) \}$ induced by the relaxed grounding is a strict partial order.

Let $a \in \Sigma$ be a predicate symbol and  $K \subseteq \{1, \ldots, |a|\}$.
A \emph{uniqueness annotation} $\unique{a}{K}$ represents  that the attributes in $K$ are a primary key for the relation induced by $a$ given the relaxed grounding.
We say that $\unique{a}{K}$ can be derived from a program $p$ iff no rule contains $a$ in its head and 
for all $\overline{v}, \overline{v}' \in \mathbf{dom}^{|a|}$, 
if
 (1) $\overline{v}(i) = \overline{v}'(i)$ for all $i \in K$, and
 (2) there are $k$ and $k'$ such $\atom{k}{a(\overline{v})} \in p$ and $\atom{k'}{a(\overline{v}')} \in p$,   
then $\overline{v}  = \overline{v}'$.
This ensures that whenever $a(\overline{v}), a(\overline{v}') \in \rg(p)$ and $\overline{v}(i) = \overline{v}'(i)$ for all $i \in K$, then $\overline{v}  = \overline{v}'$.

A \textit{$\Sigma$-template} ${\cal T}$ is a set of annotations. 
In~\techReportAppendix{app:atklog:lite:extended}, we relax our syntactic rules for deriving annotations.

\begin{example}\label{example:acyclic:2}
We can derive $\disjoint{D}{E}$ from the program in Example~\ref{example:acyclic:1} since no rule generates facts for $D$ and $E$ and the relations defined by the ground atoms are $\{1\}$ and $\{2\}$.
We can also derive $\order{\{O\}}$ since the relation defined by $O$'s ground atoms is $\{(1,2),(2,3)\}$, whose transitive closure is a strict partial order. 
Finally, we can derive $\unique{O}{\{1\}}$, $\unique{O}{\{2\}}$, and $\unique{O}{\{1,2\}}$ since both arguments of $O$ uniquely identify the tuples in the relation induced by $O$.\looseness=-1\end{example}\looseness=-1

\para{Unsafe structures}
An unsafe structure models a part of the dependency graph that may introduce cycles in the ground graph.
We define directed and undirected unsafe structures which may respectively introduce directed and undirected cycles in the ground graph.

A \emph{directed unsafe structure} in $\mathit{graph}(p)$ is a directed cycle $C$ in $\mathit{graph}(p)$.
We say that a directed unsafe structure $C$ \emph{covers} a directed cycle $C'$ iff $C$ is equivalent to $C'$.

An \emph{undirected unsafe structure} in $\mathit{graph}(p)$ is quadruple $\langle D_1, D_2, D_3, U \rangle$ such that
(1) $D_1$, $D_2$, and $D_3$ are directed paths whereas $U$ is an undirected path, 
(2) $D_1$ and $D_2$ start from the same node,
(3) $D_2$ and $D_3$ end in the same node, and
(4) $D_1 \cdot U \cdot D_3 \cdot D_2$ is an undirected cycle in $\graph(p)$.
We say that an unsafe structure  $\tup{D_1, D_2, D_3, U}$ \emph{covers} an undirected cycle $U'$ in $\graph(p)$ iff $D_1 \cdot U \cdot D_3 \cdot D_2$ is equivalent to $U'$.

\begin{example}\label{example:acyclic:3}
The cycle introduced by the rule $r_c$ is captured by the directed unsafe structure $B \xrightarrow{r_c,1} B$, while the cycle introduced by $r_a$ and $r_b$ is captured by the structure $\tup{A \xrightarrow{r_a,1} B, A \xrightarrow{r_b,1} B, \epsilon, \epsilon}$, where $\epsilon$ denotes the empty path.
\end{example}

\newcommand{\treeroot}{\mathit{root}}

\para{Connected Rules}
A connected rule $r$ ensures that a grounding of $r$ is fully determined either by the assignment to the head's variables or to the variables of any literal in $r$'s body.
Formally, a strongly connected rule $r$ guarantees that for any two groundings $\mathit{gr}',\mathit{gr}''$ of $r$, if $\mathit{head}(\mathit{gr}') = \mathit{head}(\mathit{gr}'')$, then $\mathit{gr}' = \mathit{gr}''$.
In contrast, a weakly connected rule $r$ guarantees that for any two groundings $\mathit{gr}',\mathit{gr}''$ of $r$, if $\mathit{body}(\mathit{gr}',i) = \mathit{body}(\mathit{gr}'',i)$ for some $i$, then $\mathit{gr}' = \mathit{gr}''$.
This is done by exploiting uniqueness annotations and the rule's structure.\looseness=-1

Before formalizing connected rules, we introduce join trees.
A join tree represents how multiple predicate symbols in a rule share variables.
In the following, let $r$ be a rule and ${\cal T}$ be a template.
A \emph{join tree for  a rule $r$} is a rooted labelled tree $(N,E, \treeroot, \lambda)$, where
$N\subseteq \mathit{body}(r)$,
$E$ is a set of edges (i.e., unordered pairs over $N^2$),
$\treeroot \in N$ is the tree's root, and
$\lambda$ is the labelling function.
Moreover, we require that for all $n,n'\in N$, 
	if $n \neq n'$ and $(n,n') \in E$, then $\lambda(n,n') = \mathit{vars}(n) \cap \mathit{vars}(n')$ and $\lambda(n,n') \neq \emptyset$.
A join tree $(N,E, \treeroot, \lambda)$ \emph{covers} a literal $l$ iff $l \in N$.
Given a join tree $J=(N,E, \treeroot, \lambda)$ and a node $n \in N$, the \emph{support of $n$}, denoted $\mathit{support}(n)$, is the set $\mathit{vars}(\mathit{head}(r)) \cup \{ x \mid (x = c) \in \mathit{cstr}(r) \wedge c \in \mathbf{dom} \} \cup \{ \mathit{vars}(n') \mid n' \in \mathit{anc}(J,n) \}$, where $\mathit{anc}(J,n)$ is the set of $n$'s ancestors in $J$, i.e., the set of all nodes (different from $n$) on the path from $\treeroot$ to $n$. 
A join tree $J=(N,E, \treeroot, \lambda)$ is \emph{${\cal T}$-strongly connected} iff
for all positive literals $l \in N$,  there is a set 
$K \subseteq \{ i \mid \overline{x} = \mathit{args}(l) \wedge \overline{x}(i) \in \mathit{support}(l) \}$ 
such that $\unique{\mathit{pred}(l)}{K} \in {\cal T}$ and for all negative literals $l \in N$, $\mathit{vars}(l) \subseteq \mathit{support}(l)$.
In contrast, a join tree $(N,E, \treeroot, \lambda)$ is \emph{${\cal T}$-weakly connected} iff
for all $(a(\overline{x}), a'(\overline{x}')) \in E$, there are $K \subseteq \{ i \mid \overline{x}(i) \in L \}$ and $K' \subseteq \{ i \mid \overline{x}'(i) \in L \}$  such that $\unique{a}{K}, \unique{a'}{K'} \in {\cal T}$, where $L= \lambda(a(\overline{x}), a'(\overline{x}'))$.\looseness=-1

We now formalize strongly and weakly connected rules.
A rule $r$ is \emph{${\cal T}$-strongly connected} iff there exist ${\cal T}$-strongly connected join trees $J_1, \ldots, J_n$ that cover all literals in $r$'s body. 
This guarantees that for any two groundings $\mathit{gr}',\mathit{gr}''$ of $r$, if $\mathit{head}(\mathit{gr}') = \mathit{head}(\mathit{gr}'')$, then $\mathit{gr}' = \mathit{gr}''$.

Given a rule $r$, a set of literals $L$, and a template ${\cal T}$, a literal $l \in \mathit{body}(r)$ is \textit{$(r,{\cal T},L)$-strictly guarded} iff 
(1) $\mathit{vars}(l) \subseteq \bigcup_{l' \in  L\cap\mathit{body}^+(r)} \mathit{vars}(l') \cup \{x \mid (x = c) \in \mathit{cstr}(r) \wedge c \in \mathbf{dom}\}$, and
(2) there is a positive literal $a(\overline{x}) \in L$ and an annotation $\unique{a}{K} \in {\cal T}$ such that $\{ \overline{x}(i) \mid i \in K \} \subseteq \mathit{vars}(l)$.
A rule $r$ is \emph{weakly connected for ${\cal T}$} iff 
there exists a ${\cal T}$-weakly connected join tree $J = (N,E,\treeroot,\lambda)$ such that $N \subseteq \mathit{body}^+(r)$, and all literals in $\mathit{body}(r) \setminus N$ are $(r,{\cal T}, N)$-strictly guarded.
This guarantees that for any two groundings $\mathit{gr}',\mathit{gr}''$ of $r$, if $\mathit{body}(\mathit{gr}',i) = \mathit{body}(\mathit{gr}'',i)$ for some $i$, then $\mathit{gr}' = \mathit{gr}''$.\looseness=-1

\begin{example}\label{example:acyclic:4}
Let ${\cal T}$ be the template from Example~\ref{example:acyclic:2}.
The rule $r_c := B(y) \leftarrow B(x), \neg F(x), O(x,y)$ is ${\cal T}$-strongly connected.
Indeed, the join tree having $O(x,y)$ as root and $B(x)$ and $\neg F(x)$ as leaves is such that (1) there is a uniqueness annotation $\unique{O}{\{2\}}$ in ${\cal T}$ such that the second variable in $O(x,y)$ is included in those of $r_c$'s head, (2) the variables in $B(x)$ and $\neg F(x)$ are a subset of those of their ancestors, and (3) the tree covers all literals in $r_c$'s body.
The rule is also ${\cal T}$-weakly connected: the join tree consisting only of $O(x,y)$ is ${\cal T}$-weakly connected and the literals $B(x)$ and $\neg F(x)$ are strictly guarded.
Note that the rules $r_a$ and $r_b$ are trivially both strongly and weakly connected.
\end{example}

\para{Guarded undirected structures}
Guarded undirected structures ensure that undirected cycles in the dependency graph do not correspond to undirected cycles in the ground graph by exploiting disjointness annotations.
Formally, an undirected unsafe structure $\langle D_1, D_2, D_3, U \rangle$ is \textit{guarded by a template ${\cal T}$} iff  either $(D_1,D_2)$ is ${\cal T}$-head-guarded  or $(D_2,D_3)$ is ${\cal T}$-tail-guarded.\looseness=-1

A pair of non-empty paths $(P_1, P_2)$ sharing the same initial node $a$ is \emph{${\cal T}$-head guarded} iff 
\begin{inparaenum}[(1)]
\item if $P_1 = P_2$, all rules in $P_1$ are weakly connected for ${\cal T}$, or
\item if $P_1 \neq P_2$, there is an annotation $\disjoint{\mathit{pr}}{\mathit{pr}'} \in {\cal T}$, a set $K \subseteq \{1, \ldots, |a|\}$, and a bijection $\nu : K \to \{1, \ldots, |\mathit{pr}|\}$ such that $P_1$ $\nu$-downward links to $\mathit{pr}$ and $P_2$ $\nu$-downward links to $\mathit{pr}'$.
\end{inparaenum}
Given two ground paths $P_1'$ and $P_2'$ corresponding to $P_1$ and $P_2$, the first condition ensures that $P_1' = P_2'$ whereas the second  ensures that $P_1'$ or $P_2'$ are not in the ground graph.

Similarly, a pair of non-empty paths $(P_1, P_2)$ sharing the same final node $a$ is \emph{${\cal T}$-tail guarded} iff 
\begin{inparaenum}[(1)]
\item if $P_1 = P_2$, all rules in $P_1$ are strongly connected for ${\cal T}$, or
\item if $P_1 \neq P_2$, there is an annotation $\disjoint{\mathit{pr}}{\mathit{pr}'} \in {\cal T}$, a set $K \subseteq \{1, \ldots, |a|\}$, and a bijection $\nu : K \to \{1, \ldots, |\mathit{pr}|\}$, such that $P_1$ $\nu$-upward links to $\mathit{pr}$ and $P_2$ $\nu$-upward links to $\mathit{pr}'$.
\end{inparaenum}

\begin{example}\label{example:acyclic:6}
The only non-trivially guarded undirected cycle in the graph from Figure~\ref{figure:dependency:graph} is the one represented by the undirected unsafe structure $\tup{A \xrightarrow{r_a,1} B, A \xrightarrow{r_b,1} B, \epsilon, \epsilon}$.
The structure is guarded since the paths $A \xrightarrow{r_a,1} B$ and $A \xrightarrow{r_b,1} B$ are head guarded by $\disjoint{D}{E}$. 
Indeed, for the same ground atom $A(v)$, for some $v \in \{1,2,3\}$, only one of $r_a$ and $r_b$ can be applied since $D$ and $E$ are disjoint.
\end{example}

\para{Guarded directed structures}
Guarded directed structures exploit ordering annotations to ensure that directed cycles in the dependency graph do not correspond to directed cycles among ground atoms.
A directed unsafe structure $\mathit{pr}_1 \xrightarrow{r_1, i_1} \ldots \xrightarrow{r_n,i_n} \mathit{pr}_1$ is \textit{guarded by a template ${\cal T}$} iff 
there is an annotation $\order{O} \in {\cal T}$,
integers $1 \leq y_1 < y_2 < \ldots < y_e = n$, literals $o_1(\overline{x}_1), \ldots, o_e(\overline{x}_e)$ (where $o_1, \ldots, o_e \in O$),
a non-empty set $K \subseteq \{1, \ldots, |\mathit{pr}_1|\}$, and a bijection $\nu : K \to \{1, \ldots, \sfrac{|\mathit{o}|}{2}\}$ such that 
for each $0 \leq k < e$, 
(1) $\mathit{pr}_{y_k} \xrightarrow{r_{y_k}, i_{y_k}} \ldots \xrightarrow{r_{{y_{k+1}}-1}, i_{{y_{k+1}}-1}} pr_{y_{k+1}}$ $\nu$-downward connects to $o_{k+1}(\overline{x}_{k+1})$, and 
(2) $\mathit{pr}_{y_{k+1}-1} \xrightarrow{r_{{y_{k+1}}-1}, i_{{y_{k+1}}-1}} pr_{y_{k+1}}$ $\nu'$-upward connects to $o_{k+1}(\overline{x}_{k+1})$, 
where 
$\nu'(i) = \nu(x) + \sfrac{|\mathit{o}_1|}{2}$ for all $1 \leq i \leq \sfrac{|\mathit{o}_1|}{2}$, and
$y_0 = 1$.

\begin{example}\label{example:acyclic:6}
The directed unsafe structure $B \xrightarrow{r_c,1} B$ is guarded by $\order{\{O\}}$ in the template from Example~\ref{example:acyclic:2}.
Indeed, the strict partial order induced by $O$ breaks the cycle among ground atoms belonging to $B$.
In particular, the path $B \xrightarrow{r_c,1} B$ both downward links and upward links to $O(x,y)$; see Example~\ref{example:acyclic:5}.
\end{example}

\para{Acyclic Programs}
Let $p$ be a negation-guarded program and ${\cal T}$ be the template containing all annotations that can be derived from $p$.
We say that $p$ is \emph{acyclic} iff 
\begin{inparaenum}[(a)]
	\item for all undirected cycles $U$ in $\mathit{graph}(p)$ that are not directed cycles, there is a ${\cal T}$-guarded undirected unsafe structure that covers $U$, and  
	\item for all directed cycles $C$ in $\mathit{graph}(p)$, there is a ${\cal T}$-guarded directed unsafe structure that covers $C$.  
\end{inparaenum}
This ensures the absence of cycles in the ground graph.

\begin{proposition}
Let $p$ be a \problog{} program.
If $p$ is acyclic, then the ground graph of $p$ is a forest of poly-trees. 
\end{proposition}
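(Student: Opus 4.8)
The plan is to prove the contrapositive: assuming the underlying undirected graph of $p$'s ground graph contains a cycle, I derive a contradiction with $p$'s acyclicity. The bridge between the two graphs is a \emph{projection} observation: every ground edge $a \xrightarrow{r,\mathit{gr},i} b$ arises from a rule $r$ with a ground instance $\mathit{gr} \in \rg(p,r)$, and hence projects to the dependency edge $\mathit{pred}(a) \xrightarrow{r,i} \mathit{pred}(b)$ of $\graph(p)$ carrying the same $(r,i)$ label. Since projection preserves edge orientations, any cycle of the ground graph projects to a closed walk of $\graph(p)$ that reuses the same rule/index labels, and the walk is directed precisely when the ground cycle is. Taking a \emph{minimal} ground cycle, I extract from this walk a simple cycle of $\graph(p)$ to which the acyclicity conditions apply, and then split on whether the ground cycle is directed or merely undirected.

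If the ground cycle is directed, its projection contains a directed cycle $C$ of $\graph(p)$, which by condition (b) is covered by a ${\cal T}$-guarded directed unsafe structure $\mathit{pr}_1 \xrightarrow{r_1,i_1} \cdots \xrightarrow{r_n,i_n} \mathit{pr}_1$. Let $\order{O}\in{\cal T}$ be the guarding annotation and $<$ the strict partial order on $\mathbf{dom}^{n}$ given by the transitive closure of the $O$-relations in $\rg(p)$. The core of this case is to show, by induction along each segment $y_k,\ldots,y_{k+1}$ and using that it both $\nu$-downward links and $\nu'$-upward links to the guarding literal $o_{k+1}(\overline{x}_{k+1})$, that the tuple selected by $K$ from the ground atom at $\mathit{pr}_{y_k}$ is the lower half and the $K$-tuple at $\mathit{pr}_{y_{k+1}}$ is the upper half of one and the same $O$-ground-atom; hence the former is strictly below the latter in $<$. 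Chaining these strict inequalities around the closed cycle yields $t<t$ for the tuple $t$ read off the ground atom at $\mathit{pr}_1$, contradicting irreflexivity of $<$. So no directed ground cycle exists.

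If the ground cycle is undirected but not directed, its projection yields an undirected, non-directed cycle $U$ of $\graph(p)$, covered by a ${\cal T}$-guarded undirected unsafe structure $\tup{D_1,D_2,D_3,U'}$; by definition either $(D_1,D_2)$ is ${\cal T}$-head-guarded or $(D_2,D_3)$ is ${\cal T}$-tail-guarded, and I treat the head case (the tail case is symmetric, with strongly connected rules and upward links replacing weakly connected rules and downward links). When $D_1=D_2$ and all its rules are weakly connected, weak connectivity — via the uniqueness annotations certifying that each grounding is fixed by any single body literal — forces the two ground lifts of $D_1$ and $D_2$ out of the shared start atom to coincide edge by edge, so the putative cycle collapses and cannot be simple. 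When $D_1\neq D_2$, the annotation $\disjoint{\mathit{pr}}{\mathit{pr}'}$ together with the two downward links for the common bijection $\nu$ forces both $\mathit{pr}$ and $\mathit{pr}'$ to contain a ground atom with the \emph{same} argument tuple (the tuple read off at positions $K$ of the shared start atom); since $\disjoint{\mathit{pr}}{\mathit{pr}'}\in{\cal T}$ asserts these two relations share no tuple, at least one of the two branches is absent from the ground graph, so the ground cycle cannot close. Either way we reach a contradiction, whence the ground graph has no undirected cycle and is a forest of poly-trees.

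The main obstacle is the faithful translation of the propagation-map machinery into statements about ground atoms: I must establish, by induction on path length, that the composite of the relevant vertical and horizontal maps tracks the \emph{identity} of concrete argument values through a chain of ground rules, so that ``$\nu$-downward/upward links to $o(\overline{x})$'' genuinely pins down the specific tuple that the ordering or disjointness annotation then constrains. A secondary subtlety is the bookkeeping that turns an arbitrary minimal ground cycle into a simple dependency cycle of exactly the shape of a directed or undirected unsafe structure, so that the covering hypothesis in the definition of acyclicity is applicable.
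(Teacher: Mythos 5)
Your proposal follows essentially the same route as the paper's own proof: project the hypothetical ground cycle to the dependency graph, split on directed versus merely undirected cycles, and in each case invoke the covering guarded unsafe structure --- the ordering annotation with downward/upward links yielding a strict-order contradiction in the directed case, and weak/strong connectivity (collapsing the two lifts) or the disjointness annotation (killing one branch) in the undirected case, all resting on the same propagation-map lemmas that track concrete argument tuples through chains of ground rules. The only cosmetic difference is that you extract a simple dependency cycle from the projected walk, whereas the paper applies its acyclicity conditions to the (possibly non-simple) projected cycle directly; this does not change the argument.
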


\begin{example}
The program $p$ from Example~\ref{example:acyclic:1} is acyclic. 
This is reflected in the ground graph in Figure~\ref{figure:ground:graph}.
The program $q = p \cup \{E(1)\}$, however, is not acyclic:
we cannot derive $\disjoint{D}{E}$ from $q$ and the undirected unsafe structure $\tup{A \xrightarrow{r_a,1} B, A \xrightarrow{r_b,1} B, \epsilon,\epsilon}$ is not guarded.
As expected, $q$'s ground graph contains an undirected cycle between $A(1)$ and $B(1)$, as shown in Figure~\ref{figure:ground:graph:cycle}.
\end{example}

\begin{figure}
	\centering
	\begin{tikzpicture}
		\node[shape=rounded rectangle, draw=black, minimum width=2.2em, anchor=center] (A1) at (0,0) 			{{\small $A(1)$}};
		\node[shape=rounded rectangle, draw=black, minimum width=2.2em, anchor=center] (D1) at ($(A)+(4,0)$) 	{{\small $D(1)$}};
		\node[shape=rounded rectangle, draw=black, dashed, minimum width=2.2em, anchor=center] (E1) at ($(A)+(5,-1)$) 	{{\small $E(1)$}};
		\node[shape=rounded rectangle, draw=black, minimum width=2.2em, anchor=center] (B1) at ($(A)+(2,-1)$) 	{{\small $B(1)$}};
		\node[shape=rounded rectangle, draw=black, minimum width=2.2em, anchor=center] (A2) at ($(A)+(0,-2)$) 	{{\small $A(2)$}};
		\node[shape=rounded rectangle, draw=black, minimum width=2.2em, anchor=center] (E2) at ($(A)+(4,-2)$) 	{{\small $E(2)$}};
		\node[shape=rounded rectangle, draw=black, minimum width=2.2em, anchor=center] (F1) at ($(A)+(-1,-3)$)	{{\small $F(1)$}};
		\node[shape=rounded rectangle, draw=black, minimum width=2.2em, anchor=center] (O12) at ($(A)+(5,-3)$) 	{{\small $O(1,2)$}};
		\node[shape=rounded rectangle, draw=black, minimum width=2.2em, anchor=center] (B2) at ($(A)+(2,-3)$) 	{{\small $B(2)$}};
		\node[shape=rounded rectangle, draw=black, minimum width=2.2em, anchor=center] (F2) at ($(A)+(0,-4)$) 	{{\small $F(2)$}};
		\node[shape=rounded rectangle, draw=black, minimum width=2.2em, anchor=center] (A3) at ($(A)+(-2,-4)$) 	{{\small $A(3)$}};
		\node[shape=rounded rectangle, draw=black, minimum width=2.2em, anchor=center] (O23) at ($(A)+(4,-4)$) 	{{\small $O(2,3)$}};
		\node[shape=rounded rectangle, draw=black, minimum width=2.2em, anchor=center] (B3) at ($(A)+(2,-5)$) 	{{\small $B(3)$}};
		
		\path[->]
			(A1) 	edge 						node[sloped,anchor=north,auto=false] {{\tiny $r_a, r_a', 1$}} 	(B1)
			(A1) 	edge[dashed, bend left]  	node[sloped,anchor=south,auto=false]{{\tiny $r_b, r', 1$}} 		(B1)
			(D1) 	edge 						node[sloped,anchor=north,auto=false] {{\tiny $r_a, r_a', 2$}} 	(B1)
			(E1) 	edge[dashed] 				node[sloped,anchor=north,auto=false] {{\tiny $r_b, r', 2$}} 	(B1)
			(F1) 	edge 						node[sloped,anchor=north,auto=false] {{\tiny $r_c, r_c^1, 2$}}	(B2)
			(A2)	edge 						node[sloped,anchor=north,auto=false] {{\tiny $r_b, r_b', 1$}} 	(B2)
			(B1)	edge 						node[sloped,anchor=south,auto=false] {{\tiny $r_c, r_c^2, 1$}}	(B2)
			(E2)	edge 						node[sloped,anchor=north,auto=false] {{\tiny $r_b, r_b', 2$}}	(B2)
			(O12)	edge 						node[sloped,anchor=north,auto=false] {{\tiny $r_c, r_c^1, 3$}}	(B2)
			(F2)	edge 						node[sloped,anchor=north,auto=false] {{\tiny $r_c, r_c^2, 2$}}	(B3)
			(B2)	edge 						node[sloped,anchor=south,auto=false] {{\tiny $r_c, r_c^2, 1$}}	(B3)
			(O23)	edge 						node[sloped,anchor=north,auto=false] {{\tiny $r_c, r_c^2, 3$}}	(B3);
	\end{tikzpicture}
	\caption{Ground graph for the program in Example~\ref{example:acyclic:1} extended with the atom $E(1)$.
	The additional edges and nodes are represented using dashed lines.
	The ground rules $r_a'$, $r_b'$, $r_c^1$, and $r_c^2$ are as in Figure~\ref{figure:ground:graph}, and $r' = B(1) \leftarrow A(1),E(1)$.
}\label{figure:ground:graph:cycle}
\end{figure}
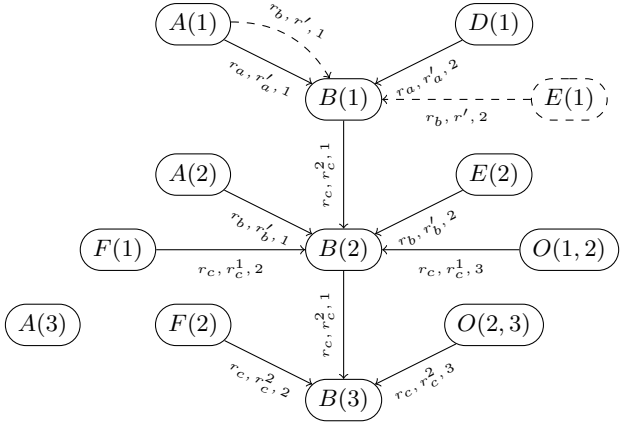

\para{Expressiveness}
Acyclicity trades off the programs expressible in \problog{} for a tractable inference procedure.
Acyclic programs, nevertheless, can encode many relevant probabilistic models.\looseness=-1 

\begin{proposition}\thlabel{theorem:polytree:BN:to:acyclic}
Any forest of poly-tree BNs can be represented as an acyclic \problog{} program. 
\end{proposition}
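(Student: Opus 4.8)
The plan is to give an explicit compositional encoding of a forest of poly-tree Bayesian networks into \problog{} and then discharge two obligations: that the resulting program induces exactly the network's joint distribution, and that it satisfies the syntactic acyclicity conditions of \S\ref{sect:inference:acyclic}, so that the preceding proposition (acyclicity implies that the ground graph is a forest of poly-trees) applies. For the construction, I would introduce for each BN node $X$ a dedicated value-recording predicate, encode every root node by an annotated-disjunction fact whose weights are its prior, and encode every non-root node $X$ with parents $U_1, \ldots, U_k$ by one probabilistic rule per parent configuration: the rule's body positively tests the value-predicates of $U_1, \ldots, U_k$ for that configuration together with a fresh, configuration-indexed switch fact carrying the corresponding conditional weight from $X$'s CPT, and its head is the annotated disjunction over $X$'s values. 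Using complementary value-predicates rather than negation keeps all body literals positive, so the program is negation-guarded by construction, and a forest is handled by taking the disjoint union of the per-tree encodings over disjoint predicate symbols.

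For the distribution, I would show that $\llbracket p \rrbracket_D$ equals the BN's joint distribution. Processing nodes in a topological order of the DAG, each total choice of the switches determines a unique value for every node, and the weight of a joint outcome factors as $\prod_X P(X \mid \mathrm{pa}(X))$: for each node exactly the switch matching its actual parent values contributes its CPT weight, while the switches of the non-matching configurations and of the other nodes marginalize to $1$ under the $D$-semantics' sum over extension states. Functionality (exactly one value per node) follows from the annotated disjunctions summing to $1$ and being mutually exclusive. This step is essentially the correspondence between the BN chain rule and \problog{}'s distribution semantics and should be routine.

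The acyclicity check is the crux. I would first fix the template $\mathcal{T}$ of all (possibly relaxedly) derivable annotations: uniqueness annotations stating that a node-key determines its value and that each switch is keyed by its configuration, disjointness annotations among complementary value-predicates and among distinct switch predicates, and, when nodes are individuals carrying a genealogy-like order, ordering annotations. I then verify the two clauses against $\graph(p)$. Directed cycles are ruled out because the edges mirror the acyclic BN DAG, or, in the parameterized case, are $\mathcal{T}$-guarded by an ordering annotation. The undirected clause is the delicate one: a node with several parents and several CPT rules induces undirected cycles in $\graph(p)$, and two CPT rules of the same node that share a parent predicate create exactly the parallel-edge pattern that made program $q$ of Example~\ref{example:acyclic:1} non-acyclic. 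For each such cycle I must exhibit a covering $\mathcal{T}$-guarded undirected unsafe structure, arguing that the two offending paths are head- or tail-guarded because the CPT rules they traverse are separated by disjoint switch/value-predicates, and that the traversed rules are strongly or weakly connected via the uniqueness annotations on the node-keys.

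The main obstacle I anticipate is precisely this last point: arguing that the poly-tree (acyclic-undirected) shape of the network guarantees that every undirected cycle of $\graph(p)$ is of the guardable form, i.e. that no cycle survives unguarded. This is where the forest hypothesis is indispensable, and it forces the configuration-switches to be set up so that the needed disjointness guards are genuinely derivable — mirroring the role that the disjoint facts $D$ and $E$ play in Example~\ref{example:acyclic:1} — which in turn is what constrains the shape of the encoding chosen in the construction step. Once acyclicity is established, the preceding proposition yields that $p$'s ground graph is a forest of poly-trees, completing the representation.
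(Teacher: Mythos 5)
Your construction and the distributional argument essentially coincide with the paper's encoding (per-configuration rules with fresh switch facts, annotated disjunctions for the values, disjoint union over the forest), and that part is routine. The genuine gap is exactly where you anticipate it, but the repair you sketch cannot work: the undirected cycles created by having one rule per CPT row are \emph{not guardable} under the strict acyclicity definition of \S\ref{sect:inference:acyclic}, no matter which annotations you collect in ${\cal T}$. Concretely, two rows give distinct rules $r_1 \neq r_2$ with the same head predicate $X$ and a shared parent predicate $Y$ in their bodies, hence parallel edges $Y \xrightarrow{r_1,i} X$ and $Y \xrightarrow{r_2,j} X$, an undirected cycle whose covering structure has $P_1 \neq P_2$. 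Guarding it (head- or tail-guarded) requires an annotation $\disjoint{\mathit{pr}}{\mathit{pr}'}$ together with a bijection $\nu : K \to \{1,\ldots,|\mathit{pr}|\}$, $K \subseteq \{1,\ldots,|Y|\}$, such that the two paths $\nu$-link to $\mathit{pr}$ and $\mathit{pr}'$. But a per-configuration encoding is necessarily ground, and the propagation (vertical/horizontal) maps are defined only on argument positions holding \emph{variables}, so for ground rules they are empty: any $\nu$ defined on some position fails the linking condition. The only escape is $K = \emptyset$, which forces $|\mathit{pr}| = |\mathit{pr}'| = 0$ for $\nu$ to be a bijection; and for nullary (propositional) switch or value predicates, $\disjoint{\mathit{pr}}{\mathit{pr}'}$ is never derivable, since both induced relations contain the empty tuple as soon as both facts occur in the program. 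So with configuration-indexed (non-nullary) switches you lose the links, and with propositional switches you lose the disjointness; either way the cycles stay unguarded, and the raw program is not acyclic in the strict sense. (The same problem resurfaces if you try a single variable-based rule per node, because expanding the row-wise annotated disjunctions reintroduces several ground rules with a common head predicate.)

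The paper does not guard these cycles at all --- it \emph{eliminates} them, and this is the idea missing from your proposal. The proposition is actually proved for the \emph{relaxed} notion of acyclicity developed in the appendix; the main text states explicitly that this relaxation (support for annotated disjunctions and for rules sharing parts of their bodies) is needed for Proposition~\ref{theorem:polytree:BN:to:acyclic}. The paper's encoding deliberately pads the rule for configuration $\overline{v}_i$ with the semantically redundant literals $\mathit{sw}(\overline{l})$ over the switches of all earlier configurations, precisely so that all rules for a node $X$ agree on heads, predicate symbols, and argument tuples, differing only in literal polarity; the rule-domination transformation $\alpha$ then collapses them into a single kernel rule per node, and the CPT-like projection $\beta$ strips the value arguments. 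In $\alpha(\beta(p))$ there is one rule per BN node, the dependency graph mirrors the poly-tree DAG, and the only remaining undirected cycles are doubled edges $P \xrightarrow{r,i} P'$ traversed twice, which are guarded because ground rules are trivially strongly and weakly connected --- no disjointness or ordering annotations are needed at all. To complete your proof you must either adopt this relaxed notion and design your rules so that they mutually dominate under $\sqsubseteq$ (which is what forces the redundant switch literals into the bodies), or find a fundamentally different encoding; the direct strict-acyclicity route you propose is a dead end.
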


To clarify this proposition's scope, observe that poly-tree BNs are one of the few classes of BNs with tractable inference procedures.
From Proposition~\ref{theorem:polytree:BN:to:acyclic}, it follows that a large class of probabilistic models with tractable inference can be represented as acyclic programs.
This supports our thesis that our syntactic constraints are not overly restrictive.
In~\techReportAppendix{app:atklog:lite:extended}, we relax acyclicity to support 
 a limited form of annotated disjunctions and
 rules sharing a part of their bodies, which are needed to encode the example from \S\ref{sect:language} and for Proposition~\ref{theorem:polytree:BN:to:acyclic}.
We also provide all the proofs.

\begin{figure}
	\centering
	\begin{tikzpicture}
		\node (A1) at (0,0) {{\small $X[B(2)]$}};
		\node (A2) at (2,0) {{\small $X[F(2)]$}};
		\node (A3) at (4,0)	{{\small $X[O(2,3)]$}};

		\node (B1) at (2,-.75) {{\small $X[r_c,r_c^2,B(3) ]$}};
		% \node (B2) at (2,-1.5) {{\small $X[r_c,B(3) ]$}};
% 		\node (B3) at (2,-2.25) {{\small $X[B(3) ]$}};
		
		\path [->]
			(A1) edge (B1)
			(A2) edge (B1)
			(A3) edge (B1);
			% (B1) edge (B2)
% 			(B2) edge (B3);

	\end{tikzpicture}

	\begin{tabular}{l}
		{{\bf\small Selected CPTs.}}\\\hline
		\rule{0pt}{12.5ex}
	{\small 
	\begin{tabular}{c c c | c c}
		& & & \multicolumn{2}{c}{$X[r_c,r_c^2,B(3) ]$} \\
		$X[B(2)]$ & $X[F(2)]$ & $X[O(2,3)]$ & $\top$ & $\bot$ \\\hline
		$\top$  &  $\top$  &  $\top$   &  0  &  1 \\
		$\top$  &  $\top$  &  $\bot$   &  0  &  1 \\
		$\top$  &  $\bot$  &  $\top$   &  1  &  0 \\
		$\top$  &  $\bot$  &  $\bot$   &  0  &  1 \\
		$\bot$  &  $\top$  &  $\top$   &  0  &  1 \\
		$\bot$  &  $\top$  &  $\bot$   &  0  &  1 \\
		$\bot$  &  $\bot$  &  $\top$   &  0  &  1 \\
		$\bot$  &  $\bot$  &  $\bot$   &  0  &  1 \\
	\end{tabular}
	}
	\end{tabular}

	\caption{Portion of the resulting BN  for the atoms $B(2)$, $F(2)$, and $O(2,3)$, the rule $r_c = B(y) \leftarrow B(x), \neg F(x), O(x,y)$, and the ground rule $r_c^2 =  B(3) \leftarrow B(2), \neg F(2), O(2,3)$, together with the CPT encoding $r_c^2$'s semantics.}\label{figure:bayesian:network}
\end{figure}

\subsection{Inference Engine}\label{sect:inference:encoding}

Our inference algorithm for acyclic \problog{} programs consists of three steps:
\begin{inparaenum}[(1)]
	\item we compute the relaxed grounding of $p$ (cf.~\S\ref{sect:inference:preliminaries}),
	\item we compile the relaxed grounding into a Bayesian Network (BN), and
	\item we perform the inference using standard algorithms for poly-tree Bayesian Networks~\cite{koller2009probabilistic}.
\end{inparaenum}

\para{Encoding as BNs}
We compile the relaxed grounding $\rg(p)$ into the BN $\bn(p)$.
The boolean random variables in $\bn(p)$ are as follows: 
\begin{inparaenum}[(a)]
\item for each atom $a$ in $\rg(p)$ and ground literal $a$ or $\neg a$ occurring in any $\mathit{gr} \in \bigcup_{r \in p} \rg(p,r)$, there is a random variable $X[a]$,
\item for each rule $r \in p$ and each ground atom $a$ such that there is $\mathit{gr} \in \rg(p,r)$ satisfying $a= \mathit{head}(\mathit{gr})$, there is a random variable $X[r,a]$, and
\item for each rule $r \in p$, each ground atom $a$, and each ground rule $\mathit{gr} \in \rg(p,r)$ such that $a= \mathit{head}(\mathit{gr})$, there is a random variable $X[r,\mathit{gr},a]$.
\end{inparaenum}

The edges in $\bn(p)$ are as follows:
\begin{inparaenum}[(a)]
\item for each ground atom $a$, rule $r$, and ground rule $\mathit{gr}$, there is an edge from $X[r,\mathit{gr},a]$ to $X[r,a]$ and an edge from $X[r,a]$ to $X[a]$, and 
\item for each ground atoms $a$ and $b$, rule $r$, and ground rule $\mathit{gr}$, there is an edge from $X[b]$ to $X[r,\mathit{gr},a]$ if $b$ occurs in $\mathit{gr}$'s body.\looseness=-1
\end{inparaenum}

Finally, the Conditional Probability Tables (CPTs) of the variables in $\bn(p)$ are as follows.
The CPT of variables of the form $X[a]$ and $X[r,a]$ is just the OR of the values of their parents, i.e., the value is $\top$ with probability $1$ iff at least one of the parents has value $\top$.
For variables of the form $X[r,\mathit{gr},a]$ such that $\mathit{body}(r) \neq \emptyset$, then the variable's CPT encode the semantics of the rule $r$, i.e., the value of $X[r,\mathit{gr},a]$ is $\top$ with probability $1$ iff all positive literals have value $\top$ and all negative literals have value $\bot$.
In contrast, for variables of the form $X[r,\mathit{gr},a]$ such that $\mathit{body}(r) = \emptyset$, then the variable has value $\top$ with probability $v$ and $\bot$ with probability $1-v$, where $r$ is of the form $\atom{v}{a}$ (if $r = a$ then $v = 1$).

To ensure that the size of the CPT of variables of the form $X[r,a]$ is independent of the size of the relaxed grounding, instead of directly connecting variables of the form $X[r,\mathit{gr},a]$ with $X[r,a]$, we construct a binary tree of auxiliary variables where the leaves are all variables of the form $X[r,\mathit{gr},a]$ and the root is the variable $X[r,a]$.
Figure~\ref{figure:bayesian:network} depicts a portion of the BN for the program in Example~\ref{example:acyclic:1}.

\para{Complexity}
We now introduce the main result of this section.

\begin{theorem}\thlabel{theorem:inference:complexity:ptime}
The data complexity of inference for acyclic \problog{} programs is \textsc{Ptime}. 
\end{theorem}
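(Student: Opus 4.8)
The plan is to establish the \textsc{PTime} data-complexity bound by analyzing each of the three stages of the inference algorithm separately and arguing that each runs in polynomial time in the size of the input, i.e., in the number of probabilistic ground atoms (with the rules held fixed). The overall result then follows by composing these polynomial bounds. Throughout, I would rely on Proposition~\ref{theorem:polytree:BN:to:acyclic} and the preceding proposition guaranteeing that, for acyclic programs, the ground graph is a forest of poly-trees, since this structural fact is precisely what makes the final inference step tractable.

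First I would bound the size of the relaxed grounding $\rg(p)$. Since the rules are fixed, each rule $r$ has a fixed number of body literals and a fixed number of variables; a ground instance is determined by an assignment of domain constants to these variables, and $\mathbf{dom}$ is finite but its size grows with the input (as the number of ground atoms grows). The key observation is that the number of ground instances of a fixed rule is bounded by $|\mathbf{dom}|^{k}$, where $k$ is the (fixed) number of variables in $r$; hence $|\rg(p)|$ is polynomial in $|\mathbf{dom}|$ and thus polynomial in the input size. Computing $\rg(p)$ can be done by standard bottom-up \datalog{} evaluation, which is polynomial-time in data complexity~\cite{abiteboul1995foundations}. Second, I would bound the size of the compiled Bayesian network $\bn(p)$: by the construction in \S\ref{sect:inference:encoding}, the random variables are of the forms $X[a]$, $X[r,a]$, and $X[r,\mathit{gr},a]$, each indexed by ground atoms, fixed rules, and ground rules in $\rg(p)$, so the number of variables is linear in $|\rg(p)|$ (up to the fixed number of rules), hence polynomial. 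The binary-tree gadget for the $X[r,a]$ variables adds only a logarithmic factor in the fan-in and keeps every CPT of bounded size, so the total description of $\bn(p)$ is polynomial.

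Third, I would invoke the structural guarantee: because $p$ is acyclic, the preceding proposition ensures the ground graph is a forest of poly-trees, and I would argue that the BN $\bn(p)$ inherits this poly-tree structure (the auxiliary variables introduced in the encoding do not create new undirected cycles, since they form trees rooted at the $X[r,a]$ nodes). Exact inference on poly-tree BNs is polynomial in the network size by the standard belief-propagation algorithm~\cite{koller2009probabilistic}, so step~(3) is polynomial in $|\bn(p)|$ and therefore polynomial in the input. Composing the three polynomial bounds yields the claimed \textsc{PTime} data complexity.

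The main obstacle I anticipate is the third step: it is not immediate that the CPT-encoding construction preserves the poly-tree property established for the bare ground graph. I would need to verify carefully that the extra layers of variables ($X[r,\mathit{gr},a]$, $X[r,a]$, and the binary-tree auxiliaries) introduce no undirected cycles—in particular that a ground atom $b$ feeding into several ground rules, and a head atom $a$ being produced by several ground rules, do not together close an undirected loop through the intermediate nodes. The argument must show that the bipartite "atom/rule-instance" expansion of a poly-tree remains a poly-tree, which reduces to checking that distinct parents of any $X[r,\mathit{gr},a]$ node correspond to distinct ancestors in the underlying ground graph, a fact that should follow from the connectedness conditions (strong/weak connectivity) built into the definition of acyclicity. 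Making this structural preservation precise is where the real work lies; the size and time bounds are otherwise routine counting arguments.
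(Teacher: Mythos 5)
Your plan coincides with the paper's own proof: the paper argues exactly that (1) the relaxed grounding and the BN encoding are computable in polynomial time in the number of ground atoms (rules fixed), (2) acyclicity makes the encoded network a forest of poly-trees, and (3) belief propagation on poly-trees then runs in time polynomial in the network's size, so the composition is \textsc{Ptime}. One small correction to your final paragraph: the paper establishes that the encoding preserves the poly-tree property by a direct path correspondence between the BN and the ground graph---whose edges are indexed by ground rules and body positions, so the atom/rule-instance fan-in and fan-out is already reflected there---not via the strong/weak connectivity conditions, which are used only in the earlier proof that the ground graph of an acyclic program is itself a forest of poly-trees (and Proposition~\ref{theorem:polytree:BN:to:acyclic}, an expressiveness result, plays no role in this theorem).
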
 

This follows from 
\begin{inparaenum}[(1)]
\item the relaxed grounding and the encoding can be computed in \textsc{Ptime} in terms of data complexity,
\item the encoding ensures that, for acyclic programs, the resulting Bayesian Network is a forest of poly-trees, and
\item inference algorithms for poly-tree BNs~\cite{koller2009probabilistic} run in polynomial time in the BN's size.	
\end{inparaenum}
In  \techReportAppendix{app:atklog:lite:extended}, we extend our encoding to handle additional features such as annotated disjunctions, and we prove its correctness and complexity.

\section{\tool{}}\label{sect:enforcement}

\tool{} is a DBIC mechanism that provably secures databases against probabilistic inferences. 
\tool{} is parametrized by an \atklog{} model representing the attacker's capabilities and it leverages \problog{}'s inference capabilities.

\subsection{Checking Query Security}

\begin{algorithm}[tp]
%\DecMargin{10em}
\DontPrintSemicolon
\KwIn{A system state $s = \langle \mathit{db}, U, P \rangle$, a history $h$, an action $\langle u,q\rangle$, a system configuration $C$, and a $C$-\atklog{} model $\mathit{ATK}$.}
\KwOut{The security decision in $\{\top,\bot\}$.}

\SetKwProg{Fn}{function}{}{}
\SetKwIF{If}{ElseIf}{Else}{if}{}{else if}{else}{endif}
\Begin{
  \For{$\langle u, \psi, l\rangle \in \mathit{secrets}(P,u)$}
  	{
    	  \If{$\mathit{secure}(C, \mathit{ATK}, h, \langle u, \psi, l\rangle)$}
  	  {
		% \If{$\mathit{pox}(C, \mathit{ATK}, h, \langle u, q \rangle, \top)$}{
% 			$h' :=  h \cdot \langle \langle u,q\rangle, \top, \top\rangle $\;
% 			\If{$\neg \mathit{secure}(C, \mathit{ATK}, h', \langle u, \psi, l\rangle)$}
%   	  		{
%   	  			\Return{$\bot$}\;
%   	  		}
% 		} 
		\If{$\mathit{pox}(C, \mathit{ATK}, h, \langle u, q \rangle)$}{
			$h' :=  h \cdot \langle \langle u,q\rangle, \top, \top\rangle $\;
			\If{$\neg \mathit{secure}(C, \mathit{ATK}, h', \langle u, \psi, l\rangle)$}
  	  		{
  	  			\Return{$\bot$}\;
  	  		}
		}  	 
  	  	% \If{$\mathit{pox}(C, \mathit{ATK}, h, \langle u,  q \rangle, \bot)$}{
% 			$h' :=  h \cdot \langle \langle u,q\rangle, \top, \bot \rangle$\;
% 			\If{$\neg \mathit{secure}(C, \mathit{ATK}, h', \langle u, \psi, l\rangle)$}
%   	  		{
%   	  			\Return{$\bot$}\;
%   	  		}
% 		}
  	  	\If{$\mathit{pox}(C, \mathit{ATK}, h, \langle u,  \neg q \rangle)$}{
			$h' :=  h \cdot \langle \langle u,q\rangle, \top, \bot \rangle$\;
			\If{$\neg \mathit{secure}(C, \mathit{ATK}, h', \langle u, \psi, l\rangle)$}
  	  		{
  	  			\Return{$\bot$}\;
  	  		}
		}
  	  }
  	}
  \Return{$\top$}
}
\;
\Fn{$\mathit{secure}(\langle D, \Gamma \rangle, \mathit{ATK},   h, \langle u, \psi, l\rangle)$}{
		$p := \mathit{ATK}(u)$\;
		\For{$\phi \in \mathit{knowledge}(h,u)$}
		{
			$p := p \cup \mathit{PL}(\phi) \cup \{\mathit{evidence}(\mathit{head}(\phi), \mathtt{true})\}$\;
		}
		$p := p \cup \mathit{PL}(\psi)$\;
		\Return{$\llbracket p\rrbracket_D(\mathit{head}(\psi)) < l$}	
} 
\;
% \Fn{$\mathit{pox}(\langle D, \Gamma \rangle, \mathit{ATK},   h, \langle u,\psi \rangle, v)$}{
% 	$p := \mathit{ATK}(u)$\;
% 	\For{$\phi \in \mathit{knowledge}(h,u)$}
% 	{
% 		$p := p \cup \mathit{PL}(\phi) \cup \{\mathit{evidence}(\mathit{head}(\phi), \mathtt{true})\}$\;
% 	}
% 	\If{$v = \top$}{
% 		$p := p \cup \mathit{PL}(\psi)$\;
% 		\Return{$\llbracket p\rrbracket_D(\mathit{head}(\psi)) > 0$}\;
% 	}
% 	\Else
% 	{
% 		$p := p \cup \mathit{PL}(\neg \psi)$\;
% 		\Return{$\llbracket p\rrbracket_D(\mathit{head}(\neg \psi)) > 0$}	\;
% 	}
% }
\Fn{$\mathit{pox}(\langle D, \Gamma \rangle, \mathit{ATK},   h, \langle u,\psi \rangle)$}{
	$p := \mathit{ATK}(u)$\;
	\For{$\phi \in \mathit{knowledge}(h,u)$}
	{
		$p := p \cup \mathit{PL}(\phi) \cup \{\mathit{evidence}(\mathit{head}(\phi), \mathtt{true})\}$\;
	}
	$p := p \cup \mathit{PL}(\psi)$\;
	\Return{$\llbracket p\rrbracket_D(\mathit{head}(\psi)) > 0$}\;
} 
\caption{\tool{} Enforcement Algorithm.}
\label{figure:algorithm}
\end{algorithm}

Algorithm~\ref{figure:algorithm} presents \tool{}.
It takes as input a system state $s = \langle \mathit{db},U,P\rangle$, a history $h$, the current query $q$ issued by the user $u$, a system configuration $C$, and an \atklog{} model $\mathit{ATK}$ formalizing the users' beliefs.
\tool{} checks whether disclosing the result of the current query $q$ may violate any secrets in $\mathit{secrets}(P,u)$.
If this is the case, the algorithm concludes that $q$'s execution would be insecure and returns $\bot$.
Otherwise, it returns $\top$ and authorizes  $q$'s execution.
Note that once we fix a configuration $C$ and an \atklog{} model $\mathit{ATK}$, \tool{} is a $C$-\acf{} as defined in \S\ref{sect:formal:model:system:model}. 

To check whether a query $q$ may violate a secret $\langle u, \psi, l \rangle \in \mathit{secrets}(P,u)$, \tool{} first checks whether the secret has  been already violated.
If this is not the case, \tool{} checks whether disclosing $q$ violates any secret. 
This requires checking that $u$'s belief about the secret $\psi$ stays below the threshold independently of the result of the query $q$; hence, we must ensure that $u$'s belief is below the threshold both in case the query $q$ holds in the actual database and in case $q$ does not hold (this ensures that the access control decision itself does not leak information).
\tool{}, therefore, first checks whether there exists at least one possible database state where $q$ is satisfied given $h$, using the  procedure $\mathit{pox}$.
If this is the case, the algorithm extends the current history $h$ with the new event recording that the query $q$ is authorized and its result is $\top$ and it checks whether $u$'s belief about $\psi$ is still below the corresponding threshold once $q$'s result is disclosed, using the $\mathit{secure}$ procedure.
Afterwards, \tool{} checks whether there exists at least a possible database state where $q$ is not satisfied  given $h$, it extends the current history $h$ with another event representing that the query $q$ does not hold, and it checks again whether disclosing that $q$ does not hold in the current database state violates the secret.
Note that checking whether there is a database state where $q$ is (or is not) satisfied is essential to ensure that the conditioning that happens in the $\mathit{secure}$ procedure is well-defined, i.e., the set of states we condition on has non-zero probability.

\tool{} uses the $\mathit{secure}$ subroutine to determine whether a secret's confidentiality is violated.
This subroutine takes as input a system configuration, an \atklog{} model $\mathit{ATK}$, a history $h$, and a secret $\langle u, \psi, l \rangle$.
It first computes the set $\mathit{knowledge}(h,u)$ containing all the authorized queries in the $u$-projection of $h$, i.e., $\mathit{knowledge}(h,u) = \{\phi \mid \exists i.\, h|_u(i) = \langle \langle u, \phi \rangle, \top, \top\rangle \} \cup \{\neg \phi \mid \exists i.\, h|_u(i) = \langle \langle u, \phi \rangle, \top, \bot\rangle \}$.
Afterwards, it generates a \problog{} program $p$ by extending $\mathit{ATK}(u)$ with additional rules.
In more detail, it translates each relational calculus sentence $\phi \in \mathit{knowledge}(h,u)$ to an equivalent set of \problog{} rules $\mathit{PL}(\phi)$.
The translation $\mathit{PL}(\phi)$ is standard~\cite{abiteboul1995foundations}. 
For example, given a query $\phi = (A(1) \wedge B(2)) \vee \neg C(3)$, the translation $\mathit{PL}(\phi)$ consists of the rules $\{(h_1 \leftarrow A(1)), (h_2 \leftarrow B(2)), (h_3 \leftarrow \neg C(3)), (h_4 \leftarrow h_1,h_2), (h_5 \leftarrow h_4), (h_5 \leftarrow h_3)\}$, where $h_1, \ldots, h_5$ are fresh predicate symbols.
We denote by $\mathit{head}(\phi)$ the unique predicate symbol associated with the sentence $\phi$ by the translation $\mathit{PL}(\phi)$.
In our example, $\mathit{head}(\phi)$ is the fresh predicate symbol $h_5$.
The algorithm then conditions the initial probability distribution $\mathit{ATK}(u)$ based on the sentences in $\mathit{knowledge}(h,u)$.
This is done using \emph{evidence statements}, which are special \problog{} statements of the form $\mathit{evidence}(a,v)$, where $a$ is a ground atom and $v$ is either $\mathtt{true}$ or $\mathtt{false}$; see Appendix~\ref{app:problog:introduction}.
For each sentence $\phi \in \mathit{knowledge}(h,u)$, the program $p$ contains a statement $\mathit{evidence}(\mathit{head}(\phi),\mathtt{true})$. 
Finally, the algorithm translates $\psi$ to a set of logic programming rules and checks whether $\psi$'s probability is below the  threshold $l$.

The $\mathit{pox}$ subroutine takes as input a system configuration, an \atklog{} model $\mathit{ATK}$, a history $h$, and a query $\langle u,\psi\rangle$.
%
% It determines whether there is a database $\mathit{db}'$ that complies with the history $h$ and such that $[\psi]^{\mathit{db}'} = \top$.
It determines whether there is a database $\mathit{db}'$ that satisfies $\psi$ and complies with the history $h|_u$.
Internally, the routine again constructs  a \problog{} program starting from $\mathit{ATK}$, $\mathit{knowledge}(h,u)$, and $\psi$.
Afterwards, it uses the program to check whether the probability of $\psi$ given $h|_u$ is greater than $0$.\looseness=-1

Given a run $\langle s,h\rangle$ and a user $u$, the $\mathit{secure}$ and $\mathit{pox}$ subroutines condition $u$'s initial beliefs based on the sentences in  $\mathit{knowledge}(h,u)$, instead of using the set  $\llbracket r \rrbracket_{\sim_u}$ as in the \atklog{} semantics.
The key insight is that, as we prove in \techReportAppendix{app:enforcement:mechanism:proofs}, the set of possible database states defined by the sentences in $\mathit{knowledge}(h,u)$ is equivalent to $\llbracket r \rrbracket_{\sim_u}$, which contains all database states derivable from the runs $r' \sim_u r$. %that are low-indistinguishable from $r$ for the user $u$.
This allows us to use \problog{} to implement \atklog{}'s semantics without explicitly computing $\llbracket r \rrbracket_{\sim_u}$.

 \begin{example}
Let $\mathit{ATK}$ be the attacker model in Example~\ref{example:formal:model:attacker:model}, $u$ be the user \emph{Mallory}, the database state  be $s_{\{\mathtt{A},\mathtt{B}, \mathtt{C}\} }$, where $\mathtt{Alice}$, $\mathtt{Bob}$, and $\mathtt{Carl}$ have cancer, and the policy $P$ be the one from Example~\ref{example:formal:model:access:control:policy}.
Furthermore, let $q_1, \ldots, q_4$ be the queries issued by \emph{Mallory} in Example~\ref{example:system:model:run}.
\tool{} permits the execution of the first two queries since they do not violate the policy.
In contrast, it denies the execution of the last two queries as they leak sensitive information.
\end{example}

\para{Confidentiality}
As we prove in~\techReportAppendix{app:enforcement:mechanism:proofs}, \tool{} provides the desired security guarantees for any \atklog{}-attacker. 
Namely, it authorizes only those queries whose disclosure does not increase an attacker's beliefs in the secrets  above the corresponding thresholds.
\tool{} also provides precise completeness guarantees: it authorizes all secrecy-preserving queries.
Informally, a query $\langle u,q\rangle$ is \textit{secrecy-preserving} given a run $r$ and a secret $\langle u, \psi, l \rangle$ iff disclosing the result of $\langle u,q\rangle$ in any run $r' \sim_u r$ does not violate the secret.

\begin{theorem}\thlabel{theorem:angerona:confidentiality}
Let a system configuration $C$ and a $C$-\atklog{} model $\mathit{ATK}$ be given, and let \tool{} be the $C$-\acf{} $f$.
$\tool{}$ provides \confidentiality{} with respect to $C$ and $\lambda  u \in {\cal U}.\ \llbracket \mathit{ATK}(u) \rrbracket_D$,
 and it authorizes all secrecy-preserving queries.
\end{theorem}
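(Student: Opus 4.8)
The statement has two parts: soundness (\tool{} provides \confidentiality{}) and completeness (\tool{} authorizes all secrecy-preserving queries). The plan is to reduce both to a single bridging lemma, proved in \techReportAppendix{app:enforcement:mechanism:proofs}, relating the two ways of conditioning a belief program. Concretely, for every run $r$ with history $h$ and every user $u$, the state set $\{db \in \Omega_D^\Gamma \mid db \models \mathit{knowledge}(h,u)\}$ coincides with $\llbracket r\rrbracket_{\sim_u}$; consequently $\mathit{secure}(C,\mathit{ATK},h,\langle u,\psi,l\rangle)$ returns true exactly when $\llbracket\mathit{ATK}(u)\rrbracket_D(\psi \mid \llbracket r\rrbracket_{\sim_u}) < l$, and $\mathit{pox}(C,\mathit{ATK},h,\langle u,\psi\rangle)$ returns true exactly when some state in $\llbracket r\rrbracket_{\sim_u}$ satisfies $\psi$ (equivalently, the conditioning event has non-zero probability). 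Two structural facts about Algorithm~\ref{figure:algorithm} feed into this: its decision reads only $P$ from the system state and otherwise depends solely on $h|_u$, $C$, and $\mathit{ATK}$, so it is independent of the actual database $db$; and $\mathit{knowledge}(h,u)$ records only authorized $u$-events. I would establish this bridging lemma first, as it is the main obstacle: it is where the equivalence between Bayesian conditioning on observations ($\sim_u$) and \problog{} evidence-conditioning is justified, and it is precisely what guarantees that denied queries leak nothing.

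For soundness, fix a run $r \in \mathit{runs}(C,f)$, a user $u \in U$, and a secret $\langle u,\psi,l\rangle \in \mathit{secrets}(P,u)$; I must show $r$ is secrecy-preserving, i.e. for each $i$, $\llbracket\mathit{ATK}\rrbracket(u,r^i)(\psi) < l$ implies $\llbracket\mathit{ATK}\rrbracket(u,r^{i+1})(\psi) < l$. I would case-split on the $(i{+}1)$-th event. If it is issued by another user, or is a denied $u$-query, then $h^{i+1}|_u = h^i|_u$ (denials are excluded from $\mathit{knowledge}$), hence $\llbracket r^{i+1}\rrbracket_{\sim_u} = \llbracket r^i\rrbracket_{\sim_u}$ by the bridging lemma and the belief is unchanged, so the implication is immediate. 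The only substantive case is an authorized $u$-query $\langle\langle u,q\rangle,\top,\mathit{res}\rangle$ with $\mathit{res} = [q]^{db}$. Assuming the antecedent, the bridging lemma gives $\mathit{secure}(C,\mathit{ATK},h^i,\langle u,\psi,l\rangle)$, so the outer conditional is entered for $\psi$. Compatibility of $r$ with $f$ forces $f(s,q,h^i) = \top$, so no inner check returned $\bot$. Since the actual database $db$ lies in $\llbracket r^i\rrbracket_{\sim_u}$ (reflexivity of $\sim_u$) and realises $\mathit{res}$, the matching $\mathit{pox}$ test succeeds, so the corresponding $\mathit{secure}$ test on the extended history $h^i \cdot \langle\langle u,q\rangle,\top,\mathit{res}\rangle$ must have held; by the bridging lemma this is exactly $\llbracket\mathit{ATK}\rrbracket(u,r^{i+1})(\psi) < l$. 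Conditioning is well defined here because $db \in \llbracket r^{i+1}\rrbracket_{\sim_u}$ and $\mathit{ATK}(u)$ puts positive mass on every state.

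For completeness I would argue the contrapositive. Spelling out the informal notion, $\langle u,q\rangle$ is secrecy-preserving given $r$ iff for every secret $\langle u,\psi,l\rangle$ whose belief is currently below $l$ and every run $r' \sim_u r$, extending $r'$ with the authorized query (whose result is $[q]^{r'.db}$) keeps the belief below $l$. By the bridging lemma this unfolds to: whenever $\mathit{secure}(C,\mathit{ATK},h,\langle u,\psi,l\rangle)$ holds, $\mathit{secure}$ also holds on $h\cdot\langle\langle u,q\rangle,\top,\top\rangle$ whenever $\mathit{pox}(C,\mathit{ATK},h,\langle u,q\rangle)$, and likewise on $h\cdot\langle\langle u,q\rangle,\top,\bot\rangle$ whenever $\mathit{pox}(C,\mathit{ATK},h,\langle u,\neg q\rangle)$. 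This is exactly the negation of each condition under which Algorithm~\ref{figure:algorithm} returns $\bot$, so the algorithm traverses both guards without triggering a denial and returns $\top$; hence every secrecy-preserving query is authorized. The only delicate points are matching the informal notion of a secrecy-preserving query to the two $\mathit{pox}$/$\mathit{secure}$ guards (discharged by the bridging lemma) and observing that secrets already above threshold skip the outer conditional, so they neither block authorization nor impose a secrecy-preservation obligation.
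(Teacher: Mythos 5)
Your proposal is correct and follows essentially the same route as the paper: your bridging lemma is exactly the paper's key proposition that $\llbracket r \rrbracket_{\sim_u}$ equals the set of states satisfying $\mathit{knowledge}(h,u)$, your characterizations of $\mathit{secure}$ and $\mathit{pox}$ and the database-independence of the decision are its auxiliary propositions, and both your soundness case analysis (other-user / denied / authorized events) and your completeness argument (the algorithm's denial guards are precisely the negation of secrecy-preservation, unfolded through the bridging lemma) match the paper's proofs. One cosmetic slip: for a denied $u$-query, $h^{i+1}|_u \neq h^i|_u$ (the projection retains denied $u$-events); what holds, and what your parenthetical correctly relies on, is that $\mathit{knowledge}(h^{i+1},u) = \mathit{knowledge}(h^{i},u)$, which via the bridging lemma still yields $\llbracket r^{i+1} \rrbracket_{\sim_u} = \llbracket r^{i} \rrbracket_{\sim_u}$.
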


\para{Complexity}
\tool{}'s complexity is dominated by the complexity of inference.
We focus our analysis only on data complexity, i.e., the complexity when only the ground atoms in the \problog{} programs are part of the input while everything else is fixed.
A \emph{literal query} is a query consisting either of a ground atom $a(\overline{c})$ or its negation $\neg a(\overline{c})$.
We call  an \atklog{} model \emph{acyclic} if all belief programs in it are acyclic.
Furthermore, a \emph{literal secret} is a secret $\tup{U, \phi, l}$ such that $\phi$ is a literal query.
We prove in \techReportAppendix{app:enforcement:mechanism:proofs}~that for acyclic \atklog{} models, literal queries, and literal secrets, the \problog{} programs produced by the $\mathit{secure}$ and $\mathit{pox}$ subroutines are acyclic.
We can therefore use our dedicated inference engine from \S\ref{sect:inference} to reason about them.
Hence,  \tool{} can be used to protect databases in \textsc{Ptime} in terms of data complexity.
Literal queries are expressive enough to formulate  queries about the database content such as ``does $\mathit{Alice}$ have cancer?''. 

\begin{theorem}\thlabel{theorem:angerona:complexity:ptime}
For all acyclic \atklog{} attackers, 
for all literal queries $q$,
for all runs $r$ whose histories contain only literal queries and contain only secrets expressed using literal queries,
\tool{}'s data complexity is \textsc{Ptime}. 
\end{theorem}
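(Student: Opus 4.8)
The plan is to reduce \tool{}'s data complexity to the complexity of \problog{} inference on acyclic programs and then invoke Theorem~\ref{theorem:inference:complexity:ptime}. First I would bound the control flow of Algorithm~\ref{figure:algorithm}. Since the signature $\Sigma$ is fixed and $\mathbf{dom}$ is finite, the number of ground atoms---and hence the number of distinct literal queries $a(\overline{c})$ and $\neg a(\overline{c})$---is polynomial in $|\mathbf{dom}|$. Consequently the set $\mathit{secrets}(P,u)$ of literal secrets over which the outer loop ranges has polynomial size, and the set $\mathit{knowledge}(h,u)$ of authorized literals in $h|_u$ likewise has polynomial size. Each loop iteration performs only a constant number of history extensions and a constant number of calls to $\mathit{secure}$ and $\mathit{pox}$, so the whole algorithm makes polynomially many such calls; it remains to show that each call runs in \textsc{Ptime}.

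Next I would analyse a single call to $\mathit{secure}$ (the argument for $\mathit{pox}$ is identical). It builds a program $p$ by starting from $\mathit{ATK}(u)$ and adding, for each $\phi \in \mathit{knowledge}(h,u)$ and for the secret $\psi$, the translation $\mathit{PL}(\phi)$ together with an evidence statement. Because every such $\phi$ and the secret $\psi$ are literal, $\mathit{PL}(\phi)$ consists of a single rule $\mathit{head}(\phi) \leftarrow a(\overline{c})$ (or $\mathit{head}(\phi) \leftarrow \neg a(\overline{c})$) with a fresh head predicate, so it has constant size. Since $|\mathit{knowledge}(h,u)|$ is polynomial in the data, the construction of $p$ runs in \textsc{Ptime}, and the number of ground atoms of $p$ exceeds that of $\mathit{ATK}(u)$ by only a polynomial amount.

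The crux is to show that $p$ is acyclic, so that Theorem~\ref{theorem:inference:complexity:ptime} applies. Here I would argue that adding the literal translations preserves acyclicity: each added rule has a \emph{fresh} head predicate that occurs nowhere else and a body consisting of a single ground literal over an existing predicate. In $\mathit{graph}(p)$ these rules contribute only edges pointing into fresh sink nodes, which can lie on no cycle, so they introduce neither new directed nor new undirected cycles beyond those already present in the acyclic $\mathit{ATK}(u)$; moreover each such rule is trivially negation-guarded because its single body literal is ground. This is the step I expect to be the main obstacle, since one must also check that the fresh heads cannot merge two previously separate acyclic components into one containing a cycle and that every unsafe structure of $\mathit{ATK}(u)$ remains guarded by the $\Sigma$-template derivable from $p$. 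I would carry this out at the level of unsafe structures and $\Sigma$-templates, deferring the detailed case analysis to \techReportAppendix{app:enforcement:mechanism:proofs}.

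Finally I would assemble the pieces. Conditioning on the evidence statements amounts to standard inference on the resulting forest of poly-trees, which remains polynomial in the network's size; by Theorem~\ref{theorem:inference:complexity:ptime} each call to $\mathit{secure}$ and $\mathit{pox}$ therefore runs in \textsc{Ptime} in terms of data complexity. Composing this bound with the polynomial bound on the number of calls established in the first step yields that \tool{}'s data complexity is \textsc{Ptime}.
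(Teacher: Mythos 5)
Your proposal is correct and follows essentially the same route as the paper's own proof: the paper likewise reduces the claim to polynomially many calls to $\mathit{secure}$ and $\mathit{pox}$, isolates your ``crux'' as a standalone composition lemma (that adding $\mathit{PL}(\phi)$ for a literal query $\phi$ preserves acyclicity, because the fresh head predicates cannot participate in any unsafe structure and leave all derivable annotations intact), and then invokes the \textsc{Ptime} inference theorem for acyclic programs. The only stylistic difference is that the paper proves the composition lemma by contradiction---removing the $\mathit{PL}(\phi)$ rules from any purported unguarded unsafe structure to obtain one in $\mathit{ATK}(u)$ itself---and phrases it for ``compatible'' literal queries over relaxed acyclic programs, whereas you argue directly that the fresh heads are terminal appendages of the dependency graph.
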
 

\para{Discussion}
Our tractability guarantees apply only to acyclic \atklog{} models, literal queries, and literal secrets.
Nevertheless, \tool{} can still handle relevant problems of interest.
As stated in \S\ref{sect:inference}, acyclic models are as expressive as poly-tree Bayesian Networks, one of the few classes of Bayesian Networks with tractable inference.
Hence, for many probabilistic models that cannot be represented as acyclic \atklog{} models, exact probabilistic inference is intractable.

Literal queries are expressive enough to state simple facts about the database content.
More complex (non-literal) queries can be simulated using (possibly large) sequences of literal queries.
Similarly, policies with non-literal secrets can be implemented as sets of literal secrets, and the Boole--Fr\'{e}chet inequalities~\cite{hailperin1984probability} can be used to derive the desired thresholds.
In both cases, however,  our completeness guarantees hold only for the resulting literal queries, not for the original ones.

Finally, whenever our tractability constraints are violated, \tool{} can still be used  by directly using \problog{}'s inference capabilities.
In this case, one would retain the security and completeness guarantees (Theorem~\ref{theorem:angerona:confidentiality}) but lose the tractability guarantees (Theorem~\ref{theorem:angerona:complexity:ptime}).\looseness=-1

\subsection{Implementation and Empirical Evaluation}

\begin{figure}[t]
%\begin{tabular}{c c}
   % \begin{subfigure}{0.48\textwidth}
\centering
\scalebox{0.8}{
    \begin{tikzpicture}
	\tikzstyle{every node}=[font=\small]    
    
      \begin{axis}[
          width=\linewidth, % Scale the plot to \linewidth
          grid=major, 
          grid style={dashed,gray!30},
          xlabel=Number of patients, % Set the labels
          ylabel={Time [$\mathit{ms}$]},
          scaled y ticks = true,
          scaled x ticks = false,
		  legend style={at={(0,1)},anchor=north west},
          x tick label style={at={(axis description cs:0.5,-0.1)},anchor=north, 
       /pgf/number format/.cd,
            fixed zerofill = false,
            precision=2,
            fixed,
            1000 sep={},
        /tikz/.cd,
        %ymax=3
    }
            ]
        
 		\addplot[mark = o, /tikz/solid, every mark/.append style={solid} ] table[x index={0},y  expr=((\thisrowno{1})) ,col sep=comma] {logs/sizeXtime_example_0.csv};
  		%\addplot[mark = triangle, /tikz/densely dotted, every mark/.append style={solid} ] table[x index={0},y  expr=((\thisrowno{2})/1000),col sep=comma] {logs/sizeXtime_example_0.csv};
  		%\addplot[mark = x, /tikz/densely dotted, every mark/.append style={solid} ] table[x index={0},y  expr=((\thisrowno{3})/1000),col sep=comma] {logs/sizeXtime_example_0.csv};
        %\legend{Acyclic inference, Acyclic inference and grounding, \problog{}}
      \end{axis}
    \end{tikzpicture}
   }

    \caption{\tool{} execution time in seconds.}\label{figure:results}
\end{figure}
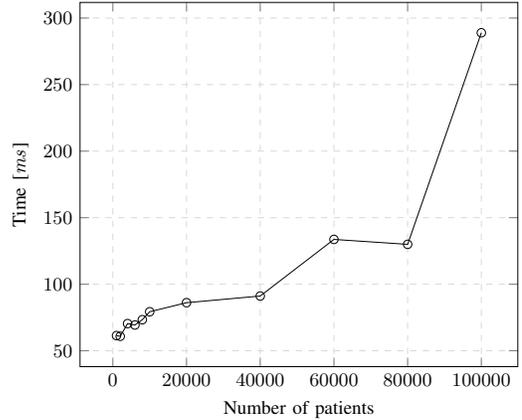

To evaluate the feasibility of our approach in practice, we implemented a prototype of \tool{}, available at~\cite{prototype}.
The prototype implements 
our dedicated inference algorithm for acyclic \problog{} programs (\S\ref{sect:inference}), which computes the relaxed grounding of the input program $p$, constructs the  Bayesian Networks $\mathit{BN}$, and performs the inference over $\mathit{BN}$ using belief propagation~\cite{koller2009probabilistic}.
For inference over $\mathit{BN}$, we rely on the GRRM library~\cite{mallet}.
Observe that evidence statements in \problog{} are encoded by fixing the values of the corresponding random variables in the BN.
Note also that computing the relaxed grounding of $p$ takes polynomial time in terms of data complexity, where the exponent is determined by $p$'s rules.
A key optimization is to pre-compute the relaxed grounding and construct $\mathit{BN}$ off-line.
%
%This can be done off-line, which
This avoids grounding $p$ and constructing the (same) Bayesian Network for each query.
In our experiments we measure this time separately.

We use our prototype to study \tool{}'s efficiency and scalability.
We run our experiments on a PC with an Intel i7 processor and 32GB of RAM.
For our experiments, we consider the database schema from \S\ref{sect:formal:model}.
For the belief programs, we use the \problog{} program given in \S\ref{sect:language}, which can be encoded as an acyclic program when the parent-child relation is a poly-tree. 
We evaluate \tool{}'s efficiency and scalability in terms of the number of ground atoms in the belief programs. 
We generate synthetic belief programs containing 1,000 to 100,000 patients and  for each of these instances, we generate 100 random queries of the form $R(\overline{t})$, where $R$ is a predicate symbol and $\overline{t}$ is a tuple.
For each instance and sequence of queries, we check the security of each  query with our prototype, against a policy containing 100 randomly generated  secrets specified as literal queries. %

Figure~\ref{figure:results} reports the execution times for our case study.
Once the BN is generated,  \tool{} takes under {300} milliseconds, even for our larger examples, to check a query's security.
During the initialization phase of our dedicated inference engine, we ground the original \problog{} program and translate it into a BN.
Most of the time is spent in the grounding process, whose data complexity is polynomial, where the polynomial's degree is determined by the number of free variables in the belief program.
Our prototype uses a naive bottom-up grounding technique, and for our larger examples the initialization times is less than 2.5 minutes. % for the medical data example.
We remark, however, that the initialization is performed just once per belief program.
Furthermore, it can be done offline and its performance can be greatly improved by naive parallelization.

\section{Related Work}\label{sect:related:work}

\para{Database Inference Control}
Existing DBIC approaches protect databases  (either at design time~\cite{morgenstern1987security, morgenstern1988controlling} or at runtime~\cite{chen2006database, chen2007protection,hale1997catalytic}) only against restricted classes of probabilistic dependencies, e.g., those arising from functional and multi-valued dependencies.
\atklog{}, instead, supports arbitrary probabilistic dependencies, and even our acyclic fragment can express probabilistic dependencies that are not supported by~\cite{morgenstern1987security, morgenstern1988controlling, chen2006database, chen2007protection,hale1997catalytic}. 
Weise~\cite{wiese2010keeping} proposes a DBIC framework, based on possibilistic logic, that formalizes secrets as sentences and expresses policies by associating bounds to secrets.
Possibility theory differs from probability theory, which results in subtle differences.
For instance, there is no widely accepted definition of conditioning for possibility distributions, cf.~\cite{Bouchon-Meunier2002}. 
Thus, the probabilistic model from \S\ref{sect:motivating:example} cannot be encoded in Weise's framework~\cite{wiese2010keeping}.

Statistical databases store information associated to different individuals and support queries that return statistical information about an entire population~\cite{chin1982auditing}.
DBIC solutions for statistical databases~\cite{dobkin1979secure, chin1982auditing, domingo2002inference, adam1989security, denning1980secure} prevent leakages of information about single individuals while allowing the execution of statistical queries.
These approaches rely on various techniques, such as perturbating the original data, synthetically generating data, or restricting the data on which the queries are executed. 
Instead,  we protect specific secrets in the presence of probabilistic data dependencies and  we return the original query result, without modifications, if it is secure.

\para{Differential Privacy}
Differential Privacy~\cite{dwork2006differential,dwork2014algorithmic} is widely used for  privacy-preserving data analysis.
Systems such as  ProPer~\cite{ebadi2015differential} or PINQ~\cite{mcsherry2009privacy}  provide users with automated ways to perform differentially private computations.
A differentially private computation guarantees that the presence (or absence) of an individual's data in the input data set affects the probability of the computation's result only in limited way, i.e., by at most a factor $e^\epsilon$ where $\epsilon$ is a parameter controlling the privacy-utility trade-off.
While differential privacy does not make any assumption about the attacker's beliefs, we assume that the attacker's belief is known and we guarantee that for all secrets in the policy, no user can increase his beliefs, as specified in the attacker model, over the corresponding thresholds by interacting with the system.

\para{Information Flow Control}
Quantified Information Flow~\cite{lowe2002quantifying,clark2007static, alvim2010probabilistic,Kopf:2007:IMA:1315245.1315282} aims at quantifying the amount of information leaked by a program. 
Instead of measuring the amount of leaked information, we focus on restricting the information that an attacker may obtain about a set of given secrets.

Non-interference has been extended to consider probabilities~\cite{sabelfeld2000probabilistic, volpano1999probabilistic, aldini2001probabilistic} for concurrent programs.
Our security notion, instead, allows those leaks that do not increase an attacker's beliefs in a secret above the threshold, and it can be seen as a probabilistic extension of \emph{opacity}~\cite{schoepe2015understanding}, which allows any leak except leaking whether the secret holds.

Mardziel et al.~\cite{mardziel2013dynamic} present a general DBIC architecture, where users' beliefs are expressed as probabilistic programs,  security requirements as threshold on these beliefs, and the beliefs are updated in response to the system's behaviour.
Our work directly builds on top of this architecture.
However, instead of using an imperative probabilistic language, we formalize beliefs using probabilistic logic programming, which provides a natural and expressive language for formalizing dependencies arising in the database setting, e.g., functional and multi-valued dependencies, as well as common probabilistic models, like Bayesian Networks.

Mardziel et al.~\cite{mardziel2013dynamic} also propose a DBIC mechanism based on abstract interpretation.
They do not provide any precise complexity bound for their mechanism.
Their algorithm's complexity class, however, appears to be intractable, since they use a probabilistic extension of the polyhedra abstract domain, whose asymptotic complexity is exponential in the number of program variables~\cite{singh2017fast}.
In contrast, \tool{} exploits our inference engine for acyclic programs to secure databases against a practically relevant class of probabilistic inferences, and it provides precise tractability and completeness guarantees.\looseness=-1

We now compare (unrestricted) \atklog{} with the imperative probabilistic language used in~\cite{mardziel2013dynamic}.
\atklog{} allows one to concisely encode probabilistic  properties specifying relations between tuples in the database.
For instance, a property like ``the probability of $A(x)$ is $\sfrac{1}{2}^n$, where $n$ is the number of tuples $(x,y)$ in $B$'' can be encoded as $\atom{\sfrac{1}{2}}{A(x)}\leftarrow B(x,y)$.
Encoding this property as an imperative program is more complex; it requires a \textbf{for} statement to iterate over all variables representing tuples in $B$ and an \textbf{if} statement to filter the tuples.
In contrast to~\cite{mardziel2013dynamic}, \atklog{} provides limited support for numerical constraints (as we support only finite domains).
Mardziel et al.~\cite{mardziel2013dynamic} formalize queries as imperative probabilistic programs.
They can, therefore, also model probabilistic queries or the use of randomization to limit disclosure.
While all these features are supported by \atklog{}, our goal is to protect databases from attackers that use standard query languages like SQL. 
Hence, we formalize queries using relational calculus and  ignore probabilistic queries.
Similarly to~\cite{mardziel2013dynamic}, our approach can be extended to handle some uncertainty on the attackers' capabilities.
In particular, we can associate to each user a finite number of possible beliefs, instead of a single one.
However, like~\cite{mardziel2013dynamic}, we cannot handle infinitely many alternative beliefs.

\para{Probabilistic Programming}
Probabilistic programming is an active area of research~\cite{gordon2014probabilistic}.
Here, we position \problog{} with respect to expressiveness and inference.
Similarly to~\cite{mardziel2013dynamic,getoor2007introduction}, \problog{} can express only discrete probability distributions, and it is less expressive than languages supporting continuous distributions~\cite{shan2017exact,gehr2016psi, gordon2014tabular}. 
Current exact inference algorithms for probabilistic programs are based on program analysis techniques, such as symbolic execution~\cite{gehr2016psi, shan2017exact} or abstract interpretation~\cite{mardziel2013dynamic}.
In this respect, we present syntactic criteria that \textit{ensure} tractable inference for \problog{}.
Sampson et al.~\cite{sampson2014expressing} symbolically execute probabilistic programs and translate them to BNs to verify probabilistic assertions.
In contrast, we translate \problog{} programs to BNs to perform exact inference and our translation is tailored to work together with our acyclicity constraints to allow tractable inference.

\section{Conclusion}\label{sect:conclusion}

Effectively securing databases that store data with probabilistic dependencies requires an expressive language to capture the dependencies and a tractable enforcement mechanism.
To address these requirements, we developed \atklog{}, a formal language providing an expressive and concise way to represent attackers' beliefs while interacting with the system. 
We leveraged this to design \tool{}, a provably secure DBIC mechanism that prevents the leakage of sensitive information in the presence of probabilistic dependencies.
\tool{} is based on a dedicated inference engine for a fragment of \problog{} where exact inference is tractable.

We see these results as providing a foundation for building practical protection mechanisms, which include probabilistic dependencies, as part of real-world database systems.
As future work, we plan to extend our framework to dynamic settings where the database and the policy change. 
We also intend to explore different fragments of \problog{} and relational calculus for which exact inference is practical

% \smallskip
{
% \noindent
\small
\para{Acknowledgments}
We thank Ognjen Maric, Dmitriy Traytel, Der-Yeuan Yu, and  the anonymous reviewers for their comments.
}

\bibliographystyle{IEEEtranS}
\bibliography{IEEEabrv,bib/bib}

\onlyShortVersion{
\appendices

\onlyTechReport{
\clearpage
}
\onlyShortVersion{
 \balance
}
\section{\problog{}}\label{app:problog:introduction}

Here, we provide a formal account of \problog{}, which follows~\cite{de2007problog,fierens2015inference, de2015probabilistic}.
In addition to \problog{}'s semantics, we present here again some (revised) aspects of \problog{}'s syntax, which we introduced in \S\ref{sect:language}.
As mentioned in \S\ref{sect:language},  we restrict ourselves to the function-free fragment of \problog{}.

\para{Syntax}
As introduced in \S\ref{sect:language}, a \emph{$(\Sigma, \mathbf{dom})$-probabilistic atom} is an atom $a \in {\cal A}_{\Sigma,\mathbf{dom}}$ annotated with a value $0 \leq v \leq 1$, denoted $\atom{v}{a}$.
If $v = 1$, then we write $a(\overline{c})$ instead of $\atom{1}{a(\overline{c})}$. % and we treat the probabilistic atom just as a normal atom in ${\cal A}_{\Sigma,\mathbf{dom}}$. 
A \emph{$(\Sigma, \mathbf{dom})$-\problog{} program} is a finite set of ground probabilistic $(\Sigma, \mathbf{dom})$-atoms and  $(\Sigma, \mathbf{dom})$-rules. 
Note that ground atoms $a \in {\cal A}_{\Sigma,\mathbf{dom}}$ are represented as $\atom{1}{a}$.
Observe also that rules do not involve probabilistic atoms (as formalized in \S\ref{sect:language}).
This is without loss of generality: as we show below probabilistic rules and annotated disjunctions can be represented using only probabilistic atoms and non-probabilistic rules.
We denote by $\mathit{prob}(p)$ the set of all probabilistic ground atoms $\atom{v}{a}$ in $p$, i.e., $\mathit{prob}(p) := \{ \atom{v}{a} \in p \mid 0\leq v \leq 1 \wedge a \in {\cal A}_{\Sigma,\mathbf{dom}} \}$, and by $\mathit{rules}(p)$ the non-probabilistic rules in $p$, i.e., $\mathit{rules}(p) :=  p \setminus \mathit{prob}(p)$.
As already stated in \S\ref{sect:language}, we consider only programs $p$ that do not contain negative cycles in the rules.
Finally, we say that a \problog{} program $p$ is a \emph{logic program} iff $v = 1$ for all $\atom{v}{a} \in \mathit{prob}(p)$, i.e., $p$ does not contain probabilistic atoms.

\para{Semantics}
Given a $(\Sigma, \mathbf{dom})$-\problog{} program $p$, a \emph{$p$-grounded instance} is a \problog{} program $A \cup R$, where the set of ground atoms $A$ is a subset of $\{ a \mid \atom{v}{a} \in \mathit{prob}(p) \}$ and $R = \mathit{rules}(p)$.
Informally, a grounded instance of $p$ is one of the logic programs that can be obtained by selecting some of the probabilistic atoms in $p$ and keeping all rules in $p$.
A \emph{$p$-probabilistic assignment} is a total function associating to each probabilistic atom $\atom{v}{a}$ in $\mathit{prob}(p)$ a value in $\{\top,\bot\}$.
We denote by $\mathit{A}(p)$ the set of all $p$-probabilistic assignments.
The \emph{probability} of a $p$-probabilistic assignment $f$ is $\mathit{prob}(f) = \Pi_{\atom{v}{a} \in \mathit{prob}(p)} \left( \Pi_{f(\atom{v}{a}) = \top} v \cdot \Pi_{f(\atom{v}{a}) = \bot} (1-v) \right)$.
Given a $p$-probabilistic assignment $f$,  $\mathit{instance}(p,  f)$ denotes the $p$-grounded instance $\{a \mid \exists v.\, f(\atom{v}{a}) = \top \} \cup \mathit{rules}(p)$. 
Finally, given a $p$-grounded instance $p'$, $\mathit{WFM}(p')$ denotes the well-founded model associated with the logic program $p'$, as defined by standard logic programming semantics~\cite{abiteboul1995foundations}.

The semantics of a $(\Sigma, \mathbf{dom})$-\problog{} program $p$ is defined as a probability distribution over all possible $p$-grounded instances.
Note that \problog{}'s semantics relies on the \emph{closed world assumption}, namely every fact that is not in a given model is considered false.
The semantics of $p$, denoted by $\llbracket p \rrbracket$, is as follows:
$\llbracket p \rrbracket(p') = \sum_{f \in F_{p,p'}} \mathit{prob}(f)$, where $ F_{p,p'} = \{ f \in \mathit{A}(p) \mid p'= \mathit{instance}(p,f) \}$.
We remark that a $(\Sigma, \mathbf{dom})$-\problog{} program $p$ implicitly defines a probability distribution over $(\Sigma, \mathbf{dom})$-structures.
Indeed, the probability of a given $(\Sigma,\mathbf{dom})$-structure $s$ is the sum of the probabilities of all $p$-grounded instances whose well-founded model  is $s$.
With a slight abuse of notation, we extend the semantics of $p$ to $(\Sigma,\mathbf{dom})$-structures and ground atoms as follows:
$\llbracket p \rrbracket(s) = \sum_{f \in {\cal M}(p,s)}\mathit{prob}(f)$, where  $s$ is a $(\Sigma, \mathbf{dom})$-structure and ${\cal M}(p,s)$ is the set of all  assignments $f$ such that $\mathit{WFM}(\mathit{instance}(p,f)) = s$. %, and
Finally, $p$'s semantics can be lifted to sentences as follows: $\llbracket p \rrbracket(\phi) = \Sigma_{s \in \llbracket \phi \rrbracket}\llbracket p \rrbracket(s)$, where $\llbracket \phi \rrbracket = \{s \in \Omega_D^\Gamma \mid [\phi]^s = \top \}$.

\para{Evidence}
\problog{} supports expressing \emph{evidence} inside programs~\cite{de2015probabilistic}.
To express evidence, i.e., to condition a distribution on some event, we use statements of the form $\mathit{evidence}(a, v)$, where $a$ is a ground atom and $v \in \{\mathtt{true},\mathtt{false}\}$.
Let $p$ be a $(\Sigma, \mathbf{dom})$-\problog{} program $p$ with evidence $\mathit{evidence}(a_1, v_1), \ldots, \mathit{evidence}(a_n, v_n)$, and $p'$ be the program without the evidence statements.
Furthermore, let $\mathit{POX}(p')$ be the set of all $(\Sigma, \mathbf{dom})$-structures $s$ complying with the evidence, i.e., the set of all states $s$ such that $a_i$ holds in $s$ iff $v_i = \mathtt{true}$.
Then, $\llbracket p \rrbracket(s)$, for a $(\Sigma, \mathbf{dom})$-structure $\mathit{s} \in \mathit{POX}(p)$, is $\llbracket p' \rrbracket(s) \cdot \left( \sum_{s' \in \mathit{POX}(p)} \llbracket p' \rrbracket(s') \right)^{-1}$.

\para{Syntactic Sugar}
Following~\cite{de2007problog,fierens2015inference, de2015probabilistic}, we extend \problog{} programs with two additional constructs: annotated disjunctions and probabilistic rules.
As shown in~\cite{de2015probabilistic}, these constructs are just syntactic sugar.
A \emph{probabilistic rule} is a \problog{} rule where the head is a probabilistic atom.
The probabilistic rule $\atom{v}{h} \leftarrow l_1, \ldots, l_n$ can be encoded using the additional probabilistic atoms $\atom{v}{sw(\_)}$ and the rule $h \leftarrow l_1, \ldots, l_n, sw(\overline{x})$, where $\mathit{sw}$ is a fresh predicate symbol, $\overline{x}$ is the tuple containing the variables in $\mathit{vars}(h) \cup \bigcup_{1 \leq i \leq n} \mathit{vars}(l_i)$, and $\atom{v}{sw(\_)}$ is a shorthand representing the fact that there is a probabilistic atom $\atom{v}{sw(\overline{t})}$ for each tuple $\overline{t} \in \mathbf{dom}^{|\overline{x}|}$.

An \emph{annotated disjunction} $\atom{v_1}{a_1}; \ldots; \atom{v_n}{a_n}$, where $a_1, \ldots, a_n$ are ground atoms and $\left( \sum_{1 \leq i\leq n} v_i \right) \leq 1$, denotes that $a_1, \ldots, a_n$ are mutually exclusive probabilistic events happening with probabilities $v_1, \ldots, v_n$.
It can be encoded \!as:
\begin{align*}
&\atom{p_1}{sw_1(\_)}\\
%&\atom{p_2}{sw_2(\_)}\\
&\quad \vdots\\
&\atom{p_n}{sw_n(\_)}\\
&a_1(\overline{t}_1) \leftarrow sw_1(\overline{t}_1)\\
%&a_2(\overline{t}_2) \leftarrow \neg sw_1(\overline{t}_1), sw_2(\overline{t}_2)\\
&\quad \vdots\\
&a_n(\overline{t}_n) \leftarrow \neg sw_1(\overline{t}_1), \ldots, \neg sw_{n-1}(\overline{t}_{n-1}), sw_n(\overline{t}_n),
\end{align*}
where each $p_i$, for $1 \leq i \leq n$, is $v_i \cdot \left(1-\sum_{1 \leq j < i} v_j\right)^{-1}$.
Probabilistic rules can be easily extended to support annotated disjunctions in their heads.

\begin{example}
Let $\Sigma$ be a first-order signature with two predicate symbols $V$ and $W$, both with arity $1$, $\mathbf{dom}$ be the domain $\{ a,b \}$, and 
$p$ be the program consisting of the facts $\atom{\sfrac{1}{4}}{T(a)}$ and  $\atom{\sfrac{1}{2}}{T(b)}$, the annotate disjunction $\atom{\sfrac{1}{4}}{W(a)};	\atom{\sfrac{1}{2}}{W(b)}$, and the rule $\atom{\sfrac{1}{2}}{T(x)} \leftarrow W(x)$.

The probability associated to each $(\Sigma,\mathbf{dom})$-structure by the program $p$ is shown in the following table.
\begin{center}
	{\small
\begin{tabular}{c c | c c c c }
    					&				&	\multicolumn{4}{c}{$W$}	\\
    					&				&	$\emptyset$		&	$\{a\}$			&	$\{b\}$			&	$\{a,b\}$	\\ \hline
\multirow{4}{*}{$T$}	&	$\emptyset$	&	$\sfrac{3}{32}$	&	$\sfrac{3}{64}$	&	$\sfrac{3}{32}$	&	$0$	\\
					&	$\{a\}$		&	$\sfrac{1}{32}$	&	$\sfrac{5}{64}$	&	$\sfrac{1}{32}$	&	$0$	\\
					&	$\{b\}$		&	$\sfrac{3}{32}$	&	$\sfrac{3}{64}$	&	$\sfrac{9}{32}$	&	$0$	\\
					&	$\{a,b\}$	&	$\sfrac{1}{32}$	&	$\sfrac{5}{64}$	&	$\sfrac{3}{32}$	&	$0$    
\end{tabular}
}
\end{center}

The empty structure has probability $\sfrac{3}{32}$. 
The only grounded instance whose well-formed model is the empty database is the instance $i_1$ that does not contain grounded atoms.
Its probability is $\sfrac{3}{32}$ because the probability that $T(a)$ is not in $i_1$ is $\sfrac{3}{4}$, the probability that $T(b)$ is not in $i_1$ is $\sfrac{1}{2}$, and the probability that neither $W(a)$ nor $W(b)$ are in $i_1$ is $\sfrac{1}{4}$ and all these events are independent.

The probability of some structures is determined by more than one grounded instance.
For example, the probability of the structure $s$ where $s(T) = \{a,b \}$ and $s(W)= \{ a\}$ is $\sfrac{5}{64}$.
There are two grounded instances $i_2$ and $i_3$ whose well-founded model is $s$.
The instance $i_2$ has probability $\sfrac{1}{16}$ and it consists of the atoms $T(b),W(a), \mathit{sw}(a)$ and the rule $T(x)  \leftarrow W(x),\mathit{sw}(x)$, whereas the instance $i_3$ has probability $\sfrac{1}{64}$ and it consists of  the atoms $T(a),T(b),W(a)$ and the rule $T(x)  \leftarrow W(x),\mathit{sw}(x)$.
Note that before computing the ground instances, we translated probabilistic rules and annotated disjunctions into standard \problog{} rules.
\end{example}

}

\onlyTechReport{
\newpage
\appendices

\section*{Appendices}
Here, we provide additional details about numerous aspects of the paper, such as acyclic \problog{} programs and \tool{}.
We also provide complete proofs of all our results.

\para{Structure}
Appendix~\ref{app:problog:introduction} reviews \problog{}'s syntax and semantics.
In Appendix~\ref{app:atklog:lite:extended}, we formalize acyclicity for \problog{} programs, we present our encoding from acyclic programs to poly-tree Bayesian Networks, we prove the correctness of this encoding, and we give complexity results for inference for acyclic \problog{} programs.
Finally, Appendix~\ref{app:enforcement:mechanism:proofs} contains additional details about \tool{} as well as all the security and complexity proofs.
\clearpage

\section{Acyclic \problog{} programs}\label{app:atklog:lite:extended}
The belief programs in \atklog{} are based on \problog{}, which is a very expressive probabilistic language and it can be used to encode a wide range of probabilistic models.
As a result, performing exact inference in \problog{} is intractable in general, i.e., it is $\#P$-hard in terms of data complexity.

We identify acyclic \problog{} programs, a restricted class of \problog{} programs where exact inference can be done efficiently, in linear time in terms of data complexity.

We first introduce acyclic programs.
Afterwards, we introduce relaxed acyclic programs, a generalization of acyclic programs, and we provide an encoding of relaxed acyclic programs as a  Bayesian Network.
Observe that here we state again (sometimes with some generalizations) some of the concepts we introduced in the main paper.
Note also that in the main paper we do not distinguish between acyclic programs and relaxed acyclic programs.

\subsection{Acyclic Programs}

Here, we define acyclic \problog{} programs.
Intuitively, an acyclic program is a \problog{} program whose probability distribution can be represented using a poly-tree Bayesian Network.
While inference for general Bayesian Networks is intractable, i.e., $\mathit{NP}$-hard, inference for poly-tree Bayesian Networks can be done in linear time in the size of the corresponding Bayesian Network.

\subsubsection{Acyclic \problog{} Programs}
In the following, let $\Sigma$ be a first-order signature, $\mathbf{dom}$ be a finite domain, and $p$ be a $(\Sigma, \mathbf{dom})$-program.
Without loss of generality, in the following we assume that (1) inside rules (not ground atoms), constants are used just inside equality and inequality atoms, i.e., instead of $B(x,7) \leftarrow D(x,y)$ we consider only the equivalent rule $B(x,k) \leftarrow D(x, y), k=7$, and (2) there are no distinct $v_1$ and $v_2$ such that $\atom{v_1}{a(\overline{c})} \in p$ and $\atom{v_2}{a(\overline{c})} \in p$ (this is without loss of generality since multiple occurrences of the same ground atom with different probabilities can be represented as a single ground atom an adjusted probability). 
Observe that in the following we re-define, extend, or make more precise the definitions from \S\ref{sect:inference}.

\para{Dependency Graph}
The \emph{$p$-dependency graph}, denoted $\mathit{graph}(p)$, is the directed labelled graph $(N,E)$ where $N = \Sigma$ and $E = \{ \mathit{pred}(\mathit{body}(r,i)) \xrightarrow{r,i} \mathit{pred}(\mathit{head}(r)) \mid r \in p \wedge i \in \mathbb{N} \}$.

\para{Directed Paths and Cycles}
A \emph{directed path} in $\mathit{graph}(p)$ is a path  $\mathit{pr}_1 \xrightarrow{r_1, i_1} \ldots  \xrightarrow{r_{n-1}, i_{n-1}} \mathit{pr}_n$ in $\mathit{graph}(p)$ such that for all $1 \leq j < n$, $p_j = p_{j+1}$.
Two directed paths $\mathit{pr}_1 \xrightarrow{r_1, i_1} \ldots  \xrightarrow{r_{n-1}, i_{n-1}} \mathit{pr}_n$ and $\mathit{pr}_1' \xrightarrow{r_1', i_1'} \ldots  \xrightarrow{r_{m-1}', i_{m-1}} \mathit{pr}_m'$ are \emph{head-connected} iff  $\mathit{pr}_1 = \mathit{pr}_1'$. 
Similarly, two directed paths $\mathit{pr}_1' \xrightarrow{r_1, i_1} \ldots  \xrightarrow{r_{n-1}, i_{n-1}} \mathit{pr}_n$ and $\mathit{pr}_1 \xrightarrow{r_1', i_1'} \ldots  \xrightarrow{r_{m-1}', i_{m-1}} \mathit{pr}_m'$ are \emph{tail-connected} iff  $\mathit{pr}_n = \mathit{pr}_m'$.
Note that an \emph{empty path} consists of just one node in $N$, i.e., one predicate symbol.
A \emph{directed cycle} in $\mathit{graph}(p)$ is a directed path $\mathit{pr}_1 \xrightarrow{r_1, i_1} \ldots \xrightarrow{r_{n-1}, i_{n-1}} \mathit{pr}_n$ in $\graph(p)$ such that $\mathit{pr}_n = \mathit{pr}_1$. 
Given a directed path $P = \mathit{pr}_1 \xrightarrow{r_1, i_1} \ldots  \xrightarrow{r_{n-1}, i_{n-1}} \mathit{pr}_n$ in $\graph(p)$, we denote by $\mathit{start}(P)$ the predicate symbol $\mathit{pr}_1$ and by $\mathit{end}(P)$ the predicate symbol $\mathit{pr}_n$.
We say that a directed cycle $C$ is \emph{simple} iff (1) each edge occurs at most once in $C$, and (2) there is a predicate symbol $\mathit{pr}$ such that $\mathit{pr}$ occurs twice in $C$ and for all $\mathit{pr}' \neq \mathit{pr}$, $\mathit{pr}'$ occurs at most once in $C$.

\para{Undirected Paths and Cycles}
A directed path $P_1$ in $\graph(p)$ is \emph{connected with $P_2$} iff $\mathit{end}(P_1) = \mathit{end}(P_2)$, $\mathit{start}(P_1) = \mathit{start}(P_2)$, or $\mathit{end}(P_1) = \mathit{start}(P_2)$.
An undirected path in $\graph(p)$ is a sequence of directed paths $P_1, \ldots, P_n$ such that, for all $1 \leq i < n$, $P_i$ is connected with $P_{i+1}$.
An undirected cycle in $\graph(p)$ is a sequence of directed paths $P_1, \ldots, P_n$ such that (1) $P_1, \ldots, P_n$ is an undirected path and (2) $P_n$ is connected with $P_1$.
Note that a directed path (respectively cycle) is also an undirected path (respectively cycle).
We say that two (directed or undirected) cycles are equivalent iff they are the same cycle.

\para{Undirected Unsafe Structures}
An \emph{undirected unsafe structure} in $\mathit{graph}(p)$ is quadruple $\langle D_1, D_2, D_3, U \rangle$ such that
(1) $D_1, D_2, D_3$ are directed paths in $\mathit{graph}(p)$,
(2) $U$ is an undirected path in $\mathit{graph}(p)$,
(3) $D_1$ and $D_2$ are non-empty or $D_2$ and $D_3$ are non-empty, 
(4) $D_1$ and $D_2$ are head-connected,
(5) $D_2$ and $D_3$ are tail-connected, and
(6) $D_1, U, D_3, D_2$ is an undirected cycle in $\graph(p)$.
An undirected unsafe structure $\langle D_1, D_2, D_3, U \rangle$ \emph{covers} an undirected cycle $U'$ iff $D_1, U, D_3, D_2$ is equivalent to $U'$.

\para{Directed Unsafe Structures}
A \emph{directed unsafe structure} in $\mathit{graph}(p)$ is a directed cycle $C$ in $\mathit{graph}(p)$.
A directed unsafe structure $C$ \emph{covers} a directed cycle $C'$ iff $C$ is equivalent to $C'$.

\para{Deriving Ordering Annotations}
A \emph{$\Sigma$-ordering annotation} is $\order{A}$, where $A \subseteq \Sigma$ and there is a $k \in \mathbb{N}$ such that for all $a \in A$, $|a| = 2k$. 
Let $\preceq$ be a well-founded partial order over $2^\Sigma$.
We say that the ordering annotation $\order{A}$ can be derived from the program $p$ given $\preceq$, written $p, \preceq \models \order{A}$, iff one of the following conditions hold:
\begin{compactenum}
\item For all $\mathit{pr} \in A$, there is no rule $r$ in $p$ such that $\mathit{pred}(\mathit{head}(r)) = \mathit{pr}$ and $\mathit{body}(r) \neq \emptyset$,
there is no $\mathit{pr}' \in \Sigma$ such that $\mathit{pr}' \prec \mathit{pr}$ and $p, \preceq \models \order{\mathit{pr}'}$, and 
the transitive closure of the relation $R = \bigcup_{\mathit{pr} \in A} \{ (\overline{c}, \overline{v}) \mid \exists r \in p.\, ((\mathit{head}(r) =pr(\overline{c},\overline{v}) \vee \exists v'.\ \mathit{head}(r) = \atom{v'}{pr(\overline{c},\overline{v})}) \wedge \mathit{body}(r) = \emptyset) \wedge |\overline{c}| = |\overline{v}| = \sfrac{|\mathit{pr}|}{2}\}$ is strict partial order. %, and 
\item There is a set $A' \subseteq \Sigma$ such that (1) $A' \preceq A$, (2) $p, \preceq \models \order{A'}$, (3) there is a $\mathit{pr} \in \Sigma$ and a $\mathit{pr}' \in A'$ such that 
(a) $|\mathit{pr}| = |\mathit{pr}'|$,
(b) $A = (A' \setminus \{\mathit{pr}'\}) \cup \{\mathit{pr}\}$, and (c) for all rules $r$ in $p$ such that $\mathit{pred}(\mathit{head}(r)) = pr$, there are sequences of variables $\overline{x}$ and $\overline{y}$ and an $i$, such that $\mathit{head}(r) = \mathit{pr}(\overline{x}, \overline{y})$,  $\mathit{body}(r,i) =  \mathit{pr}'(\overline{x},\overline{y})$, and $|\overline{x}| = |\overline{y}|$. 
\end{compactenum}
A \textit{$\Sigma$-ordering template ${\cal O}$} is a set of ordering annotations. 
We say that \emph{$p$ complies with ${\cal O}$ with respect to $\preceq$} iff for all ordering annotations $a \in {\cal O}$, $p, \preceq \models a$ holds.
We say that \emph{$p$ complies with ${\cal O}$} iff there exists a well-founded partial order $\preceq$ over $\Sigma$ such that $p$ complies with ${\cal O}$ with respect to $\preceq$.
Intuitively, a program $p$ complies with an ordering template ${\cal O}$ iff all sets of predicates in ${\cal O}$ represent strict partial orders in all possible models of $p$.

\para{Deriving Disjointness Annotations}
A \textit{$\Sigma$-disjointness annotation} is $\disjoint{a}{a'}$, where $a, a' \in \Sigma$ and $|a| = |a'|$.
Let $\preceq$ be a well-founded partial order over $\Sigma^2$.
We say that the disjointness annotation $\disjoint{\mathit{pr}}{\mathit{pr}'}$ can be derived from the program $p$ given $\preceq$, written $p, \preceq \models \disjoint{\mathit{pr}}{\mathit{pr}'}$, iff one of the following conditions hold:
\begin{compactenum}
\item There are no rules $r \in p$ such that $\mathit{body}(r) \neq \emptyset$ and $\mathit{pred}(\mathit{head}(r)) = \mathit{pr}$ or $\mathit{pred}(\mathit{head}(r)) = \mathit{pr}'$, and 
there is no $(\mathit{pr}_1,\mathit{pr}_2) \in \Sigma^2$ such that $(\mathit{pr}_1,\mathit{pr}_2) \prec (\mathit{pr},\mathit{pr}')$ and $p,\preceq \models \disjoint{\mathit{pr}_1}{\mathit{pr}_2}$, and 
the sets $\{ \overline{c} \mid \mathit{pr}(\overline{c}) \in p \vee \exists v.\, \atom{v}{\mathit{pr}(\overline{c})} \in p\}$ and $\{ \overline{c} \mid \exists r \in p.\ \mathit{head}(r) = \mathit{pr}'(\overline{c}) \vee \exists v.\, \atom{v}{\mathit{pr}'(\overline{c})} \in p\}$ are disjoint.
\item There are no rules $r \in p$ such that $\mathit{body}(r) \neq \emptyset$ and $\mathit{pred}(\mathit{head}(r)) = \mathit{pr}$, and 
there is $(\mathit{pr},\mathit{pr}'_1)  \in \Sigma^2$ such that (a) $p, \preceq \models \disjoint{\mathit{pr}}{\mathit{pr}'_1}$, (b) $(\mathit{pr},\mathit{pr}'_1)  \prec (\mathit{pr}, \mathit{pr}')$, and (c) for all rules $r \in p$ such that $\mathit{head}(r) = \mathit{pr}'(\overline{x})$ there is $1 \leq i \leq |\mathit{body}(r)|$ such that $\mathit{body}(r,i) = \mathit{pr}'_1(\overline{x})$.
\item There are no rules $r \in p$ such that $\mathit{body}(r) \neq \emptyset$ and $\mathit{pred}(\mathit{head}(r)) = \mathit{pr}'$, 
there is $(\mathit{pr}_1,\mathit{pr}') \in \Sigma^2$ such that (a) $p, \preceq \models \disjoint{\mathit{pr}_1}{\mathit{pr}'}$, (b) $(\mathit{pr}_1,\mathit{pr}') \prec (\mathit{pr}, \mathit{pr}')$, and (c) for all rules $r \in p$ such that $\mathit{head}(r) = \mathit{pr}(\overline{x})$ there is $1 \leq i \leq |\mathit{body}(r)|$ such that $\mathit{body}(r,i) = \mathit{pr}_1(\overline{x})$.
\item There is  $(\mathit{pr}_1, \mathit{pr}_1') \in \Sigma^2$ such that (a) $p, \preceq \models \disjoint{\mathit{pr}_1}{\mathit{pr}_1'}$, (b) $(\mathit{pr}_1, \mathit{pr}_1')  \prec (\mathit{pr}, \mathit{pr}')$, and (c) for all rules $r,r' \in p$ such that $\mathit{head}(r) = \mathit{pr}(\overline{x})$ and $\mathit{head}(r') = \mathit{pr}'(\overline{x}')$, there are $1 \leq i \leq |\mathit{body}(r)|$, $1 \leq i' \leq |\mathit{body}(r')|$, and $(\mathit{pr}_1, \mathit{pr}_1') \in {\cal D}$ such that $\mathit{body}(r,i)  = \mathit{pr}_1(\overline{x})$, $\mathit{body}(r',i') = \mathit{pr}_1'(\overline{x}')$.
\end{compactenum}
A \textit{$\Sigma$-disjointness template ${\cal D}$} is a set of disjointness annotations. 
We say that \emph{$p$ complies with ${\cal D}$} iff there exists a well-founded partial order $\preceq$ over $\Sigma^2$ such that for all disjointness annotations $a \in {\cal D}$, $p, \preceq \models a$ holds.
Intuitively, a program $p$ complies with a disjointness template ${\cal D}$ iff all pairs of predicates $(\mathit{pr},\mathit{pr'})$ involved in a disjointness annotation in ${\cal D}$ are pairwise disjoint in all possible models of $p$.

\para{Deriving Uniqueness Annotations}
A \textit{$\Sigma$-uniqueness annotation} is $\unique{a}{K}$, where $a \in \Sigma$ and $K$ is a non-empty subset of $\{1, \ldots, |a|\}$.
Let $\preceq$ be a well-founded order over $\Sigma$.
We say that the uniqueness annotation $\unique{\mathit{pr}}{K}$ can be derived from $p$ given $\preceq$, written $p, \preceq \models \unique{\mathit{pr}}{K}$, iff one of the following conditions  hold:
\begin{compactenum}
\item There are no rules $r \in p$ such that $\mathit{body}(r) \neq \emptyset$ and $\mathit{pred}(\mathit{head}(r)) = \mathit{pr}$, and there is no $\mathit{pr}'$ such that $\mathit{pr}' \prec \mathit{pr}$, and for all $\overline{t}, \overline{v}$ in $\{ \overline{c} \mid \mathit{pr}(\overline{c}) \in p \vee \exists v.\ \atom{v}{\mathit{pr}(\overline{c})} \in p \}$, if $\overline{t}(i) = \overline{v}(i)$ for all $i \in K$, then $\overline{t} = \overline{v}$.
\item Both the following conditions hold: 
\begin{compactenum}
\item For each rule $r \in p$ such that $\mathit{head}(r) = \mathit{pr}(\overline{x})$ and $\mathit{body}(r) \neq \emptyset$, there is a mapping $\mu'$ associating to each predicate $\mathit{pr}' \prec \mathit{pr}$ a set $K$ such that (a) $p,\preceq \models \unique{\mathit{pr}'}{\mu(\mathit{pr}')}$, and (b) the following hold:
\[ V \subseteq \bigcup_{l \in \mathit{bound}(\mathit{head}(r), K, \mathit{body}^+(r), \mu')} \mathit{vars}(l),\]
where $V  =  \{\overline{x}(i) \mid i \not\in K \wedge \overline{x}(i) \in \mathit{Var}\}$, $u(\overline{y}, K) = \{ \overline{y}(i) \mid i \in K \}$, and
\begin{align*}
& \mathit{bound}(h, K, L, \mu')  = \bigcup_{
\substack{b(\overline{y}) \in L \wedge  b \prec \mathit{pr} \wedge \\ u(\overline{y}, \mu'(b)) \subseteq u(\mathit{args}(h),K) }} \{b(\overline{y})\} \cup\\
& \bigcup_{\substack{b(\overline{y}) \in L \wedge b \prec \mathit{pr} \wedge\\ \exists l' \in \mathit{bound}(h, K, L, \mu'). (u(\overline{y}, \mu'(b)) \subseteq \mathit{vars}(l') )}} \{b(\overline{y})\}.
\end{align*}
\item For all rules $r_1,r_2 \in p$, if $r_1 \neq r_2$, $\mathit{pred}(\mathit{head}(r_1)) \\ = \mathit{pred}(\mathit{head}(r_2)) = \mathit{pr}$, and  $K \neq \{1,\ldots, |\mathit{pr}|\}$, then there is a value $i \in K$ such that 
$\mathit{args}(\mathit{head}(r_1)) \in \mathbf{dom}$,
$\mathit{args}(\mathit{head}(r_2)) \in \mathbf{dom}$, and
$\mathit{args}(\mathit{head}(r_1)) \neq \mathit{args}(\mathit{head}(r_1))$.
\end{compactenum}
\end{compactenum} 
A \textit{$\Sigma$-uniqueness template ${\cal U}$} is a set of uniqueness annotations. 
We say that \emph{$p$ complies with ${\cal U}$ with respect to $\preceq$} iff for all uniqueness annotations $a \in {\cal U}$, $p, \preceq \models a$ holds.
We say that \emph{$p$ complies with ${\cal U}$} iff there exists a well-founded partial order $\preceq$ over $\Sigma$ such that $p$ complies with ${\cal U}$ with respect to $\preceq$.
Intuitively, a program complies with a uniqueness template ${\cal U}$ iff for all predicates $\mathit{pr}$ each set $K$ such that $\unique{\mathit{pr}}{K} \in {\cal U}$ represents a primary key for $\mathit{pr}$.

\para{Propagation Maps}
We use propagation maps to track how information flows inside rules.
Given a rule $r$ and a literal $l\in\mathit{body}(r)$, the \textit{$(r,l)$-vertical map} is the mapping $\mu$ from $\{1, \ldots, |l|\}$ to $\{1, \ldots, |\mathit{head}(r)|\}$ such that $\mu(i) = j$ iff $\mathit{args}(l)(i) = \mathit{args}(\mathit{head}(r))(j)$ and $\mathit{args}(l)(i) \in \mathit{Var}$. 
Given a rule $r$ and literals $l, l'$ in $r$'s body, the \textit{$(r,l,l')$-horizontal map} is the mapping $\mu$ from $\{1, \ldots, |l|\}$ to $\{1, \ldots, |l'|\}$ such that $\mu(i) = j$ iff $\mathit{args}(l)(i) = \mathit{args}(l')(j)$ and $\mathit{args}(l)(i) \in \mathit{Var}$. 
We say that a path links to a literal $l$ if information flows along the rules to $l$.
This can be formalized by posing constraints on the mapping obtained by combining horizontal and vertical maps along the path. 
Formally, given a literal $l$ and a mapping  $\nu : \mathbb{N} \to \mathbb{N}$, 
 a directed path $\mathit{pr}_1 \xrightarrow{r_1,i_1} \ldots \xrightarrow{r_{n-1}, i_{n-1}} \mathit{pr}_n$ \emph{$\nu$-downward links to $l$} iff
there is a $0 \leq j < n-1$ such that the function $\mu := \mu' \circ \mu_{j} \circ \ldots \circ \mu_1$ 
satisfies $\mu(k) = \nu(k)$ for all $k$ for which $\nu(k)$ is defined, where for $1 \leq h \leq j$, $\mu_h$ is the vertical map connecting $\mathit{body}(r_h,i_h)$ and $r_h$, 
and $\mu'$ is the horizontal map connecting $\mathit{body}(r_{j+1},i_{j+1})$ with $l$.
Similarly, a directed path $\mathit{pr}_1 \xrightarrow{r_1,i_1} \ldots \xrightarrow{r_{n-1}, i_{n-1}} \mathit{pr}_n$ \emph{$\nu$-upward links to $l$} iff
there is a $1 \leq j \leq n-1$ such that the function $\mu :=\mu'^{-1} \circ \mu_{j+1}^{-1} \circ \ldots \circ \mu_{n-1}^{-1}$ satisfies $\mu(k) = \nu(k)$ for all $k$ for which $\nu(k)$ is defined, where $\mu_h$ is the $(r_h,\mathit{body}(r_h,i_h))$-vertical map, for $j < h \leq n-1$, 
and $\mu'$ is the $(r_{j}, l)$-vertical map. 
A path \textit{$P$ links to a predicate symbol $a$} iff there is an atom $a(\overline{x})$ such that $P$ links to $a(\overline{x})$.

\para{Negation-guarded Programs}
A rule $r$ is \textit{negation-guarded}~\cite{barany2012queries} iff for all negative literals $l$ in $r$, $\mathit{vars}(l) \subseteq \bigcup_{l' \in \mathit{body}^+(r)} \mathit{vars}(l')$.
We say that a program $p$ is \emph{negation-guarded} if all rules $r\in p$ are negation-guarded.

\para{Join Trees}
A join tree represents how multiple predicate symbols in a rule share variables.
A \emph{join tree for  a rule $r$} is a rooted labelled tree $(N,E, \treeroot, \lambda)$, where
$N\subseteq \mathit{body}(r)$,
$E$ is a set of edges (i.e., unordered pairs over $N^2$),
$\treeroot \in N$ is the tree's root, and
$\lambda$ is the labelling function.
Moreover, we require that for all $n,n'\in N$, if $n \neq n'$ and $(n,n') \in E$, then $\lambda(n,n') = \mathit{vars}(n) \cap \mathit{vars}(n')$ and $\lambda(n,n') \neq \emptyset$.
A join tree $(N,E, \treeroot, \lambda)$ \emph{covers} a literal $l$ iff $l \in N$.
Given a join tree $J=(N,E, \treeroot, \lambda)$ and a node $n \in N$, the \emph{support of $n$}, denoted $\mathit{support}(n)$, is the set $\mathit{vars}(\mathit{head}(r)) \cup \{ x \mid (x = c) \in \mathit{cstr}(r) \wedge c \in \mathbf{dom} \} \cup \{ \mathit{vars}(n') \mid n' \in \mathit{anc}(J,n) \}$, where $\mathit{anc}(J,n)$ is the set of $n$'s ancestors in $J$, i.e., the set of all nodes on the path from $\treeroot$ to $n$ (if such a path exists). 

Let ${\cal U}$ be a uniqueness template.
A join tree $J=(N,E, \treeroot, \lambda)$ is \emph{${\cal U}$-strongly connected} iff
for all positive literals $l \in N$,  there is a set $K \subseteq \{ i \mid \overline{x} = \mathit{args}(l) \wedge \overline{x}(i) \in \mathit{support}(l) \}$ 
such that $\unique{\mathit{pred}(l)}{K} \in {\cal U}$ and for all negative literals $l \in N$, $\mathit{vars}(l) \subseteq \mathit{support}(l)$.
In contrast, a join tree $(N,E, \treeroot, \lambda)$ is \emph{${\cal U}$-weakly connected} iff
for all $(a(\overline{x}), a'(\overline{x}')) \in E$, there are $K \subseteq \{ i \mid \overline{x}(i) \in L \}$ and $K' \subseteq \{ i \mid \overline{x}'(i) \in L \}$  such that $\unique{a}{K}, \unique{a'}{K'} \in {\cal U}$, where $L= \lambda(a(\overline{x}), a'(\overline{x}'))$.

\para{Connected rules}
A connected rule $r$ ensure that a grounding of $r$ is fully determined either by the assignment to the head's variables (strongly connected rule) or to the variable of any literal in $r$'s body (weakly connected rule).
This is done by exploiting uniqueness annotations and the rule's structure.
In the following, let ${\cal U}$ be a uniqueness template.

We say that a rule $r$ is \emph{strongly connected for ${\cal U}$} iff there exist join trees $J_1, \ldots, J_n$ such that (a) the trees cover all literals in $\mathit{body}(r)$, and (b) for each $1 \leq i \leq n$,  $J_i$ is strongly connected for ${\cal U}$.
We call $J_1, \ldots, J_n$ a \emph{witness for the strong connectivity of $r$ with respect to ${\cal U}$}.
A strongly connected rule $r$ guarantees that for any two groundings $r',r''$ of $r$, if $\mathit{head}(r') = \mathit{head}(r'')$, then $r' = r''$.

Given a rule $r$, a set of literals $L$, and a uniqueness template ${\cal U}$, we say that a literal $l \in \mathit{body}(r)$ is \textit{$(r,{\cal U},L)$-strictly guarded} iff 
(1) $\mathit{vars}(l) \subseteq \bigcup_{l' \in  L\cap\mathit{body}^+(r)} \mathit{vars}(l') \cup \{x \mid (x = c) \in \mathit{cstr}(r) \wedge c \in \mathbf{dom}\}$, and
(2) there is a positive literal $a(\overline{x}) \in L\cap\mathit{body}^+(r)$ and an annotation $\unique{a}{K} \in {\cal U}$ such that $\{ \overline{x}(i) \mid i \in K \} \subseteq \mathit{vars}(l)$.
We say that a rule $r$ is \emph{weakly connected for ${\cal U}$} iff 
there exists a join tree $J = (N,E, \treeroot,\lambda)$ such that (a) $J$ is weakly connected for ${\cal U}$, (b) $N \subseteq \mathit{body}^+(r)$, and (c) all literals in $\mathit{body}(r) \setminus N$ are $(r,{\cal U}, N)$-strictly guarded.
We call $J$ a \emph{witness for the weak connectivity of $r$ with respect to ${\cal U}$}.
A weakly connected rule $r$ guarantees that for any two groundings $r',r''$ of $r$, if $\mathit{body}(r',i) = \mathit{body}(r'',i)$ for some $i$, then $r' = r''$.

\para{Guarded Cycles}
A set of predicates \emph{$O$ guards a directed cycle $\mathit{pr}_1 \xrightarrow{r_1, i_1} \ldots \xrightarrow{r_{n-1}, i_{n-1}} \mathit{pr}_n \xrightarrow{r_n, i_n} \mathit{pr}_1$}  iff there are integers $1 \leq y_1 < y_2 < \ldots < y_e = n$, literals $o_1(\overline{x}_1), \ldots, o_e(\overline{x}_e)$ (where $o_j \in O$ and $|\overline{x}_j| = |o_j|$),
a  non-empty set $K \subseteq \{1, \ldots, |\mathit{pr}_1|\}$, and a bijection $\nu : K \to \{1, \ldots, \sfrac{|\mathit{o}_1|}{2}\}$ such that 
for each $0 \leq k < e$, 
(1) $\mathit{pr}_{y_k} \xrightarrow{r_{y_k}, i_{y_k}} \ldots \xrightarrow{r_{{y_{k+1}}-1}, i_{{y_{k+1}}-1}} pr_{y_{k+1}}$ $\nu$-downward connects to $o_{k+1}(\overline{x}_{k+1})$, and 
(2) $\mathit{pr}_{y_{k+1}-1} \xrightarrow{r_{{y_{k+1}}-1}, i_{{y_{k+1}}-1}} pr_{y_{k+1}}$ $\nu'$-upward connects to $o_{k+1}(\overline{x}_{k+1})$, 
where 
$\nu'(i) = \nu(x) + \sfrac{|\mathit{o}_1|}{2}$ for all $1 \leq i \leq \sfrac{|\mathit{o}_1|}{2}$, and
$y_0 = 1$. 
A directed cycle $C$ is guarded by an ordering template ${\cal O}$ iff there is an annotation $\order{A} \in {\cal O}$ such that  $A$ guards $C$.

\para{Head-Guarded Paths}
A pair of head-connected non-empty directed paths $(P_1, P_2)$ is \emph{$({\cal D}, {\cal U})$-head guarded} iff one of the following conditions hold:
\begin{compactitem}
%\item $P_1$ is an empty path or $P_2$ is an empty path,

%\item if $P_1 = P_2$ and $P_1$ is non-empty, then all rules in $P_1$ are weakly connected for ${\cal U}$, or
\item if $P_1 = P_2$, then all rules in $P_1$ are weakly connected for ${\cal U}$, or

%\item if $P_1 \neq P_2$ and $P_1$ is non-empty, then there is a pair of predicate symbols $(\mathit{pr},\mathit{pr}') \in {\cal D}$ that head-guards $(P_1,P_2)$.
\item if $P_1 \neq P_2$, then %there is %a %pair of predicate symbols $(\mathit{pr},\mathit{pr}') \in {\cal D}$ that head-guards $(P_1,P_2)$.
there is an annotation $\disjoint{\mathit{pr}}{\mathit{pr}'} \in {\cal D}$, a set $K \subseteq \{1, \ldots, |a|\}$, and a bijection $\nu : K \to \{1, \ldots, |\mathit{pr}|\}$ such that $P_1$ $\nu$-downward links to $\mathit{pr}$ and $P_2$ $\nu$-downward links to $\mathit{pr}'$.
\end{compactitem}
Given two ground paths $P_1'$ and $P_2'$ corresponding to $P_1$ and $P_2$, the first condition ensures that $P_1' = P_2'$ whereas the second  ensures that $P_1'$ or $P_2'$ are not in the ground graph.

\para{Tail-Guarded Paths}
A pair of tail-connected non-empty directed paths $(P_1, P_2)$ are \emph{$({\cal D}, {\cal U})$-tail guarded} iff one of the following conditions hold:
\begin{compactitem}
%\item $P_1$ is an empty path or $P_2$ is an empty path,

%\item if $P_1 = P_2$ and $P_1$ is non-empty, then all rules in $P_1$ are strongly connected for ${\cal U}$, or
\item if $P_1 = P_2$, then all rules in $P_1$ are strongly connected for ${\cal U}$, or

%\item if $P_1 \neq P_2$ and $P_1$ is non-empty, then there is a pair of predicate symbols $(\mathit{pr},\mathit{pr}') \in {\cal D}$ that tail-guards $(P_1,P_2)$.
\item %if $P_1 \neq P_2$, then there is a pair of predicate symbols $(\mathit{pr},\mathit{pr}') \in {\cal D}$ that tail-guards $(P_1,P_2)$.
if $P_1 \neq P_2$, there is an annotation $\disjoint{\mathit{pr}}{\mathit{pr}'} \in {\cal D}$, a set $K \subseteq \{1, \ldots, |a|\}$, and a bijection $\nu : K \to \{1, \ldots, |\mathit{pr}|\}$, such that $P_1$ $\nu$-upward links to $\mathit{pr}$ and $P_2$ $\nu$-upward links to $\mathit{pr}'$.
\end{compactitem}

\para{Guarded Unsafe Structures}
An undirected unsafe structure $\langle D_1, D_2, D_3, U \rangle$ is \textit{$({\cal O}, {\cal D}, {\cal U})$-guarded} iff 
 either $(D_1,D_2)$ is $({\cal D}, {\cal U})$-head-guarded  or $(D_2,D_3)$ is $({\cal D}, {\cal U})$-tail-guarded. % by ${\cal D}$.
In contrast, a directed unsafe structure $C$ is \textit{$({\cal O}, {\cal D}, {\cal U})$-guarded} iff $C$ is guarded by ${\cal O}$.

\para{Acyclic Programs}
A $(\Sigma, \mathbf{dom})$-program $p$ is $({\cal O}, {\cal D}, {\cal U})$-\emph{acyclic}, where ${\cal O}$ is a $\Sigma$-ordering template, ${\cal D}$ is a $\Sigma$-disjointness template, and ${\cal U}$ is  a $\Sigma$-uniqueness template, iff
\begin{compactenum}
\item $p$ complies with ${\cal O}$, ${\cal D}$, and ${\cal U}$,
\item $p$ is a negation-guarded program, and
\item for all undirected cycles $U$ in $\mathit{graph}(p)$ that are not simple directed cycles, 
there is an undirected unsafe structure $S$ in $\mathit{graph}(p)$ that covers $U$, and is   $({\cal O}, {\cal D}, {\cal U})$-guarded
\item for all directed cycles $C$ in $\graph(p)$, there is a directed unsafe structure $S$ in $\mathit{graph}(p)$ that covers $C$, and is   $({\cal O}, {\cal D}, {\cal U})$-guarded
\end{compactenum}
A $(\Sigma, \mathbf{dom})$-program $p$ is \emph{acyclic} iff there are a $\Sigma$-ordering template ${\cal O}$, a $\Sigma$-disjointness template ${\cal D}$, and a $\Sigma$-uniqueness template ${\cal U}$ such that $p$ is $({\cal O}, {\cal D}, {\cal U})$-acyclic.

\subsubsection{Exact Grounding}\label{app:atklog:lite:extended:exact:grounding}
We now introduce the exact grounding of a \problog{} program $p$ given a $p$-probabilistic assignment.
The exact grounding encodes the \problog{} semantics.

Let $p$ be a \problog{} program and $f$ be a $p$-probabilistic assignment.
Furthermore, let $\mu$ be the mapping from predicate symbols in $p$ to $\mathbb{N}$:
$\mu(a) = \mathit{max}_{t \in \mathit{paths}(p,a)}(\mathit{w}(t))$, where $\mathit{paths}(p,a)$ is the set of all directed paths that ends in $a$ in $p$'s dependency graph and the weight of each path is
\[
\mathit{w}(t) = 
\begin{cases}
0 & \text{if}\ t = \epsilon\\
\mathit{w}(t') & \text{if}\ t = a_1 \xrightarrow{r_1,i_1} a_2 \xrightarrow{r_2,i_2} \ldots \xrightarrow{r_n,i_n} a_{n+1}  \\
	& \quad \wedge t' = a_2 \xrightarrow{r_2,i_2} \ldots \xrightarrow{r_n,i_n} a_{n+1}  \\
	& \quad \wedge \mathit{body}(r_1,i_1) \in {\cal A}^+_{\Sigma,\mathbf{dom}}\\
1+\mathit{w}(t') & \text{if}\ t = a_1 \xrightarrow{r_1,i_1} a_2 \xrightarrow{r_2,i_2} \ldots \xrightarrow{r_n,i_n} a_{n+1}  \\
	& \quad \wedge t' = a_2 \xrightarrow{r_2,i_2} \ldots \xrightarrow{r_n,i_n} a_{n+1}  \\
	& \quad \wedge \mathit{body}(r_1,i_1) \not\in {\cal A}^+_{\Sigma,\mathbf{dom}}\\
\end{cases}
\]
We remark that $\mu$ is a valid stratification for the program $p$.
From more details about logic programming with stratified negation we refer the reader to~\cite{abiteboul1995foundations}.

The functions $g_{f}(p,j,i)$, $g_{f}(p,j)$, and $g_{f}(p)$ are defined as follows:
\begin{align*}
g_{f}(p,0,0) =& \{ a(\overline{c}) \mid a(\overline{c}) \in p \wedge \mu(a) = 0 \} \cup \\
& \{ a(\overline{c}) \mid \exists v.\ \atom{v}{a(\overline{c})} \in p \wedge f(\atom{v}{a(\overline{c})}) = \top \\
& \qquad \wedge \mu(a) = 0 \}
\displaybreak[0] \\
g_{f}(p,0,i) =& g_{f}(p,0,i-1) \cup  \\
& \{ \mathit{head}(r)\Theta \mid \Theta \in ASS(r) \wedge  r \in p  \wedge \\
& \qquad \forall l \in \mathit{body}^+(r).\ l\Theta \in \mathit{g}_{f}(p, 0, i-1) \wedge \\
& \qquad \mathit{body}^-(r) = \emptyset \wedge \mu(\mathit{pred}(\mathit{head}(r))) = 0 \}
\displaybreak[0] \\
g_{f}(p,j,0) =& g_{f}(p,j-1) \cup \\
&\{ a(\overline{c}) \mid a(\overline{c}) \in p \wedge \mu(a) = j \} \cup \\
&\{ a(\overline{c}) \mid \exists v.\ \atom{v}{a(\overline{c})} \in p \wedge f(\atom{v}{a(\overline{c})}) = \top \\
& \qquad \wedge \mu(a) = j \}
\displaybreak[0] \\
g_{f}(p,j,i) =& g_{f}(p,j,i-1) \cup \\
& \{ \mathit{head}(r)\Theta \mid \Theta \in ASS(r) \wedge  r \in p \wedge  \\
& \qquad \forall l \in \mathit{body}^+(r).\ l\Theta \in \mathit{g}_{f}(p, j, i-1) \wedge \\
& \qquad \forall l \in \mathit{body}^-(r).\ l\Theta \not\in \mathit{g}_{f}(p, j-1) \wedge \\
& \qquad \wedge \mu(\mathit{pred}(\mathit{head}(r))) = j \}
\displaybreak[0] \\
g_{f}(p,j) =& \bigcup_{i \in \mathbb{N}} g_{f}(p,j,i)
\displaybreak[0] \\
g_{f}(p) =& \bigcup_{j \in \mathbb{N}} g_{f}(p,j)
\end{align*}
Furthermore, the functions $g_{f}(p,r,j,i)$, $g_{f}(p,r,j)$, and $g_{f}(p,r)$ are defined as follows:
\begin{align*}
g_{f}(p,a(\overline{c}),j,i) =& \{ a(\overline{c}) \mid a(\overline{c}) \in p \wedge \mu(a) = j \} 
\displaybreak[0] \\
g_{f}(p,\atom{v}{a(\overline{c})},j,i) =& \{ a(\overline{c}) \mid \atom{v}{a(\overline{c})} \in p \wedge f(\atom{v}{a(\overline{c})}) = \top \wedge  \\
& \qquad \mu(a) = j \}
\displaybreak[0] \\
g_{f}(p,r,0,0) =& \emptyset
\displaybreak[0]\\
g_{f}(p,r,0,i) =& g_{f}(p,r,0,i-1) \cup \\
& \ \{ h\Theta \leftarrow l_1\Theta, \ldots, l_n\Theta \mid \\
& \ r = h \leftarrow l_1, \ldots, l_n \wedge \\
& \ \Theta \in ASS(r) \wedge \\
& \ \forall l \in \mathit{body}^+(r).\, l\Theta \in g_{f}(p,0,i-1) \wedge \\
& \ \mathit{body}^-(r) = \emptyset \wedge \mu(\mathit{pred}(h)) = 0 \}
\displaybreak[0]\\
g_{f}(p,r,j,0) =& g_{f}(p,r,j-1)
\displaybreak[0]\\
g_{f}(p,r,j,i) =& g_{f}(p,r,j,i-1) \cup \\
& \ \{ h\Theta \leftarrow l_1\Theta, \ldots, l_n\Theta \mid \\
& \ r = h \leftarrow l_1, \ldots, l_n \wedge \\
& \ \Theta \in ASS(r) \wedge \\
& \ \forall l \in \mathit{body}^+(r).\, l\Theta \in g_{f}(p,j,i-1) \wedge \\
& \ \forall l \in \mathit{body}^-(r).\, l\Theta \not\in g_{f}(p,j-1) \wedge \\
& \ \mu(\mathit{pred}(h)) = j \}
\displaybreak[0]\\
g_{f}(p,r,j) &= \bigcup_{i \in \mathbb{N}} g_{f}(p,r,j,i)
\displaybreak[0]\\
g_{f}(p,r) &= \bigcup_{j \in \mathbb{N}} g_{f}(p,r,j)
\end{align*}

The fact below follows directly from $g_f$'s definition and the notion of well-founded models for stratified logic programs~\cite{abiteboul1995foundations}. 

\begin{fact}
For any \problog{} program $p$ and any $p$-probabilistic assignment $f$, $g_f(p) = \mathit{WMF}(\mathit{instance}(p,f))$.
\end{fact}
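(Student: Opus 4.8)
The plan is to recognize the recursion defining $g_f$ as the classical iterated least-fixpoint construction that computes the perfect (equivalently, well-founded) model of a stratified logic program, here instantiated on $\mathit{instance}(p,f)$ with the level mapping $\mu$. First I would confirm that $\mu$ is a valid stratification of $\mathit{instance}(p,f)$. By definition $\mu(a)$ is the maximum, over all directed paths ending at $a$ in $\graph(p)$, of the number of negation-bearing edges on the path; since $p$ has no negative cycles, every such path has bounded negation weight, so $\mu$ is a finite, well-defined map on predicate symbols. Moreover, for every rule $r$ one gets $\mu(\mathit{pred}(\mathit{head}(r))) \geq \mu(\mathit{pred}(l))$ for each positive body literal $l$ and $\mu(\mathit{pred}(\mathit{head}(r))) > \mu(\mathit{pred}(l'))$ for each negative body literal $l'$. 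Because $\mathit{instance}(p,f) = \{a \mid \exists v.\, f(\atom{v}{a}) = \top\} \cup \mathit{rules}(p)$ is obtained from $p$ by keeping all rules and only deleting some extensional atoms, it induces no dependency edges absent from $\graph(p)$, so $\mu$ stratifies it as well.

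Second I would fix the reference construction. Write $\mathit{instance}(p,f) = \bigcup_j P_j$, where $P_j$ groups the level-$j$ facts $F_j$ (recall ground atoms are treated as empty-body rules, so the deterministic facts $a(\overline{c}) \in p$ always lie in $\mathit{rules}(p)$, while the probabilistic facts contribute exactly the selected atoms $\{a \mid \exists v.\, f(\atom{v}{a}) = \top\}$) together with the rules whose head predicate has level $j$. The perfect model is then built by $M_{-1} = \emptyset$ and $M_j = \mathrm{lfp}\big(T_j^{M_{j-1}}\big)$, where $T_j^{M}(X) = X \cup F_j \cup \{ \mathit{head}(r)\Theta \mid r \in P_j,\ \mathit{body}^+(r)\Theta \subseteq X,\ \mathit{body}^-(r)\Theta \cap M = \emptyset\}$ evaluates positive literals against its accumulating argument and negative literals against the fixed lower-stratum model $M$. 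Since $\mathbf{dom}$ is finite the Herbrand base is finite, so each least fixpoint is reached after finitely many iterations and there are finitely many strata; the classical theorem on stratified programs then gives $\bigcup_j M_j = \mathit{WFM}(\mathit{instance}(p,f))$, independently of the stratification chosen.

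The core step is a double induction matching $g_f$ to this construction. I would prove by induction on the stratum $j$ that $g_f(p,j) = M_j$, with an inner induction on $i$ establishing $g_f(p,j,i) = \big(T_j^{g_f(p,j-1)}\big)^i\!\big(g_f(p,j-1)\big)$. The base case $i = 0$ holds because $g_f(p,j,0)$ injects precisely $g_f(p,j-1) \cup F_j$: the deterministic atoms with $\mu(a)=j$ and the probabilistic atoms with $f(\atom{v}{a(\overline{c})}) = \top$ and $\mu(a)=j$ are exactly the level-$j$ facts of $\mathit{instance}(p,f)$. The inductive step holds because the defining clause of $g_f(p,j,i)$ fires $\mathit{head}(r)\Theta$ exactly when all positive body literals of $r\Theta$ are already in $g_f(p,j,i-1)$ and all negative body literals are absent from $g_f(p,j-1)$, which is the firing condition of $T_j^{g_f(p,j-1)}$ under the two induction hypotheses $g_f(p,j,i-1) = \big(T_j^{g_f(p,j-1)}\big)^{i-1}\!\big(g_f(p,j-1)\big)$ and $g_f(p,j-1) = M_{j-1}$. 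Taking the union over $i$ yields $g_f(p,j) = M_j$, and the union over $j$ gives $g_f(p) = \bigcup_j M_j = \mathit{WFM}(\mathit{instance}(p,f))$.

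The main obstacle is bookkeeping rather than conceptual depth: one must verify the faithful inlining of the probabilistic-atom selection into the stratified fixpoint, namely that distributing the selected extensional atoms across strata according to $\mu$ and merging them with the level-$j$ rule applications reproduces $T_j^{M_{j-1}}$ step for step, and that the separate treatment in $g_f$ of deterministic facts $a(\overline{c})$ (always present, via empty-body rules in $\mathit{rules}(p)$) versus probabilistic facts $\atom{v}{a(\overline{c})}$ (present iff $f$ selects them) neither omits nor double-counts any atom of $\mathit{instance}(p,f)$. Once this correspondence is pinned down, the equality follows immediately from the classical result that the iterated fixpoint over strata computes the well-founded model of a stratified program.
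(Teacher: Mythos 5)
Your proof is correct and takes essentially the same route as the paper, which gives no detailed argument at all: it merely asserts that the fact ``follows directly from $g_f$'s definition and the notion of well-founded models for stratified logic programs,'' and your write-up is exactly that assertion unpacked (well-definedness of the level mapping $\mu$ under the no-negative-cycles assumption, the stratum-by-stratum fixpoint correspondence, and the classical stratified-semantics theorem). The only blemish is an off-by-one in your inner invariant: $g_f(p,j,0)$ already contains the level-$j$ facts $F_j$ while $(T_j^{g_f(p,j-1)})^0(g_f(p,j-1))$ does not, so either re-index or use the sandwich $(T_j^{g_f(p,j-1)})^i(g_f(p,j-1)) \subseteq g_f(p,j,i) \subseteq (T_j^{g_f(p,j-1)})^{i+1}(g_f(p,j-1))$; the limit equality $g_f(p,j)=M_j$, and hence the fact, is unaffected.
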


\subsubsection{Relaxed Grounding}
Here we introduce the notion of relaxed grounding. This is a key notion for our acyclicity and correctness proofs and it is also a key component of our compilation from \problog{} programs to poly-tree Bayesian Networks.

Given a \problog{} acyclic program $p$, the \emph{relaxed grounding of $p$} is a set $M$ of ground atoms and rules containing all possible ground atoms and rules that can be derived in any ground instance of $p$.
We remark that a relaxed grounding is an over-approximation of the ground atoms and/or rules that could be derived by the ground instances of $p$. 
Formally, let $\Sigma$ be a first-order signature, $\mathbf{dom}$ be a finite domain, $p$ be a $(\Sigma,\mathbf{dom})$-acyclic \problog{}, and $i \in \mathbb{N}$ be an natural number.
The function $\mathit{ground}(p,i)$ is defined as follows:
$\mathit{ground}(p,0) =  \{ a(\overline{c}) \mid a(\overline{c}) \in p \vee \exists v.\, \atom{v}{a(\overline{c})} \in p\}$, and 
$\mathit{ground}(p,i) = \mathit{ground}(p,i-1) \cup \{ h\Theta \mid \Theta \in \mathit{ASS}(r) \wedge \exists r.\, r \in p \wedge h = \mathit{head}(r)  \wedge \forall l \in \mathit{body}^+(r).\ l\Theta \in \mathit{ground}(p, i-1) \wedge \forall i,j.\, \mathit{consistent}(\mathit{body}(r,i)\Theta, \mathit{body}(r,j)\Theta) \}$, where $\mathit{ASS}(r)$ is the set of all mappings $\mathbf{Var} \to \mathbf{dom}$ that satisfy the equality and inequality constraints in $r$ while $\mathit{consistent}(a(\overline{v}), \neg a(\overline{v}))  = \mathit{consistent}(\neg a(\overline{v}), a(\overline{v})) = \bot$ and $\mathit{consistent}(l,l') = \top$ otherwise.
The function $\mathit{ground}(p,r,i)$, where $r \in p$, is defined as follows:
$\mathit{ground}(p, a(\overline{c}), i) = \{ a(\overline{c})\}$, 
$\mathit{ground}(p, \atom{v}{a(\overline{c})}, i) = \{ a(\overline{c})\}$, 
$\mathit{ground}(p, r,0) = \emptyset$, and 
$\mathit{ground}(p,r,i)$ is the set $\{ h\Theta \leftarrow l_1\Theta, \ldots, l_n\Theta \mid \Theta \in \mathit{ASS}(r) \wedge \forall l_i \in \mathit{body}^+(r).\ l_i\Theta \in  \mathit{ground}(p,i-1) \wedge \forall i,j.\, \mathit{consistent}(l_i\Theta, l_j\Theta) \}$, where $r = h \leftarrow l_1, \ldots, l_n$.
Finally,  the relaxed grounding of $p$, denoted by $\mathit{ground}(p)$, is the set $\mathit{ground}(p) = \bigcup_{i \in \mathbb{N}} \mathit{ground}(p,i)$, whereas $\mathit{ground}(p,r) = \bigcup_{i \in \mathbb{N}} \mathit{ground}(p,r,i)$.

The fact below follows directly from $\mathit{ground}$'s definition and the definition of acyclic \problog{} programs.

\begin{fact}
For any acyclic \problog{} program $p$ and any $p$-probabilistic assignment $f$, $g_f(p) \subseteq \mathit{ground}(p)$ and $g_f(p,r) \subseteq \mathit{ground}(p,r)$.
\end{fact}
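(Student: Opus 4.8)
The plan is to prove both inclusions by induction on the iterative construction of the exact grounding $g_f$, exploiting the two ways in which the relaxed grounding is more permissive than the exact one: it retains \emph{every} probabilistic atom rather than only those assigned $\top$ by $f$, and it discards all negative-literal checks, keeping only the consistency side-condition. Both relaxations can only enlarge the derived set, so morally $g_f(p)\subseteq\mathit{ground}(p)$ holds by construction; the real work is to package this into a clean induction and to discharge the single non-trivial side-condition.

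Concretely, I would establish the stronger statement that $g_f(p,j,i)\subseteq\mathit{ground}(p)$ for all strata $j$ and iterations $i$, by lexicographic induction on $(j,i)$. The base case $g_f(p,0,0)$ and the stratum-openers $g_f(p,j,0)$ contribute only facts and probabilistic atoms, all of which lie in $\mathit{ground}(p,0)$ because the relaxed grounding includes every probabilistic atom unconditionally and $g_f(p,j-1)\subseteq\mathit{ground}(p)$ by the induction hypothesis. For the step at $g_f(p,j,i)$, a newly derived atom $\mathit{head}(r)\Theta$ (with the same $\Theta\in\mathit{ASS}(r)$) has all its positive body literals in $g_f(p,j,i-1)$, hence in $\mathit{ground}(p)$ by the hypothesis; choosing an index $i^\star$ bounding all of them and using monotonicity, the relaxed-grounding rule for $r$ becomes applicable at stage $i^\star$, so $\mathit{head}(r)\Theta\in\mathit{ground}(p,i^\star+1)$, provided the consistency condition holds.

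Verifying that consistency condition is the one genuinely non-routine step, and I expect it to be the main obstacle. The relaxed-grounding rule fires with $\Theta$ only when no pair of body literals instantiates to an atom and its negation, and I would discharge this using the stratification $\mu$ supplied with the exact grounding. Since $\mu$ is a valid stratification, every positive body literal has predicate at stratum $\le j$ and every negative body literal has predicate at stratum $\le j-1$; hence if a predicate $a$ occurs both positively as $a(\overline{x})$ and negatively as $\neg a(\overline{y})$ in $r$'s body, then $\mu(a)\le j-1$. The positive-literal condition forces $a(\overline{x})\Theta\in g_f(p,j,i-1)$, and because atoms of predicate $a$ are produced only at stratum $\mu(a)\le j-1$, this gives $a(\overline{x})\Theta\in g_f(p,j-1)$; simultaneously the negative-literal condition requires $a(\overline{y})\Theta\notin g_f(p,j-1)$. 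Therefore $a(\overline{x})\Theta\neq a(\overline{y})\Theta$, i.e. $\overline{x}\Theta\neq\overline{y}\Theta$, which is precisely $\mathit{consistent}(a(\overline{x})\Theta,\neg a(\overline{y})\Theta)=\top$; all other literal pairs are trivially consistent. Assembling the stage-wise inclusions yields $g_f(p)=\bigcup_{j,i} g_f(p,j,i)\subseteq\mathit{ground}(p)$.

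The second inclusion $g_f(p,r)\subseteq\mathit{ground}(p,r)$ follows by the identical induction carried out on the ground-rule constructors $g_f(p,r,j,i)$ and $\mathit{ground}(p,r,i)$: a ground instance $\mathit{head}(r)\Theta\leftarrow l_1\Theta,\dots,l_n\Theta$ emitted by the exact grounding again has all positive body atoms in $\mathit{ground}(p)$ (by the atom inclusion just proved) and satisfies consistency (by the same stratification argument), so $\mathit{ground}(p,r)$ emits it as well, the dropped negative checks only enlarging the target. I would finally remark that this argument uses only the absence of negative cycles, which is what guarantees $\mu$ is finite and a genuine stratification, rather than the full acyclicity hypothesis; hence the Fact in fact holds for every stratified \problog{} program.
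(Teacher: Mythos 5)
Your proof is correct. Note that the paper itself offers no argument for this Fact at all: it simply asserts that the inclusion ``follows directly from $\mathit{ground}$'s definition and the definition of acyclic \problog{} programs.'' Your lexicographic induction on $(j,i)$ supplies exactly the content needed to back that assertion, and you correctly identify the one step that is \emph{not} routine, namely that the relaxed grounding's consistency side-condition $\mathit{consistent}(\mathit{body}(r,i)\Theta,\mathit{body}(r,j)\Theta)$ is satisfied by every ground instance that the exact grounding fires. Your discharge of it is sound: since $\mu$ is a stratification, a predicate $a$ occurring negatively in the body of a rule whose head sits at stratum $j$ has $\mu(a)\le j-1$; atoms over $a$ can only ever be introduced at stratum $\mu(a)$, so the positive occurrence $a(\overline{x})\Theta\in g_f(p,j,i-1)$ forces $a(\overline{x})\Theta\in g_f(p,j-1)$, while the exact grounding's negative check gives $a(\overline{y})\Theta\notin g_f(p,j-1)$, whence the two instantiated tuples differ. (One small reading you implicitly and correctly make: the paper's condition $l\Theta\notin g_f(p,j-1)$ for negative $l=\neg a(\overline{y})$ must be read as the underlying atom $a(\overline{y})\Theta$ not being in $g_f(p,j-1)$.) Your closing observation is also right and worth keeping: the argument uses only stratifiability (absence of negative cycles), which the paper assumes of all programs, so the Fact holds for every stratified \problog{} program, not just acyclic ones — acyclicity plays no role here.
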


\subsubsection{Ground Graphs}

Let $\Sigma$ be a first-order signature, $\mathbf{dom}$ be a finite domain, and $p$ be a $(\Sigma,\mathbf{dom})$-acyclic \problog{} program.
The ground graph of $p$, denoted $\mathit{gg}(p)$, is the directed graph $(N,E)$ defined as follows:
\begin{compactitem}
\item There is one node $a(\overline{c})$ in $N$ per literal $a(\overline{c})$ or $\neg a(\overline{c})$ used in a ground rule $r' \in \mathit{ground}(p,r)$ for some $r \in p$.
For each rule $r \in p$, ground rule $s \in \mathit{ground}(p,r)$, and position $1 \leq j \leq |\mathit{body}(r')|$, there is a node $(r,s,j)$.

\item There is an edge from $b(\overline{v})$ to  $(r,s,j)$ if $\mathit{body}(s,j) = b(\overline{v})$ or $\mathit{body}(s,j) = \neg b(\overline{v})$.
There is an edge from  $(r,s,j)$ to $a(\overline{c})$ if $\mathit{head}(s) = a(\overline{c})$.
\end{compactitem}

\subsubsection{Proofs about Templates}
Here we prove preliminary results about the semantic relationships enforced by the ordering, disjointness, and uniqueness templates.

\begin{proposition}\thlabel{theorem:template:ordering:semantics}
Let $\Sigma$ be a first-order signature, $\mathbf{dom}$ be a finite domain, $p$ be a $(\Sigma,\mathbf{dom})$-\problog{} program, ${\cal O}$ be a $\Sigma$-ordering template, and $\preceq$ be a well-founded partial order over $2^\Sigma$.
If $p$ complies with ${\cal O}$ with respect to $\preceq$ and $\order{A} \in {\cal O}$, then if $\mathit{pr}_1(\overline{a}_1, \overline{a}_2), \mathit{pr}_2(\overline{a}_2, \overline{a}_3),  \ldots, \mathit{pr}_{n-1}(\overline{a}_{n-1},\overline{a}_n)$ are in $\mathit{ground}(p)$ and $\{\mathit{pr}_1, \ldots, \mathit{pr}_{n-1}\} \subseteq A$, then $\overline{a}_1 \neq \overline{a}_n$.
\end{proposition}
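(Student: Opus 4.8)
The plan is to reduce the statement to the irreflexivity of a transitive closure and then prove it by well-founded induction on the derivation of $p, \preceq \models \order{A}$. First I would fix notation: for a set $B \subseteq \Sigma$ of even-arity predicates, write $R_B = \bigcup_{\mathit{pr} \in B} \{ (\overline{c}, \overline{v}) \mid \mathit{pr}(\overline{c},\overline{v}) \in \mathit{ground}(p) \wedge |\overline{c}| = |\overline{v}| = |\mathit{pr}|/2 \}$ for the relation on $\mathbf{dom}^{k}$ induced by $B$ in the relaxed grounding, and let $R_B^+$ denote its transitive closure. A chain $\mathit{pr}_1(\overline{a}_1,\overline{a}_2), \ldots, \mathit{pr}_{n-1}(\overline{a}_{n-1},\overline{a}_n) \in \mathit{ground}(p)$ with every $\mathit{pr}_i \in A$ says exactly that $(\overline{a}_1, \overline{a}_n) \in R_A^+$; hence the conclusion $\overline{a}_1 \neq \overline{a}_n$ is equivalent to the irreflexivity of $R_A^+$, which (together with the automatic transitivity of a transitive closure) is precisely what the strict partial order property amounts to. So it suffices to show $R_A^+$ is irreflexive. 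The induction is well-founded because $p$ complies with ${\cal O}$ with respect to the well-founded order $\preceq$, so the derivation of $p, \preceq \models \order{A}$ bottoms out after finitely many steps.

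For the base case (condition 1 in the derivation of $\order{A}$), every $\mathit{pr} \in A$ is defined only by ground atoms: no rule with non-empty body has a $\mathit{pr}$-head, so, using that $\mathit{ground}(p,0)$ collects exactly the certain and probabilistic facts and that no rule can later add a $\mathit{pr}$-atom, the relaxed grounding contributes to $\mathit{pr}$ exactly the facts $\mathit{pr}(\overline{c},\overline{v}) \in p$. Consequently $R_A$ coincides with the relation $R$ appearing in condition 1, whose transitive closure is assumed to be a strict partial order, hence irreflexive. This settles the base case directly.

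For the inductive step (condition 2), we may assume $A = (A' \setminus \{\mathit{pr}'\}) \cup \{\mathit{pr}\}$ with $\mathit{pr} \neq \mathit{pr}'$ (the case $A' = A$ being trivial), $|\mathit{pr}| = |\mathit{pr}'|$, $p, \preceq \models \order{A'}$ with $A' \prec A$, and---by condition (c)---every rule $r$ with head predicate $\mathit{pr}$ has the form $\mathit{pr}(\overline{x}, \overline{y}) \leftarrow \ldots, \mathit{pr}'(\overline{x}, \overline{y}), \ldots$, the head variables reappearing verbatim in a positive body literal over $\mathit{pr}'$. The crux is the containment $R_A \subseteq R_{A'}$, which I would get from showing the $\mathit{pr}$-relation refines the $\mathit{pr}'$-relation: any $\mathit{pr}(\overline{c}, \overline{v}) \in \mathit{ground}(p)$ is, by the relaxed-grounding definition, a head instance $\mathit{head}(r)\Theta$ of some rule $r$ with head predicate $\mathit{pr}$ (condition (c) forbids $\mathit{pr}$ from having ground facts, since a fact read as an empty-body rule would violate (c)); since $\mathit{head}(r) = \mathit{pr}(\overline{x}, \overline{y})$ forces $\overline{x}\Theta = \overline{c}$ and $\overline{y}\Theta = \overline{v}$, and since $\mathit{pr}'(\overline{x}, \overline{y})$ is a positive body literal whose grounding must already lie in $\mathit{ground}(p)$, we get $\mathit{pr}'(\overline{c}, \overline{v}) \in \mathit{ground}(p)$. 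As $A$ and $A'$ agree on all other predicates, monotonicity of $\cup$ gives $R_A \subseteq R_{A'}$, hence $R_A^+ \subseteq R_{A'}^+$. By the induction hypothesis $R_{A'}^+$ is irreflexive, so $R_A^+$ is too, completing the step.

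I expect the main obstacle to be exactly this containment argument in the inductive step: turning the syntactic condition (c) into the semantic subset relation on $\mathit{ground}(p)$. The two delicate points are that (c) genuinely excludes $\mathit{pr}$ being populated by ground facts, and that the \emph{positivity} of the literal $\mathit{pr}'(\overline{x}, \overline{y})$ forces the companion atom to \emph{actually appear} in the relaxed grounding rather than merely being permitted by it. I would isolate both as short sub-lemmas about $\mathit{ground}(p)$ so that the main well-founded induction stays clean.
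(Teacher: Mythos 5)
Your proposal is correct and follows essentially the same route as the paper's proof: induction over the well-founded order $\preceq$, a base case reducing the grounding-induced relation to the fact-defined relation $R$ of condition~1, and an inductive step resting on the containment $R_A \subseteq R_{A'}$ extracted from condition~(c). Your two flagged sub-lemmas (that condition~(c) rules out ground facts for $\mathit{pr}$, and that positivity of the body literal $\mathit{pr}'(\overline{x},\overline{y})$ forces the companion atom into $\mathit{ground}(p)$) are exactly the steps the paper uses, stated there slightly more tersely.
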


\begin{proof}
Let $\Sigma$ be a first-order signature, $\mathbf{dom}$ be a finite domain, $p$ be a $(\Sigma,\mathbf{dom})$-\problog{} program, ${\cal O}$ be a $\Sigma$-ordering template, and $\preceq$ be a well-founded partial order over $2^\Sigma$.
We assume that $p$ complies with ${\cal O}$ with respect to $\preceq$, $\order{A} \in {\cal O}$, $\mathit{pr}_1(\overline{a}_1, \overline{a}_2), \ldots, \mathit{pr}_{n-1}(\overline{a}_{n-1},\overline{a}_n)$ are in $\mathit{ground}(p)$, and $\{\mathit{pr}_1, \ldots, \mathit{pr}_{n-1}\} \subseteq A$.
We now prove, by induction over $\preceq$, that the transitive closure of the unions of the relations induced by $\mathit{pr}_1, \ldots, \mathit{pr}_{n-1}$ over $\mathit{ground}(p)$ is a strict partial order over $\mathbf{dom}^{\sfrac{|\mathit{pr}|}{2}}$.
From this, it follows that  if $\mathit{pr}_1(\overline{a}_1, \overline{a}_2), \mathit{pr}_2(\overline{a}_2, \overline{a}_3), \ldots, \mathit{pr}_{n-1}(\overline{a}_{n-1},\overline{a}_n)$ are in $\mathit{ground}(p)$, then $\overline{a}_1 \neq \overline{a}_n$.

\para{Base Case}
Let $A$ be a set such that there is no $A' \preceq A$ such that $p, \preceq \models \order{A'}$.
From this, $\order{A} \in {\cal O}$, and $p$ complies with ${\cal O}$ with respect to $\preceq$, it follows that 
(1) for all $\mathit{pr} \in A$, there is no rule $r$ in $p$ such that $\mathit{pred}(\mathit{head}(r)) = \mathit{pr}$ and $\mathit{body}(r) \neq \emptyset$, and
(2) the transitive closure of the relation $R = \bigcup_{\mathit{pr} \in A} \{ (\overline{c}, \overline{v}) \mid \exists r \in p.\, ((\mathit{head}(r) =pr(\overline{c},\overline{v}) \vee \exists v'.\ \mathit{head}(r) = \atom{v'}{pr(\overline{c},\overline{v})}) \wedge \mathit{body}(r) = \emptyset) \wedge |\overline{c}| = |\overline{v}| = \sfrac{|\mathit{pr}|}{2}\}$ is strict partial order. 
From this and $\{ (\overline{c}, \overline{v}) \mid \exists \mathit{pr} \in A.\ \mathit{pr}(\overline{c}, \overline{v}) \in \mathit{ground}(p) \wedge |\overline{c}| = |\overline{v}| = \sfrac{|\mathit{pr}|}{2} \} \subseteq R$, it follows that the transitive closure of the union of the relations induced by $\mathit{pr}$ over $\mathit{ground}(p)$ is a strict partial order over $\mathbf{dom}^{\sfrac{|\mathit{pr}|}{2}}$.

\para{Induction Step}
Assume that the relation induced by  any $A' \prec A$   over $\mathit{ground}(p)$ is a strict partial order over $\mathbf{dom}^{\sfrac{|\mathit{pr'}|}{2}}$ if $p,\preceq \models \order{A'}$, where $\mathit{pr}'$ is any predicate symbol in $A'$. 
Without loss of generality, we assume that there is at least one $A' \prec A$ such that $p,\preceq \models \order{A'}$ (otherwise the proof is the same as in the base case).
We now prove that also the relation induced by $A$  over $\mathit{ground}(p)$ is a strict partial order over $\mathbf{dom}^{\sfrac{|\mathit{pr}|}{2}}$.
Let $\mathit{pr}$, $A'$, $\mathit{pr}_1$ be such that $A = (A' \setminus\{\mathit{pr}_1\}) \cup \{\mathit{pr}\}$, $A' \prec A$, $p,\preceq \models \order{A'}$, and $\mathit{pr}_1 \in A'$. 
Note that the previous values always exist.
We now prove that also the relation induced by $A$  over $\mathit{ground}(p)$ is a strict partial order over $\mathbf{dom}^{\sfrac{|\mathit{pr}|}{2}}$.
Since $\mathit{pr} \in A$, $A = (A' \setminus\{\mathit{pr}_1\}) \cup \{\mathit{pr}\}$, $\mathit{pr}_1 \in A'$, and $\order{A} \in {\cal O}$, it follows that  for all rules $r$ in $p$ such that $\mathit{pred}(\mathit{head}(r)) = pr$, there are sequences of variables $\overline{x}$ and $\overline{y}$ and an $i$, $\mathit{head}(r) = \mathit{pr}(\overline{x}, \overline{y})$,  $\mathit{body}(r,i) =  \mathit{pr}_1(\overline{x},\overline{y})$, and $|\overline{x}| = |\overline{y}|$.
From this, it follows that the $\{ (\overline{v}, \overline{w}) \mid \mathit{pr}(\overline{v},\overline{w}) \in \mathit{ground}(p) \} \subseteq \{ (\overline{v}, \overline{w}) \mid \mathit{pr}_1(\overline{v},\overline{w}) \in \mathit{ground}(p) \}$.
From the induction hypothesis, it follows that the transitive closure of $\bigcup_{\mathit{pr}' \in A'} \{ (\overline{v}, \overline{w}) \mid \mathit{pr}'(\overline{v},\overline{w}) \in \mathit{ground}(p) \}$ is a strict partial order over $\mathbf{dom}^{\sfrac{|\mathit{pr}'|}{2}}$.
From this, $\mathit{pr}_1 \in A'$, $A = (A' \setminus\{\mathit{pr}_1\}) \cup \{\mathit{pr}\}$, $\mathit{pr}_1 \in A'$, and $\{ (\overline{v}, \overline{w}) \mid \mathit{pr}(\overline{v},\overline{w}) \in \mathit{ground}(p) \} \subseteq \{ (\overline{v}, \overline{w}) \mid \mathit{pr}_1(\overline{v},\overline{w}) \in \mathit{ground}(p) \}$, it follows that the transitive closure of $\bigcup_{\mathit{pr}' \in A} \{ (\overline{v}, \overline{w}) \mid \mathit{pr}'(\overline{v},\overline{w}) \in \mathit{ground}(p) \}$ is a strict partial order over $\mathbf{dom}^{\sfrac{|\mathit{pr}'|}{2}}$.
This completes the proof of our claim.
\end{proof}

\begin{proposition}\thlabel{theorem:template:disjointness:semantics}
Let $\Sigma$ be a first-order signature, $\mathbf{dom}$ be a finite domain, $p$ be a $(\Sigma,\mathbf{dom})$-\problog{} program, ${\cal D}$ be a $\Sigma$-disjointness template, and $\preceq$ be a well-founded partial order over $\Sigma^2$.
If $p$ complies with ${\cal D}$ with respect to $\preceq$ and $\disjoint{\mathit{pr}}{\mathit{pr}'} \in {\cal D}$, then there is no tuple $\overline{v}$ such that $\mathit{pr}(\overline{v}) \in \mathit{ground}(p)$ and $\mathit{pr}'(\overline{v}) \in \mathit{ground}(p)$.
\end{proposition}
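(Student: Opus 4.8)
The plan is to prove the claim by well-founded induction over the order $\preceq$ on $\Sigma^2$, exactly paralleling the structure of the proof of \thref{theorem:template:ordering:semantics}. Since $p$ complies with ${\cal D}$ with respect to $\preceq$ and $\disjoint{\mathit{pr}}{\mathit{pr}'} \in {\cal D}$, we have $p, \preceq \models \disjoint{\mathit{pr}}{\mathit{pr}'}$, so one of the four derivation conditions must witness this. I would fix a pair $(\mathit{pr}, \mathit{pr}')$ and assume, as the induction hypothesis, that the statement already holds for every pair $(\mathit{pr}_1, \mathit{pr}_2) \prec (\mathit{pr}, \mathit{pr}')$ with $p, \preceq \models \disjoint{\mathit{pr}_1}{\mathit{pr}_2}$; the goal is then to rule out a common tuple $\overline{v}$ with $\mathit{pr}(\overline{v}), \mathit{pr}'(\overline{v}) \in \mathit{ground}(p)$, proceeding by a case split on which of the four conditions holds.

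For the base case (condition 1), neither $\mathit{pr}$ nor $\mathit{pr}'$ appears in the head of any rule with a non-empty body. By the definition of the relaxed grounding $\mathit{ground}(p)$, every atom over these predicates that enters $\mathit{ground}(p)$ must already belong to $\mathit{ground}(p,0)$, i.e., must be one of the (possibly probabilistic) ground facts of $p$. Hence the tuples occurring with $\mathit{pr}$ in $\mathit{ground}(p)$ are contained in $\{\overline{c} \mid \mathit{pr}(\overline{c}) \in p \vee \exists v.\, \atom{v}{\mathit{pr}(\overline{c})} \in p\}$, and symmetrically for $\mathit{pr}'$; since condition 1 asserts these two fact sets are disjoint, no common $\overline{v}$ can exist.

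The three inductive cases all rest on a single \emph{refinement observation}: if every rule $r \in p$ whose head is $\mathit{pr}'(\overline{x})$ contains $\mathit{pr}'_1(\overline{x})$ at some body position (with the same argument tuple $\overline{x}$ as the head), then $\mathit{pr}'(\overline{v}) \in \mathit{ground}(p)$ forces $\mathit{pr}'_1(\overline{v}) \in \mathit{ground}(p)$. This follows by inspecting the stage at which $\mathit{pr}'(\overline{v})$ first appears: the substitution $\Theta$ producing $\mathit{head}(r)\Theta = \mathit{pr}'(\overline{v})$ sends $\overline{x}$ to $\overline{v}$, so the positive body literal $\mathit{pr}'_1(\overline{x})\Theta = \mathit{pr}'_1(\overline{v})$ was already present one stage earlier. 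In condition 2 (where $\mathit{pr}$ has only facts and every $\mathit{pr}'$-rule carries $\mathit{pr}'_1$), a putative common tuple $\overline{v}$ therefore yields $\mathit{pr}(\overline{v}), \mathit{pr}'_1(\overline{v}) \in \mathit{ground}(p)$, contradicting the induction hypothesis applied to $(\mathit{pr}, \mathit{pr}'_1) \prec (\mathit{pr}, \mathit{pr}')$; condition 3 is the mirror image with the roles of $\mathit{pr}$ and $\mathit{pr}'$ exchanged; and condition 4 applies the refinement observation on both sides simultaneously to descend to $(\mathit{pr}_1, \mathit{pr}_1') \prec (\mathit{pr}, \mathit{pr}')$ and again invoke the hypothesis.

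I expect the main obstacle to be making the refinement observation fully precise against the inductive definition of $\mathit{ground}(p)$, in particular arguing that a derived atom $\mathit{pr}'(\overline{v})$ cannot also sneak in as a bare ground fact that escapes the body-membership constraint of condition (c); this requires reading the quantification ``for all rules $r$ with $\mathit{head}(r) = \mathit{pr}'(\overline{x})$'' as covering the fact-rules as well, so that the presence of any conflicting $\mathit{pr}'$-fact would already invalidate the applicability of the case. The remaining steps — the stage-by-stage tracking of substitutions and the bookkeeping that $\prec$ is well-founded so the induction terminates — are routine once this point is settled, and the symmetry between conditions 2 and 3 lets me treat them in one stroke.
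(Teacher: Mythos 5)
Your proposal is correct and follows essentially the same route as the paper's own proof: well-founded induction over $\preceq$, with the base case given by condition 1 (only facts, with disjoint fact sets) and the three inductive cases descending to a $\prec$-smaller pair by observing that the relation induced by $\mathit{pr}'$ (respectively $\mathit{pr}$) in $\mathit{ground}(p)$ is contained in that of $\mathit{pr}'_1$ (respectively $\mathit{pr}_1$), then invoking the induction hypothesis. Your ``refinement observation,'' together with the stage-by-stage justification and the reading of condition (c) as quantifying over fact-rules as well, is precisely the subset claim that the paper's proof asserts without further argument, so your version only makes explicit what the paper leaves implicit.
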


\begin{proof}
Let $\Sigma$ be a first-order signature, $\mathbf{dom}$ be a finite domain, $p$ be a $(\Sigma,\mathbf{dom})$-\problog{} program, ${\cal D}$ be a $\Sigma$-disjointness template, and $\preceq$ be a well-founded partial order over $\Sigma^2$.
Furthermore, we assume that $p$ complies with ${\cal D}$ with respect to $\preceq$ and $\disjoint{\mathit{pr}}{\mathit{pr}'} \in {\cal D}$.
We prove, by induction over $\preceq$, that for any $\disjoint{\mathit{pr}}{\mathit{pr}'} \in {\cal D}$, there is no tuple $\overline{v}$ such that $\mathit{pr}(\overline{v}) \in \mathit{ground}(p)$ and $\mathit{pr}'(\overline{v}) \in \mathit{ground}(p)$.

\para{Base Case}
Let $(\mathit{pr}, \mathit{pr}')$ be a pair in $\Sigma^2$ such that there is no $(\mathit{pr}_1, \mathit{pr}'_1) \prec (\mathit{pr}, \mathit{pr}')$ and $p, \preceq \models \disjoint{\mathit{pr}}{\mathit{pr}'}$.
From this,  $\disjoint{\mathit{pr}}{\mathit{pr}'} \in {\cal D}$, and $p$ complies with ${\cal D}$, it follows that (1) there are no rules $r \in p$ such that $\mathit{body}(r) \neq \emptyset$ and $\mathit{pred}(\mathit{head}(r)) = \mathit{pr}$ or $\mathit{pred}(\mathit{head}(r)) = \mathit{pr}'$, and (2) the sets $A = \{ \overline{c} \mid \mathit{pr}(\overline{c}) \in p \vee \exists v'. \atom{v'}{\mathit{pr}(\overline{c})} \in p\}$ and $A' = \{ \overline{c} \mid \exists r \in p.\ \mathit{head}(r) = \mathit{pr}'(\overline{c})  \vee \exists v'. \atom{v'}{\mathit{pr}'(\overline{c})} \in p\}$ are disjoint.
From this, it follows that there is no tuple $\overline{v}$ such that $\mathit{pr}(\overline{v}) \in \mathit{ground}(p)$ and $\mathit{pr}'(\overline{v}) \in \mathit{ground}(p)$.

\para{Induction Step}
Assume that there is no tuple $\overline{v}$ such that $\mathit{pr}_1(\overline{v}) \in \mathit{ground}(p)$ and $\mathit{pr}'_1(\overline{v}) \in \mathit{ground}(p)$ for any $(\mathit{pr}_1, \mathit{pr}'_1) \prec (\mathit{pr}, \mathit{pr}')$ such that $p, \preceq \models \disjoint{\mathit{pr}_1}{\mathit{pr}'_1}$.
We now prove that there is no tuple $\overline{v}$ such that $\mathit{pr}(\overline{v}) \in \mathit{ground}(p)$ and $\mathit{pr}'(\overline{v}) \in \mathit{ground}(p)$.
There are four cases:
\begin{compactitem}
\item There are no rules $r \in p$ such that $\mathit{body}(r) \neq \emptyset$ and $\mathit{pred}(\mathit{head}(r)) = \mathit{pr}$, and 
there is a $(\mathit{pr},\mathit{pr}'_1) \in \Sigma^2$ such that (a) $\disjoint{\mathit{pr}}{\mathit{pr}'_1} \in {\cal D}$, (b) $(\mathit{pr},\mathit{pr}'_1)  \prec (\mathit{pr}, \mathit{pr}')$, and (c) for all rules $r \in p$ such that $\mathit{head}(r) = \mathit{pr}'(\overline{x})$ there is $1 \leq i \leq |\mathit{body}(r)|$ such that $\mathit{body}(r,i) = \mathit{pr}'_1(\overline{x})$.
From $(\mathit{pr},\mathit{pr}'_1)  \prec (\mathit{pr}, \mathit{pr}')$, $\disjoint{\mathit{pr}}{\mathit{pr}'} \in {\cal D}$, and the induction's hypothesis, it follows that there is no tuple $\overline{v}$ such that $\mathit{pr}(\overline{v}) \in \mathit{ground}(p)$ and $\mathit{pr}'_1(\overline{v}) \in \mathit{ground}(p)$.
Furthermore, the relation associated with $\mathit{pr}'$ is a subset of the relation associated to $\mathit{pr}'_1$.
Therefore, there is no tuple $\overline{v}$ such that $\mathit{pr}(\overline{v}) \in \mathit{ground}(p)$ and $\mathit{pr}'(\overline{v}) \in \mathit{ground}(p)$.

\item There are no rules $r \in p$ such that $\mathit{body}(r) \neq \emptyset$ and $\mathit{pred}(\mathit{head}(r)) = \mathit{pr}'$, and
there is a $(\mathit{pr}_1,\mathit{pr}') \in \Sigma^2$ such that (a) $\disjoint{\mathit{pr}_1}{\mathit{pr}'} \in {\cal D}$, (b) $(\mathit{pr}_1,\mathit{pr}') \prec (\mathit{pr}, \mathit{pr}')$, and (c) for all rules $r \in p$ such that $\mathit{head}(r) = \mathit{pr}(\overline{x})$ there is $1 \leq i \leq |\mathit{body}(r)|$ such that $\mathit{body}(r,i) = \mathit{pr}_1(\overline{x})$. %, $(\mathit{pr}_1,\mathit{pr}') \in {\cal D}$, and $(\mathit{pr}_1,\mathit{pr}') \prec (\mathit{pr}, \mathit{pr}')$.
From $(\mathit{pr}_1,\mathit{pr}')  \prec (\mathit{pr}, \mathit{pr}')$, $\disjoint{\mathit{pr}_1}{\mathit{pr}'} \in {\cal D}$, and the induction's hypothesis, it follows that there is no tuple $\overline{v}$ such that $\mathit{pr}_1(\overline{v}) \in \mathit{ground}(p)$ and $\mathit{pr}'(\overline{v}) \in \mathit{ground}(p)$.
Furthermore, the relation associated with $\mathit{pr}$ is a subset of the relation associated to $\mathit{pr}_1$.
Therefore, there is no tuple $\overline{v}$ such that $\mathit{pr}(\overline{v}) \in \mathit{ground}(p)$ and $\mathit{pr}'(\overline{v}) \in \mathit{ground}(p)$.

\item There is a $(\mathit{pr}_1, \mathit{pr}_1') \in \Sigma^2$ such that (a) $\disjoint{\mathit{pr}_1}{\mathit{pr}_1'} \in {\cal D}$, (b) $(\mathit{pr}_1, \mathit{pr}_1')  \prec (\mathit{pr},  \mathit{pr}')$, and (c) for all rules $r,r' \in p$ such that $\mathit{head}(r) = \mathit{pr}(\overline{x})$ and $\mathit{head}(r') = \mathit{pr}'(\overline{x}')$, there are $1 \leq i \leq |\mathit{body}(r)|$, $1 \leq i' \leq |\mathit{body}(r')|$ such that $\mathit{body}(r,i)  = \mathit{pr}_1(\overline{x})$, $\mathit{body}(r',i') = \mathit{pr}_1'(\overline{x}')$.
From $(\mathit{pr}_1,\mathit{pr}'_1)  \prec (\mathit{pr}, \mathit{pr}')$, $\disjoint{\mathit{pr}_1}{\mathit{pr}_1'} \in {\cal D}$, and the induction's hypothesis, it follows that there is no tuple $\overline{v}$ such that $\mathit{pr}_1(\overline{v}) \in \mathit{ground}(p)$ and $\mathit{pr}'_1(\overline{v}) \in \mathit{ground}(p)$.
Furthermore, the relation associated with $\mathit{pr}$ is a subset of the relation associated with $\mathit{pr}_1$ and the relation associated with $\mathit{pr}'$ is a subset of the relation associated with $\mathit{pr}_1'$.
Therefore, there is no tuple $\overline{v}$ such that $\mathit{pr}(\overline{v}) \in \mathit{ground}(p)$ and $\mathit{pr}'(\overline{v}) \in \mathit{ground}(p)$.
\end{compactitem}
This completes the proof of our claim.
\end{proof}

\begin{proposition}\thlabel{theorem:template:uniqueness:semantics}
Let $\Sigma$ be a first-order signature, $\mathbf{dom}$ be a finite domain, $p$ be a $(\Sigma,\mathbf{dom})$-\problog{} program,  ${\cal U}$ be a $\Sigma$-uniqueness template, and $\preceq$ be a well-founded order over $\Sigma$.
If $p$ complies with ${\cal U}$ with respect to $\preceq$ and $\unique{\mathit{pr}}{K} \in {\cal U}$, then 
for all tuples $\mathit{pr}(\overline{v}), \mathit{pr}(\overline{w}) \in \mathit{ground}(p)$, if $\overline{v}(i) = \overline{w}(i)$ for all $i \in K$, then $\overline{v} = \overline{w}$.
\end{proposition}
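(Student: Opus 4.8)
The plan is to prove, by well-founded induction over $\preceq$, the slightly stronger statement that for every predicate $\mathit{pr}$ and every key set $K$ with $p, \preceq \models \unique{\mathit{pr}}{K}$ (not only for those annotations appearing in ${\cal U}$), any two tuples $\mathit{pr}(\overline{v}), \mathit{pr}(\overline{w}) \in \mathit{ground}(p)$ that agree on all positions in $K$ must coincide. The proposition follows immediately, since compliance of $p$ with ${\cal U}$ with respect to $\preceq$ yields $p, \preceq \models \unique{\mathit{pr}}{K}$ for the annotation in the statement. As in the two preceding propositions, the first clause of the derivation rule for $\unique{\mathit{pr}}{K}$ (where $\mathit{pr}$ is $\preceq$-minimal) furnishes the base case and the second clause the induction step.

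For the base case I would observe that, since the first clause forbids any rule with non-empty body and head predicate $\mathit{pr}$, the definition of $\mathit{ground}$ forces every $\mathit{pr}$-tuple in $\mathit{ground}(p)$ to originate from a (possibly probabilistic) ground atom $\mathit{pr}(\overline{c})$ of $p$; the syntactic key condition of the clause then gives the claim directly. For the induction step, each of $\mathit{pr}(\overline{v})$ and $\mathit{pr}(\overline{w})$ is produced either by a ground atom or by instantiating a rule with head $\mathit{pr}(\overline{x})$ under some assignment. If the two tuples come from distinct rules $r_1 \neq r_2$, then either $K = \{1,\dots,|\mathit{pr}|\}$, so agreement on $K$ is agreement on the whole tuple, or clause (b) supplies a position $i \in K$ at which the two heads carry distinct constants, contradicting $\overline{v}(i) = \overline{w}(i)$; so this sub-case is either immediate or vacuous. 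The substantive case is when both tuples come from the same rule $r$ with $\mathit{head}(r) = \mathit{pr}(\overline{x})$ via assignments $\Theta_1, \Theta_2$, which I would settle using clause (a).

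For that case it suffices to show that $\Theta_1$ and $\Theta_2$ agree on $V = \{\overline{x}(i) \mid i \notin K \wedge \overline{x}(i) \in \mathit{Var}\}$, since the remaining non-key positions carry constants. Because clause (a) guarantees $V \subseteq \bigcup_{l \in \mathit{bound}(\mathit{head}(r), K, \mathit{body}^+(r), \mu')} \mathit{vars}(l)$, I would instead prove that $\Theta_1$ and $\Theta_2$ agree on $\mathit{vars}(l)$ for every $l$ in $\mathit{bound}(\dots)$, by a secondary induction on the fixpoint stages that define $\mathit{bound}$. The seed variables sit at key positions of the head, where $\Theta_1$ and $\Theta_2$ agree by hypothesis; for a body literal $b(\overline{y})$ freshly added to $\mathit{bound}$, its $\mu'(b)$-key variables already lie among the determined ones, so $b(\overline{y})\Theta_1$ and $b(\overline{y})\Theta_2$---both in $\mathit{ground}(p)$ because $r$ fires under each assignment---agree on the key $\mu'(b)$. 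Since $b \prec \mathit{pr}$ and $p, \preceq \models \unique{b}{\mu'(b)}$, the outer induction hypothesis yields $b(\overline{y})\Theta_1 = b(\overline{y})\Theta_2$, extending the agreement to all of $\mathit{vars}(b(\overline{y}))$ and closing the secondary induction.

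The hard part will be exactly this core argument for clause (a): I must verify that the recursive definition of $\mathit{bound}$ really does capture a terminating propagation of \emph{determinedness} outward from the head key through the positive body literals, and that the outer induction hypothesis is invoked at the right granularity---on each body predicate's own derivable key $\mu'(b)$ rather than on $\mathit{pr}$. The remaining cases reduce to simple appeals to clause (b) or to the definition of $\mathit{ground}$.
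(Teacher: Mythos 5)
Your proposal is correct and follows essentially the same route as the paper's proof: well-founded induction over $\preceq$, with the base case from the first derivation clause, clause (b) disposing of heads coming from distinct rules, and a secondary induction on the fixpoint stages of $\mathit{bound}$ that propagates determinedness from the head-key variables through the positive body literals, invoking the outer hypothesis at the keys $\mu'(b)$ of predicates $b \prec \mathit{pr}$. Your explicit strengthening of the induction statement to all derivable annotations $p, \preceq \models \unique{\mathit{pr}}{K}$ (rather than only those in ${\cal U}$) is in fact what the paper's own argument implicitly requires, since its induction hypothesis is applied to $\unique{b}{\mu'(b)}$, which need not belong to ${\cal U}$.
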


\begin{proof}
Let $\Sigma$ be a first-order signature, $\mathbf{dom}$ be a finite domain, $p$ be a $(\Sigma,\mathbf{dom})$-\problog{} program, ${\cal U}$ be a $\Sigma$-uniqueness template, and $\preceq$ be a well-founded partial order over ${\cal U}$.
Furthermore, we assume that $p$ complies with ${\cal U}$ with respect to $\preceq$ and $\unique{\mathit{pr}}{K} \in {\cal U}$.
We prove, by induction over $\preceq$, that for all $\unique{\mathit{pr}}{K} \in {\cal U}$, and all tuples $\mathit{pr}(\overline{v}), \mathit{pr}(\overline{w}) \in \mathit{ground}(p)$, if $\overline{v}(i) = \overline{w}(i)$ for all $i \in K$, then $\overline{v} = \overline{w}$.
Without loss of generality, in the following we assume that $K \neq \{1, \ldots, |\mathit{pr}|\}$.
If this is not the case, then our claim holds trivially. 

\para{Base Case}
Let $\mathit{pr}$ be a predicate such that there is no $\mathit{pr}' \prec \mathit{pr}$.
From this and $\unique{\mathit{pr}}{K} \in {\cal U}$, it follows that there are no rules $r$ in $p$ such that $\mathit{body}(r) \neq \emptyset$ and $\mathit{pred}(\mathit{head}(r)) = \mathit{pr}$, and for all $\overline{w}, \overline{v}$ in $A = \{ \overline{c} \mid \mathit{pr}(\overline{c}) \in p \vee \exists v. \atom{v}{\mathit{pr}(\overline{c})}\in p \}$, if $\overline{v}(i) = \overline{w}(i)$ for all $i \in K$, then $\overline{v} = \overline{w}$.
From this and $\{\overline{c} \mid \mathit{pr}(\overline{c}) \in \mathit{ground}(p)\} \subseteq A$, it follows that for all tuples $\mathit{pr}(\overline{v}), \mathit{pr}(\overline{w}) \in \mathit{ground}(p)$, if $\overline{v}(i) = \overline{w}(i)$ for all $i \in K$, then $\overline{v} = \overline{w}$.

\para{Induction Step}
Assume that the claim holds for any $\mathit{pr}' \prec \mathit{pr}$ and $K \subseteq \{1, \ldots, |\mathit{pr}|\}$ such that $\unique{\mathit{pr}}{K} \in {\cal U}$ (we denote this induction hypothesis as $(\clubsuit)$).
We now prove that the claim holds for $\mathit{pr}$ as well.
Without loss of generality, we assume that there is at least one rule $r$ such that $\mathit{body}(r) \neq \emptyset$ and $\mathit{pred}(\mathit{head}(r)) = \mathit{pr}$ (otherwise the proof is identical to the base case).
Assume, for contradiction's sake, that there are two ground atoms $\mathit{pr}(\overline{v}), \mathit{pr}(\overline{w}) \in \mathit{ground}(p)$ such that $\overline{v}(i) = \overline{w}(i)$ for all $i \in K$ but $\overline{v} \neq \overline{w}$.
There are two cases:
\begin{compactitem}
\item $\mathit{pr}(\overline{v})$ and  $\mathit{pr}(\overline{w})$ are generated by two ground instances of the same rule $r$.
From $\overline{v} \neq \overline{w}$, it follows that there is $j \not\in K$ such that $\overline{v}(j) \neq \overline{w}(j)$.
From this and the fact that both atoms are generated by the same rule $r$, it follows that there is a variable $x$ such that $x \in \{ \overline{x}(i) \mid i \not\in K\}$ and $x = \overline{x}(j)$, where $\overline{x} = \mathit{args}(\mathit{head}(r))$.
From this and $\mathit{pr} \in {\cal U}$, it follows that $x \in \bigcup_{l \in \mathit{bound}(\mathit{head}(r),K,\mathit{body}^+(r), \mu')} \mathit{vars}(l)$.
We claim that the value of $x$ is determined by the values of the variables whose positions are in $K$.
From this and the fact that $\overline{v}$ and $\overline{w}$ agree on all values whose positions are in $K$, it follows that $\overline{v} = \overline{w}$, leading to a contradiction.

We now prove our claim that the value of $x$ is determined by the values of the variables whose positions are in $K$.
The value of the variable $x$ is determined by the grounding of one of the atoms in $\mathit{bound}(\mathit{head}(r),K,\mathit{body}^+(r), \mu')$.
We first slightly modify the definition of $\mathit{bound}$.
We denote by $\mathit{bd}^0(h, K, L, \mu')$ the function $\bigcup_{
\substack{b(\overline{y}) \in L \wedge  b \prec \mathit{pr} \wedge \\ u(\overline{y}, \mu'(b)) \subseteq u(\mathit{args}(h),K) }} \{b(\overline{y})\}$ and by $\mathit{bd}^i(h, K, L, \mu')$, where $i > 0$, the function $\mathit{bd}^{i-1}(h, K, L, \mu') \cup \bigcup_{\substack{b(\overline{y}) \in L \wedge b \prec \mathit{pr} \wedge\\ \exists l' \in \mathit{bound}^{i-1}(h, K, L, \mu'). (u(\overline{y}, \mu'(b)) \subseteq \mathit{vars}(l') )}} \{b(\overline{y})\}$.
It is easy to see that (1) $\mathit{bound}(\mathit{head}(r),K,\mathit{body}^+(r), \mu') = \bigcup_{i \in \mathbb{N}} \mathit{bd}^i(h, K, L, \mu')$, and (2) the fix-point is always reached in a finite amount of steps (bounded by $|\mathit{body}^+(r)|$).
We now prove by induction on $i$ that the groundings of the literals in $\mathit{bd}^i(h, K, L, \mu')$ is always determined by the values of the variables in $u(\mathit{args}(\mathit{head}(r)), K)$.
From this, our claim trivially follows.
For the base case, let $i = 0$.
Then, for any literal $b(\overline{y}) \in \mathit{bd}^0(h, K, L, \mu')$, it follows that (a) $b \prec \mathit{pr}$, and (b) $u(\overline{y}, \mu'(b)) \subseteq u(\mathit{args}(\mathit{head}(r)),K)$.
From this, $p$ complies with ${\cal U}$, and the induction's hypothesis $(\clubsuit)$, it follows that the variables in $\unique{b}{\mu'(b)}$ uniquely determine the grounding of $b(\overline{y})$.
From this and $u(\overline{y}, \mu'(b)) \subseteq u(\mathit{args}(\mathit{head}(r)),K)$, it follows that the variables in $u(\mathit{args}(\mathit{head}(r)),K)$ uniquely determine the grounding of $b(\overline{y})$.
For the induction's step, assume that the claim hold for all $j < i$ (we denote this induction hypothesis as $(\spadesuit)$).
Let $b(\overline{y}) \in \mathit{bd}^i(h, K, L, \mu')$ (without loss of generality, $b(\overline{y}) \not\in \mathit{bd}^{i-1}(h, K, L, \mu')$).
From this, it follows that $b \preceq \mathit{pr}$ and there is a  $l' \in \mathit{bound}^{i-1}(h, K, L, \mu')$ such that $u(\overline{y}, \mu'(b)) \subseteq \mathit{vars}(l')$.
From the induction hypothesis $(\spadesuit)$, it follows that the grounding of $l'$ is directly determined by the grounding of the values of the variables in $u(\mathit{args}(\mathit{head}(r)),K)$.
Furthermore, from $b \preceq \mathit{pr}$, $p, \preceq \models \unique{b}{\mu'(b)}$, and the induction hypothesis $(\clubsuit)$, it follows that the values of the variables in $u(\overline{y}, \mu'(b))$ uniquely determine the grounding of $b(\overline{y})$.
Therefore, the values of the variables in  $u(\mathit{args}(\mathit{head}(r)),K)$ uniquely determine the grounding of $b(\overline{y})$.
This completes the proof of the claim.

\item $\mathit{pr}(\overline{v})$ and  $\mathit{pr}(\overline{w})$ are generated by ground instances of two different rules $r_1$ and $r_2$.
From this, $\unique{\mathit{pr}}{K} \in {\cal U}$, and $K \neq \{1, \ldots, |\mathit{pr}|\}$, it follows that  
it follows that there is a value $i \in K$ such that $\mathit{args}(\mathit{head}(r_1)) \in \mathbf{dom}$,
$\mathit{args}(\mathit{head}(r_2)) \in \mathbf{dom}$, and
$\mathit{args}(\mathit{head}(r_1)) \neq \mathit{args}(\mathit{head}(r_1))$.
Therefore,  there is an $i \in K$ such that $\overline{v}(i) \neq \overline{w}(i)$.
This contradicts the fact that $\overline{v}(i) = \overline{w}(i)$ for all $i \in K$. 
\end{compactitem}

This completes the proof of our claim.
\end{proof}

\subsubsection{Proofs about Propagation Maps}

Here we prove some results about propagation maps, which we afterwards use in proving the main result.

\begin{proposition}\thlabel{theorem:vertical:maps}
Let $\Sigma$ be a first-order signature, $\mathbf{dom}$ be a finite domain, $p$ be a $(\Sigma,\mathbf{dom})$-\problog{} program, $r$ be a rule in $p$, and $l$ be the $i$-th literal in $\mathit{body}(r)$.
Furthermore, let $\mu$ be the $(r,l)$-vertical map.
Given a rule $r' \in \mathit{ground}(p,r)$, then $\overline{b}(j) = \overline{h}(\mu(j))$ for any $j$ such that $\mu(j)$ is defined, where $\overline{h} = \mathit{args}(\mathit{head}(r'))$ and $\overline{b} = \mathit{args}(\mathit{body}(r',i))$.
\end{proposition}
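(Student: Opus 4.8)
The plan is to unwind two definitions—that of the relaxed grounding $\mathit{ground}(p,r)$ and that of the $(r,l)$-vertical map—and then to observe that they agree, argument by argument, under the single substitution that produces the ground rule $r'$. Since the claim is purely syntactic, the proof should be a short direct calculation rather than an induction.

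First I would recall that, by the construction of $\mathit{ground}(p,r)$, every ground rule $r' \in \mathit{ground}(p,r)$ arises from one assignment $\Theta \in \mathit{ASS}(r)$ applied uniformly to $r$; that is, $r'$ has the form $\mathit{head}(r)\Theta \leftarrow \mathit{body}(r,1)\Theta, \ldots, \mathit{body}(r,n)\Theta$. In particular $\mathit{head}(r') = \mathit{head}(r)\Theta$ and $\mathit{body}(r',i) = l\Theta$, where $l = \mathit{body}(r,i)$, and $\Theta$ is total on $\mathit{Var}$ with values in $\mathbf{dom}$. Next I would fix an index $j$ for which $\mu(j)$ is defined and set $k = \mu(j)$. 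By the definition of the $(r,l)$-vertical map, $\mu(j) = k$ holds exactly when $\mathit{args}(l)(j)$ is a variable—call it $x$—that also occurs as the $k$-th head argument, so that $\mathit{args}(l)(j) = \mathit{args}(\mathit{head}(r))(k) = x$.

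The one real step is then to push $\Theta$ through both sides. Because $\mathit{args}(l)(j) = x$ is a variable, $\overline{b}(j) = \mathit{args}(l\Theta)(j) = \Theta(x)$; and because $\mathit{args}(\mathit{head}(r))(k) = x$ as well, $\overline{h}(\mu(j)) = \overline{h}(k) = \mathit{args}(\mathit{head}(r)\Theta)(k) = \Theta(x)$. Hence $\overline{b}(j) = \Theta(x) = \overline{h}(\mu(j))$, which is the desired equality. I do not anticipate a genuine obstacle: the only point requiring care is that the head and the body literal of $r'$ are produced by one and the same assignment $\Theta$—a fact baked into the definition of $\mathit{ground}(p,r)$—so the shared variable $x$ receives a single, consistent value on both occurrences. (The argument is moreover insensitive to which head position $\mu$ selects should $x$ appear several times in the head, since every such position carries the value $\Theta(x)$.)
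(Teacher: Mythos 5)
Your proof is correct and follows essentially the same route as the paper's: both exhibit the single substitution $\Theta$ underlying $r' \in \mathit{ground}(p,r)$, invoke the definition of the $(r,l)$-vertical map to identify the shared variable at positions $j$ and $\mu(j)$, and conclude $\overline{b}(j) = \Theta(x) = \overline{h}(\mu(j))$. Your explicit remark that the vertical map is only defined at positions holding a variable (and that repeated head occurrences are harmless) is a slightly more careful rendering of the same calculation.
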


\begin{proof}
Let $\Sigma$ be a first-order signature, $\mathbf{dom}$ be a finite domain, $p$ be a $(\Sigma,\mathbf{dom})$-\problog{} program, $r$ be a rule in $p$, and $l$ be the $i$-th literal in $\mathit{body}(r)$.
Furthermore, let $\mu$ be the $(r,l)$-vertical map and  $r' \in \mathit{ground}(p,r)$.
From the definition of $\mathit{ground}(p,r)$, it follows that there is an assignment $\Theta$ from variables to elements in $\mathbf{dom}$ such that $r' = r \Theta$.
From this, $\overline{b}(j) = \Theta(\mathit{args}(l)(j))$ and  $\overline{h}(\mu(j)) = \Theta(\mathit{args}(\mathit{head}(r))(\mu(j)))$.
Note that from the definition of $(r,l)$-vertical map it follows that $\mathit{args}(l)(j) = \mathit{args}(\mathit{head}(r))(\mu(j))$.
From this, $\overline{b}(j) = \overline{h}(\mu(j))$ whenever $\mu(j)$ is defined.
\end{proof}

\begin{proposition}\thlabel{theorem:horizontal:maps}
Let $\Sigma$ be a first-order signature, $\mathbf{dom}$ be a finite domain, $p$ be a $(\Sigma,\mathbf{dom})$-\problog{} program, $r$ be a rule in $p$, $l$ be the $i$-th literal in $\mathit{body}(r)$, and $l'$ be the $j$-th literal in $\mathit{body}(r)$.
Furthermore, let $\mu$ be the $(r,l,l')$-horizontal map.
Given a rule $r' \in \mathit{ground}(p,r)$, then $\overline{b_1}(k) = \overline{b_2}(\mu(k))$ for any $k$ such that $\mu(k)$ is defined, where $\overline{b_1} = \mathit{args}(\mathit{body}(r',i))$ and $\overline{b_2} = \mathit{args}(\mathit{body}(r',j))$.
\end{proposition}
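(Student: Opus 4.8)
The plan is to mirror almost verbatim the proof of Proposition~\ref{theorem:vertical:maps}, since the horizontal case is structurally identical: both reduce to applying a single grounding substitution to a syntactic equality guaranteed by the definition of the propagation map. First I would invoke the definition of $\mathit{ground}(p,r)$ to extract, for the given $r' \in \mathit{ground}(p,r)$, an assignment $\Theta \in \mathit{ASS}(r)$ (a map from variables to elements of $\mathbf{dom}$) such that $r' = r\Theta$. This is the only fact I need from the relaxed grounding construction, and it is exactly the step used in the vertical-map proof.

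Next I would unfold both sides of the target equality through $\Theta$. Since $r' = r\Theta$, the $i$-th body literal of $r'$ is $\mathit{body}(r,i)\Theta = l\Theta$ and the $j$-th is $\mathit{body}(r,j)\Theta = l'\Theta$. Reading off arguments gives $\overline{b_1}(k) = \Theta(\mathit{args}(l)(k))$ and $\overline{b_2}(\mu(k)) = \Theta(\mathit{args}(l')(\mu(k)))$ for every $k$ at which $\mu(k)$ is defined.

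Finally I would apply the defining property of the $(r,l,l')$-horizontal map: whenever $\mu(k)$ is defined we have, by definition, the syntactic identity $\mathit{args}(l)(k) = \mathit{args}(l')(\mu(k))$ (both are the same variable in $\mathit{Var}$). Applying $\Theta$ to both sides of this identity yields $\Theta(\mathit{args}(l)(k)) = \Theta(\mathit{args}(l')(\mu(k)))$, which is precisely $\overline{b_1}(k) = \overline{b_2}(\mu(k))$, completing the argument.

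I do not anticipate a genuine obstacle here, as the result is a one-step consequence of the grounding being a uniform substitution. The only point requiring mild care is the observation that $\mu(k)$ being defined already supplies the needed equality \emph{before} substitution, and that the side condition $\mathit{args}(l)(k) \in \mathit{Var}$ in the map's definition ensures we are substituting a genuine variable (so that $\Theta$ acts nontrivially and consistently on both occurrences); this is the analogue of the remark that closes the vertical-map proof, and I would state it explicitly to make the use of $\Theta$ well-defined.
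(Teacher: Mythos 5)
Your proposal is correct and follows essentially the same route as the paper's own proof: extract the grounding substitution $\Theta$ with $r' = r\Theta$ from the definition of $\mathit{ground}(p,r)$, unfold $\overline{b_1}(k)$ and $\overline{b_2}(\mu(k))$ through $\Theta$, and conclude via the syntactic identity $\mathit{args}(l)(k) = \mathit{args}(l')(\mu(k))$ supplied by the definition of the horizontal map. Your explicit remark on the side condition $\mathit{args}(l)(k) \in \mathit{Var}$ is a harmless (and slightly more careful) addition the paper leaves implicit.
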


\begin{proof}
Let $\Sigma$ be a first-order signature, $\mathbf{dom}$ be a finite domain, $p$ be a $(\Sigma,\mathbf{dom})$-\problog{} program, $r$ be a rule in $p$, $l$ be the $i$-th literal in $\mathit{body}(r)$, and $l'$ be the $j$-th literal in $\mathit{body}(r)$.
Furthermore, let $\mu$ be the $(r,l,l')$-horizontal map  and  $r' \in \mathit{ground}(p,r)$.
From the definition of $\mathit{ground}(p,r)$, it follows that there is an assignment $\Theta$ from variables to elements in $\mathbf{dom}$ such that $r' = r \Theta$.
From this, $\overline{b_1}(j) = \Theta(\mathit{args}(l)(j))$ and  $\overline{b_2}(\mu(j)) = \Theta(\mathit{args}(l')(\mu(j)))$.
Note that from the definition of $(r,l,l')$-vertical map it follows that $\mathit{args}(l)(j) = \mathit{args}(l')(\mu(j))$.
From this, $\overline{b_1}(j) = \overline{b_2}(\mu(j))$ whenever $\mu(j)$ is defined.
\end{proof}

\begin{proposition}\thlabel{theorem:downward:connections}
Let $\Sigma$ be a first-order signature, $\mathbf{dom}$ be a finite domain, $p$ be a $(\Sigma,\mathbf{dom})$-\problog{} program,  $P = \mathit{pr}_1 \xrightarrow{r_1,i_1} \ldots \xrightarrow{r_{n-1}, i_{n-1}} \mathit{pr}_n$ be a directed path in $\graph(p)$, and $\nu : \mathbb{N} \to \mathbb{N}$ be a mapping.
Furthermore, let $P' = a_1 \rightarrow (r_1, s_1,i_1) \rightarrow a_2 \rightarrow (r_2,s_2, i_2) \rightarrow \ldots \rightarrow (r_{n-1}, s_{n-1}, i_{n-1}) \rightarrow a_n$ be a directed path in $\mathit{gg}(p)$ corresponding to $P$.
If $P$ $\nu$-downward links to $l$, where $l$ is the $k$-th literal in $r_j$, then $\overline{b_1}(m) = \overline{v_2}(\nu(m))$ whenever $\nu(m)$ is defined, where $\overline{b_1} = \mathit{args}(\mathit{body}(s_1,i_1))$ and $\overline{v_2} = \mathit{args}(\mathit{body}(s_j,k))$.
\end{proposition}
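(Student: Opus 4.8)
The plan is to prove the identity by \emph{composing} the two preceding propagation lemmas, Proposition~\ref{theorem:vertical:maps} and Proposition~\ref{theorem:horizontal:maps}, along the ground path $P'$. Recall that, by the definition of $\nu$-downward linking, there is an index $j$ such that $\nu$ agrees, wherever it is defined, with $\mu := \mu' \circ \mu_j \circ \cdots \circ \mu_1$, where each $\mu_h$ is the $(r_h, \mathit{body}(r_h, i_h))$-vertical map and $\mu'$ is the horizontal map connecting $\mathit{body}(r_{j+1}, i_{j+1})$ with $l$. Since $\mu(m) = \nu(m)$ at every $m$ where $\nu$ is defined, it suffices to show $\overline{b_1}(m) = \overline{v_2}(\mu(m))$ for every $m$ at which $\mu(m)$ is defined.

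First I would extract the structural identities that glue $P'$ together from the definition of the ground graph $\mathit{gg}(p)$: each segment $a_h \to (r_h, s_h, i_h) \to a_{h+1}$ forces $a_h = \mathit{body}(s_h, i_h)$ (as atoms) and $\mathit{head}(s_h) = a_{h+1}$. Writing $\overline{c_h} := \mathit{args}(a_h)$, this gives $\overline{b_1} = \overline{c_1} = \mathit{args}(\mathit{body}(s_1, i_1))$, $\mathit{args}(\mathit{head}(s_h)) = \overline{c_{h+1}}$, and $\overline{c_{h+1}} = \mathit{args}(\mathit{body}(s_{h+1}, i_{h+1}))$ for each $1 \le h \le j$.

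The core is an induction on $h$ showing $\overline{c_1}(m) = \overline{c_{h+1}}\bigl((\mu_h \circ \cdots \circ \mu_1)(m)\bigr)$ wherever the right-hand side is defined, with the empty composition (at $h = 0$) read as the identity. The base case is immediate. For the step, I would apply Proposition~\ref{theorem:vertical:maps} to the ground rule $s_{h+1} \in \mathit{ground}(p, r_{h+1})$ at its $i_{h+1}$-th body literal, obtaining $\overline{c_{h+1}}(m') = \overline{c_{h+2}}(\mu_{h+1}(m'))$ whenever $\mu_{h+1}(m')$ is defined (using $\overline{c_{h+1}} = \mathit{args}(\mathit{body}(s_{h+1}, i_{h+1}))$ and $\mathit{args}(\mathit{head}(s_{h+1})) = \overline{c_{h+2}}$); substituting $m' = (\mu_h \circ \cdots \circ \mu_1)(m)$ and invoking the hypothesis closes the step. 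At $h = j$ this yields $\overline{c_1}(m) = \overline{c_{j+1}}\bigl((\mu_j \circ \cdots \circ \mu_1)(m)\bigr)$, i.e.\ the bottom literal's arguments are propagated up to those of $\mathit{body}(s_{j+1}, i_{j+1})$.

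Finally I would apply Proposition~\ref{theorem:horizontal:maps} once, to the ground instance on $P'$ of the rule containing $l$, with the horizontal map $\mu'$ connecting $\mathit{body}(r_{j+1}, i_{j+1})$ and $l$; this gives $\overline{c_{j+1}}(m'') = \overline{v_2}(\mu'(m''))$ wherever defined, where $\overline{v_2}$ are the arguments of the ground instance of $l$ (the $\mathit{args}(\mathit{body}(s_j, k))$ of the statement). Composing with the inductive conclusion and using $\mu = \mu' \circ \mu_j \circ \cdots \circ \mu_1$ and $\overline{b_1} = \overline{c_1}$ yields $\overline{b_1}(m) = \overline{v_2}(\mu(m)) = \overline{v_2}(\nu(m))$, as required. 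I expect the main obstacle to be purely \emph{bookkeeping}: tracking the partiality of the vertical and horizontal maps so that ``defined at $m$'' propagates correctly through the composition (which is exactly the hypothesis form in which Propositions~\ref{theorem:vertical:maps} and~\ref{theorem:horizontal:maps} are stated), and reconciling the off-by-one between the index convention of the $\nu$-downward-links definition (which places $l$ in $r_{j+1}$) and the $s_j,k$ notation of the statement. Both become routine once the gluing identities $\mathit{head}(s_h) = \mathit{body}(s_{h+1}, i_{h+1})$ are fixed.
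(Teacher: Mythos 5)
Your proposal is correct and takes essentially the same route as the paper's proof: the paper likewise unfolds the definition of $\nu$-downward linking into $\mu = \mu' \circ \mu_j \circ \cdots \circ \mu_1$, repeatedly applies \thref{theorem:vertical:maps} along the ground rules of $P'$ to propagate arguments up to $\mathit{args}(\mathit{body}(s_j,i_j))$, applies \thref{theorem:horizontal:maps} once for the final horizontal step to the ground instance of $l$, and composes, using the agreement of $\mu$ with $\nu$. Your version merely makes explicit what the paper leaves implicit (the induction behind ``repeatedly applying'' the vertical-map lemma, the gluing identities from the ground-graph definition, and the $r_j$ versus $r_{j+1}$ indexing wrinkle, which is indeed only bookkeeping).
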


\begin{proof}
Let $\Sigma$ be a first-order signature, $\mathbf{dom}$ be a finite domain, $p$ be a $(\Sigma,\mathbf{dom})$-\problog{} program,  $P = \mathit{pr}_1 \xrightarrow{r_1,i_1} \ldots \xrightarrow{r_{n-1}, i_{n-1}} \mathit{pr}_n$ be a directed path in $\graph(p)$, and $\nu : \mathbb{N} \to \mathbb{N}$ be a mapping.
Furthermore, let $P' = a_1 \rightarrow (r_1, s_1,i_1) \rightarrow a_2 \rightarrow (r_2,s_2, i_2) \rightarrow \ldots \rightarrow (r_{n-1}, s_{n-1}, i_{n-1}) \rightarrow a_n$ be a directed path in $\mathit{gg}(p)$ corresponding to $P$.
Assume that  $P$ $\nu$-downward links to $l$, where $l$ is the $k$-th literal in $r_j$.
From this, it follows that the function $\mu := \mu' \circ \mu_{j} \circ \ldots \circ \mu_1$ satisfies $\mu(m) = \nu(m)$ for all $m$ for which $\nu(k)$ is defined, where for $1 \leq h \leq j$, $\mu_h$ is the vertical map connecting $\mathit{body}(r_h,i_h)$ and $r_h$, 
and $\mu'$ is the horizontal map connecting $\mathit{body}(r_{j+1},i_{j+1})$ with $l$.
By repeatedly applying \thref{theorem:vertical:maps} to the rules in $P'$, we have that $\overline{b_1}(m) = \overline{b_j}(\phi(m))$ whenever $\phi(m)$ is defined, where $\overline{b_1} = \mathit{args}(\mathit{body}(s_1,i_1))$, $\overline{b_j} = \mathit{args}(\mathit{body}(s_j,i_j))$, and $\phi = \mu_{j} \circ \ldots \circ \mu_1$.
Moreover, by applying \thref{theorem:horizontal:maps}, we have that $\overline{b_j}(m) = \overline{v_2}(\mu'(m))$ whenever $\mu'(m)$ is defined, where $\overline{b_j} = \mathit{args}(\mathit{body}(s_j,i_j))$ and $\overline{v_2} = \mathit{args}(\mathit{body}(s_j,k))$.
Therefore, we have that  $\overline{b_1}(m) = \overline{v_2}(\mu(m))$  whenever $\mu'(m)$ is defined (by composing the previous results).
From this and $\mu(m) = \nu(m)$ for all $m$ for which $\nu(m)$ is defined, it follows that $\overline{b_1}(m) = \overline{v_2}(\nu(m))$  whenever $\mu'(m)$ is defined.
\end{proof}

\begin{proposition}\thlabel{theorem:upward:connections}
Let $\Sigma$ be a first-order signature, $\mathbf{dom}$ be a finite domain, $p$ be a $(\Sigma,\mathbf{dom})$-\problog{} program,  $P = \mathit{pr}_1 \xrightarrow{r_1,i_1} \ldots \xrightarrow{r_{n-1}, i_{n-1}} \mathit{pr}_n$ be a directed path in $\graph(p)$, and $\nu : \mathbb{N} \to \mathbb{N}$ be a mapping.
Furthermore, let $P' = a_1 \rightarrow (r_1, s_1,i_1) \rightarrow a_2 \rightarrow (r_2,s_2, i_2) \rightarrow \ldots \rightarrow (r_{n-1}, s_{n-1}, i_{n-1}) \rightarrow a_n$ be a directed path in $\mathit{gg}(p)$ corresponding to $P$.
If $P$ $\nu$-upward links to $l$, where $l$ is the $k$-th literal in $r_j$, then $\overline{b_n}(m) = \overline{v_2}(\nu(m))$ whenever $\nu(m)$ is defined, where $\overline{b_n} = \mathit{args}(\mathit{head}(r_{n-1}))$ and $\overline{v_2} = \mathit{args}(\mathit{body}(s_j,k))$.
\end{proposition}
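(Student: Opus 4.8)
The plan is to mirror the proof of \thref{theorem:downward:connections}, but to traverse the ground path $P'$ from its last node $a_n$ back towards $a_{j+1}$ using the \emph{inverses} of the vertical maps. This reflects the fact that $\nu$-upward linking is defined through the composite $\mu := \mu'^{-1} \circ \mu_{j+1}^{-1} \circ \ldots \circ \mu_{n-1}^{-1}$, which consists purely of (inverse) vertical maps and, unlike the downward case, does not terminate in a horizontal map. Consequently the proof uses \thref{theorem:vertical:maps} repeatedly and never invokes \thref{theorem:horizontal:maps}.

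First I would record the local relation supplied by \thref{theorem:vertical:maps} at each edge of $P'$. For each $h$ with $j < h \leq n-1$, the ground rule $s_h$ is an instance of $r_h$ with $\mathit{head}(s_h) = a_{h+1}$ and $\mathit{body}(s_h,i_h) \in \{a_h, \neg a_h\}$, so that $\mathit{args}(\mathit{body}(s_h,i_h)) = \mathit{args}(a_h)$. Applying \thref{theorem:vertical:maps} to $s_h$ and the literal $\mathit{body}(r_h,i_h)$ yields $\mathit{args}(a_h)(q) = \mathit{args}(a_{h+1})(\mu_h(q))$ for every $q$ where $\mu_h(q)$ is defined. Rewriting this with $m = \mu_h(q)$ gives the upward form $\mathit{args}(a_{h+1})(m) = \mathit{args}(a_h)(\mu_h^{-1}(m))$ whenever $\mu_h^{-1}(m)$ is defined.

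Next I would chain these relations from $h = n-1$ down to $h = j+1$. Starting from $a_n = \mathit{head}(s_{n-1})$ and substituting successively, I obtain $\mathit{args}(a_n)(m) = \mathit{args}(a_{j+1})((\mu_{j+1}^{-1} \circ \ldots \circ \mu_{n-1}^{-1})(m))$ for all $m$ where the composite is defined. Since $a_{j+1} = \mathit{head}(s_j)$ and $l$ is the $k$-th literal of $r_j$, one further application of \thref{theorem:vertical:maps} to $s_j$ and $l$, with $\mu'$ the $(r_j,l)$-vertical map, gives $\overline{v_2}(q) = \mathit{args}(a_{j+1})(\mu'(q))$, i.e. $\mathit{args}(a_{j+1})(m') = \overline{v_2}(\mu'^{-1}(m'))$ whenever $\mu'^{-1}(m')$ is defined. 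Composing these yields $\mathit{args}(a_n)(m) = \overline{v_2}(\mu(m))$ with $\mu = \mu'^{-1} \circ \mu_{j+1}^{-1} \circ \ldots \circ \mu_{n-1}^{-1}$. Because $P$ $\nu$-upward links to $l$, the definition gives $\mu(m) = \nu(m)$ for all $m$ where $\nu(m)$ is defined, and the claim follows with $\overline{b_n} = \mathit{args}(a_n) = \mathit{args}(\mathit{head}(s_{n-1}))$.

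The main obstacle is the careful handling of the partial inverses $\mu_h^{-1}$: a vertical map need not be injective (a variable may occur in several positions of a body literal), so $\mu_h^{-1}$ is in general only a partial, possibly multi-valued, relation, and the telescoping substitution must be justified. I would resolve this exactly as the statement anticipates, by restricting attention to those indices $m$ for which $\nu(m)$—and hence the full composite $\mu(m)$—is defined; for each such $m$ every intermediate inverse in the chain takes a well-defined single value, so each substitution step is licensed by the upward form of the vertical-map relation above. I would also flag the apparent typo in the statement: $\overline{b_n}$ should read $\mathit{args}(\mathit{head}(s_{n-1}))$, the head arguments of the ground rule, which is precisely $\mathit{args}(a_n)$, since the head of the rule $r_{n-1}$ carries variables rather than the ground values needed to match $\overline{v_2}$.
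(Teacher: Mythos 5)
Your proof is correct and follows essentially the same route as the paper's own proof: both chain the inverse vertical maps along the ground path from $a_n$ back to $a_{j+1}$ via repeated applications of \thref{theorem:vertical:maps} (never using \thref{theorem:horizontal:maps}), then apply one further vertical map at $(r_j,l)$, and finally invoke the definition of $\nu$-upward linking to replace the composite $\mu = \mu'^{-1} \circ \mu_{j+1}^{-1} \circ \ldots \circ \mu_{n-1}^{-1}$ by $\nu$. Your reading of $\overline{b_n}$ as $\mathit{args}(\mathit{head}(s_{n-1}))$ agrees with the paper's proof, which indeed works with the ground rule's head at that point.
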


\begin{proof}
Let $\Sigma$ be a first-order signature, $\mathbf{dom}$ be a finite domain, $p$ be a $(\Sigma,\mathbf{dom})$-\problog{} program,  $P = \mathit{pr}_1 \xrightarrow{r_1,i_1} \ldots \xrightarrow{r_{n-1}, i_{n-1}} \mathit{pr}_n$ be a directed path in $\graph(p)$, and $\nu : \mathbb{N} \to \mathbb{N}$ be a mapping.
Furthermore, let $P' = a_1 \rightarrow (r_1, s_1,i_1) \rightarrow a_2 \rightarrow (r_2,s_2, i_2) \rightarrow \ldots \rightarrow (r_{n-1}, s_{n-1}, i_{n-1}) \rightarrow a_n$ be a directed path in $\mathit{gg}(p)$ corresponding to $P$.
Assume that $P$ $\nu$-upward links to $l$, where $l$ is the $k$-th literal in $r_j$.
From this, it follows that the function $\mu :=\mu'^{-1} \circ \mu_{j+1}^{-1} \circ \ldots \circ \mu_{n-1}^{-1}$ satisfies $\mu(k) = \nu(k)$ for all $k$ for which $\nu(k)$ is defined, where $\mu_h$ is the $(r_h,\mathit{body}(r_h,i_h))$-vertical map, for $j < h \leq n-1$, 
and $\mu'$ is the $(r_{j}, l)$-vertical map. 
By repeatedly applying \thref{theorem:vertical:maps} to the rules in $P'$, we obtain that  $\overline{b_{j+1}}(\phi^{-1}(m)) = \overline{b_n}(m)$ whenever $\phi^{-1}(m)$ is defined, where $\overline{b_{j+1}} = \mathit{args}(\mathit{body}(s_{j+1},i_{j+1}))$, $\overline{b_n} = \mathit{args}(\mathit{head}(s_{n-1}))$, and $\phi^{-1} = \mu_{j+1}^{-1} \circ \ldots \circ \mu_{n-1}^{-1}$.
Furthermore, by applying \thref{theorem:vertical:maps} to the $(r_j,l)$-vertical map, we have that $\overline{v_{2}}({\mu'}^{-1}(m)) = \overline{b_{j+1}}(m)$ whenever ${\mu'}^{-1}(m)$ is defined, where $\overline{b_{j+1}} = \mathit{args}(\mathit{head}(s_j))$ and $\overline{v_2} = \mathit{args}(\mathit{body}(s_j,k))$.
From this and $\mu(m) = \nu(m)$ for all $m$ for which $\nu(m)$ is defined, it follows that $\overline{b_n}(m) = \overline{v_2}(\nu(m))$ whenever $\nu(m)$ is defined.
\end{proof}

\subsubsection{Proofs about Connected Rules}

We now prove that, for strongly connected rules, the grounding of a rule's head uniquely determines the grounding of the rule's body (\thref{theorem:join:tree:1}), whereas for weakly connected rules the grounding of one of the atoms in the body determines the head's grounding (\thref{theorem:join:tree:2}).

\begin{proposition}\thlabel{theorem:join:tree:1}
Let $\Sigma$ be a first-order signature, $\mathbf{dom}$ be a finite domain, $p$ be a $(\Sigma,\mathbf{dom})$-\problog{} program, $r$ be a rule in $p$, and $\mathit{U}$ be a $\Sigma$-uniqueness template.
If $p$ complies with ${\cal U}$ and $r$ is strongly connected for ${\cal U}$, then for all $r_1,r_2 \in \mathit{ground}(p,r)$, if $\mathit{head}(r_1) = \mathit{head}(r_2)$, then $r_1 = r_2$.
\end{proposition}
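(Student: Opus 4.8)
The plan is to reduce the statement to a property of the substitutions underlying the two ground rules and then to prove that property by inducting along the join trees witnessing the strong connectivity of $r$. Since $r_1, r_2 \in \mathit{ground}(p,r)$, there are assignments $\Theta_1, \Theta_2 \in \mathit{ASS}(r)$ with $r_1 = r\Theta_1$ and $r_2 = r\Theta_2$; proving $r_1 = r_2$ amounts to showing that $\Theta_1$ and $\Theta_2$ agree on every variable occurring in $r$. From the hypothesis $\mathit{head}(r_1) = \mathit{head}(r_2)$ we immediately obtain that $\Theta_1$ and $\Theta_2$ agree on $\mathit{vars}(\mathit{head}(r))$, and they trivially agree on every variable $x$ fixed by an equality constraint $x = c \in \mathit{cstr}(r)$. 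Hence they agree on the part of the support shared by all tree nodes.

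First I would fix one of the witnessing ${\cal U}$-strongly connected join trees $J = (N, E, \treeroot, \lambda)$ and prove, by induction on the depth of a node $n \in N$ in $J$, that $\Theta_1$ and $\Theta_2$ agree on $\mathit{vars}(n)$. The induction hypothesis provides agreement on $\mathit{vars}(n')$ for every ancestor $n' \in \mathit{anc}(J, n)$, so combined with the base agreement on head variables and constraint-fixed constants, $\Theta_1$ and $\Theta_2$ agree on all of $\mathit{support}(n)$. If $n$ is a negative literal, the strong-connectivity condition requires $\mathit{vars}(n) \subseteq \mathit{support}(n)$, so agreement on $\mathit{vars}(n)$ is immediate. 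If $n$ is a positive literal $a(\overline{x})$, there is a set $K \subseteq \{ i \mid \overline{x}(i) \in \mathit{support}(n)\}$ with $\unique{a}{K} \in {\cal U}$. The ground atoms $a(\overline{x})\Theta_1$ and $a(\overline{x})\Theta_2$ are positive body literals of $r_1$ and $r_2$, hence both belong to $\mathit{ground}(p)$ by the definition of $\mathit{ground}(p,r)$; moreover they agree on every position in $K$, because each such position points to a support variable on which $\Theta_1$ and $\Theta_2$ already coincide. Applying \thref{theorem:template:uniqueness:semantics} (using that $p$ complies with ${\cal U}$) then yields $\overline{x}\Theta_1 = \overline{x}\Theta_2$, i.e.\ agreement on $\mathit{vars}(n)$.

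Since the witnessing trees $J_1, \ldots, J_k$ cover all literals of $\mathit{body}(r)$, the tree induction establishes that $\Theta_1$ and $\Theta_2$ agree on every variable occurring in the body, and together with the earlier agreement on head variables and constraint-fixed variables this shows that $\Theta_1$ and $\Theta_2$ coincide on $\mathit{vars}(r)$, so $r_1 = r\Theta_1 = r\Theta_2 = r_2$. The main obstacle I anticipate is organizing the induction so that the support of each node is genuinely already determined when the node is processed: this requires traversing the nodes from the root outward so that all ancestors are handled first, and verifying that the uniqueness propagation is invoked on atoms actually present in $\mathit{ground}(p)$ rather than merely in some hypothetical grounded instance. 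Care is also needed to confirm that the positions in $K$ indeed map to support variables, which is precisely what the ${\cal U}$-strong-connectivity condition guarantees.
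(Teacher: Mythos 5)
Your proposal is correct and follows essentially the same route as the paper's proof: an induction from the root of the witnessing join tree outward, using the support (head variables, constraint-fixed variables, and ancestors' variables) together with the uniqueness annotations and Proposition \thref{theorem:template:uniqueness:semantics} for positive literals, and the support-inclusion condition for negative literals. The only differences are cosmetic — you argue directly on the underlying substitutions $\Theta_1,\Theta_2$ rather than by contradiction on sets of ground body literals as the paper does — and your handling of the key subtlety (that the ground atoms fed to the uniqueness lemma lie in $\mathit{ground}(p)$) matches the paper's.
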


\begin{proof}
Let $\Sigma$ be a first-order signature, $\mathbf{dom}$ be a finite domain, $p$ be a $(\Sigma,\mathbf{dom})$-\problog{} program, $r$ be a rule in $p$, and $\mathit{U}$ be a $\Sigma$-uniqueness template.
Furthermore, we assume that (1)  $p$ complies with ${\cal U}$, and (2) there are join trees $J_1, \ldots, J_n$ such that (a) the trees cover all literals in $\mathit{body}(r)$, and (b) for each $1 \leq i \leq n$, $J_i$  is strongly connected for ${\cal U}$.
Finally, let $r_1,r_2 \in \mathit{ground}(p,r)$ be two ground rules such that $\mathit{head}(r_1) = \mathit{head}(r_2)$.
Assume, for contradiction's sake, that $r_1 \neq r_2$.
Therefore, there is a position $1 \leq i \leq |\mathit{body}(r)|$ such that $\mathit{body}(r_1,i) \neq \mathit{body}(r_2,i)$.
Let $J = (N, E, \treeroot, \lambda)$ be one of the join trees that cover $l = \mathit{body}(r,i)$.
We claim that the grounding of $\mathit{head}(r)$ determines the grounding of all literals in $J$.
From this, it follows that   $\mathit{body}(r_1,i) = \mathit{body}(r_2,i)$ leading to a contradiction.

We now prove our claim that the grounding of $\mathit{head}(r)$ determines the grounding of all literals in $J$.
Let $V(J,i)$ be the set of all the nodes in $J$ at distance at most $i$ from the root $\treeroot$.
Furthermore, we denote by $\mathit{ground}(r,r',J,i)$ the set $\{ \mathit{body}(r',j) \mid \mathit{body}(r,j) \in V(J,i) \}$.
We prove, by induction on $i$, that for all $i$ and all ground rules $r_1$ and $r_2$ instances of $r$, if $\mathit{head}(r_1) = \mathit{head}(r_2)$, then $\mathit{ground}(r,r_1,J,i) = \mathit{ground}(r,  r_2,J,i)$.
From this, it follows that the grounding of $\mathit{head}(r)$ determines the grounding of all literals in $J$.

\para{Base case}
For $i = 0$, there is a $j$ such that $V(J,0) = \{ \mathit{body}(r,j) \}$.
From this, it follows that $\mathit{ground}(r,r_1,J,0) = \{ \mathit{body}(r_1,j) \}$ and $\mathit{ground}(r, r_2, J,0) = \{ \mathit{body}(r_2,j)\}$.
Furthermore, $\mathit{anc}(J, \mathit{body}(r,j)) = \emptyset$.
There are two cases depending on whether $\mathit{body}(r,j)$ is a positive literal or not:
\begin{compactenum}
\item  If $\mathit{body}(r,j)$ is a positive literal of the form $a(\overline{x})$, then $\mathit{body}(r_1,j) = a(\overline{c}_1)$ and $\mathit{body}(r_2,j) = a(\overline{c}_2)$. 
From the fact that $J$ is strongly connected for ${\cal U}$, it follows that there is a set of variables $K \subseteq \{ i \mid \overline{x}(i) \in \mathit{support}(a(\overline{x}))\}$ such that $\unique{a}{K} \in {\cal U}$.
From $\mathit{support}(a(\overline{x})) = \mathit{vars}(\mathit{head}(r)) \cup \{x \mid (x = c) \in \mathit{cstr}(r) \wedge c \in \mathbf{dom}\}$ and $\mathit{head}(r_1) = \mathit{head}(r_2)$, it follows that the values assigned to the variables associated to the indexes in $K$ are the same in $r_1$ and $r_2$.
From this, $\unique{a}{K} \in {\cal U}$, $p$ complies with ${\cal U}$, $\{a(\overline{c}_1),a(\overline{c}_2)\} \subseteq \mathit{ground}(p)$, and  \thref{theorem:template:uniqueness:semantics}, it follows $\overline{c}_1 = \overline{c}_2$ and $\mathit{ground}(r,r_1,J,0) = \mathit{ground}(r, r_2, J,0)$.

\item  If $\mathit{body}(r,j)$ is a negative literal of the form $\neg a(\overline{x})$, then $\mathit{body}(r_1,j) = \neg a(\overline{c}_1)$ and $\mathit{body}(r_2,j) = \neg a(\overline{c}_2)$. 
From the fact that $J$ is strongly connected for ${\cal U}$, it follows that  $\mathit{vars}(\neg a(\overline{x})) \subseteq \mathit{support}(\neg a(\overline{x}))$.
From this, $\mathit{support}(\neg a(\overline{x})) = \mathit{vars}(\mathit{head}(r)) \cup \{x \mid (x = c) \in \mathit{cstr}(r) \wedge c \in \mathbf{dom}\}$, and $\mathit{head}(r_1) = \mathit{head}(r_2)$, it follows that the values of the variables in $\mathit{support}(\neg a(\overline{x}))$ are the same in $r_1$ and $r_2$.
From this, it follows that $\overline{c}_1 = \overline{c}_2$ and $\mathit{ground}(r,r_1,J,0) = \mathit{ground}(r, r_2, J,0)$.
\end{compactenum}

\para{Induction Step}
Assume that for all $j < i$ and all ground rules $r_1,r_2 \in \mathit{ground}(p,r)$, if $\mathit{head}(r_1) = \mathit{head}(r_2)$, then $\mathit{ground}(r,r_1,J,j)  = \mathit{ground}(r,r_2,J,j)$.
We now show that $\mathit{ground}(r,r_1,J,i) = \mathit{ground}(r,r_2,J,i)$.
Assume, for contradiction's sake, that this is not the case, namely $\mathit{ground}(r,r_1, J,i) \neq \mathit{ground}(r,r_2,J,i)$.
From the definition of $\mathit{ground}(r,r', J,i)$, it follows that $\mathit{ground}(r,r_1,J,i) = \mathit{ground}(r,r_1,J,i-1) \cup \{ \mathit{body}(r_1, j) \mid \mathit{body}(r,j) \in V(J,i) \setminus V(J,i-1)\}$ and $\mathit{ground}(r,r_2,J,i) = \mathit{ground}(r,r_2,J,i-1) \cup \{ \mathit{body}(r_2, j) \mid \mathit{body}(r,j) \in V(J,i) \setminus V(J,i-1)\}$.
From this, the induction's hypothesis, and $\mathit{ground}(r,r_1,J,i) \neq \mathit{ground}(r,  r_2,J,i)$, it follows that $\{ \mathit{body}(r_1, j) \mid \mathit{body}(r,j) \in V(J,i) \setminus V(J,i-1)\} \neq \{ \mathit{body}(r_2, j) \mid \mathit{body}(r,j) \in V(J,i) \setminus V(J,i-1)\}$.
Therefore, there is a $j$ such that $\mathit{body}(r,j)  \in V(J,i) \setminus V(J,i-1)$ and $\mathit{body}(r_1,j)  \neq \mathit{body}(r_2, j)$.
There are two cases, depending on whether $\mathit{body}(r,j)$ is a positive literal:
\begin{compactenum}
\item If $\mathit{body}(r,j) = a(\overline{x})$, then $\mathit{body}(r_1, j) = a(\overline{c}_1)$, $\mathit{body}(r_2, j) =  a(\overline{c}_2)$, and $\overline{c}_1 \neq \overline{c}_2$.
From the fact that $J$ is strongly connected for ${\cal U}$, it follows that there is a set of variables $K \subseteq \{ i \mid \overline{x}(i) \in \mathit{support}(a(\overline{x}))\}$ such that $\unique{a}{K} \in {\cal U}$.
From $\mathit{support}(a(\overline{x})) = \mathit{vars}(\mathit{head}(r)) \cup \{x \mid (x = c) \in \mathit{cstr}(r) \wedge c \in \mathbf{dom}\} \cup \bigcup_{l \in V(J,i-1)} \mathit{vars}(l) $, $\mathit{ground}(r, r_1, J,i-1) = \mathit{ground}(r, r_2,J, i-1)$ (from the induction's hypothesis), and $\mathit{head}(r_1) = \mathit{head}(r_2)$, it follows that the values assigned to the variables associated to the indexes in $K$ are the same in $r_1$ and $r_2$.
From this, $\unique{a}{K} \in {\cal U}$, $p$ complies with ${\cal U}$, $\{a(\overline{c}_1),a(\overline{c}_2)\} \subseteq \mathit{ground}(p)$, and  \thref{theorem:template:uniqueness:semantics}, it follows that $\overline{c}_1 = \overline{c}_2$ leading to a contradiction.

\item If $\mathit{body}(r,j) = \neg a(\overline{x})$, then $\mathit{body}(r_1, j) = \neg a(\overline{c}_1)$, $\mathit{body}(r_2, j) = \neg a(\overline{c}_2)$, and $\overline{c}_1 \neq \overline{c}_2$.
From the fact that $J$ is strongly connected for ${\cal U}$, it follows that $\mathit{vars}(\neg a(\overline{x})) \subseteq  \mathit{support}(a(\overline{x}))$.
From $\mathit{support}(a(\overline{x})) = \mathit{vars}(\mathit{head}(r)) \cup \{x \mid (x = c) \in \mathit{cstr}(r) \wedge c \in \mathbf{dom}\} \cup \bigcup_{l \in V(J,i-1)} \mathit{vars}(l) $, $\mathit{ground}(r,r_1, J,i-1) = \mathit{ground}(r, r_2, J, i-1)$ (from the induction's hypothesis), it follows that the values assigned to the variables  in $\mathit{support}(a(\overline{x}))$ are the same in $r_1$ and $r_2$.
From this,  it follows that $\overline{c}_1 = \overline{c}_2$ leading to a contradiction.
\end{compactenum}
Since both cases lead to a contradiction, $\mathit{ground}(r,r_1,J,i) = \mathit{ground}(r,r_2,J,i)$.
\end{proof}

\begin{proposition}\thlabel{theorem:join:tree:2}
Let $\Sigma$ be a first-order signature, $\mathbf{dom}$ be a finite domain, $p$ be a $(\Sigma,\mathbf{dom})$-\problog{} program, $r$ be a rule in $p$,  and $\mathit{U}$ be a $\Sigma$-uniqueness template.
If $p$ complies with ${\cal U}$ and $r$ is weakly connected for ${\cal U}$, then for all $r_1,r_2 \in \mathit{ground}(p,r)$, if there is an $1 \leq i \leq |\mathit{body}(r)|$ such that $\mathit{body}(r_1,i) = \mathit{body}(r_2,i)$, then $r_1 = r_2$.
\end{proposition}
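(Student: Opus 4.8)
The plan is to mirror the structure of the proof of \thref{theorem:join:tree:1}, but to propagate groundings ``horizontally'' through the body instead of ``downward'' from the head. Let $J = (N, E, \treeroot, \lambda)$ be a witness for the weak connectivity of $r$ with respect to ${\cal U}$, so that $N \subseteq \mathit{body}^+(r)$, $J$ is ${\cal U}$-weakly connected, and every literal in $\mathit{body}(r) \setminus N$ is $(r, {\cal U}, N)$-strictly guarded. Fix $r_1, r_2 \in \mathit{ground}(p,r)$ with $\mathit{body}(r_1,i) = \mathit{body}(r_2,i)$ for some $i$. Since the head variables of a rule are a subset of its positive body variables, it suffices to show $\mathit{body}(r_1,j) = \mathit{body}(r_2,j)$ for every $j$: this forces the two grounding substitutions to agree on every variable of $r$, whence $r_1 = r_2$.

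First I would locate a node of $N$ whose grounding is already pinned down. If $\mathit{body}(r,i) \in N$, take $n_0 := \mathit{body}(r,i)$. Otherwise $\mathit{body}(r,i)$ is strictly guarded, so by condition (2) of strict guarding there is a positive literal $a(\overline{x}) \in N$ and an annotation $\unique{a}{K} \in {\cal U}$ with $\{\overline{x}(k) \mid k \in K\} \subseteq \mathit{vars}(\mathit{body}(r,i))$. Since $\mathit{body}(r_1,i) = \mathit{body}(r_2,i)$, the two substitutions agree on $\mathit{vars}(\mathit{body}(r,i))$, hence on the key variables $\{\overline{x}(k) \mid k \in K\}$ of $a$. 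As $r_1, r_2 \in \mathit{ground}(p,r)$, the two groundings of the positive literal $a(\overline{x})$ lie in $\mathit{ground}(p)$, so \thref{theorem:template:uniqueness:semantics} (using that $p$ complies with ${\cal U}$) shows that $a(\overline{x})$ receives the same grounding in $r_1$ and $r_2$. Set $n_0 := a(\overline{x})$.

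Next I would propagate this agreement across the whole tree by induction on the distance from $n_0$ in $J$. The key step is the edge rule: for an edge $(a(\overline{x}), a'(\overline{x}')) \in E$ with $L = \lambda(a(\overline{x}), a'(\overline{x}')) = \mathit{vars}(a(\overline{x})) \cap \mathit{vars}(a'(\overline{x}'))$, the ${\cal U}$-weak connectivity of $J$ supplies $K' \subseteq \{k \mid \overline{x}'(k) \in L\}$ with $\unique{a'}{K'} \in {\cal U}$. If $a(\overline{x})$ already has the same grounding in $r_1$ and $r_2$, then the two substitutions agree on $\mathit{vars}(a(\overline{x})) \supseteq L \supseteq \{\overline{x}'(k) \mid k \in K'\}$, i.e.\ on the key of $a'$, and a further application of \thref{theorem:template:uniqueness:semantics} pins down the grounding of $a'(\overline{x}')$. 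The symmetric annotation $\unique{a}{K} \in {\cal U}$ makes the step work in either orientation of the edge, which is why rooting the induction at $n_0$ rather than at $\treeroot$ is harmless. Since $J$ is connected, this shows that every node of $N$ has a single grounding across $r_1$ and $r_2$.

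Finally, every literal $l \in \mathit{body}(r) \setminus N$ is strictly guarded, so condition (1) gives $\mathit{vars}(l) \subseteq \bigcup_{l' \in N} \mathit{vars}(l') \cup \{x \mid (x = c) \in \mathit{cstr}(r) \wedge c \in \mathbf{dom}\}$ (recall $N \subseteq \mathit{body}^+(r)$). Because all variables occurring in $N$ now take the same value under both substitutions, so do all variables of $l$, and hence $\mathit{body}(r_1,j) = \mathit{body}(r_2,j)$ for the literals outside $N$ as well; together with the previous step this yields agreement on all body literals, so $r_1 = r_2$. I expect the tree-propagation induction of the third paragraph to be the main obstacle: one must re-root $J$ at $n_0$ and verify carefully that the edge rule transmits agreement in both directions, tracking precisely which coordinates of each literal are fixed by the shared-variable label $L$ — essentially the same bookkeeping that is handled through $\mathit{support}(\cdot)$ in the proof of \thref{theorem:join:tree:1}.
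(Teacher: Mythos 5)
Your proposal is correct and follows essentially the same route as the paper's proof: pin down one node of the witness join tree (directly if $\mathit{body}(r,i)\in N$, or via strict-guarding condition (2) and \thref{theorem:template:uniqueness:semantics} otherwise), propagate agreement bidirectionally along the tree edges using the ${\cal U}$-weak-connectivity key annotations together with \thref{theorem:template:uniqueness:semantics}, then fix the literals outside $N$ by strict-guarding condition (1) and conclude head agreement from the head-variable containment. The only difference is cosmetic: the paper splits into two cases (reducing the guarded case to the in-tree case) and phrases the propagation via the subtrees $\mathit{adjacent}(l,j)$, whereas you unify the cases by first locating $n_0$ and then inducting on distance from it.
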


\begin{proof}
Let $\Sigma$ be a first-order signature, $\mathbf{dom}$ be a finite domain, $p$ be a $(\Sigma,\mathbf{dom})$-\problog{} program, $r$ be a rule in $p$,  and $\mathit{U}$ be a $\Sigma$-uniqueness template.
Furthermore, we assume that $p$ complies with ${\cal U}$ and that $r$ is weakly connected for ${\cal U}$, namely there is a join tree $J = (N,E, \treeroot, \lambda)$ such that (a) $J$ is weakly connected for ${\cal U}$, (b) $N \subseteq \mathit{body}^+(r)$, and (c) all literals in $\mathit{body}(r) \setminus N$ are $(r,{\cal U}, N)$-strictly guarded.
Let $r_1,r_2 \in \mathit{ground}(p,r)$ be two ground rules such that there is an $1 \leq i \leq |\mathit{body}(r)|$ such that $\mathit{body}(r_1,i) = \mathit{body}(r_2,i)$, and let $l$ be the literal $\mathit{body}(r,i)$.
There are two cases:
\begin{compactenum}
\item $\mathit{body}(r,i) \in N$.
Given a node $n$ in a join tree $J$, we denote by $\mathit{adjacent}(n,i)$, where $i \in \mathbb{N}$, the sub-tree obtained by considering only the nodes reachable from $n$ using at most $i$ edges.
We claim that for all nodes in $\mathit{adjacent}(l,j)$, the corresponding ground atoms in $r_1$ and $r_2$ are the same.
From this and the fact that there is a $j$ such that $\mathit{adjacent}(l,j) = J$, it follows that for all $i \in \{ j \mid \mathit{body}(r,j) \in N\}$, $\mathit{body}(r_1,i) = \mathit{body}(r_2,i)$.
From this and the fact that the literals in $\mathit{body}(r) \setminus N$ are $(r,{\cal U}, N)$-strictly guarded, it follows that for all $i \in \{ j \mid \mathit{body}(r,j) \in \mathit{body}(r) \setminus N\}$, $\mathit{body}(r_1,i) = \mathit{body}(r_2,i)$ (since $\mathit{vars}(l) \subseteq  \bigcup_{l' \in N\cap\mathit{body}^+(r)} \mathit{vars}(l') \cup \{ x \mid (x = c) \in \mathit{cstr}(r) \wedge c \in \mathbf{dom} \}$ for any literal $l$ in $\mathit{body}(r) \setminus N$).
Therefore, $\mathit{body}(r_1) = \mathit{body}(r_2)$.

\item $\mathit{body}(r,i) \not\in N$. 
From this and the fact that $r$ is weakly connected, it follows that there a $j$ such that $\mathit{body}(r,j) = a(\overline{x})$, $\mathit{body}(r,j) \in N$, and a $\unique{a}{K} \in {\cal U}$ such that $\{ \overline{x}(i) \mid i \in K \} \subseteq \mathit{vars}(\mathit{body}(r,i))$.
From this, $p$ complies with ${\cal U}$, $\mathit{body}(r_1,i) = \mathit{body}(r_2,i)$, $\{\mathit{body}(r_1,j),\mathit{body}(r_2,j)\} \subseteq \mathit{ground}(p)$, and \thref{theorem:template:uniqueness:semantics}, it follows that  $\mathit{body}(r_1, j) = \mathit{body}(r_2,j)$.
We proved above that if $\mathit{body}(r_1,j) = \mathit{body}(r_2,j)$ and $\mathit{body}(r,j) \in N$, then $\mathit{body}(r_1) = \mathit{body}(r_2)$.
Therefore, $\mathit{body}(r_1) = \mathit{body}(r_2)$.

\end{compactenum}
From $\mathit{vars}(\mathit{head}(r)) \subseteq \bigcup_{l \in \mathit{body}^+(r)} \mathit{vars}(l)$ and $\mathit{body}(r_1) = \mathit{body}(r_2)$, it follows that $\mathit{head}(r_1) = \mathit{head}(r_2)$.
Therefore, $r_1 = r_2$.

We now prove, by induction on $j$, that for all nodes in $\mathit{adjacent}(l,j)$, the corresponding ground atoms in $r_1$ and $r_2$ are the same.

\para{Base Case}
The sub-tree $\mathit{adjacent}(l,0)$ contains only the node $l = \mathit{body}(r,i)$.
Since $\mathit{body}(r_1,i) = \mathit{body}(r_2,i)$, the claim holds for the base case.

\para{Induction step}
Assume now that the claim holds for all $j' < j$.
We now prove that the claim holds also for $j$.
The sub-tree $\mathit{adjacent}(l,j)$ is obtained by extending $\mathit{adjacent}(l,j - 1)$ with either edges of the form $n_1 \xrightarrow{L_1} n_1'$, where $n_1' \in \mathit{adjacent}(l,j - 1)$, or $n_2' \xrightarrow{L_2} n_2$, where $n_2' \in \mathit{adjacent}(l,j - 1)$.
In both cases, from the induction hypothesis, the ground literals corresponding to $n_1'$ and $n_2'$ are $l_1'$ and $l_2'$ and they are the same in $r_1$ and $r_2$.
From the definition of weakly connected join tree, there are variables $K_1 \subseteq L_1$ and $K_2 \subseteq L_2$ such that $\unique{\mathit{pred}(n_1)}{K_1} \in {\cal U}$ and $\unique{\mathit{pred}(n_2)}{K_2} \in {\cal U}$.
From this, $p$ complies with ${\cal U}$, the fact that the value of $K_1$ and $K_2$ are fixed by $l_1'$ and $l_2'$, and \thref{theorem:template:uniqueness:semantics}, it follows that the ground atoms corresponding to $n_1$ and $n_2$ are the same in $r_1$ and $r_2$ (because the values in $K_1$ and $K_2$ determines all values in $n_1$ and $n_2$).
\end{proof}

\subsubsection{Acyclicity Proof}

We are now ready to prove our first key result, namely that the ground graph associated to an acyclic \problog{} program is a forest of \emph{poly-trees}, i.e., its undirected version does not contain simple cycles (which are cycles without repetitions of edges and vertices other than the starting and ending vertices).

\begin{proposition}\thlabel{theorem:graph:acyclic}
Let $\Sigma$ be a first-order signature, $\mathbf{dom}$ be a finite domain, and $p$ be a $(\Sigma,\mathbf{dom})$-acyclic \problog{} program.
The graph $\mathit{gg}(p)$ is a forest of poly-trees, i.e., its undirected version does not contain simple cycles.
\end{proposition}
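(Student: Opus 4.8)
The plan is to argue by contradiction: I would assume that the undirected version of $\mathit{gg}(p)$ contains a simple cycle $\kappa$ and derive a contradiction with the acyclicity of $p$. The central device is a projection $\pi$ from $\mathit{gg}(p)$ to $\graph(p)$ that sends each ground-atom node $a(\overline{c})$ to its predicate symbol $a$ and each auxiliary node $(r,s,j)$ to the edge $\mathit{pred}(\mathit{body}(r,j)) \xrightarrow{r,j} \mathit{pred}(\mathit{head}(r))$ of the dependency graph. Applying $\pi$ to $\kappa$ yields a closed walk $W$ in $\graph(p)$, and hence $W$ contains either a directed cycle or an undirected cycle that is not a directed cycle. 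I would then split on these two cases and, using the relevant guarding hypothesis together with the template-semantics lemmas, show that the ground-level incarnation of that cycle cannot in fact close up, contradicting the existence of $\kappa$.

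First I would handle the directed case. If $\pi(\kappa)$ contains a directed cycle $C$ in $\graph(p)$, then acyclicity condition (4) supplies a $({\cal O},{\cal D},{\cal U})$-guarded directed unsafe structure covering $C$, so there is an annotation $\order{A} \in {\cal O}$ guarding $C$. Unwinding the guarding definition with Propositions~\thref{theorem:downward:connections} and~\thref{theorem:upward:connections} shows that, along the ground path realizing $\kappa$, consecutive segment endpoints are related by the relation induced by the predicates in $A$; concatenating these gives a chain $o_1(\overline{a}_1,\overline{a}_2), \ldots, o_e(\overline{a}_{e},\overline{a}_{e+1})$ in $\mathit{ground}(p)$ that returns to its start. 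By Proposition~\thref{theorem:template:ordering:semantics} the transitive closure of that relation is a strict partial order, so $\overline{a}_1 \neq \overline{a}_{e+1}$, contradicting the fact that the ground cycle closes. Hence $\kappa$ cannot project onto a directed cycle.

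Next I would treat the undirected case, where $\pi(\kappa)$ contains an undirected cycle $U'$ that is not a directed cycle. Acyclicity condition (3) gives a guarded undirected unsafe structure $\tup{D_1,D_2,D_3,U}$ covering $U'$, so either $(D_1,D_2)$ is $({\cal D},{\cal U})$-head-guarded or $(D_2,D_3)$ is $({\cal D},{\cal U})$-tail-guarded. In the head-guarded subcase I would distinguish $D_1 = D_2$, where all rules are weakly connected and Proposition~\thref{theorem:join:tree:2} forces the two ground paths realizing $D_1$ and $D_2$ to coincide (so $\kappa$ repeats a vertex or edge and is not simple), from $D_1 \neq D_2$, where a disjointness annotation $\disjoint{\mathit{pr}}{\mathit{pr}'}$ together with Proposition~\thref{theorem:downward:connections} would put the same ground tuple into both $\mathit{pr}$ and $\mathit{pr}'$; Proposition~\thref{theorem:template:disjointness:semantics} then rules this out, so one of the two ground paths is absent from $\mathit{gg}(p)$ and $\kappa$ cannot exist. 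The tail-guarded subcase is symmetric, using Proposition~\thref{theorem:join:tree:1} (strong connectivity) in place of~\thref{theorem:join:tree:2} and Proposition~\thref{theorem:upward:connections} in place of~\thref{theorem:downward:connections}. In every case we reach a contradiction, so $\mathit{gg}(p)$ has no simple undirected cycle.

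The hard part will be the bookkeeping that links a ground-level simple cycle to the abstract unsafe structures: I must show that a simple cycle $\kappa$ in $\mathit{gg}(p)$, once projected, decomposes into the $D_1 \cdot U \cdot D_3 \cdot D_2$ shape (or the directed-cycle shape) that the covering definitions require, and that the propagation-map identities of Propositions~\thref{theorem:vertical:maps}--\thref{theorem:upward:connections} correctly transport the argument positions of the abstract guard literals down to the concrete ground atoms. Care is also needed where ground paths coincide: there I rely on $\kappa$ being simple to turn ``the two realizations are equal'' into an immediate contradiction, and negation-guardedness (assumption (2)) is what guarantees that $\mathit{ground}(p)$ and $\mathit{gg}(p)$ are well defined in the first place.
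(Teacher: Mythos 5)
Your proposal is correct and follows essentially the same route as the paper's proof: contradiction via a simple cycle in the undirected ground graph, lifting it to the dependency graph, splitting into the directed case (handled with the ordering annotation, Propositions~\thref{theorem:downward:connections}, \thref{theorem:upward:connections}, and \thref{theorem:template:ordering:semantics}) and the undirected case (head-/tail-guarded subcases handled with \thref{theorem:join:tree:2}/\thref{theorem:join:tree:1} when the paths coincide and with \thref{theorem:template:disjointness:semantics} plus the propagation-map lemmas when they differ). Your explicit projection $\pi$ and the simplicity-based contradiction when the two ground paths coincide are exactly the devices the paper uses, including the bookkeeping step of choosing the ground paths so they match the covering unsafe structure.
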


\begin{proof}
Let $\Sigma$ be a first-order signature, $\mathbf{dom}$ be a finite domain, and $p$ be a $(\Sigma,\mathbf{dom})$-acyclic \problog{} program.
We prove that the undirected version of $\mathit{gg}(p)$ is acyclic. 
Assume, for contradiction's sake, that this is not the case, namely there is a simple cycle $C:= n_1 \to n_2 \to \ldots \to n_k \to n_1$ in the undirected version of $\mathit{gg}(p)$.
There are two cases:
(1) there is a directed simple cycle in $\mathit{gg}(p)$ that directly corresponds to $C$, or
(2) there are $n$ directed paths $P_1, \ldots, P_n$ in $\mathit{gg}(p)$ that induce a simple cycle $C$ in the undirected version of $\mathit{gg}(p)$ (and $P_1, \ldots, P_n$ does not correspond to any directed simple cycle in $\mathit{gg}{p}$).
From the first case, it follows that there is a directed cycle in $\mathit{graph}(p)$, whereas from the second case, it follows that there is an undirected cycle in $\mathit{graph}(p)$.

\para{Directed Cycle}
Assume that  $C:=n_1 \to n_2 \to \ldots \to n_k \to n_1$ directly corresponds to a directed cycle in $\mathit{gg}(p)$.
Without loss of generality, we assume that $n_1$ is a node of the form $a(\overline{c})$.
Therefore, the cycle has form $a_1(\overline{c}_1) \to (r_1,s_1,i_1) \to  a_2(\overline{c}_2) \to \ldots \to a_{n}(\overline{c}_{n}) \to (r_n,s_n,i_n) \to a_{n+1}(\overline{c}_{n+1})$, where $n = \sfrac{k}{2}$ and $a_{n+1}(\overline{c}_{n+1}) = a_1(\overline{c}_1)$.
From this, it follows that there are rules $r_1,\ldots, r_n$ and ground rules $s_1, \ldots, s_n$ such that:
\begin{inparaenum}[(1)]
\item there is a directed cycle $a_1 \xrightarrow{r_1,i_1} a_2 \xrightarrow{r_2,i_2} \ldots \xrightarrow{r_{n-1}, i_{n-1}} a_n \xrightarrow{r_n,i_n} a_1$ in $p$'s dependency graph $\mathit{graph}(p)$, and
\item  for all $1 \leq i \leq n$, $s_i \in \mathit{ground}(p,r_i)$ and $\mathit{head}(s_i) = a_{i+1}(c_{i+1})$.
\end{inparaenum}
From this, it follows that $p$'s dependency graph $\mathit{graph}(p)$ contains a directed cycle $C' = a_1 \xrightarrow{r_1,i_1} a_2 \xrightarrow{r_2,i_2} \ldots \xrightarrow{r_{n-1}, i_{n-1}}  a_n \xrightarrow{r_n,i_n} a_1$.
Note that $C'$ may contain loops (i.e., it is not simple).
Since $p$ is acyclic and $C'$ is a directed cycle in $\graph(p)$, it follows that there is an ordering template ${\cal O}$,  a disjointness template ${\cal D}$, and a uniqueness template ${\cal U}$ such that (1) $p$ complies with ${\cal O}$, ${\cal D}$, and ${\cal U}$ and (2) there is a directed unsafe structure $S$ that covers $C'$ and is $({\cal O}, {\cal D}, {\cal U})$-guarded.
Without loss of generality, we assume that $S = C'$.
From $S$ covers $C'$ and $S$ is $({\cal O}, {\cal D}, {\cal U})$-guarded, it follows that there is an ordering annotation $\order{O} \in {\cal O}$ such that there are integers $1 \leq y_1 < y_2 < \ldots < y_e = n$, literals $o_1(\overline{x}_1), \ldots, o_e(\overline{x}_e)$ (where $o_j \in O$ and $|\overline{x}_j| = |o_j|$),
a  non-empty set $K \subseteq \{1, \ldots, |\mathit{pr}_1|\}$, and a bijection $\nu : K \to \{1, \ldots, \sfrac{|\mathit{o}_1|}{2}\}$ such that 
for each $0 \leq k < e$, 
(1) $\mathit{pr}_{y_k} \xrightarrow{r_{y_k}, i_{y_k}} \ldots \xrightarrow{r_{{y_{k+1}}-1}, i_{{y_{k+1}}-1}} pr_{y_{k+1}}$ $\nu$-downward connects to $o_{k+1}(\overline{x}_{k+1})$, and 
(2) $\mathit{pr}_{y_{k+1}-1} \xrightarrow{r_{{y_{k+1}}-1}, i_{{y_{k+1}}-1}} pr_{y_{k+1}}$ $\nu'$-upward connects to $o_{k+1}(\overline{x}_{k+1})$, 
where 
$\nu'(i) = \nu(x) + \sfrac{|\mathit{o}_1|}{2}$ for all $1 \leq i \leq \sfrac{|\mathit{o}_1|}{2}$, and
$y_0 = 1$.
By applying \thref{theorem:downward:connections} and \thref{theorem:upward:connections} to the paths $\mathit{pr}_{y_k} \xrightarrow{r_{y_k}, i_{y_k}} \ldots \xrightarrow{r_{{y_{k+1}}-1}, i_{{y_{k+1}}-1}} pr_{y_{k+1}}$ $\nu$-downward connects to $o_{k+1}(\overline{x}_{k+1})$ for each $0 \leq k < e$, it follows that (1) ground atoms $o_1(\overline{b}_1, \overline{b}_2),  \ldots, o_e(\overline{b}_{|K|}, \overline{b}_{|K|+1})$ are in $\mathit{ground}(p)$, and (2) for all $1 \leq w \leq \sfrac{|\mathit{pr}|}{2}$, both $\overline{b}_1(w) = \mathit{args}(\mathit{body}(s_1,i_1))(\nu(w))$ and $\overline{b}_{|K|+1}(w) = \mathit{args}(\mathit{head}(s_n))(\nu'(w))$ hold.
From this, $\order{O} \in {\cal O}$, $p$ complies with ${\cal O}$, and \thref{theorem:template:ordering:semantics}, it follows that $\overline{b}_1 \neq \overline{b}_{|K|+1}$.
From this  $\overline{b}_1(w) = \mathit{args}(\mathit{body}(s_1,i_1))(\nu(w))$ and $\overline{b}_{|K|+1}(w) = \mathit{args}(\mathit{head}(s_n))(\nu'(w))$  for all $1 \leq w \leq \sfrac{|\mathit{pr}|}{2}$, it follows that $\mathit{body}(s_1,i_1) \neq \mathit{head}(s_n)$.
This contradicts $\mathit{head}(s_n) = a(\overline{c}_1)$ and $\mathit{body}(s_1,i_1) = a(\overline{c}_1)$ (which directly follows from the existence of the cycle $C$).

\para{Undirected Cycle}
Assume that $C$ does not directly correspond to any directed simple cycle in $\mathit{gg}(p)$.
From this, it follows that there are $n$ directed paths $P_1, \ldots, P_n$ in $\mathit{gg}(p)$ such that $P_1,\ldots, P_n$ correspond to the simple cycle $C$ in the undirected version of $\mathit{gg}(p)$.
From this, it follows that $P_1, \ldots, P_n$ form an undirected cycle in $\mathit{gg}(p)$.
Therefore:
\begin{compactenum}
\item For $1 \leq j \leq n$, there is a directed path $D_j := a_1 \xrightarrow{r_1,i_1} a_2 \xrightarrow{r_2,i_2} \ldots \xrightarrow{r_{n_j-1}, i_{n_j-1}} a_{n_j}$ in $p$'s dependency graph $\mathit{graph}(p)$, where 
$P_j$ is $a_1(\overline{c}_1) \rightarrow (r_1,s_1, i_1) \rightarrow a_2(\overline{c}_2) \rightarrow \ldots \rightarrow a_{n_j-1}(\overline{c}_{n_j-1}) \rightarrow (r_{n_j-1}, s_{n_j -1}, i_{n_j-1}) \rightarrow a_{n_j}(\overline{c}_{n_j})$. 
\item  For all $1 \leq j \leq n$, $1 \leq h < n_j$, $s_h \in \mathit{ground}(p,r_h)$ and $\mathit{head}(s_h) = a_{h+1}(\overline{c}_{h+1})$, where 
 $P_j := a_1(\overline{c}_1) \rightarrow (r_1,s_1, i_1) \rightarrow a_2(\overline{c}_2) \rightarrow \ldots \rightarrow a_{n_j-1}(\overline{c}_{n_j-1}) \rightarrow (r_{n_j-1}, s_{n_j -1}, i_{n_j-1}) \rightarrow a_{n_j}(\overline{c}_{n_j})$. 
\item $D_1, \ldots, D_n$ form an undirected cycle in $\graph(p)$.
\item $D_1, \ldots, D_n$ is not a directed simple cycle.
Indeed, if $D_1, \ldots, D_n$ is a directed simple cycle, then $C$ would contain loops (contradicting our assumption that $C$ is a simple cycle).
\end{compactenum}
Since $p$ is acyclic and $D_1, \ldots, D_n$ form an undirected cycle in $\graph(p)$ that is not a directed simple cycle, it follows that there is an ordering template ${\cal O}$,  a disjointness template ${\cal D}$, and a uniqueness template ${\cal U}$ such that (1) $p$ complies with ${\cal O}$, ${\cal D}$, and ${\cal U}$ and (2) there is an unsafe structure $S$ that covers $D_1, \ldots, D_n$ and is $({\cal O}, {\cal D}, {\cal U})$-guarded.
We assume that there are values $i,a,b,c$ such that 
\begin{compactenum}
\item $S = \langle D_a, D_b, D_c, U \rangle$, 
\item  $a = i$, $b = (i+1)\%n$, and $c = (i+2)\%n$ (i.e., $P_a$, $P_b$, and $P_c$ are adjacent in the cycle), and
\item $U$ is the undirected path containing all $D_i$'s that are different from $D_a$, $D_b$, and $D_c$.
\end{compactenum}
Note that the previous assumption is without loss of generality: we can always pick the directed paths in $\mathtt{gg}(p)$ inducing $C$ in such a way that they match the guarded structure $S$.

Let $P_a$ be $a_1(\overline{a}_1) \rightarrow (r_1,s_1,i_1) \rightarrow a_2(\overline{a}_2) \rightarrow (r_2,s_2,i_2) \rightarrow \ldots \rightarrow (r_{n_a-1},s_{n_a-1}, i_{n_a-1}) \rightarrow a_{n_a}(\overline{a}_{n_a})$, $P_b$ be $b_1(\overline{b}_1) \rightarrow (r_1',s_1',i_1') \rightarrow b_2(\overline{b}_2) \rightarrow (r_2',s_2',i_2') \rightarrow \ldots \rightarrow (r_{n_b-1}',s_{n_b-1}', i_{n_b-1}') \rightarrow b_{n_b}(\overline{b}_{n_b})$, and $P_c$ be $c_1(\overline{c}_{1}) \rightarrow (r_1'',s_1'',i_1'') \rightarrow c_2(\overline{c}_{2}) \rightarrow (r_2'',s_2'',i_2'') \rightarrow \ldots \rightarrow (r_{n_c-1}'',s_{n_c-1}'', i_{n_c-1}'') \rightarrow c_{n_c}(\overline{c}_{n_c})$. 
From $S= \langle D_a, D_b, D_c, U \rangle$, it follows that $a_1(\overline{a}_1) = b_1(\overline{b}_1)$ and $b_{n_b}(\overline{b}_{n_b}) = c_{n_c}(\overline{c}_{n_c})$.
From $S = \langle D_a, D_b, D_c, U \rangle$, $S$ covers $D_1, \ldots, D_n$, and $S$ is  $({\cal O}, {\cal D}, {\cal U})$-guarded, there are two cases:
\begin{compactenum}
\item $(D_a, D_b)$ is $({\cal D}, {\cal U})$-head guarded.
Therefore, $D_a$ and $D_b$ are non-empty.
There are two cases:
\begin{compactenum}

\item $D_a = D_b$. 
From this, it follows that (1) $P_a$ and $P_b$ are directed paths in $\mathit{gg}(p)$, (2)  $a_1(\overline{a}_1) = b_1(\overline{b}_1)$, (3) $n_a = n_b$, and  (4) $D_a$ and $D_b$ are head-connected.
From this and $(D_a,D_b)$ is head-guarded, it follows that the rules in $D_a$ and $D_b$ are weakly connected for ${\cal U}$.
We claim that $s_h = s_h'$ for any $1 \leq h \leq n_a$.
From this, it follows that $P_a = P_b$. 
This contradicts the fact that $P_1, \ldots, P_n$ induces a simple cycle in the undirected version of $\mathit{gg}(p)$ (because the cycle is not simple).

We now prove, by induction on $d$, that $s_d = s_d'$ and $a_d(\overline{a}_d) = b_d(\overline{b}_d)$.

\para{Base Case}
Assume that $d = 1$.
From $r_1 = r_1'$, $r_1$ is weakly connected for ${\cal U}$,  $\mathit{body}(s_1,i_1) = \mathit{body}(s_1',i_1') = a_1(\overline{c}_1)$, $s_d \in \mathit{ground}(p,r_d)$, $s_d' \in \mathit{ground}(p,r_d')$ , and \thref{theorem:join:tree:2}, it follows that $s_1 = s_1'$. 
From this, $a_1(\overline{a}_1) = \mathit{head}(s_1)$, and $b_1(\overline{b}_1) = \mathit{head}(s_1')$, it follows that $a_1(\overline{a}_1) = b_1(\overline{b}_1)$.

\para{Induction Step}
Assume that the claim holds for all $d' < d$.
We now prove that $s_d = s_d'$ and $a_d(\overline{a}_d) = b_d(\overline{b}_d)$ hold as well.
From the induction's hypothesis, it follows that $\mathit{head}(s_{d-1}) = \mathit{head}(s_{d-1}')$.
From this and $\mathit{gg}$'s definition, it follows that $\mathit{body}(s_d, i_d) = \mathit{body}(s_d', i_d')$.
From this, $r_d = r_d'$, $i_d = i_d'$, $r_d$ is weakly connected for ${\cal U}$, $s_d \in \mathit{ground}(p,r_d)$,  $s_d' \in \mathit{ground}(p,s_d)$, and \thref{theorem:join:tree:2}, it follows that $s_d = s_d'$. 
From this, $a_d(\overline{a}_d) = \mathit{head}(s_d)$, and $b_d(\overline{b}_d) = \mathit{head}(s_d')$, it follows that $a_d(\overline{a}_d) = b_d(\overline{b}_d)$.

\item $D_a \neq D_b$. 
From this and $(D_a,D_b)$ are head-guarded, it follows that there is an annotation $\disjoint{\mathit{pr}}{\mathit{pr}'} \in {\cal D}$ a set $K \subseteq \{1, \ldots, |a|\}$, and a bijection $\nu : K \to \{1, \ldots, |\mathit{pr}|\}$ such that $D_a$ $\nu$-downward links to $\mathit{pr}$ and $D_a$ $\nu$-downward links to $\mathit{pr}'$.
From $D_a$ $\nu$-downward links to $\mathit{pr}$, $P_a$ is a ground instance of $D_a$, and \thref{theorem:downward:connections}, it follows that there is a positive literal $\mathit{pr}(\overline{v})$ in the body of one of the ground rules such that $a_1(\overline{a}_1)(k) = \overline{v}(\nu(K))$ for any $k \in K$.
From this and the definition of $\mathit{ground}$, it follows that $\mathit{pr}(\overline{v}) \in \mathit{ground}(p)$.
From $D_b$ $\nu$-downward links to $\mathit{pr}'$, $P_b$ is a ground instance of $D_b$, and \thref{theorem:downward:connections}, it follows that there is a positive literal $\mathit{pr}'(\overline{v}')$ in the body of one of the ground rules such that $b_1(\overline{b}_1)(k) = \overline{v}'(\nu(K))$ for any $k \in K$.
From this and the definition of $\mathit{ground}$, it follows that $\mathit{pr}'(\overline{v}') \in \mathit{ground}(p)$.
Finally, from the fact that $P_a$ and $P_b$ are head-connected, it follows that $a_1(\overline{a}_1) = b_1(\overline{b}_1)$.
From this, $a_1(\overline{a}_1)(k) = \overline{v}(\nu(K))$ for any $k \in K$, and $b_1(\overline{b}_1)(k) = \overline{v}'(\nu(K))$ for any $k \in K$, it follows that $\overline{v} = \overline{v}'$.
From this,  $\mathit{pr}(\overline{v}) \in \mathit{ground}(p)$ and $\mathit{pr}'(\overline{v}) \in \mathit{ground}(p)$.
This contradicts \thref{theorem:template:disjointness:semantics}, $\disjoint{\mathit{pr}}{\mathit{pr}'} \in {\cal D}$, and $p$ complies with ${\cal D}$.

\end{compactenum}

\item $(D_b, D_c)$ is $({\cal D}, {\cal U})$-tail guarded.
Therefore, $D_b$ and $D_c$ are non-empty.
There are two cases:
\begin{compactenum}
\item $D_b = D_c$. % and $D_b$ is non-empty.
From this, it follows that (1) $P_c$ and $P_b$ are directed paths in $\mathit{gg}(p)$, (2)  $c_{n_c}(\overline{c}_{n_c}) = b_{n_b}(\overline{b}_{n_b})$, (3) $n_b = n_c$, and (4) $D_b$ and $D_c$ are tail-connected.
From this and $(D_b,D_c)$ is tail-guarded,  it follows that (1) $D_c$ and $D_b$ are tail-connected, and (2) the rules in $D_c$ and $D_b$ are strongly connected for ${\cal U}$.
We claim that $s_h'' = s_h'$ for any $1 \leq h \leq n_c$.
From this, it follows that $P_c = P_b$. 
This contradicts the fact that $P_1, \ldots, P_n$ induces a simple cycle in the undirected version of $\mathit{gg}(p)$ (because the cycle is not simple).

We now prove,  by induction on $d$, that $s_{n_b-d}'' = s_{n_b-d}'$ and $\mathit{head}(s_{n_b-d}'') = \mathit{head}(s_{n_b-d}')$.

\para{Base Case}
Assume $d = 0$.
From $r_{n_b}'' = r_{n_b}'$, $r_{n_b}'$ is strongly connected for ${\cal U}$, $\mathit{head}(s_{n_b}'') = \mathit{head}(s_{n_b}') = b_{n_b}(\overline{b}_{n_b})$, $s_{n_b}'' \in \mathit{ground}(p,r_d'')$, $s_d' \in \mathit{ground}(p,r_d')$ , and \thref{theorem:join:tree:1}, it follows that $s_1 = s_1'$. 
From this, $a_1(\overline{a}_1) = \mathit{head}(s_1)$, and $b_1(\overline{b}_1) = \mathit{head}(s_1')$, it follows that $b_{n_b}(\overline{b}_{n_b}) = c_{n_b}(\overline{c}_{n_b})$.

\para{Induction Step}
Assume that the claim holds for all $d' < d$.
We now prove that $s_{n_b-d}'' = s_{n_b-d}'$ and $c_{n_b - d}(\overline{c}_{n_b - d}) = b_{n_b - d}(\overline{b}_{n_b - d})$ hold as well.
From the induction's hypothesis, it follows that $s_{n_b-d+1}'' = s_{n_b-d+1}'$.
From this and $\mathit{gg}$'s definition, it follows that $\mathit{head}(s_{n_b - d}'', i_{n_b - d}'') = \mathit{head}(s_{n_b - d}', i_{n_b - d}')$.
From this, $r_{n_b - d}'' = r_{n_b - d}'$, $i_{n_b - d}'' = i_{n_b - d}'$, $r_{n_b - d}''$ is strongly connected for ${\cal U}$, $s_{n_b - d}'' \in \mathit{ground}(p,r_{n_b - d}'')$,  $s_{n_b - d}' \in \mathit{ground}(p,s_{n_b - d})$, and \thref{theorem:join:tree:1}, it follows that $s_{n_b - d}d'' = s_{n_b - d}'$.
From this, $c_{n_b - d}(\overline{c}_{n_b - d}) = \mathit{head}(s_{n_b - d}'')$, and $b_{n_b - d}(\overline{b}_{n_b - d}) = \mathit{head}(s_{n_b - d}')$, it follows that $c_{n_b - d}(\overline{c}_{n_b - d}) = b_{n_b - d}(\overline{b}_{n_b - d})$.

\item $D_b \neq D_c$. 
From this and $(D_b, D_c)$ is tail-guarded, it follows that there is an annotation $\disjoint{\mathit{pr}}{\mathit{pr}'} \in {\cal D}$, a set $K \subseteq \{1, \ldots, |a|\}$, and a bijection $\nu : K \to \{1, \ldots, |\mathit{pr}|\}$, such that $D_b$ $\nu$-upward links to $\mathit{pr}$ and $D_c$ $\nu$-upward links to $\mathit{pr}'$.
From  $D_b$ $\nu$-upward links to $\mathit{pr}$, $P_b$ is a path in the ground graph corresponding to $D_b$, and \thref{theorem:upward:connections}, it follows that a positive literal $\mathit{pr}(\overline{v})$ in the body of one of the ground rules such that $b_{n_b}(\overline{b}_{n_b})(k) = \overline{v}(\nu(K))$ for any $k \in K$.
From this and the definition of $\mathit{ground}$, it follows that $\mathit{pr}(\overline{v}) \in \mathit{ground}(p)$.
From  $D_c$ $\nu$-upward links to $\mathit{pr}'$, $P_c$ is a path in the ground graph corresponding to $D_c$, and \thref{theorem:upward:connections}, it follows that a positive literal $\mathit{pr}'(\overline{v}')$ in the body of one of the ground rules such that $c_{n_c}(\overline{c}_{n_c})(k) = \overline{v}'(\nu(K))$ for any $k \in K$.
From this and the definition of $\mathit{ground}$, it follows that $\mathit{pr}'(\overline{v}') \in \mathit{ground}(p)$.
Finally, from the fact that $P_b$ and $P_c$ are tail-connected, it follows that $b_{n_b}(\overline{b}_{n_b}) = c_{n_c}(\overline{c}_{n_c})$.
From this, $b_{n_b}(\overline{b}_{n_b})(k) = \overline{v}(\nu(K))$ for any $k \in K$, and $c_{n_c}(\overline{c}_{n_c})(k) = \overline{v}'(\nu(K))$ for any $k \in K$, it follows that $\overline{v} = \overline{v}'$.
Therefore,  $\mathit{pr}(\overline{v}) \in \mathit{ground}(p)$ and $\mathit{pr}'(\overline{v}) \in \mathit{ground}(p)$.
This contradicts \thref{theorem:template:disjointness:semantics}, $\disjoint{\mathit{pr}}{\mathit{pr}'} \in {\cal D}$, and $p$ complies with ${\cal D}$.
\end{compactenum}
\end{compactenum}
This completes the proof of our claim.
\end{proof}

\subsubsection{Auxiliary Results}

Here we prove two auxiliary results that help in establishing that programs are acyclic. % by reducing common patterns to simpler patterns.
In particular, \thref{theorem:implication:between:type-1:structures} states pre-conditions that allows reducing the guardedness of a complex undirected structure to the guardedness of simpler structures.
Similarly, \thref{theorem:implication:between:type-2:structures} states pre-conditions that allows reducing  the guardedness of a complex directed structure to the guardedness of a sequence of simpler structures.
Note that \thref{theorem:implication:between:type-2:structures} can be easily extended to support (1) different forms of cycle combination, and (2) combinations of non-self-loop cycles.

\begin{proposition}\thlabel{theorem:implication:between:type-1:structures}
Let $p$ be a \problog{} program, ${\cal O}$ be an ordering template, ${\cal D}$ be a disjointness template, ${\cal U}$ be a uniqueness template, $C$ be an undirected cycle in $\graph(p)$, and $S = \langle D_1, D_2, D_3, U\rangle$ be an undirected unsafe structure.
If 
(1) $p$ complies with ${\cal O}$, ${\cal D}$, and ${\cal U}$,
(2) $C$ is equivalent to $D_1, U_1', U, U_2', D_3,D_2$,
(3) $S$ is $({\cal O}, {\cal D}, {\cal U})$-guarded,
then there is an undirected unsafe structure that covers $C$ and is  $({\cal O}, {\cal D}, {\cal U})$-guarded.
\end{proposition}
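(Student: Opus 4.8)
The plan is to exhibit an explicit unsafe structure covering $C$, obtained from $S$ by leaving the three directed paths $D_1, D_2, D_3$ untouched and absorbing the extra undirected segments into the undirected component. Concretely, I would set $S' = \langle D_1, D_2, D_3, U_1' \cdot U \cdot U_2' \rangle$, where $U_1' \cdot U \cdot U_2'$ denotes the undirected path obtained by concatenating the undirected paths $U_1'$, $U$, and $U_2'$ from hypothesis (2). The guiding observation, and the reason this works, is that the definition of $({\cal O}, {\cal D}, {\cal U})$-guardedness for an undirected unsafe structure refers only to the pairs $(D_1, D_2)$ (head-guarded) and $(D_2, D_3)$ (tail-guarded), and is therefore completely insensitive to the undirected component. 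Enlarging $U$ to $U_1' \cdot U \cdot U_2'$ thus cannot destroy guardedness.

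First I would check that $S'$ is a well-formed undirected unsafe structure. The conditions concerning only $D_1, D_2, D_3$, namely that they are directed paths, that $D_1$ and $D_2$ are head-connected, that $D_2$ and $D_3$ are tail-connected, and the non-emptiness requirement, all hold because $S$ already satisfies them and these components are unchanged. The requirement that $U_1' \cdot U \cdot U_2'$ be an undirected path holds because $U_1'$, $U$, and $U_2'$ occur consecutively in the cycle $C$ and are hence pairwise connected at their shared endpoints, so their concatenation is again an undirected path (the cases where one of $U_1'$, $U$, $U_2'$ is empty are immediate). The final condition, that $D_1, (U_1' \cdot U \cdot U_2'), D_3, D_2$ be an undirected cycle in $\graph(p)$, is exactly hypothesis (2), since this sequence spelled out is $D_1, U_1', U, U_2', D_3, D_2$, which is equivalent to $C$.

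Next I would observe that $S'$ covers $C$: by the definition of covering this amounts to $D_1, (U_1' \cdot U \cdot U_2'), D_3, D_2$ being equivalent to $C$, which is again hypothesis (2). Finally, guardedness transfers directly. Since $S$ is $({\cal O}, {\cal D}, {\cal U})$-guarded, either $(D_1, D_2)$ is $({\cal D}, {\cal U})$-head-guarded or $(D_2, D_3)$ is $({\cal D}, {\cal U})$-tail-guarded; because $S'$ has the same directed paths $D_1, D_2, D_3$, the same disjunct witnesses the guardedness of $S'$. Together these yield an undirected unsafe structure covering $C$ that is $({\cal O}, {\cal D}, {\cal U})$-guarded, as required.

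The argument is essentially bookkeeping once the independence of guardedness from the undirected component is recognized, which is the conceptual heart of the statement. The only point demanding genuine care is the verification that the merged path $U_1' \cdot U \cdot U_2'$ is truly an undirected path, that is, that the concatenation respects the connectivity requirement at each junction, together with tracking the degenerate cases in which one of the three segments is empty. I expect no real obstacle beyond this routine well-formedness check.
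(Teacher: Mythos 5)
Your proposal is correct and follows essentially the same route as the paper's own proof: the paper likewise defines $S' = \langle D_1, D_2, D_3, U_1', U, U_2'\rangle$ (i.e., keeps the three directed paths and absorbs $U_1'$ and $U_2'$ into the undirected component), gets covering directly from hypothesis (2), and concludes guardedness because it depends only on the directed paths $D_1, D_2, D_3$, which are unchanged. Your added well-formedness check that the concatenated undirected path is connected at the junctions is a detail the paper leaves implicit, but it changes nothing substantive.
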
 

\begin{proof}
Let $p$ be a \problog{} program, ${\cal O}$ be an ordering template, ${\cal D}$ be a disjointness template, ${\cal U}$ be a uniqueness template, $C$ be an undirected cycle in $\graph(p)$, and $S = \langle D_1, D_2, D_3, U\rangle$ be an undirected structure.
We assume that (1) $p$ complies with ${\cal O}$, ${\cal D}$, and ${\cal U}$,
(2) $C$ is equivalent to $D_1, U_1', U, U_2', D_3,D_2$,
(3) $S$ is $({\cal O}, {\cal D}, {\cal U})$-guarded.
We define the undirected unsafe structure $S' = \langle D_1, D_2, D_3, U_1', U, U_2'\rangle$.
From our assumption $C$ is equivalent to $D_1, U_1', U, U_2', D_3,D_2$.
Thus, $S'$ covers $C$.
Furthermore, since $S$ and $S'$ agree on all directed paths, which are the only ones that determine whether an undirected structure is guarded, it follows that $S'$ is  $({\cal O}, {\cal D}, {\cal U})$-guarded
\end{proof}

In the following, we say that a directed cycle is guarded by a set of predicates $O$ and a bijection $\nu$ iff the definition of guarded cycle is satisfied for the specific bijection $\nu$.

\begin{proposition}\thlabel{theorem:implication:between:type-2:structures}
Let $p$ be a \problog{} program, $C_1 = \mathit{pr}_1 \xrightarrow{r_1, i_1} \ldots \xrightarrow{r_{n-1}, i_{n-1}} \mathit{pr}_n \xrightarrow{r_n, i_n} \mathit{pr}_1$  be a directed cycle in $\graph(p)$, $O$ be a set of predicate symbols, $\mathit{pr}_1 \xrightarrow{r_{n+1}, i_{n+1}} pr_1$, and $o$ be a predicate symbol such that for all $o' \in O$, $|o| = |o'|$, $K$ be a  non-empty set $K \subseteq \{1, \ldots, |\mathit{pr}_1|\}$, and a bijection $\nu : K \to \{1, \ldots, \sfrac{|\mathit{o}|}{2}\}$.
If 
(1) $C_1$ is guarded for $O_1$ and $\nu$, and
(2) $\mathit{pr}_1 \xrightarrow{r_{n+1}, i_{n+1}} pr_1$ is guarded for $o$ and $\nu$, 
%(3) $\mathit{pr}_{n-1} \xrightarrow{r_{{y_{k+1}}-1}, i_{{y_{k+1}}-1}} pr_{n}$ $\nu'$-upward connects to $o_{k+1}(\overline{x}_{k+1})$
then $C_1, \mathit{pr}_1 \xrightarrow{r_{n+1}, i_{n+1}} pr_1$ is guarded for $O_1 \cup \{o\}$.
\end{proposition}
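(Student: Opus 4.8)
The plan is to prove the proposition by an explicit construction, exploiting the fact that guardedness of a directed cycle is a purely existential, combinatorial condition: the existence of breakpoints $y_0 < \ldots < y_e$, order literals, a nonempty key set $K$, and a single bijection $\nu$ whose induced maps satisfy the downward/upward connection requirements on every segment. I would obtain a witness for the combined cycle simply by concatenating the witness for $C_1$ with the witness for the single self-loop edge. The decisive observation that makes this concatenation legal is that both hypotheses are stated for the \emph{same} bijection $\nu$, and that every predicate in play has equal arity (from the setup assumption $|o| = |o'|$ for every $o' \in O$), so the single global $\nu$ demanded by the definition of a guarded cycle remains consistent across every segment of the enlarged decomposition.

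First I would unfold hypothesis (1). Since $C_1$ is guarded by $O_1$ and $\nu$, there exist integers $1 = y_0 < y_1 < \ldots < y_e = n$, order literals $o_1(\overline{x}_1), \ldots, o_e(\overline{x}_e)$ with each $o_j \in O_1$, and the fixed nonempty set $K \subseteq \{1, \ldots, |\mathit{pr}_1|\}$ together with $\nu : K \to \{1, \ldots, \sfrac{|o_1|}{2}\}$, such that for every $0 \le k < e$ the $k$-th segment $\nu$-downward connects, and its final edge $\nu'$-upward connects, to $o_{k+1}(\overline{x}_{k+1})$ (where $\nu'$ is the shifted bijection from the definition). Next I would unfold hypothesis (2): because the self-loop $\mathit{pr}_1 \xrightarrow{r_{n+1}, i_{n+1}} \mathit{pr}_1$ is guarded by $\{o\}$ and the \emph{same} $\nu$, this single-edge segment $\nu$-downward connects and $\nu'$-upward connects to some order literal $o(\overline{x}_{e+1})$, again with this $K$ and $\nu$.

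Then I would assemble the witness for the combined cycle $C_1, \mathit{pr}_1 \xrightarrow{r_{n+1}, i_{n+1}} \mathit{pr}_1$. I take the enlarged breakpoint sequence $1 = y_0 < \ldots < y_e = n < y_{e+1} = n+1$, the enlarged list of order literals $o_1(\overline{x}_1), \ldots, o_e(\overline{x}_e), o(\overline{x}_{e+1})$, and retain $K$ and $\nu$ unchanged. For the segments $0 \le k < e$ the two connection conditions are verbatim those supplied by the $C_1$-witness, since these segments and all their edges are untouched by the extension. For the final segment $k = e$, which is exactly the appended self-loop edge, the conditions hold by hypothesis (2). Since every order literal now lies in $O_1 \cup \{o\}$, and the global data $(K, \nu)$ is valid throughout — here I use $|o| = |o_1|$ so that the codomain $\{1, \ldots, \sfrac{|o|}{2}\}$ coincides with $\{1, \ldots, \sfrac{|o_1|}{2}\}$ — the combined cycle meets the definition of being guarded by $O_1 \cup \{o\}$.

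The main obstacle is not any single deep step but the bookkeeping at the join point and the soundness of reusing a shared bijection. Concretely, I must check that the enlarged breakpoint sequence tiles the combined cycle with neither gap nor overlap at the node $\mathit{pr}_1$ shared between the end of $C_1$ and the start of the self-loop (which also forces me to pin down the paper's cycle-closing indexing convention, where $C_1$ is written with its closing edge $r_n$ returning to $\mathit{pr}_1$), and that appending a segment under the same $\nu$ is well-typed — precisely where the equal-arity hypothesis is used. Once these are discharged the conclusion is immediate; the proposition is essentially a packaging lemma letting one build the guardedness of a long directed cycle incrementally from the guardedness of its constituent pieces, and, as the remark following it notes, the same argument adapts to other modes of cycle combination.
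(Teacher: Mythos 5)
Your proof is correct and follows essentially the same route as the paper's: both unfold the two guardedness witnesses, concatenate them by appending the breakpoint $y_{e+1} = n+1$ and the self-loop's order literal to the witness for $C_1$ while keeping $K$ and $\nu$ fixed, and verify the downward/upward connection conditions segment-by-segment (old segments by hypothesis (1), the appended edge by hypothesis (2)). Your additional care about the equal-arity condition and the tiling at the shared node $\mathit{pr}_1$ is sound bookkeeping that the paper leaves implicit.
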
 

\begin{proof}
Let $p$ be a \problog{} program, $C_1 = \mathit{pr}_1 \xrightarrow{r_1, i_1} \ldots \xrightarrow{r_{n-1}, i_{n-1}} \mathit{pr}_n \xrightarrow{r_n, i_n} \mathit{pr}_1$  be a directed cycle in $\graph(p)$, $O$ be a set of predicate symbols, $\mathit{pr}_1 \xrightarrow{r_{n+1}, i_{n+1}} pr_1$, and $o$ be a predicate symbol such that for all $o' \in O$, $|o| = |o'|$, $K$ be a  non-empty set $K \subseteq \{1, \ldots, |\mathit{pr}_1|\}$, $\nu$ be a bijection $\nu : K \to \{1, \ldots, \sfrac{|\mathit{o}|}{2}\}$ and $\nu'$ be the bijection $\nu'(i) = \nu(x) + \sfrac{|\mathit{o}|}{2}$ for all $1 \leq i \leq \sfrac{|\mathit{o}|}{2}$.
Furthermore, we assume that (1) $C_1$ is guarded for $O_1$ and $\nu$, and (2) $\mathit{pr}_1 \xrightarrow{r_{n+1}, i_{n+1}} pr_1$ is guarded for $o$ and $\nu$.

First, we rewrite $C_1, \mathit{pr}_1 \xrightarrow{r_{n+1}, i_{n+1}} pr_1$ as $\mathit{pr}_1 \xrightarrow{r_1, i_1} \ldots \xrightarrow{r_{n-1}, i_{n-1}} \mathit{pr}_n \xrightarrow{r_n, i_n} \mathit{pr}_{n+1} \xrightarrow{r_{n+1}, i_{n+1}} \mathit{pr}_1$.
From $C_1$ is guarded for $O_1$, it follows that there are integers $1 \leq y_1 < y_2 < \ldots < y_e \leq n$ such that $y_e = n$ and literals $o_1(\overline{x}_1), \ldots, o_e(\overline{x}_e)$ (where $o_j \in O$ and $|\overline{x}_j| = |o_j|$), such that 
for each $0 \leq k < e$, 
(1) $\mathit{pr}_{y_k} \xrightarrow{r_{y_k}, i_{y_k}} \ldots \xrightarrow{r_{{y_{k+1}}-1}, i_{{y_{k+1}}-1}} pr_{y_{k+1}}$ $\nu$-downward connects to $o_{k+1}(\overline{x}_{k+1})$, and 
(2) $\mathit{pr}_{y_{k+1}-1} \xrightarrow{r_{{y_{k+1}}-1}, i_{{y_{k+1}}-1}} pr_{y_{k+1}}$ $\nu'$-upward connects to $o_{k+1}(\overline{x}_{k+1})$, 
where $y_0 = 1$.
Furthermore, from $\mathit{pr}_1 \xrightarrow{r_{n+1}, i_{n+1}} pr_1$ is guarded for $o$ and $\nu$, it follows that there is a literal $o(\overline{x})$ such that
(3) $\mathit{pr}_{n+1} \xrightarrow{r_{n+1}, i_{n+1}} pr_1$ $\nu$-downward connects to $o(\overline{x})$, and
(4) $\mathit{pr}_{y_{n+1}} \xrightarrow{r_{n+1}, i_{n+1}} pr_{y_1}$ $\nu'$-upward connects to $o_{k+1}(\overline{x}_{k+1})$.
From (1)--(4), it therefore follows that $\mathit{pr}_1 \xrightarrow{r_1, i_1} \ldots \xrightarrow{r_{n-1}, i_{n-1}} \mathit{pr}_n \xrightarrow{r_n, i_n} \mathit{pr}_{n+1} \xrightarrow{r_{n+1}, i_{n+1}} \mathit{pr}_1$ is guarded for $\nu$ and $O \cup \{o\}$.
Indeed, there are integers $1 \leq y_1 < y_2 < \ldots < y_e < y_{e+1} \leq n+1$ such that $y_{e+1} = n+1$ and literals $o_1(\overline{x}_1), \ldots, o_e(\overline{x}_e),  o_{e+1}(\overline{x}_{e+1})$ (where $o_j \in O \cup \{o\}$ and $|\overline{x}_j| = |o_j|$) such that for each $0 \leq k < e+1$, 
(1) $\mathit{pr}_{y_k} \xrightarrow{r_{y_k}, i_{y_k}} \ldots \xrightarrow{r_{{y_{k+1}}-1}, i_{{y_{k+1}}-1}} pr_{y_{k+1}}$ $\nu$-downward connects to $o_{k+1}(\overline{x}_{k+1})$, and 
(2) $\mathit{pr}_{y_{k+1}-1} \xrightarrow{r_{{y_{k+1}}-1}, i_{{y_{k+1}}-1}} pr_{y_{k+1}}$ $\nu'$-upward connects to $o_{k+1}(\overline{x}_{k+1})$. 
\end{proof}

\subsection{Relaxed Acyclic Programs}

Acyclic \problog{} programs as defined above are very restrictive.
For instance, they cannot encode rules like those in Section~\ref{sect:language}. % and Appendix~\ref{app:examples}.
We now design \emph{relaxed acyclic programs}, a larger fragment of \problog{} where these examples can be encoded.

The key components of this new fragment are two syntactic transformations over \problog{} programs that, while not preserving the program's semantics, preserve certain key aspects of the program's structure.
Intuitively, a \problog{} program $p$ is a \emph{relaxed acyclic} program if the program obtained from $p$ by applying the transformations is an acyclic \problog{} program.

Finally, we develop a procedure for compiling any relaxed acyclic  \problog{} program into a poly-tree Bayesian Network.
Since any acyclic program is a relaxed acyclic program as well, this procedure can be applied also to acyclic programs.

\subsubsection{Rule Domination}

Rule $r_1$ is \emph{dominated} by rule $r_2$, written $r_1 \sqsubseteq r_2$, iff:
\begin{compactitem}
\item $\mathit{head}(r_1) = \mathit{head}(r_2)$,
\item $\mathit{cstr}(r_1) = \mathit{cstr}(r_2)$, 
\item for all $1 \leq i \leq |\mathit{body}(r_1)|$, then  $\mathit{pred}(\mathit{body}(r_1,i)) = \mathit{pred}(\mathit{body}(r_2,i))$ and $\mathit{args}(\mathit{body}(r_1,i)) = \mathit{args}(\mathit{body}(r_2,i))$.
\end{compactitem}
We extend the domination relation also to atoms and probabilistic atoms as follows:
$a_1(\overline{c}_1) \sqsubseteq a_2(\overline{c}_1)$ iff $a_1 = a_2$ and $\overline{c}_1 = \overline{c}_2$, and 
$\atom{v_1}{a_1(\overline{c}_1)} \sqsubseteq \atom{v_2}{a_2(\overline{c}_1)}$ iff $v_1 = v_2$, $a_1 = a_2$, and $\overline{c}_1 = \overline{c}_2$. 

Given a program $p$ and a rule $r \in p$, $[r]_{\sqsubseteq p}$ denotes the set $\{r' \in p \mid r' \sqsubseteq r\}$.
We say that a rule $r$  is \emph{maximal} in a program $p$ iff there is no rule $r' \in p$ such that $r \sqsubseteq r'$.
The \emph{kernel of $[r]_{\sqsubseteq p}$}, denoted $k([r]_{\sqsubseteq p})$, is the rule $r'$ defined as follows:
$\mathit{head}(r') = \mathit{head}(r)$, 
$|\mathit{body}(r')| = |\mathit{body}(r)|$, 
$\mathit{cstr}(r') = \mathit{cstr}(r)$, 
and for all $1 \leq i \leq |\mathit{body}(r')|$, then
 $\mathit{body}(r',i) = a_i(\overline{c}_i)$ if for all rules $r'' \in  [r]_{\sqsubseteq p}$ (1) $|\mathit{body}(r'')| \geq i$ and (2) $\mathit{body}(r'', i)$ is a positive literal and 
$\mathit{body}(r',i) = \neg a_i(\overline{c}_i)$ otherwise, where $a_i = \mathit{pred}(\mathit{body}(r,i))$ and $\overline{c}_i = \mathit{args}(\mathit{body}(r,i))$.

The \emph{maximal projection} of $p$, denoted $p \Downarrow_{\sqsubseteq}$, is $\{ k([r]_{\sqsubseteq p}) \mid \neg\exists r' \in p.\ r \sqsubseteq r' \}$.
A \emph{maximal partition} $\mathit{PP}$ of $p$ is a subset of $\mathbb{P}(p)$ such that:
(1) for each $R \in \mathit{PP}$, $R = [r]_{\sqsubseteq p}$ for some maximal rule $r \in p$, and
(2) for each $R_1, R_2 \in \mathit{PP}$, $R_1 \cap R_2 = \emptyset$.
A maximal partition $PP$ induces a \emph{maximal assignment} $\nu_{\mathit{PP}}: p \to \mathit{PP}$ such that $\nu_{\mathit{PP}}(r) = R$, where $R \in \mathit{PP}$ and $r \in R$ if $r$ is maximal, and $\nu_{\mathit{PP}}(r) = \emptyset$ otherwise.

The syntactic transformation $\alpha$, which removes all non-maximal rules, takes as input a \problog{} program $p$ and returns as its maximal projection $p \Downarrow_{\sqsubseteq}$.

\subsubsection{CPT-like predicates}

We now present CPT-like predicates, a special kind of probabilistic structure that can be formalized using annotated disjunctions.
While general annotated disjunctions cannot be used in our encoding, since they introduce, in general, cycles in the underlying Bayesian Network, CPT-like predicates can be still encoded as poly-trees.
Let $\Sigma$ be a first-order signature, $\mathbf{dom}$ be a finite domain, $\mathit{pr}$ be a predicate symbol in $\Sigma$ of arity $|\mathit{pr}|$, $p$ be a $(\Sigma, \mathbf{dom})$-\problog{} program, and $K$ be a set of distinct integer values such that for all $i \in K$, $1 \leq i \leq |\mathit{pr}|$.
Given a tuple $\overline{t}$, we denote by $\overline{t} \downarrow_{\{i_1, \ldots, i_n\}}$ the tuple $(\overline{t}(i_1), \ldots, \overline{t}(i_{n}))$.

We say that $\mathit{pr}$ is \textit{$K$-fixed domain with respect to $p$} iff for all rules $r \in p$, the following conditions hold:
\begin{compactitem}
\item if $\mathit{pred}(\mathit{head}(r)) = \mathit{pr}$, then $\mathit{args}(\mathit{head}(r))\downarrow_K$ is a tuple in $\mathbf{dom}^{|K|}$, and
\item if $\mathit{pred}(\mathit{body}(r, j)) = \mathit{pr}$, for some $1 \leq j \leq |\mathit{body}(r)|$, then  $\mathit{args}(\mathit{body}(r, j))\downarrow_K$ is a tuple in $\mathbf{dom}^{|K|}$.
\end{compactitem}
We denote by $\mathit{pdom}(\mathit{pr}, p, K)$ the set of $|K|$-tuples representing all constant values associated with the positions in $K$.
Namely, $\mathit{pdom}(\mathit{pr}, p, K) = \{ \mathit{args}(\mathit{head}(r))\downarrow_K \mid r \in p  \} \cup \{ \mathit{args}(\mathit{body}(r, j))\downarrow_K \mid r \in p \wedge 1 \leq i \leq |\mathit{body}(r)|\}$.
Finally, given an atom $a(\overline{x})$ and a $K$, we denote by $\beta(a(\overline{x}), K)$ the atom $a(\overline{x}\downarrow_K)$.
A set of rules $r_1, \ldots, r_n$ in $p$ is an \emph{annotated disjunction-set} iff they are the translation of an annotated disjunction to plain \problog{}.
Formally, a  set of rules $r_1, \ldots, r_n$ in $p$ is an \emph{annotated disjunction-set} iff there are a set of predicate symbols $\mathit{sw}_1, \ldots, \mathit{sw}_n$ and a sequence of literals $L$ such that 
(1) for $1 \leq i \leq n$, $\mathit{body}(r_i) = L, \neg \mathit{sw}_1(\overline{x}_1), \ldots, \neg \mathit{sw}_{i-1}(\overline{x}_{i-1}), \mathit{sw}_i(\overline{x}_i)$, where $\overline{x}_j = \mathit{args}(\mathit{head}(r_j))$ for $ 1 \leq j \leq n$, and
(2) for $1 \leq i \leq n$, $\mathit{sw}_i$ is used only to define probabilistic ground atoms such that if $\atom{v_1}{\mathit{sw}_i(\overline{c}_1}) \in p$ and $\atom{v_2}{\mathit{sw}_i(\overline{c}_2}) \in p$, then $v_1 = v_2$, and
(3) $\Sigma_{1 \leq i \leq n} p(i) \leq  1$, where $p(i) = v_i \cdot (1 - \Sigma_{1 \leq  j < i} p(j))$, where $v_i$ is the probability associated to the predicate symbol $\mathit{sw}_i$ in the program $p$.
We denote by (1) $\mathit{common}(R)$, where $R$ is an annotated disjunction-set, the list of literals that are common to all the rules in $R$, namely $\mathit{common}(R) = \bigcap_{r \in R} \mathit{body}(R)$ (with a slight abuse of notation, we use intersection for lists), by (2) $\mathit{heads}(R)$ the set containing the various heads of the rules in $R$, namely $\mathit{heads}(R) = \bigcup_{r \in R} \{\mathit{head}(r)\}$, and by (3) $\mathit{common}(R, \mathit{pr})$ the list of all literals in $\mathit{common}(R)$ whose predicate symbol is $\mathit{pr}$.

We say that an annotated disjunction set $R$ is an \emph{$\mathit{pr}$-annotated disjunction-set} iff for all rules $r \in R$, $\mathit{pred}(\mathit{head}(r)) = \mathit{pr}$.
Let $R_1$ and $R_2$ be  two $\mathit{pr}$-annotated disjunction-sets.
We say that $R_1$ and $R_2$ are $(\mathit{pr}_1, \mathit{pr}_2)$-disjoint iff  $\mathit{pr}_1(\overline{x}_1\downarrow_{\{1, \ldots, |\mathit{pr}|\} \setminus K}) \in \mathit{common}(R_1)$ and $\mathit{pr}_2(\overline{x}_2\downarrow_{\{1, \ldots, |\mathit{pr}|\} \setminus K}) \in \mathit{common}(R_2)$, $\mathit{pr}(\overline{x}_1) \in \mathit{heads}(R_1)$, and $\mathit{pr}(\overline{x}_2) \in \mathit{heads}(R_2)$.
We say that $R_1$ and $R_2$ are $(\mathit{pr},K)$-row-distinct iff 
(1) all literals in  $\mathit{common}(R_1, \mathit{pr})$ are positive,
(2) all literals in  $\mathit{common}(R_2, \mathit{pr})$ are positive,
(3) $|\mathit{common}(R_1, \mathit{pr})| = |\mathit{common}(R_2, \mathit{pr})|$, and
(4) $\{ \overline{x}_1\downarrow_{K} \mid \exists l \in \mathit{common}(R_1,\mathit{pr}).\, \overline{x}_1 = \mathit{args}(l)\} \neq \{ \overline{x}_2\downarrow_{K} \mid \exists l \in \mathit{common}(R_2,\mathit{pr}).\, \overline{x}_2 = \mathit{args}(l)\}$.

We say that $\mathit{pr}$ is \textit{$(K, {\cal D}, {\cal U})$-CPT-like} in a program $p$ iff 
\begin{compactitem}
\item for all rules $r \in p$ such that $\mathit{pred}(\mathit{head}(r)) = \mathit{pr}$, (1) $r$ is strongly connected for ${\cal U}$, and (2) $\mathit{body}(r) \neq \emptyset$,
\item $\mathit{pr}$ is $K$-fixed domain with respect to $p$,
\item for all maximal $\mathit{pr}$-annotated disjunction-sets $R \subseteq p$, for all $\mathit{pr}(\overline{x}) \in \mathit{heads}(R)$,  $\overline{x}\downarrow_{K} \in \mathit{pdom}(\mathit{pr}, p, K)$ for all $1 \leq i \leq n$,
\item for all maximal $\mathit{pr}$-annotated disjunction-sets $R \subseteq p$, $|\mathit{heads}(R)| = |\mathit{pdom}(\mathit{pr}, p, K)|$,
\item for all maximal $\mathit{pr}$-annotated disjunction-sets $R \subseteq p$, for all distinct $\mathit{pr}(\overline{x}_1), \mathit{pr}(\overline{x}_2) \in \mathit{heads}(R)$, $\overline{x}_1\downarrow_{K} \neq \overline{x}_2\downarrow_{K}$ and $\overline{x}_1\downarrow_{\{1, \ldots, |\mathit{pr}|\} \setminus K} = \overline{x}_2\downarrow_{\{1, \ldots, |\mathit{pr}|\} \setminus K}$,
\item for any two distinct and maximal $\mathit{pr}$-annotated disjunction-sets $R_1, R_2 \subseteq p$, one of the following conditions hold:
\begin{compactitem}
\item there is pair of predicate symbols $(\mathit{pr}_1, \mathit{pr}_2) \in {\cal D}$ such that  $R_1$ and $R_2$ are $(\mathit{pr}_1, \mathit{pr}_2)$-disjoint, or 
\item $R_1$ and $R_2$ are $(\mathit{pr},K)$-row-distinct. 
\end{compactitem}
\item There is no rule $r \in p$ such that (1) $\mathit{pred}(\mathit{head}(r)) = \mathit{pr}$, and (2) $r$ is not in a maximal $\mathit{pr}$-annotated disjunction set.
\end{compactitem}

%%%%%%%%%%%%% DEFINE SYNTACTIC TRANSFORMATION %%%%%%%%%%%%%%%
Let $\mathit{pr}$ be a predicate symbol and $K$ be a set of natural numbers.
The syntactic transformation $\Downarrow_{\mathit{pr},K}$ is defined as follows:
\begin{compactitem}
\item $(\mathit{pr}(\overline{x}))\Downarrow_{\mathit{pr},K}$ is $\mathit{pr}(\overline{x}\downarrow_{\{1, \ldots, |\overline{x}|\} \setminus K})$, 
\item $(\mathit{pr}'(\overline{x}))\Downarrow_{\mathit{pr},K} $ is $ \mathit{pr}'(\overline{x})$, where $\mathit{pr} \neq \mathit{pr}'$, 
\item $(\atom{v}{a(\overline{x})})\Downarrow_{\mathit{pr},K} $ is $ \atom{v}{(a(\overline{x})\Downarrow_{\mathit{pr},K})}$,
\item $(\neg a(\overline{x}))\Downarrow_{\mathit{pr},K} $ is $ \neg (a(\overline{x})\Downarrow_{\mathit{pr},K})$, and
\item $(h \leftarrow l_1, \ldots, l_n, c_1, \ldots, c_n) \Downarrow_{\mathit{pr},K} $ is $h\Downarrow_{\mathit{pr},K} \leftarrow l_1\Downarrow_{\mathit{pr},K}, \ldots, l_n\Downarrow_{\mathit{pr},K}, c_1, \ldots, c_n$, and
\item $p\Downarrow_{\mathit{pr},K}$, where $p$ is a program, is $\{r\Downarrow_{\mathit{pr},K} \mid r \in p\}$.
\end{compactitem}
Finally, we define the transformation $\Downarrow_{(\mathit{pr}_1,K_1), \ldots, (\mathit{pr}_n,K_n)}$ that is just $\Downarrow_{\mathit{pr}_1,K_1} \circ  \ldots \circ \Downarrow_{\mathit{pr}_n,K_n}$.

Given a program $p$, the function $\mu_{p,{\cal D}, {\cal U}}$ associates to each predicate symbol $\mathit{pr}$ in $\Sigma$ the maximal set $K$ such that $\mathit{pr}$ is $(K, {\cal D}, {\cal U})$-CPT-like with respect to $p$.
The syntactic transformation $\beta_{p,\cal D, U}$ takes as input a \problog{} program $p'$ and it returns as output the program $p'\Downarrow_{(\mathit{pr}_1, \mu_{p,{\cal D}, {\cal U}}(\mathit{pr}_1)), \ldots, (\mathit{pr}_n, \mu_{p,{\cal D}, {\cal U}}(\mathit{pr}_n))}$, where $\mathit{pr}_1, \ldots, \mathit{pr}_n$ are all predicate symbols in $\Sigma$.

Note that $p'$ is defined on a different vocabulary than $p$. 
The program $p'$ is defined over the ``reduced vocabulary'' obtained by applying $\beta$ to the atoms in $p$.

\subsubsection{Relaxed Acyclic \problog{} programs}

A \emph{relaxed acyclic $(\Sigma, \mathbf{dom})$-\problog{} program} is a $(\Sigma, \mathbf{dom})$-\problog{} program $p$ such that there are a $\Sigma$-ordering template ${\cal O}$, a $\Sigma$-disjointness template ${\cal D}$, and a $\Sigma$-uniqueness template ${\cal U}$ such that $\alpha(\beta_{p,\cal D,  U}(p))$ is $({\cal O}, {\cal D}, {\cal U})$-acyclic.
An \emph{aciclicity witness for a program $p$} is a tuple $({\cal O}, {\cal D}, {\cal U})$ such that $\alpha(\beta_{p,\cal D,  U}(p))$ is a $({\cal O}, {\cal D}, {\cal U})$-acyclic \problog{} program. 

Given a relaxed acyclic $(\Sigma, \mathbf{dom})$-\problog{} program $p$, we associate to each rule $r$ in $\alpha(\beta_{p,\cal D,  U}(p))$, the set $[r]_p$ of rules in $p$ defined as follows:
\[
[r]_p := \bigcup_{r' \in [r]_{\sqsubseteq \beta_{p,\cal D,  U}(p)}} \{r'' \in p \mid \beta_{p,\cal D,  U}(r'') = r' \} 
\]
The set $[r]_p$ contains all rules in $p$ that are represented by the rule $r$ in $\alpha(\beta_{p,\cal D,  U}(p))$.

\subsubsection{Compilation to Bayesian Networks}

Given a relaxed acyclic \problog{} program $p$, its encoding as a Bayesian Network $\mathit{BN} = (N,E,\mathit{cpt})$, denoted $\mathit{bn}(p)$, is defined by Algorithm~\ref{figure:acyclic:encoding:algorithm}.

\begin{algorithm*}[!hbtp]

    \begin{multicols}{2}
\DontPrintSemicolon
\KwIn{A relaxed acyclic $(\Sigma,\mathbf{dom})$-\problog{} program $p_0$ and a witness $({\cal O}, {\cal D}, {\cal U})$ for $p$.}
\KwOut{A Bayesian Network $(N,E,\mathit{cpt})$.}
\SetKwComment{Comment}{$\triangleright$}{}
%\Begin{
  \Comment{Transform the original program.}
  $p = \alpha(\beta_{p_0,\cal D,  U}(p_0))$\;
  
  \Comment{Initialize the domain of each predicate symbol.}
  $D = \emptyset$\;
  \For{$\mathit{pr} \in \Sigma$}
  {
  	\If{$|\mu_{p_0,\cal D, U}(\mathit{pr})|>0$}
  	{
  		$D(\mathit{pr}) = \mathit{pdom}(\mathit{pr}, p_0, \mu_{p_0,\cal D, U}(\mathit{pr})) \cup \{\bot\}$\;
  	}
  	\Else
  	{
  		$D(\mathit{pr}) = \{\top,\bot\}$\;
  	}
  }

  \Comment{Initialize the BN.}
  $N = \emptyset$\;
  $E = \emptyset$\;
  $\mathit{cpt} = \emptyset$\;

  \Comment{Creates the nodes.}
  \For{$r \in p$}  
  {
  	\For{$r' \in \mathit{ground}(p,r)$}
	{
			$N = N \cup \{X[\mathit{head}(r')]\}$\;
			\For{$a(\overline{c}) \in \mathit{body}^+(r')$}
			{
				$N = N \cup \{X[a(\overline{c})]\}$\;
			}
			\For{$\neg a(\overline{c}) \in \mathit{body}^-(r')$}
			{
				$N = N \cup \{X[a(\overline{c})]\}$\;
			}
	}
  }

  \For{$r \in p$}
  {
  	$\mu = \emptyset$\;
	\For{$s \in \mathit{ground}(p,r)$}
	{
			$\mu(\mathit{head}(s)) = \mu(\mathit{head}(s)) \cup \{ \mathit{pos}(s)\}$\;
	}

	\For{$a(\overline{c}) \in \mathit{domain}(\mu)$}
  	{
  		$N = N \cup \{X[r,a(\overline{c})]\}$\;
  		$E = E \cup \{X[r,a(\overline{c})] \to  X[a(\overline{c})] \}$\;
  		$K = \emptyset$\;
  			\For{$I \in \mu(a(\overline{c}))$}{  			
  				$K = K \cup \{X[r,I,a(\overline{c})]\}$\;
				\If{$I = \emptyset$}
		  		{
					\Comment{Here, $[r]_{p_0} = \{r\}$ and $D(a) = \{\top,\bot\}$.}	  			
		  			\If{$\exists v.\ \mathit{head}(r) = \atom{v}{a(\overline{c})}$}
		  			{
		  				$p = v$\;
		  			}
		  			\Else{
		  				$p = 1$\;
		  			}
		  			$\mathit{cpt}(X[r,\emptyset,a(\overline{c})]) = (\top \mapsto p,$\; 
		  			$\qquad \bot \mapsto (1-p))$\;
		  		}
		  		\Else
		  		{  				
		  			\For{$b(\overline{v}) \in I$}
		  			{
		  				$E = E \cup \{X[b(\overline{v})] \to X[r,I,a(\overline{c})] \}$\;
		  			}
		  			\For{$\neg b(\overline{v}) \in I$}
		  			{
		  				$E = E \cup \{X[b(\overline{v})] \to X[r,I,a(\overline{c})] \}$\;
		  			}
		  			$\mathit{cpt}(X[r, I, a(\overline{c})]) = \mathit{CPT}(I, r, D, \mu_{p_0,\cal D, U}, p_0)$\;
  				}  				
  			}
  			$(N',E',\mathit{cpt}') = \mathit{tree}(K, X[r,a(\overline{c})],D(a))$\;
  			$N = N \cup N'$\;
  			$E = E \cup E'$\;
  			$\mathit{cpt} = \mathit{cpt} \cup \mathit{cpt}'$\;
  	}
  }

  \Comment{Set the CPT for the variables associated to the atoms.}
  \For{$X[a(\overline{c})] \in N$}
  {
  	$\mathit{cpt}(X[a(\overline{c})]) = \mathit{CPT}_{\oplus}(|\{X[r,a(\overline{c})] \in N \}|+1, D(a))$\;
  }
  \Return{$(N,E,\mathit{cpt})$}
%}
\end{multicols}
\caption{Constructing the Bayesian Network. The sub-routines \emph{tree}, $\mathit{CPT}$, and $\mathit{CPT}_\oplus$ are shown in Algorithm~\ref{figure:acyclic:encoding:algorithm:auxiliary}.}
\label{figure:acyclic:encoding:algorithm}
\end{algorithm*}

{
\LinesNumberedHidden
\begin{algorithm*}[!hbtp]
\SetKwProg{Fn}{function}{}{}
\DontPrintSemicolon
\SetKwComment{Comment}{$\triangleright$}{}
\begin{multicols}{2}
	\Fn{$\mathit{CPT}((a_1(\overline{c}_1), \ldots, a_n(\overline{c}_n)),r, D, \mu, p_0)$}{
	$A = D(a_1) \times \ldots \times D(a_n) \times D(\mathit{pred}(\mathit{head}(r)))$\;
	$\mathit{cpt} = \emptyset$\;
	\For {$\overline{a} \in A$}{
		$\overline{a}' = \mathit{removeDuplicates}((a_1(\overline{c}_1), \ldots, a_n(\overline{c}_n)),\overline{a})$\;
		\If{$\mathit{satisfiable}(r,\overline{a},D, \mu, p_0) \wedge \forall 1 \leq i \leq n. \forall 1 \leq j \leq n.\ (i\neq j  \wedge a_i(\overline{c}_i) = a_j(\overline{c}_j) \Rightarrow \overline{a}(i) = \overline{a}(j))$}
		{
			$\mathit{cpt} = \mathit{cpt} \cup \{\overline{a}' \mapsto 1\}$\;
		}
		\Else{
			$\mathit{cpt} = \mathit{cpt} \cup \{\overline{a}' \mapsto 0\}$\;
		}
		
	}
	\Return{$\mathit{cpt}$}
	}
	\;
	\Fn{$\mathit{satisfiable}(r,(v_1, \ldots, v_n),D, \mu, p_0)$}{
		\If{$v_n \neq \bot$}
		{
			\Comment{At least one satisfiable assignment.}
			$\mathit{res} = \bot$\;
			\For{$r' \in [r]_{p_0}$}{
				\If{$\mathit{filter}(\mathit{head}(r'),D,\mu) = v_n$}{
					$\mathit{sat} = \top$\;
					\For{$1 \leq i \leq |\mathit{body}(r')|$}{
						\If{$v_i \not\in \mathit{filter}(\mathit{body}(r',i),D,\mu)$}{
							$\mathit{sat} = \bot$\;
						}
					}
					$\mathit{res} = \mathit{res} \vee \mathit{sat}$\;
				}
			}
			\Return{$\mathit{res}$}\;
		}	
		\If{$v_n = \bot$}
		{
			\Comment{All assignments must be unsatisfiable.}
			$\mathit{res} = \bot$\;
			\For{$r' \in [r]_{p_0}$}{
				%\If{$\mathit{filter}(\mathit{head}(r'),D,\mu) = v_n$}{
					$\mathit{sat} = \top$\;
					\For{$1 \leq i \leq |\mathit{body}(r')|$}{
						\If{$v_i \not\in \mathit{filter}(\mathit{body}(r',i),D,\mu)$}{
							$\mathit{sat} = \bot$\;
						}
					}
					$\mathit{res} = \mathit{res} \vee \mathit{sat}$\;
				%}
			}
			\Return{$\neg \mathit{res}$}\;
		}
	}
	\;
	\Fn{$\mathit{filter}(l,D,\mu)$}{	
		\If{$\exists a \in \Sigma, \overline{x} \in (\mathit{Var} \cup \mathbf{dom})^{|a|}.\, l = a(\overline{c})$}{
			\If{$\mu(a) \neq \emptyset$}{
				\Return{$\{x\downarrow_{\mu(a)}\}$}\;
			}
			\Else{
				\Return{$\{\top\}$}\;
			}
		}
		\If{$\exists a \in \Sigma, \overline{x} \in (\mathit{Var} \cup \mathbf{dom})^{|a|}.\, l = \neg a(\overline{c})$}{
			\If{$\mu(a) \neq \emptyset$}{
				\Return{$(D(a) \setminus \{\mu(a)\}) \cup \{\bot\}$}\;
			}
			\Else{
				\Return{$\{\bot\}$}\;
			}
		}
	}
	\;
	\Fn{$\mathit{CPT}_{\oplus}(n, D)$}{
		\Comment{If $D \neq \{\top, \bot\}$, then $n=2$ (since the program is relaxed-acyclic).}
		\If{$D \neq \{\top, \bot\} \wedge n = 2$}
		{
			$\mathit{cpt} = \emptyset$\;
			\For{$(v_1, v_2) \in D^2$}{
				\If{$v_1 = v_2$}{
					$\mathit{cpt} = \mathit{cpt} \cup \{(v_1, v_2) \mapsto 1\}$\;
				}
				\Else{
					$\mathit{cpt} = \mathit{cpt} \cup \{(v_1, v_2) \mapsto 0\}$\;				
				}
			}
			\Return{$\mathit{cpt}$}
		}	
		\If{$D = \{\top, \bot\}$}
		{
				$E = \{\top,\bot\}^{n}$\;
				$\mathit{cpt} = \emptyset$\;
				\For{$(v_1,\ldots,v_n) \in E$}{
					$K = \{ v_i \mid 1 \leq i \leq n-1 \wedge v_i = \top\}$\;
					\If{$(v_n = \top \wedge K \neq \emptyset) \vee (v_n = \bot \wedge K = \emptyset)$}{
						$\mathit{cpt} = \mathit{cpt} \cup \{\overline{v} \mapsto 1\}$\;
					}
					\Else{
						$\mathit{cpt} = \mathit{cpt} \cup \{\overline{v} \mapsto 0\}$\;
					}
					
				}
				\Return{$\mathit{cpt}$}\;
		}

	}	
	\;
	\Fn{$\mathit{tree}(\mathbb{X}, \mathit{root}, D)$}{
		$p = \mathit{nil}$\;
		$N = \mathbb{X}$\;
		$E = \emptyset$\;
		$\mathit{cpt} = \emptyset$\;
		$\mathbb{X}' = \emptyset$\;
		\If{$\mathbb{X} = \{x\}$}
		{
			$N = N \cup \{\mathit{root}\}$\;
			$E = E \cup \{x \to \mathit{root}\}$\;
			$\mathit{cpt}(\mathit{root}) = \mathit{CPT}_{\oplus}(2, D)$\;
		}
		\Else{
			\For{$v \in \mathbb{X}$}{
				\If{$p = \mathit{nil}$}
				{
					$p = v$\;
				}
				\Else
				{
					$\mathbb{X}' = \mathbb{X}' \cup X[p,v]$\;
					$N = N \cup X[p,v]$\;
					$E = E \cup \{p \to X[p,v], v \to X[p,v]\}$\;
					 $\mathit{cpt}(X[p,v]) = \mathit{CPT}_{\oplus}(3, D)$\;
					$p = \mathit{nil}$\;
				}
			}
			\If{$p \neq \mathit{nil}$}
			{
				$\mathbb{X}' = \mathbb{X}' \cup p$\;
			}
			$(N',E',\mathit{cpt}') = \mathit{tree}(\mathbb{X}',\mathit{root}, D)$\;
		}
		\Return{$(N\cup N', E \cup E', \mathit{cpt} \cup \mathit{cpt}')$}\;
	}\;
\end{multicols}
\caption{Auxiliary functions used in  Algorithm~\ref{figure:acyclic:encoding:algorithm}.}
\label{figure:acyclic:encoding:algorithm:auxiliary}
\end{algorithm*}
}

\subsubsection{Encoding's acyclicity}

Here, we prove that Algorithm~\ref{figure:acyclic:encoding:algorithm} produces a Bayesian Network that is a forest of poly-trees.
Note that Algorithm~\ref{figure:acyclic:encoding:algorithm} produces a forest of poly-trees and not just a single poly-tree because some predicates symbols may be independent.
For instance, the program consisting of the rules $a(x) \leftarrow b(x)$ and $c(x) \leftarrow d(x)$ corresponds to two poly-trees (one per rule).

\begin{proposition}\thlabel{theorem:bayesian:network:polytree}
Let $\Sigma$ be a first-order signature, $\mathbf{dom}$ be a finite domain, $p$ be a $(\Sigma,\mathbf{dom})$-relaxed acyclic \problog{} program, and $({\cal O}, {\cal D}, {\cal U})$ be a witness for $p$'s acyclicity.
The Bayesian Network $(N,E,\mathit{cpt})$ produced by Algorithm~\ref{figure:acyclic:encoding:algorithm} on input $p$ is a forest of poly-trees.
\end{proposition}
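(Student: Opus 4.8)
The plan is to reduce the claim to the acyclicity of the ground graph, which is already available from \thref{theorem:graph:acyclic}. First I would observe that Algorithm~\ref{figure:acyclic:encoding:algorithm} builds the network not from $p_0$ directly but from the transformed program $p = \alpha(\beta_{p_0,\cal D, U}(p_0))$, and that $p$ is $({\cal O}, {\cal D}, {\cal U})$-acyclic precisely because $({\cal O}, {\cal D}, {\cal U})$ is a witness for $p_0$. Hence \thref{theorem:graph:acyclic} applies to $p$ and tells us that the undirected version of $\mathit{gg}(p)$ contains no simple cycles. Since the CPTs play no role in the underlying graph, the proposition is purely a statement about the node/edge set $(N, E)$ produced by the algorithm, and this set is determined entirely by $\mathit{ground}(p)$ and $\{\mathit{ground}(p, r)\}_{r \in p}$, the same data underlying $\mathit{gg}(p)$.

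Next I would make the correspondence between $(N, E)$ and $\mathit{gg}(p)$ precise, exhibiting $(N, E)$ as the result of two graph rewrites applied to the undirected ground graph, each of which preserves the absence of undirected cycles. Recall that in $\mathit{gg}(p)$ a ground rule $s \in \mathit{ground}(p,r)$ with head $a(\overline{c})$ and body literals at positions $1, \ldots, m$ contributes $m$ auxiliary nodes $(r,s,1), \ldots, (r,s,m)$, each of degree two, with its unique incoming edge coming from the corresponding body atom and its unique outgoing edge going to $a(\overline{c})$. The first rewrite merges, for every ground rule $s$, these $m$ nodes into the single variable $X[r, \mathit{body}(s), a(\overline{c})]$, attaching to it all the body-atom parents and collapsing the resulting parallel edges toward $a(\overline{c})$ into one (ground rules with empty body, i.e.\ facts and probabilistic atoms, contribute source variables with no parents and are harmless). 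The second rewrite then reorganizes, for each head atom $a(\overline{c})$ and rule $r$, the edges running from the merged ground-rule variables to $X[a(\overline{c})]$: it inserts the balanced binary tree of fresh variables produced by $\mathit{tree}$ (Algorithm~\ref{figure:acyclic:encoding:algorithm:auxiliary}) between those variables and the aggregation variable $X[r, a(\overline{c})]$, and finally links $X[r, a(\overline{c})]$ to $X[a(\overline{c})]$. This second rewrite is routine, since it only subdivides edges and grafts on tree-shaped gadgets.

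The main obstacle is showing that the first rewrite — contracting all position nodes of a single ground rule — cannot create an undirected cycle. Here I would argue by induction over the sequence of ground rules processed, keeping as invariant that the current graph is a forest. When a ground rule $s$ is processed, its nodes $(r,s,1), \ldots, (r,s,m)$ are still present and untouched, because the node sets contracted for distinct ground rules are pairwise disjoint and never include atom nodes; thus each still has exactly the two neighbors $a(\overline{c})$ and its body atom. Rooting the tree component containing $a(\overline{c})$ at $a(\overline{c})$ makes $(r,s,1), \ldots, (r,s,m)$ distinct children of $a(\overline{c})$ whose subtrees, containing the respective body atoms, are pairwise disjoint by the tree property; the consistency check in the definition of $\mathit{ground}$ rules out a body containing both $b(\overline{v})$ and $\neg b(\overline{v})$, so distinct positions genuinely attach to distinct atom nodes. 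Merging these siblings and collapsing the $m$ parallel edges to $a(\overline{c})$ therefore replaces $m$ sibling subtrees by a single child carrying all of them, which a direct count of vertices and edges confirms is again a tree, preserving the invariant. Iterating over all ground rules keeps us in the class of forests, and the second rewrite then yields exactly the undirected skeleton of $\mathit{bn}(p)$. Since no step introduces an undirected simple cycle, $(N, E)$ is a forest of poly-trees, establishing the claim.
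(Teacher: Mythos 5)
Your proposal is correct and takes essentially the same route as the paper's own proof: both apply \thref{theorem:graph:acyclic} to the transformed program $\alpha(\beta_{p,{\cal D},{\cal U}}(p))$ and then transfer its forest property across the structural correspondence between the ground graph and the network built by Algorithm~\ref{figure:acyclic:encoding:algorithm}. The only difference is one of rigor: where the paper simply asserts a one-to-one correspondence of paths and derives a contradiction, you make the correspondence explicit as forest-preserving rewrites (contracting the per-position nodes of each ground rule, then grafting the aggregation trees), which is a sound and more careful elaboration of the same argument.
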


\begin{proof}
Let $\Sigma$ be a first-order signature, $\mathbf{dom}$ be a finite domain, $p$ be a $(\Sigma,\mathbf{dom})$-relaxed acyclic \problog{} program, and $({\cal O}, {\cal D}, {\cal U})$ be a witness for $p$'s acyclicity.
Furthermore, let $BN = (N,E,\mathit{cpt})$ be the Bayesian Network produced by Algorithm~\ref{figure:acyclic:encoding:algorithm} on input $p$.
There is a one-to-one mapping from paths in the ground graph $\mathit{gg}(\alpha(\beta_{p,\cal D,U}(p))))$ and paths in the Bayesian Network produced by Algorithm~\ref{figure:acyclic:encoding:algorithm}.
Namely, there is a path from $a(\overline{c})$ to $a'(\overline{c}')$ in $\mathit{gg}(\alpha(\beta_{p,\cal D,U}(p))))$ iff there is a path from $X[a(\overline{c})]$ to $X[a'(\overline{c}')]$ in $BN$.
Assume that there is a cycle in the undirected version of $BN$.
From this, it follows that there is an undirected cycle in $\mathit{gg}(\alpha(\beta_{p,\cal D,U}(p))))$.
This, however, contradicts~\thref{theorem:graph:acyclic} (since $p$ is relaxed acyclic, then $\alpha(\beta_{p,\cal D,U}(p)))$ is acyclic and there are no undirected cycles in $\mathit{gg}(\alpha(\beta_{p,\cal D,U}(p))))$).
\end{proof}

\subsubsection{Bayesian Networks}

A Bayesian Network $\mathit{BN}$ is a tuple $(N,E,\mathit{cpt})$, where $N$ is the set of nodes, $E$ is the set of edges, $\mathit{cpt}$ is a function associating to each node $n \in N$ its Conditional Probability Table.
For each node $n \in N$, we denote by $D(n)$ its domain, i.e., the possible values it can have.
Note that $D(n)$ can be immediately derived from $\mathit{cpt}(n)$.
We denote by $p(n)$ the set of $n$'s parents, i.e., $p(n) = \{ n' \mid n' \rightarrow n \in E\}$.
Furthermore, we denote by $\mathit{ancestors}(n)$ the set of $n$'s ancestors, i.e., $\mathit{ancestors}(n) = p(n) \cup \bigcup_{n' \in \mathit{p}(n)} \mathit{ancestors}(n')$.

A \textit{$\mathit{BN}$-total assignment} is a total function $\nu$ that associates to each $n \in N$ a value in $n \in \mathit{D}(n)$, whereas a \textit{$\mathit{BN}$-partial assignment} is a partial function $\nu$ that associates to each $n \in N'$, where $N' \subseteq N$, a value in $n \in \mathit{D}(n)$.
The probability defined by $\mathit{BN}$ given $\nu$, written $\llbracket \mathit{BN} \rrbracket(\nu)$ is
$\left[ \Sigma_{y_1} \ldots \Sigma_{y_m} \left(\Pi_{n \in N} \mathit{cpt}(n)\right) (Y_1 = y_1,\ldots,  Y_m = y_m) \right](X_1 \\ = v_1,  \ldots, X_n = v_n)$,
where $\mathit{dom}(\nu) = \{X_1,\ldots,X_n \}$, $N \setminus \mathit{dom}(\nu) = \{Y_1,\ldots,Y_m \}$, and $v_i = \nu(X_i)$ for $1 \leq i \leq n$.

\newcommand{\rest}[1]{#1\!\!\uparrow_p}
\newcommand{\transf}[1]{#1\!\!\downarrow_{p}}

\subsubsection{Correctness Proof}

Before presenting our correctness proof, we introduce some machinery.
Let $p$ be a relaxed acyclic \problog{} program,  $({\cal O, D, U})$ be a witness for $p$, and $\mathit{BN} = (N,E,\mathit{cpt})$ be the Bayesian Network produced by Algorithm~\ref{figure:acyclic:encoding:algorithm} having $p$ and  $({\cal O, D, U})$ as inputs.
For each rule atom $a(\overline{c}) \in \mathit{ground}(p)$, we denote by $\transf{a(\overline{c})}$ the atom obtained by applying the transformations $\alpha$ and $\beta$, whereas we denote by $\rest{a(\overline{c})}$ the constants not used in the transformed program, namely $\rest{a(\overline{c})}$ is $\overline{c}\downarrow_{\{1, \ldots, |\overline{c}|\}\setminus \mu_{p,\cal D, U}(a)}$.
Furthermore, given a rule $r \in p$ (respectively a ground rule $s \in \mathit{ground}(p,r)$), we denote by $\transf{r}$ (respectively $\transf{s}$) the rule obtained by applying the transformations $\alpha$ and $\beta$.
If $\mu_{p,\cal D, U}(a) = \emptyset$, then $\rest{a(\overline{c})} = \top$.
We say that the \textit{kernel of $\mathit{BN}$}, denoted $K(\mathit{BN})$, is the set of variables $\{ X[r,\emptyset,a(\overline{c})] \in N \mid  \exists v.\, r = \atom{v}{a(\overline{c})}\}$.
Note that all nodes $ n \in K(\mathit{BN})$ are boolean random variables by construction, namely $D(n) = \{\top,\bot\}$.
Given a $\mathit{BN}$-total assignment $\nu$, we say that $\nu$ is \textit{consistent} iff for all variables $n \in N\setminus K(\mathit{BN})$, $\mathit{cpt}(n)(\nu(P_1),\ldots, \nu(P_m),\nu(n)) = 1$, where $p(n) = \{P_1, \ldots, P_m\}$.
Furthermore, we say that $\nu$ is a \textit{model for a state $s$}, written $\nu \models s$, iff $a(\overline{c}) \in s \Leftrightarrow \nu(X[\transf{a(\overline{c})}]) = \rest{a(\overline{c})}$.

We first prove a simple results that connect the original program $p$ and its transformed version $\alpha(\beta(p))$.

\begin{proposition}\thlabel{theorem:original:to:transformed}
Let $\Sigma$ be a first-order signature, $\mathbf{dom}$ be a finite domain, $p$ be a $(\Sigma,\mathbf{dom})$-relaxed acyclic \problog{} program, $(\cal O,D,U)$ be a witness for $p$, $s$ be a $(\Sigma,\mathbf{dom})$-structure, and $p'$ be the program $\alpha(\beta_{p_0,\cal D,  U}(p_0))$.
Then, (1) $a \in \mathit{ground}(p)$ implies $\transf{a} \in \mathit{ground}(p')$, and
(2) $s \in \mathit{ground}(p,r)$ implies $\transf{s} \in \mathit{ground}(p,\transf{r})$.
\end{proposition}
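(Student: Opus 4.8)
The proposition claims that the syntactic transformations $\alpha$ (remove non-maximal rules and take kernels) and $\beta_{p,\mathcal{D},\mathcal{U}}$ (project away the CPT-like fixed-domain coordinates $K$) are sound with respect to the relaxed grounding: every ground atom derivable from $p$ maps under $\cdot\!\downarrow_p$ to a ground atom derivable from $p' = \alpha(\beta_{p,\mathcal{D},\mathcal{U}}(p))$, and every ground rule in $\mathit{ground}(p,r)$ maps to a ground rule in $\mathit{ground}(p', r\!\downarrow_p)$. The natural proof is by simultaneous induction on the stage index $i$ of the relaxed-grounding construction $\mathit{ground}(p,i)$ and $\mathit{ground}(p,r,i)$.

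**Let me sketch the plan.** First I would set up the induction on $i$, proving both claims together (since $\mathit{ground}(p,r,i)$ depends on $\mathit{ground}(p,i-1)$). For the base case $i=0$, the atoms in $\mathit{ground}(p,0)$ are exactly the (possibly probabilistic) ground atoms of $p$; I must check that each such atom $a(\overline{c})$ has $\transf{a(\overline{c})}$ appearing as a ground atom of $p'$. This is where I would use that $\beta$ merely deletes the coordinates in $\mu_{p,\mathcal D,\mathcal U}(a)$ and $\alpha$ only discards dominated rules, neither of which removes an atom's projected image from the program's generating set. For the inductive step, suppose $h\Theta \in \mathit{ground}(p,i)$ is produced by some rule $r$ with assignment $\Theta$ satisfying the positive-body-membership and consistency conditions; I would produce the corresponding assignment $\Theta'$ for $r\!\downarrow_p$ (obtained by composing $\Theta$ with the coordinate projection), verify that each positive body literal $l\!\downarrow_p\,\Theta'$ lies in $\mathit{ground}(p',i-1)$ by the induction hypothesis, and verify that the consistency predicate $\mathit{consistent}$ is preserved under projection. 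The head image $h\!\downarrow_p\,\Theta' = \transf{h\Theta}$ then lands in $\mathit{ground}(p',i)$.

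**The two technical points to nail down.** The first is that the rule $r$ used in $p$ corresponds to a genuine rule $r\!\downarrow_p \in p'$: this requires that $r$'s kernel survives the maximal projection, i.e.\ that $\alpha$ does not delete the representative of $[r]_{\sqsubseteq\beta(p)}$. Since $\alpha$ keeps exactly one kernel per $\sqsubseteq$-equivalence class of maximal rules, and $[r]_p$ is defined to collect precisely the rules of $p$ represented by each surviving rule, membership $r \in [r\!\downarrow_p]_p$ is essentially definitional; I would just unfold the definition of $[r]_p$ to confirm this. The second, and I expect the \emph{main obstacle}, is handling the coordinate projection cleanly for rules whose head or body predicates are CPT-like: the transformation $\Downarrow_{\mathit{pr},K}$ erases the $K$-coordinates, so I must check that whenever $\mathit{consistent}(\mathit{body}(r,i)\Theta, \mathit{body}(r,j)\Theta)$ holds in $p$, its projected counterpart still holds in $p'$ — the subtle case being when two distinct ground literals in $p$ collapse to the same literal after projection. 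Because $\mathit{pr}$ being $K$-fixed-domain forces the $K$-coordinates to be constants and the fixed-domain/row-distinct/disjointness conditions keep the non-$K$ part faithful, projection cannot create a spurious complementary pair $a(\overline v), \neg a(\overline v)$ that was consistent before; I would argue this by using the $K$-fixed-domain property (Proposition analogues for templates are already available) to show the projection is injective on the relevant coordinates of each literal's non-$K$ part.

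**Closing.** Having established both inductive claims for all $i$, the conclusion follows by taking unions over $i$, since $\mathit{ground}(p) = \bigcup_i \mathit{ground}(p,i)$ and likewise for the rule version. Overall the argument is bookkeeping-heavy but conceptually routine: the crux is verifying that neither $\alpha$ (dominance collapse) nor $\beta$ (coordinate projection) destroys derivability, and the only genuinely delicate verification is the preservation of the $\mathit{consistent}$ predicate under the $K$-coordinate projection, which rests on the CPT-like/fixed-domain hypotheses that were folded into the definition of the acyclicity witness $(\mathcal O,\mathcal D,\mathcal U)$.
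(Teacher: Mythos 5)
Your proposal takes essentially the same route as the paper's proof: induction on the stage index $i$ of the relaxed-grounding construction, with the (probabilistic) ground atoms handled in the base case and the inductive step using that the projected rule survives into $\alpha(\beta(p))$ and that the transformations introduce no new positive body literals, so the induction hypothesis covers the projected positive body and the second claim follows analogously. The two technical obligations you single out---survival of the rule's representative under the maximal projection $\alpha$, and preservation of the $\mathit{consistent}$ check under the $K$-coordinate projection---are genuine, but the paper's own (much terser) proof glosses over them entirely, so your sketch is, if anything, more careful along the same path.
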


\begin{proof}
Let $\Sigma$ be a first-order signature, $\mathbf{dom}$ be a finite domain, $p$ be a $(\Sigma,\mathbf{dom})$-relaxed acyclic \problog{} program, $(\cal O,D,U)$ be a witness for $p$, $s$ be a $(\Sigma,\mathbf{dom})$-structure, and $p'$ be the program $\alpha(\beta_{p_0,\cal D,  U}(p_0))$.

We first prove our first claim.
Let $a \in \mathit{ground}(p)$.
From this, it follows that there is an $i$ such that $a \in \mathit{ground}(p,i)$.
We now prove our claim by induction on $i$.
The base case is $a \in \mathit{ground}(p,0)$.
From this, it follows that $a$ is either a ground atom or a probabilistic ground atom.
From this and the definition of $\alpha$ and $\beta$, $\transf{a} = a$.
From this, it follows that $a \in p'$ and, therefore, $a \in \mathit{ground}(p')$.
For the induction step, assume that our claim holds for all $j < i$, we now show that it holds also for $i$.
The only interesting case is  $a \in \mathit{ground}(p,i) \setminus \mathit{ground}(p,i-1)$.
From this, it follows that there is a rule $r$ and a ground rule $s$ such that all $\mathit{body}^+(s) \subseteq \mathit{ground}(p,i-1)$.
From this, $\transf{r}\in p'$, the fact that the transformation does not introduce new positive literals, and the induction hypothesis, it follows that $\{\transf{b} \mid b \in \mathit{body}^+(s) \} \subseteq \mathit{ground}(p')$.
From this and $\mathit{ground}$'s definition, it follows that $\transf{a} \in \mathit{ground}(p')$.

The proof of our second claim is similar to the first one.
\end{proof}

Here are now ready to prove the two key lemmas for the encoding's correctness.

\begin{proposition}\thlabel{theorem:correctness:auxiliary:1}
Let $\Sigma$ be a first-order signature, $\mathbf{dom}$ be a finite domain, $p$ be a $(\Sigma,\mathbf{dom})$-relaxed acyclic \problog{} program, $(\cal O,D,U)$ be a witness for $p$, $s$ be a $(\Sigma,\mathbf{dom})$-structure, and $\mathit{BN} = (N,E,\mathit{cpt})$ be the Bayesian Network generated by Algorithm~\ref{figure:acyclic:encoding:algorithm} having $p$ and $(\cal O,D,U)$ as input.
There is a $p$-probabilistic assignment $f$ such that $\mathit{prob}(f) = k$ and $g_f(p) = s$ iff there is a $\mathit{BN}$-total assignment $\nu$ such that $\llbracket \mathit{BN} \rrbracket(\nu) = k$, $\nu$ is consistent, and $\nu \models s$.
\end{proposition}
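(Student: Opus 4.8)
The plan is to exploit the layered structure of $\mathit{BN}$. By inspection of Algorithm~\ref{figure:acyclic:encoding:algorithm}, the kernel $K(\mathit{BN})$ collects exactly the source nodes $X[r,\emptyset,a(\overline{c})]$ encoding ground atoms $\atom{v}{a(\overline{c})} \in \mathit{prob}(p)$ (including plain facts, represented as $\atom{1}{a(\overline{c})}$): these are the only nodes without $\mathit{BN}$-parents, and they are the only ones whose $\mathit{cpt}$ is of the genuinely probabilistic form $(\top \mapsto v,\ \bot \mapsto 1-v)$. Every other node (the rule-instance nodes $X[r,I,a(\overline{c})]$ with $I \neq \emptyset$, the disjunctive nodes $X[r,a(\overline{c})]$ and $X[a(\overline{c})]$, and the auxiliary tree nodes produced by $\mathit{tree}$) has a $0/1$-valued $\mathit{cpt}$, so its value is a deterministic function of its parents. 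First I would record the evident bijection between $p$-probabilistic assignments $f$ and value assignments to $K(\mathit{BN})$, sending $f(\atom{v}{a(\overline{c})})$ to $\nu(X[r,\emptyset,a(\overline{c})])$. By \thref{theorem:bayesian:network:polytree}, $\mathit{BN}$ is a forest of poly-trees and hence acyclic, so its nodes admit a topological order; evaluating the deterministic $\mathit{cpt}$s along this order shows that each kernel assignment extends to a unique consistent $\mathit{BN}$-total assignment. Since consistency is precisely the requirement that $\nu$ agree with this propagation, restriction to $K(\mathit{BN})$ is the inverse map, giving a bijection between $p$-probabilistic assignments and consistent $\mathit{BN}$-total assignments.

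The probability identity is then immediate for a paired $(f,\nu)$. As $\nu$ is total, $\mathit{dom}(\nu) = N$ and the marginalisation in the definition of $\llbracket \mathit{BN} \rrbracket(\nu)$ is vacuous, so $\llbracket \mathit{BN} \rrbracket(\nu) = \prod_{n \in N} \mathit{cpt}(n)(\nu)$. Consistency forces every non-kernel factor to equal $1$, leaving $\llbracket \mathit{BN} \rrbracket(\nu) = \prod_{n \in K(\mathit{BN})} \mathit{cpt}(n)(\nu(n))$. Each kernel node $X[r,\emptyset,a(\overline{c})]$ with $r = \atom{v}{a(\overline{c})}$ contributes $v$ when set to $\top$ and $1-v$ when set to $\bot$, which are exactly the factors of $\mathit{prob}(f)$; hence $\llbracket \mathit{BN} \rrbracket(\nu) = \mathit{prob}(f)$ (in particular both vanish when $f$ sets a fact to $\bot$).

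The core of the argument, which I expect to be the main obstacle, is the model correspondence $g_f(p) = s \Leftrightarrow \nu \models s$ for the paired $f$ and $\nu$. I would prove, by induction along the stratification $\mu$ and the fixpoint index $i$ defining $g_f(p,j,i)$, the invariant that for every ground atom $a(\overline{c})$ one has $\nu(X[\transf{a(\overline{c})}]) = \rest{a(\overline{c})}$ iff $a(\overline{c}) \in g_f(p)$, together with the matching statements for the intermediate nodes: $X[r,I,a(\overline{c})]$ is ``on'' iff the corresponding ground instance fires under $g_f$ (all positive body literals derived, all negative ones not derived), and $X[r,a(\overline{c})]$, $X[a(\overline{c})]$ compute the respective disjunctions, mirroring the OR-gate $\mathit{cpt}$s. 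Acyclicity of $\mathit{BN}$ makes the induction well-founded and guarantees that negative body literals are resolved against a strictly lower stratum, exactly as in the well-founded semantics; the facts $g_f(p) = \mathit{WFM}(\mathit{instance}(p,f))$ and $g_f(p) \subseteq \mathit{ground}(p)$ ensure every derivable atom indeed has a node in $\mathit{BN}$. The delicate point is the $\beta$-collapsing of CPT-like predicates, where several ground atoms agreeing on their $\transf{\cdot}$-coordinates but differing on the key positions $K = \mu_{p,{\cal D},{\cal U}}(a)$ are folded into one multi-valued node $X[\transf{a(\overline{c})}]$ with domain $\mathit{pdom}(a,p,K) \cup \{\bot\}$. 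Here I would invoke the CPT-like conditions---disjointness of the annotated-disjunction sets via \thref{theorem:template:disjointness:semantics}, and row-distinctness together with strong connectivity via \thref{theorem:join:tree:1}---to show that at most one folded atom is derivable in $g_f$, so the multi-valued node faithfully records which atom (if any) holds; \thref{theorem:original:to:transformed} supplies the bridge between derivations in $p$ and in $\alpha(\beta_{p,{\cal D},{\cal U}}(p))$ over which $\mathit{BN}$ is built. Since a consistent $\nu$ models a unique structure, the invariant instantiated at $s := g_f(p)$ yields $\nu \models g_f(p)$ and hence the equivalence. Combining the bijection, the probability identity, and this model correspondence discharges both directions of the statement.
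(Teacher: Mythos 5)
Your proposal is correct and follows essentially the same route as the paper: the paper likewise pairs $f$ with $\nu$ through the kernel nodes $K(\mathit{BN})$, exploits the poly-tree structure (\thref{theorem:bayesian:network:polytree}) and the deterministic non-kernel CPTs to reduce $\llbracket \mathit{BN}\rrbracket(\nu)$ to the product of kernel factors equal to $\mathit{prob}(f)$, and delegates the model correspondence to a separate lemma (\thref{theorem:correctness:auxiliary:2}) proved exactly by your stratified induction over $\mu$ and the fixpoint index (plus a depth induction in the converse direction), with the $\transf{\cdot}$/$\rest{\cdot}$ maps handling the collapsed CPT-like predicates. The only cosmetic difference is that you package the kernel correspondence as a bijection between probabilistic assignments and consistent total assignments, whereas the paper constructs the two directions of the existence statement separately.
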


\begin{proof}
Let $\Sigma$ be a first-order signature, $\mathbf{dom}$ be a finite domain, $p$ be a $(\Sigma,\mathbf{dom})$-relaxed acyclic \problog{} program, $(\cal O,D,U)$ be a witness for $p$, $s$ be a $(\Sigma,\mathbf{dom})$-structure, and $\mathit{BN} = (N,E,\mathit{cpt})$ be the Bayesian Network generated by Algorithm~\ref{figure:acyclic:encoding:algorithm} having $p$ and $(\cal O,D,U)$ as input.
We now prove both directions of our claim.

\para{\boldmath$(\Rightarrow)$} 
Let $f$ be a $p$-probabilistic assignment $f$ such that $\mathit{prob}(f) = k$ and $g_f(p) = s$.
Furthermore, let $\nu$ be the following $\mathit{BN}$-total assignment:
\[
\nu(n) = 
\begin{cases}
f(\atom{v}{a(\overline{c})}) & \text{if}\ n=X[\atom{v}{a(\overline{c})},\emptyset,a(\overline{c})]  \\
\top & \text{if}\ n=X[a(\overline{c}),\emptyset,a(\overline{c})] \\
k & \text{if}\ k \in D(n) \wedge \mathit{p}(n)=\{P_1, \ldots, P_m\} \\
& \quad \wedge \nu(P_1) = v_1 \wedge \ldots \wedge \nu(P_m) = v_m \\
& \quad \wedge \mathit{cpt}(n)(v_1, \ldots, v_m, k) = 1
\end{cases}
\]
The assignment $\nu$ is well-defined since (1) $\mathit{BN}$ is a forest of poly-trees (see \thref{theorem:bayesian:network:polytree}), and (2) for all nodes of the form $X[\atom{v}{a(\overline{c})},\emptyset,a(\overline{c})]$ and $X[{a(\overline{c})},\emptyset,a(\overline{c})]$, $D(n) = \{\top,\bot\}$ by construction (cf. Algorithm~\ref{figure:acyclic:encoding:algorithm}).
Furthermore, the assignment $\nu$ is consistent by construction.
Indeed, for all variables $n \in N \setminus K(\mathit{BN})$, the corresponding entry in the CPT is $1$.

We now prove that $\nu \models s$.
From \thref{theorem:correctness:auxiliary:2}, $\nu$'s consistency, and $f(\atom{v}{a(\overline{c})}) = \nu(X[\{\atom{v}{a(\overline{c})}\}, \emptyset,  a(\overline{c})])$ for all $\atom{v}{a(\overline{c})} \in p$, it follows that  $a(\overline{c}) \in  g_f(p)$ iff $\nu(X[\transf{a(\overline{c})}]) = \rest{a(\overline{c})}$.
From this and $g_f(p) = s$, it follows  $\nu \models s$.

Finally, we show that $\llbracket \mathit{BN} \rrbracket(\nu) = \mathit{prob}(f)$.
In more detail, $\llbracket \mathit{BN} \rrbracket(\nu) = \left[ \left(\Pi_{n \in N} \mathit{cpt}(n)\right) \right](\nu)$.
This can be equivalently rewritten as follows:
$\llbracket \mathit{BN} \rrbracket(\nu) = \left[ \left(\Pi_{n \in K(\mathit{BN})} \mathit{cpt}(n)\right) \cdot \\ \left(\Pi_{n \in N\setminus K(\mathit{BN})} \mathit{cpt}(n)\right)\right](\nu)$.
Furthermore, since $\nu$ is consistent and the CPTs associated with the variables in $N\setminus K(\mathit{BN})$ are deterministic, $\llbracket \mathit{BN} \rrbracket(\nu)$ can be simplified as $\llbracket \mathit{BN} \rrbracket(\nu) = \left[ \left(\Pi_{n \in K(\mathit{BN})} \mathit{cpt}(n)\right) \right](\nu)$.
From this and $\mathit{BN}$'s definition, it follows that $\llbracket \mathit{BN} \rrbracket(\nu) = \Pi_{\nu(X[\atom{v}{a(\overline{c})},\emptyset,a(\overline{c})]) = \top} v \cdot  \Pi_{\nu(X[\atom{v}{a(\overline{c})},\emptyset,a(\overline{c})]) = \bot} (1-v)$.
From this and $\nu$'s definition, it follows $\llbracket \mathit{BN} \rrbracket(\nu) = \Pi_{f(\atom{v}{a(\overline{c})}) = \top} v \cdot  \Pi_{f(\atom{v}{a(\overline{c})}) = \bot} (1-v)$, which is equivalent to $\mathit{prob}(f)$.

\para{\boldmath$(\Leftarrow)$} 
Let $\nu$ be a $\mathit{BN}$-total assignment such that $\llbracket \mathit{BN} \rrbracket(\nu) = k$, $\nu$ is consistent, and $\nu \models s$, and  $f$ be the following $p$-probabilistic assignment:
$f(\atom{v}{a(\overline{c})}) = \nu(X[\{\atom{v}{a(\overline{c})}\}, \emptyset,  a(\overline{c})])$.

We now show that $\llbracket \mathit{BN} \rrbracket(\nu) = \mathit{prob}(f)$.
In more detail, $\llbracket \mathit{BN} \rrbracket(\nu) = \left[ \left(\Pi_{n \in N} \mathit{cpt}(n)\right) \right](\nu)$.
Furthermore, since $\nu$ is consistent and the CPTs associated with the variables in $N\setminus K(\mathit{BN})$ are deterministic, $\llbracket \mathit{BN} \rrbracket(\nu)$ can be simplified as $\llbracket \mathit{BN} \rrbracket(\nu) = \left[ \left(\Pi_{n \in K(\mathit{BN})} \mathit{cpt}(n)\right) \right](\nu)$.
From this and $\mathit{BN}$'s definition, it follows that $\llbracket \mathit{BN} \rrbracket(\nu) = \Pi_{\nu(X[\atom{v}{a(\overline{c})},\emptyset,a(\overline{c})]) = \top} v \cdot  \Pi_{\nu(X[\atom{v}{a(\overline{c})},\emptyset,a(\overline{c})]) = \bot} (1-v)$.
From this and $\nu$'s definition, it follows that $\llbracket \mathit{BN} \rrbracket(\nu) = \Pi_{f(\atom{v}{a(\overline{c})}) = \top} v \cdot  \Pi_{f(\atom{v}{a(\overline{c})}) = \bot} (1-v)$, which is equivalent to $\mathit{prob}(f)$.

We still have to prove that $g_f(p) = s$.
From \thref{theorem:correctness:auxiliary:2}, $\nu$'s consistency, $f(\atom{v}{a(\overline{c})}) = \nu(X[\{\atom{v}{a(\overline{c})}\}, \emptyset,  a(\overline{c})])$, it follows that  $a(\overline{c}) \in  g_f(p)$ iff $\nu(X[\transf{a(\overline{c})}]) = \rest{a(\overline{c})}$.
From this and $\nu \models s$, it follows  $g_f(p) = s$. 
\end{proof}

\begin{proposition}\thlabel{theorem:correctness:auxiliary:2}
Let $\Sigma$ be a first-order signature, $\mathbf{dom}$ be a finite domain, $p$ be a $(\Sigma,\mathbf{dom})$-relaxed acyclic \problog{} program, $(\cal O,D,U)$ be a witness for $p$, $\mathit{BN} = (N,E,\mathit{cpt})$ be the Bayesian Network generated by Algorithm~\ref{figure:acyclic:encoding:algorithm} having $p$ and $(\cal O,D,U)$ as input, $f$ be a $p$-probabilistic assignment, and $\nu$ be a consistent $\mathit{BN}$-variable assignment such that $\nu(X[\atom{v}{a(\overline{c})}, \emptyset, a(\overline{c})]) = f(\atom{v}{a(\overline{c})})$ for all $\atom{v}{a(\overline{c})} \in p$. 
Then, $a(\overline{c}) \in  g_f(p,j,i)$, for some $j$ and $i$, iff $\nu(X[\transf{a(\overline{c})}]) = \rest{a(\overline{c})}$.
\end{proposition}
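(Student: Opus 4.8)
Let me understand the statement. We have a relaxed acyclic ProbLog program $p$, a witness $(\mathcal{O},\mathcal{D},\mathcal{U})$, and a Bayesian Network $\mathit{BN}$ generated by the encoding algorithm. We have a probabilistic assignment $f$ and a consistent BN-total assignment $\nu$ that agree on the kernel variables: $\nu(X[\atom{v}{a(\overline{c})}, \emptyset, a(\overline{c})]) = f(\atom{v}{a(\overline{c})})$ for all probabilistic atoms in $p$.

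We need to prove: $a(\overline{c}) \in g_f(p,j,i)$ for some $j,i$ iff $\nu(X[\transf{a(\overline{c})}]) = \rest{a(\overline{c})}$.

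Let me recall the key definitions.

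$g_f(p,j,i)$ is the exact grounding of $p$ under probabilistic assignment $f$, computed stratum by stratum ($j$ = stratum index, $i$ = iteration within stratum). This captures the well-founded model of $\mathit{instance}(p,f)$.

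The transformation $\transf{\cdot}$ applies $\alpha$ (remove non-maximal rules) and $\beta$ (the CPT-like projection). For an atom $a(\overline{c})$, $\transf{a(\overline{c})}$ is the atom in the transformed vocabulary obtained by projecting out the "fixed" CPT positions $\mu_{p,\mathcal{D},\mathcal{U}}(a)$, and $\rest{a(\overline{c})}$ is the part that was projected out (the value on positions $\mu_{p,\mathcal{D},\mathcal{U}}(a)$), with $\rest{a(\overline{c})} = \top$ if $\mu = \emptyset$.

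So the variable $X[\transf{a(\overline{c})}]$ has domain $D(\mathit{pred}(\transf{a(\overline{c})}))$. If the predicate has no CPT-like structure, $D = \{\top,\bot\}$ and $\rest{a(\overline{c})} = \top$, so the claim reads $a(\overline{c}) \in g_f(p)$ iff $\nu(X[a(\overline{c})]) = \top$. If there's a CPT-like structure, the variable's value directly records *which* of the mutually-exclusive row-atoms holds.

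Now the structure of the BN. For each atom $a(\overline{c})$ there's a variable $X[a(\overline{c})]$ (or rather $X[\transf{a(\overline{c})}]$ in the transformed program). Its parents are variables $X[r, a(\overline{c})]$, one per rule $r$ that can derive it, combined via an OR (CPT$_\oplus$) through a binary tree of auxiliary nodes. Each $X[r,a(\overline{c})]$ is in turn an OR over the ground instances $X[r,\mathit{gr},a(\overline{c})]$, again via a tree. Each $X[r,\mathit{gr},a(\overline{c})]$ has as parents the variables $X[b(\overline{v})]$ for body literals $b(\overline{v})$ of $\mathit{gr}$, and its CPT (computed by the $\mathit{CPT}$ subroutine via $\mathit{satisfiable}$) encodes the rule semantics: it fires (takes value $\rest{a(\overline{c})}$ rather than $\bot$) iff all positive body literals hold and all negative body literals fail in the transformed/filtered sense.

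Let me now think about the proof approach.

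**The plan** is to prove both directions by a simultaneous induction that mirrors the stratified fixpoint computation of $g_f$, using the acyclicity of the BN (it is a forest of poly-trees by \thref{theorem:bayesian:network:polytree}) to make the notion of "value determined by a consistent assignment" well-defined. The central structural fact I would establish first is a per-node correspondence lemma: for a consistent $\nu$ agreeing with $f$ on the kernel, the value $\nu(X[r,\mathit{gr},a(\overline{c})])$ equals $\rest{a(\overline{c})}$ exactly when the ground rule $\mathit{gr}$ "fires" under $f$ (its positive body atoms are in $g_f(p)$ and its negative body atoms are not), and is $\bot$ otherwise; and similarly that $\nu(X[r,a(\overline{c})]) = \rest{a(\overline{c})}$ iff some ground instance of $r$ fires, and $\nu(X[\transf{a(\overline{c})}]) = \rest{a(\overline{c})}$ iff some rule derives $a(\overline{c})$. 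The OR-gates encoded by $\mathit{CPT}_\oplus$ give these equivalences once the parent values are known, so the real content is propagating the induction hypothesis through the body literals.

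**The key steps, in order.** First I would fix notation and use \thref{theorem:bayesian:network:polytree} to guarantee that a consistent $\nu$ assigns a well-defined value to every non-kernel node given its ancestors, so that "the" value of each internal gate is determined bottom-up. Second, I would set up the induction on the stratification index $j$ and the inner iteration $i$ used to define $g_f(p,j,i)$, exploiting the fact (noted in the excerpt) that $\mu$ is a valid stratification and that negation only refers to strictly lower strata $g_f(p,j-1)$, which matches the poly-tree layering so that negative-literal parents are always resolved before the head. Third, for the base case ($j=0$, $i=0$) I would handle ground (probabilistic) atoms directly: here $\transf{a(\overline{c})} = a(\overline{c})$, the only relevant node is the kernel node $X[a(\overline{c}),\emptyset,a(\overline{c})]$, and membership in $g_f(p,0,0)$ is exactly $f(\atom{v}{a(\overline{c})}) = \top$, which by hypothesis equals $\nu$ on that kernel node. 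Fourth, for the inductive step I would take an atom newly derived at $g_f(p,j,i)$ via a ground rule $\mathit{gr} \in g_f(p,r)$; I must show the corresponding chain $X[b(\overline{v})] \to X[r,\mathit{gr},a(\overline{c})] \to X[r,a(\overline{c})] \to X[\transf{a(\overline{c})}]$ propagates the correct value. This is where I apply the induction hypothesis to each body literal (positive atoms lie in $g_f(p,j,i-1)$, negative atoms are absent from $g_f(p,j-1)$), then invoke the defining property of the $\mathit{satisfiable}/\mathit{CPT}$ subroutine to conclude the gate fires, and finally the OR structure of $\mathit{CPT}_\oplus$ to lift this through the trees to $X[\transf{a(\overline{c})}]$. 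For the converse direction within the same induction I would argue contrapositively: if $\nu(X[\transf{a(\overline{c})}]) = \rest{a(\overline{c})}$ then some gate fired, hence (by the CPT definition and the induction hypothesis read backwards) a ground rule whose body is satisfied in $g_f$, so $a(\overline{c}) \in g_f(p)$.

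**The main obstacle** I anticipate is the interaction between the $\beta$-transformation (CPT-like projection) and the firing semantics. When a predicate is CPT-like, the single variable $X[\transf{a(\overline{c})}]$ encodes a *value* in $\mathit{pdom}(\mathit{pr},p,K) \cup \{\bot\}$ rather than a boolean, so I must carefully verify that "the ground rule $\mathit{gr}$ derives $a(\overline{c})$ in $g_f$" corresponds precisely to "$X[\transf{a(\overline{c})}]$ takes the value $\rest{a(\overline{c})}$" and not some competing value from a row-distinct or disjoint annotated-disjunction set. This requires the $\mathit{filter}$ subroutine to correctly translate ground body literals into constraints on the projected domain, and it leans on the CPT-like conditions (row-distinctness and $(\mathit{pr}_1,\mathit{pr}_2)$-disjointness) to guarantee that at most one annotated-disjunction set contributes a given value, so that the OR-gate does not conflate distinct derivations. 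I would isolate this as a separate sublemma about $\mathit{filter}$ and $\mathit{satisfiable}$ correctly reflecting, for each $r' \in [r]_{p_0}$, whether the original ground rule fires, and then feed it into the main induction. The bookkeeping relating $[r]_p$ (the original rules collapsed into a transformed rule $r$) to the individual CPT entries is the most delicate part, and I would rely on \thref{theorem:original:to:transformed} to move freely between groundings of $p$ and groundings of $\transf{p}$.
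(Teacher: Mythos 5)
Your forward direction is essentially the paper's proof: an outer induction on the stratum of $a$ (the paper inducts on $\mu(a)$ from the exact-grounding construction) with an inner induction on the fixpoint iteration $i$, the kernel nodes anchoring the base case via the agreement of $\nu$ with $f$, negative literals discharged through the lower-stratum hypothesis, and the $\mathit{satisfiable}$/$\mathit{CPT}_{\oplus}$ gates together with \thref{theorem:original:to:transformed} propagating values along the chain $X[b(\overline{v})] \to X[r,\mathit{gr},a(\overline{c})] \to X[r,a(\overline{c})] \to X[\transf{a(\overline{c})}]$. Your isolation of the $\beta$/CPT-like bookkeeping as the delicate point also matches where the paper spends its care (the $[r]_{p}$ correspondence inside $\mathit{satisfiable}$).

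The gap is in your converse direction. You propose to run it ``within the same induction'' on $(j,i)$, reading the induction hypothesis ``backwards'': from $\nu(X[\transf{a(\overline{c})}]) = \rest{a(\overline{c})}$ some ground-rule gate fired, its positive body variables carry their firing values, and you want to conclude that the body atoms lie in $g_f$. But an induction indexed by $(j,i)$ has a hypothesis about atoms already known to lie in $g_f(p,j,i-1)$ or in lower strata; for the body atoms of the fired gate you have no such membership yet --- that is exactly what you are trying to establish --- so for positive body literals in the same stratum there is no decrease in your chosen measure and the argument is circular. The paper breaks this circularity by giving the converse its own well-founded measure on the network side: it defines $\mathit{depth}(n)$ (counting atom-variables along ancestor chains, well-defined because the network is a forest of poly-trees by \thref{theorem:bayesian:network:polytree}) and proves the strengthened claim that $\nu(X[\transf{a(\overline{c})}]) = \rest{a(\overline{c})}$ implies $a(\overline{c}) \in g_f(p,k,\mathit{depth}(X[\transf{a(\overline{c})}]))$, by induction on that depth, while negative parents are still discharged through the outer stratum hypothesis. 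You already invoke poly-tree acyclicity to make bottom-up values well-defined, so the right ingredient is in your hands; it must, however, be turned into the induction measure for the backward direction rather than reusing the fixpoint indices.
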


\begin{proof}
Let $\Sigma$ be a first-order signature, $\mathbf{dom}$ be a finite domain, $p$ be a $(\Sigma,\mathbf{dom})$-relaxed acyclic \problog{} program, $(\cal O,D,U)$ be a witness for $p$, $\mathit{BN} = (N,E,\mathit{cpt})$ be the Bayesian Network generated by Algorithm~\ref{figure:acyclic:encoding:algorithm} having $p$ and $(\cal O,D,U)$ as input, $f$ be a $p$-probabilistic assignment, and $\nu$ be a consistent $\mathit{BN}$-variable assignment such that $\nu(X[\atom{v}{a(\overline{c})}, \emptyset, a(\overline{c})]) = f(\atom{v}{a(\overline{c})})$ for all $\atom{v}{a(\overline{c})} \in p$. 
Furthermore, let $p'$ be the acyclic \problog{} program obtained after applying the $\alpha$ and $\beta$ transformations.
We claim that $a(\overline{c}) \in  g_f(p,\mu(a))$ iff $\nu(X[\transf{a(\overline{c})}]) = \rest{a(\overline{c})}$.
From this,  $a(\overline{c}) \in g_f(p)$ iff  $a(\overline{c}) \in g_f(p,j,i)$ for some $j$ and $i$, and $a(\overline{c}) \in g_f(p)$ iff $a(\overline{c}) \in g_f(p,\mu(a))$, it follows that $a(\overline{c}) \in  g_f(p,j,i)$, for some $j$ and $i$, iff $\nu(X[\transf{a(\overline{c})}]) = \rest{a(\overline{c})}$.
However, there may be some nodes in $\mathit{BN}$ that do not correspond to any ground rule or atom in $g_f(p,r)$ or $g_f(p)$.

We now prove, by induction on $\mu(a)$ (define in \S\ref{app:atklog:lite:extended:exact:grounding}), our claim that $a(\overline{c}) \in  g_f(p,\mu(a))$ iff $\nu(X[\transf{a(\overline{c})}]) = \rest{a(\overline{c})}$ (we denote the corresponding induction hypothesis as $(\star)$).

\para{Base Case}
For the base case, we assume that $\mu(a) = 0$.
We prove separately the two directions, namely (1) if $a(\overline{c}) \in  g_f(p,0,i)$ and $\mu(a) = 0$, then $\nu(X[\transf{a(\overline{c})}]) = \rest{a(\overline{c})}$, and (2) if  $\nu(X[\transf{a(\overline{c})}]) = \rest{a(\overline{c})}$ and $\mu(a) = 0$, then $a(\overline{c}) \in  g_f(p,0,i)$.
From this, it follows that if $\mu(a) = 0$, then $a(\overline{c}) \in  g_f(p,0)$ iff $\nu(X[\transf{a(\overline{c})}]) = \rest{a(\overline{c})}$.

\para{\boldmath$(\Rightarrow)$}
We prove, by induction on $i$, that if $a(\overline{c}) \in  g_f(p,0,i)$, then $\nu(X[\transf{a(\overline{c})}]) = \rest{a(\overline{c})}$ (we denote this induction hypothesis as $(\dagger)$).
From $a(\overline{c}) \in  g_f(p)$, $\mathit{g}_f(p) \subseteq \mathit{ground}(p)$, and \thref{theorem:original:to:transformed}, then $X[\transf{a(\overline{c})}] \in N$. 
The base case is as follows.
If $a(\overline{c}) \in  g_f(p,0,0)$, then there is a rule $r$ such that either $r = a(\overline{c})$ or $r = \atom{v}{a(\overline{c})}$ and $a(\overline{c}) \in g_f(p,r,0,0)$.
If $r = a(\overline{c})$, then $\nu(X[r,\emptyset,a(\overline{c})]) = \top$ due to $\nu$'s consistency,  $\mathit{g}_f(p) \subseteq \mathit{ground}(p)$, \thref{theorem:original:to:transformed}, $\transf{r} = r$, and $\transf{a(\overline{c})}  = a(\overline{c})$.
If $r = \atom{v}{a(\overline{c})}$, then  $f(\atom{v}{a(\overline{c})}) = \top$ follows from $a(\overline{c}) \in  g_f(p,0,0)$.
From this,  $\mathit{g}_f(p) \subseteq \mathit{ground}(p)$, \thref{theorem:original:to:transformed}, $\transf{r} = r$, $\transf{a(\overline{c})}  = a(\overline{c})$, and $\nu(X[\atom{v}{a(\overline{c})}, \emptyset, a(\overline{c})]) = f(\atom{v}{a(\overline{c})})$, it follows that there is a variable $X[r,\emptyset,a(\overline{c})] \in N$ such that $\nu(X[r,\emptyset,a(\overline{c})]) = \top$.
From $\nu(X[r,\emptyset, a(\overline{c})]) = \top$, $\mathit{BN}$'s definition, and $nu$'s definition, it follows that there are nodes $X[r,  a(\overline{c})], X[a(\overline{c})] \in N$ such that $\nu(X[r,  a(\overline{c})]) = \top$ and $\nu(X[a(\overline{c})]) = \top$ as well (note that $\transf{r} = r$, and $\transf{a(\overline{c})}  = a(\overline{c})$ in this case).
For the induction step, we assume that our claim holds for all $i' < i$.
The only interesting case is when $a(\overline{c}) \in  g_f(p,0,i) \setminus g_f(p,0,i-1)$.
From this, it follows that there is rule $r \in p$ such that $a(\overline{c}) \leftarrow b_1,\ldots,b_m \in g_f(p,r,0,i)$.
From this, it follows that $b_1,\ldots,b_m \in g_f(p,0,i-1)$ and $\mathit{body}^-(r) = \emptyset$.
From this, $g_f(p,0,i-1) \subseteq \mathit{ground}(p)$, \thref{theorem:original:to:transformed}, and the induction's hypothesis $(\dagger)$, it follows that there are nodes $X[\transf{b_1}], \ldots, X[\transf{b_m}] \in N$ and $\nu(X[\transf{b_1}]) = \rest{b_1},\ldots, \nu(X[\transf{b_m}]) = \rest{b_m}$.
From this, $\mathit{body}^-(r) = \emptyset$, $g_f(p,r,0,i) \subseteq g_f(p,r) \subseteq \mathit{ground}(p,r)$, \thref{theorem:original:to:transformed}, and $\mathit{BN}$'s construction, it follows that there is a variable $X[\transf{r}, \{\transf{b_1},\ldots,\transf{b_m}\}, \transf{a(\overline{c})}] \in N$ such that $\mathit{cpt}(X[\transf{r}, \{\transf{b_1}, \ldots,\transf{b_m}\}, \transf{a(\overline{c})}]) (\nu(X[\transf{b_1}]), \ldots, \nu(X[\transf{b_m}]), k)   = 1$ iff $k = \rest{a(\overline{c})}$. 
As a result, $\nu(X[\transf{r}, \{\transf{b_1},\ldots,\transf{b_m}\}, \transf{a(\overline{c})}]) = \rest{a(\overline{c})}$ since $\nu$ is consistent.
From this and $\mathit{BN}$'s construction, it follows that there are nodes $X[\transf{r},  \transf{a(\overline{c})}], X[\transf{a(\overline{c})}] \in N$ such that $\nu(X[\transf{r}, \transf{a(\overline{c})}]) = \top$ and $\nu(X[\transf{a(\overline{c})}]) = \rest{a(\overline{c})}$.
This completes the proof of the if direction.

\para{\boldmath$(\Leftarrow)$}
To prove that if $\nu(X[\transf{a(\overline{c})}]) = \rest{a(\overline{c})}$, then $a(\overline{c}) \in  g_f(p,0,i)$, for some $i$, we prove a stronger claim.
Namely, if $\nu(X[\transf{a(\overline{c})}]) = \rest{a(\overline{c})}$, then $a(\overline{c}) \in  g_f(p,0,\mathit{depth}(X[\transf{a(\overline{c})}]))$, where $\mathit{depth}(n) = 0$ if $\mathit{ancestors}(n)$ does not contain any variable of the form $X[b(\overline{v})]$ and $\mathit{depth}(n) = 1 + \mathit{max}_{n' \in \mathit{ancestors}(n)} depth(n')$ otherwise.
We prove our claim by induction on $\mathit{depth}(X[\transf{a(\overline{c})}])$ (we denote the induction hypothesis as $(\triangle)$).
The base case is as follows.
If $\mathit{depth}(X[\transf{a(\overline{c})}]) = 0$, then from  $\nu(X[\transf{a(\overline{c})}]) = \top$ and $\nu$'s consistency, it follows that there must be a rule $r$, of the form $r = \atom{v}{a(\overline{c})}$ or $r = a(\overline{c})$, such that $X[r,a(\overline{c})] \in N$ and $\nu(X[r,a(\overline{c})]) = \top$ (note that, in this case, $\transf{r} = r$ and $\transf{a(\overline{c})} = a(\overline{c})$).
From this and $\nu$'s consistency, there is a variable $X[r,\emptyset,a(\overline{c})] \in N$ such that $\nu(X[r,\emptyset,a(\overline{c})]) = \top$.
If $r = a(\overline{c})$, then $a(\overline{c}) \in g_f(p,0,0)$ by definition.
If $r = \atom{v}{a(\overline{c})}$, then from $\nu(X[r,\emptyset,a(\overline{c})]) = \top$ and $\nu(X[r,\emptyset,a(\overline{c})]) = f(\atom{v}{a(\overline{c})})$, it follows that $f(\atom{v}{a(\overline{c})}) = \top$.
From this, it follows that $a(\overline{c}) \in g_f(p,0,0)$.
For the induction step, we assume that our claim holds for all random variables of depth less than $k$.
From $\nu(X[\transf{a(\overline{c})}]) = \rest{a(\overline{c})}$ and $\nu$'s consistency, it follows that there is a rule $r \in p'$ such that $\nu(X[r,\transf{a(\overline{c})}]) = \rest{a(\overline{c})}$.
From this and $\nu$'s consistency, there is a set of positive ground literals $I = (b_1, \ldots, b_m)$ such that $X[r,(\transf{b_1}, \ldots, \transf{b_m}),\transf{a(\overline{c})}] \in N$ and $\nu(X[r,(\transf{b_1}, \ldots, \transf{b_m}),\transf{a(\overline{c})}]) = \rest{a(\overline{c})}$.
From this and $\mathit{BN}$'s construction (cf. the $\mathit{cpt}$ and $\mathit{satisfiable}$ procedures), it follows that there is a rule $r' \in p$ such that (1) $r \in [r']_p$, and (2) the values assigned by $\nu$ to the random variables associated to the atoms in $I$ produce a grounding $s$ of $r$ that satisfies all constraints and is consistent (namely, the body of $s$ does not contain both an atom and its negation and multiple copies of the same CPT-like atom in $r'$ are assigned to the same value in $r$).
From this, $\mathit{BN}$'s definition, and $\mu(a) = 0$, it follows that $\nu(X[\transf{b}]) = \rest{b}$  for all $b \in I$.
From this and the induction's hypothesis $(\triangle)$, it follows that $b \in g_f(p,0,\mathit{depth}(X[a(\overline{c})]) -1)$ for all  $b \in I$.
From this, $\nu(X[\transf{r},\{\transf{b_1}, \ldots, \transf{b_m}\},\transf{a(\overline{c})}]) = \rest{a(\overline{c})}$, and $\mathit{BN}$'s construction (cf. the $\mathit{cpt}$ and $\mathit{satisfiable}$ procedures), it follows that there is a rule $r' \in p$ such that (1) $\transf{r} = \transf{r'}$, and (2) $r'$ is satisfied by $\{{b_1}, \ldots, {b_m}\}$, i.e., there is a grounding $s'$ of $r'$ obtained by using a subset of the literals in  $\{{b_1}, \ldots, {b_m}\}$ and adding repeated occurrences of the literals if needed.
From this, it follows that $s$ is in $g_f(p,r,0,\mathit{depth}(X[\transf{a(\overline{c})}]))$.
From this, it follows that $a(\overline{c}) \in g_f(p,0,\mathit{depth}(\transf{X[a(\overline{c})}]))$.
This completes the proof the only if direction.

\para{Induction Step}
For the induction's step, we assume that $b(\overline{d}) \in g_f(p,\mu(b))$ iff $\nu(X[\transf{b(\overline{d})}]) = \rest{b(\overline{d})}$ holds for all $b$ such that $\mu(b) < \mu(a)$.
We now prove that $a(\overline{c}) \in g_f(p,\mu(a))$ iff $\nu(X[\transf{a(\overline{c})}]) = \rest{a(\overline{c})}$ as well.
In the following, let $k$ be $\mu(a)$.
We prove separately the two directions, namely (1) if $a(\overline{c}) \in  g_f(p,\mu(a),i)$, for some $i$, then $\nu(X[\transf{a(\overline{c})}]) = \rest{a(\overline{c})}$, and (2) if $\nu(X[\transf{a(\overline{c})}]) = \rest{a(\overline{c})}$, then $a(\overline{c}) \in  g_f(p,\mu(a),i)$, for some $i$. 
From this, it follows that $a(\overline{c}) \in  g_f(p,\mu(a))$ iff $\nu(X[\transf{a(\overline{c})}]) = \rest{a(\overline{c})}$.

\para{\boldmath$(\Rightarrow)$}
We prove, by induction on $i$, that if $a(\overline{c}) \in  g_f(p,k,i)$, for some $i$, then $\nu(X[\transf{a(\overline{c})}]) = \rest{a(\overline{c})}$ (we denote the induction hypothesis associated to this proof as $(\clubsuit)$).
From $a(\overline{c}) \in  g_f(p) \subseteq \mathit{ground}(p)$ and \thref{theorem:original:to:transformed}, then $X[\transf{a(\overline{c})}] \in N$. 
The base case is as follows.
Assume that $a(\overline{c}) \in  g_f(p,k,0)$.
From this and $a(\overline{c}) \not\in  g_f(p,k-1)$ (since $\mu(a) > k-1$), it follows that there is a rule $r$ such that either $r = a(\overline{c})$ or $r = \atom{v}{a(\overline{c})}$.
If $r = a(\overline{c})$, then there is a node $X[r,\emptyset,a(\overline{c})] \in N$ such that $\nu(X[r,\emptyset,a(\overline{c})]) = \top$ due to $\nu$'s consistency, $g_f(p,r) \subseteq \mathit{ground}(p,r)$, $\transf{r} = r$, $\transf{a(\overline{c})} = a(\overline{c})$, and \thref{theorem:original:to:transformed}.
If $r = \atom{v}{a(\overline{c})}$, then  $f(\atom{v}{a(\overline{c})}) = \top$ follows from $a(\overline{c}) \in  g_f(p,k,0)$.
From this, $g_f(p,r) \subseteq \mathit{ground}(p,r)$, \thref{theorem:original:to:transformed}, and $\nu(X[\{\atom{v}{a(\overline{c})}\}, \emptyset, a(\overline{c})]) = f(\atom{v}{a(\overline{c})})$, it follows that there is a node $X[r,\emptyset,a(\overline{c})] \in N$ such that $\nu(X[r,\emptyset,a(\overline{c})]) = \top$.
From $\nu(X[r,\emptyset, a(\overline{c})]) = \top$, $\mathit{BN}$'s definition, and $nu$'s definition, it follows that there are nodes $X[r, a(\overline{c})], X[a(\overline{c})] \in N$ such that $\nu(X[r, a(\overline{c})]) = \top$ and $\nu(X[a(\overline{c})]) = \top$ as well (note that  $\transf{r} = r$ and $\transf{a(\overline{c})} = a(\overline{c})$).
For the induction step, we assume that our claim holds for all $i' < i$.
Assume that $a(\overline{c}) \in  g_f(p,k,i)$.
The only interesting case is when $a(\overline{c}) \in  g_f(p,k,i) \setminus g_f(p,k,i-1)$.
From this, it follows that there is rule $r$ such that $r'=a(\overline{c}) \leftarrow b_1,\ldots,b_m$ and $r' \in g_f(p,r,k,i)$.
Furthermore, we denote by $b_1', \ldots, b_m'$ the atoms $\mathit{pos}(b_1), \ldots, \mathit{pos}(b_m)$.
From this, it follows that $b_e \in g_f(p,k,i-1)$ for all $b_e \in \mathit{body}^+(r')$ and $b_d' \not\in g_f(p,k -1)$ for all $b_d \in \mathit{body}^-(r')$.
From  $b_e \in g_f(p,k,i-1)$ for all $b_e \in \mathit{body}^+(r')$, $g_f(p,r) \subseteq \mathit{ground}(p,r)$, \thref{theorem:original:to:transformed},  and the induction's hypothesis $(\clubsuit)$, it follows that there is a node $X[\transf{b_e}] \in N$ such that $\nu(X[\transf{b_e}]) = \rest{b_e}$ for all $b_e \in \mathit{body}^+(r')$.
From $b_d' \not\in g_f(p,k -1)$ for all $b_d \in \mathit{body}^-(r')$, it follows that $b_d' \not\in g_f(p,\mu(\mathit{pred}(b_d)))$ for all $b_d \in \mathit{body}^-(r')$.
From this, $r' \in g_f(p,r,k,i)$, $g_f(p,r,k,i) \subseteq \mathit{ground}(p,r)$, \thref{theorem:original:to:transformed}, $\mu(\mathit{pred}(b_d)) < \mu(a)$ for all $b_d \in \mathit{body}^-(r')$, and the induction's hypothesis $(\star)$, it follows that there is a node $X[\transf{b_d'}] \in N$ such that $\nu(X[\transf{b_d'}]) \neq \rest{b_d'}$ for all $b_d \in \mathit{body}^-(r')$.
From $r' \in g_f(p,r,k,i)$, $g_f(p,r,k,i) \subseteq \mathit{ground}(p,r)$, \thref{theorem:original:to:transformed}, $\nu(X[\transf{b_e}]) = \rest{b_e}$ for all $b_e \in \mathit{body}^+(r')$, $\nu(X[\transf{b_d'}]) \neq \rest{b_d'}$ for all $b_d \in \mathit{body}^-(r')$, and $\mathit{BN}$'s construction, there is $X[\transf{r}, \{\transf{b_1'},\ldots,\transf{b_m'}\}, \transf{a(\overline{c})}] \in N$ such that $\mathit{cpt}(X[\transf{r}, \{\transf{b_1'},\ldots,\transf{b_m'}\}, \transf{a(\overline{c})}]) (\nu(X[\transf{b_1'}]), \ldots, \nu(X[\transf{b_m'}]),  k) = 1$ iff $k = \rest{a(\overline{c})}$. 
As a result, $\nu(X[\transf{r}, \{\transf{b_1'},\ldots,\transf{b_m'}\}, \transf{a(\overline{c})}]) = \top$ since $\nu$ is consistent.
From this and $\mathit{BN}$'s construction, it follows that there are variables $X[\transf{r}, \transf{a(\overline{c})}], X[\transf{a(\overline{c})}] \in N$ such that $\nu(X[\transf{r}, \transf{a(\overline{c})}]) = \rest{a(\overline{c})}$ and $\nu(X[\transf{a(\overline{c})}]) = \rest{a(\overline{c})}$.
This completes the proof of the if direction.

\para{\boldmath$(\Leftarrow)$}
To prove that if $\nu(X[\transf{a(\overline{c})}]) = \rest{a(\overline{c})}$, then $a(\overline{c}) \in  g_f(p,k,i)$, for some $i$, we prove a stronger claim.
Namely, if $\nu(X[\transf{a(\overline{c})}]) = \rest{a(\overline{c})}$, then $a(\overline{c}) \in  g_f(p,k,\mathit{depth}(X[a(\overline{c})]))$, where $\mathit{depth}(n)$ is as above.
We prove our claim by induction on $\mathit{depth}(X[\transf{a(\overline{c})}])$ (we denote the induction hypothesis associated to this proof as $(\spadesuit)$).
The base case is as follows.
If $\mathit{depth}(X[\transf{a(\overline{c})}])  = 0$, then from  $\nu(X[\transf{a(\overline{c})}]) = \top$ and $\nu$'s consistency, it follows that there must be a rule $r$, of the form $r = \atom{v}{a(\overline{c})}$ or $r = a(\overline{c})$, such that $X[r,a(\overline{c})] \in N$ and $\nu(X[r,a(\overline{c})]) = \top$ (note that $\transf{r} = r$ and $\transf{a(\overline{c})} = a(\overline{c})$.
From this and $\nu$'s consistency, there is a variable $X[r,\emptyset,a(\overline{c})] \in N$ such that $\nu(X[r,\emptyset,a(\overline{c})]) = \top$.
If $r = a(\overline{c})$, then $a(\overline{c}) \in g_f(p,k,0)$ by definition.
If $r = \atom{v}{a(\overline{c})}$, then from $\nu(X[r,\emptyset,a(\overline{c})]) = \top$ and $\nu(X[r,\emptyset,a(\overline{c})]) = f(\atom{v}{a(\overline{c})})$, it follows that $f(\atom{v}{a(\overline{c})}) = \top$.
From this, it follows that $a(\overline{c}) \in g_f(p,k,0)$.
For the induction step, we assume that our claim holds for all random variables of depth less than $k$.
We now show that it holds also for a variable $X[\transf{a(\overline{c})}]$ such that  $\mathit{depth}(X[\transf{a(\overline{c})}]) =i $.
From $\nu(X[\transf{a(\overline{c})}]) = \rest{a(\overline{c})}$ and $\nu$'s consistency, it follows that there is a rule $r' \in p'$ such that $X[r',\transf{a(\overline{c})}] \in N$  and $\nu(X[r',\transf{a(\overline{c})}]) = \rest{a(\overline{c})}$.
From this and $\nu$'s consistency, there is a set of ground atoms $I = (b_1, \ldots, b_m)$ such that $X[r',(\transf{b_1}, \ldots, \transf{b_m}),\transf{a(\overline{c})}] \in N$, $\nu(X[r',(\transf{b_1}, \ldots, \transf{b_m}),\transf{a(\overline{c})}]) = \rest{a(\overline{c})}$, and  $X[\transf{b}] \in N$ for all $b \in I$.
From this and $\mathit{BN}$'s construction (cf. the $\mathit{cpt}$ and $\mathit{satisfiable}$ procedures), it follows that there is a rule $r' \in p$ such that (1) $r \in [r']_p$, and (2) the values assigned by $\nu$ to the random variables associated to the atoms in $I$ produce a grounding $s$ of $r$ that satisfies all constraints and is consistent (namely, the body of $s$ does not contain both an atom and its negation and multiple copies of the same CPT-like atom in $r'$ are assigned to the same value in $r$).
Let $I^+$ be the atoms in $I$ that are assigned to positive literals in $s$ and $I^-$ be the atoms in $I$ that are assigned to negative literals (note that these two sets are disjoint).
From $\nu(X[r',(\transf{b_1}, \ldots, \transf{b_m}),\transf{a(\overline{c})}]) = \rest{a(\overline{c})}$, and $\nu$'s consistency, it follows that $\nu(X[\transf{b_e}]) = \rest{b_e}$ for all $b_e \in I^+$ and  $\nu(X[\transf{b_d}]) \neq \rest{b_d}$ for all $b_d \in I^-$.
From  $\nu(X[\transf{b_e}]) = \rest{b_e}$ and $\mathit{depth}(X[\transf{b_e}]) < \mathit{depth}(X[\transf{a(\overline{c})}])$ for all $b_e \in I^+$ and the induction's hypothesis $(\spadesuit)$, it follows that $b_e \in  g_f(p,k,\mathit{depth}(X[\transf{b_e}]))$ for all $b_e \in I^+$.
From this, the induction's hypothesis, and $\mathit{depth}(X[\transf{b_e}]) < \mathit{depth}(X[\transf{a(\overline{c})}])$, it follows that $b_e \in  g_f(p,k,i-1)$ for all $b_e \in \mathit{body}^+(r')$.
From  $\nu(X[\transf{b_d}]) \neq \rest{b_d}$ and $\mu(\mathit{pred}(b_d)) < \mu(a)$ for all $b_d \in I^-$ and the induction's hypothesis $(\star)$, it follows that $b_d \not\in  g_f(p,\mu(\mathit{pred}(b_d)))$ for all $b_d \in I^-$.
From this and $\mu(\mathit{pred}(b_d)) < \mu(a)$, it follows that $b_d \not\in  g_f(p,k-1)$ for all $b_d \in I^-$.
From $r'$ definition, $b_e \in  g_f(p,k,i-1)$ for all $b_e \in I^+$, and $b_d \not\in  g_f(p,k-1)$ for all $b_d \in I^-$, it follows that $r' \in g_f(p,r,k,i)$.
From this, it follows that $r' \in g_f(p,k,i)$ and $a(\overline{v}) \in g_f(p,k,i)$.
This completes the proof the only if direction.
\end{proof}

\thref{theorem:correctness} shows the correctness of our encoding.

\begin{proposition}\thlabel{theorem:correctness}
Let $\Sigma$ be a first-order signature, $\mathbf{dom}$ be a finite domain, $p$ be a $(\Sigma,\mathbf{dom})$-relaxed acyclic \problog{} program,
$(\cal O,D,U)$ be a witness for $p$, $\mathit{BN}$ be the Bayesian Network generated by Algorithm~\ref{figure:acyclic:encoding:algorithm} having $p$ and $(\cal O,D,U)$ as input, and $s$ be a $(\Sigma,\mathbf{dom})$-structure.
Furthermore, let $\mu$ be the $\mathit{BN}$-partial assignment such that 
(1) for any ground atom $a(\overline{c})$, $\mu(X[\transf{a(\overline{c}})]) = \rest{a(\overline{c})}$ iff $a(\overline{c}) \in s$, and
(2) $\mu(v)$ is undefined otherwise.
Then, $\llbracket p \rrbracket (s) = \llbracket \mathit{BN} \rrbracket(\mu)$.
\end{proposition}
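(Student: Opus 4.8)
The plan is to reduce the statement to the per-assignment correspondence of \thref{theorem:correctness:auxiliary:1} and then to sum over all relevant assignments. First I would unfold the two sides. On the \problog{} side, by the semantics of \problog{} together with the Fact relating $g_f$ to well-founded models, $\llbracket p \rrbracket(s) = \sum_{f \in {\cal M}(p,s)} \mathit{prob}(f)$, where ${\cal M}(p,s) = \{ f \in \mathit{A}(p) \mid g_f(p) = s \}$. On the network side, by the definition of $\llbracket \mathit{BN} \rrbracket$ on a partial assignment, $\llbracket \mathit{BN} \rrbracket(\mu)$ is the marginal obtained by summing the product $\prod_{n \in N} \mathit{cpt}(n)$ over all total extensions of $\mu$.

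Second, I would simplify this marginal. Since $\mu$ is defined exactly on the atom variables $X[\transf{a(\overline{c})}]$ and fixes each to the value encoding $s$, the total extensions $\nu$ of $\mu$ are precisely the total assignments with $\nu \models s$. For any total $\nu$ we have $\prod_{n} \mathit{cpt}(n)(\nu) = \llbracket \mathit{BN} \rrbracket(\nu)$; moreover, every CPT of a non-kernel node is deterministic (it takes values in $\{0,1\}$ by construction in Algorithm~\ref{figure:acyclic:encoding:algorithm}), so the product vanishes unless $\nu$ is consistent. Hence $\llbracket \mathit{BN} \rrbracket(\mu) = \sum_{\nu \text{ consistent},\, \nu \models s} \llbracket \mathit{BN} \rrbracket(\nu)$.

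Third, I would match the two sums term by term. The construction underlying \thref{theorem:correctness:auxiliary:1} provides mutually inverse maps between $p$-probabilistic assignments and consistent $\mathit{BN}$-total assignments: a probabilistic assignment $f$ fixes the kernel variables $K(\mathit{BN})$ via $\nu(X[\atom{v}{a(\overline{c})}, \emptyset, a(\overline{c})]) = f(\atom{v}{a(\overline{c})})$, and since $\mathit{BN}$ is a forest of poly-trees (\thref{theorem:bayesian:network:polytree}) with deterministic non-kernel CPTs, this extends to a \emph{unique} consistent total assignment $\Phi(f)$; conversely $\nu$ recovers $f$ from its kernel values. By \thref{theorem:correctness:auxiliary:1} this correspondence is probability-preserving, $\mathit{prob}(f) = \llbracket \mathit{BN} \rrbracket(\Phi(f))$, while \thref{theorem:correctness:auxiliary:2} gives the model-level equivalence $g_f(p) = s \iff \Phi(f) \models s$. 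Restricting $\Phi$ to ${\cal M}(p,s)$ therefore yields a probability-preserving bijection onto $\{\nu \text{ consistent} \mid \nu \models s\}$, and chaining the three steps gives $\llbracket p \rrbracket(s) = \sum_{f \in {\cal M}(p,s)} \mathit{prob}(f) = \sum_{\nu \text{ consistent},\, \nu \models s} \llbracket \mathit{BN} \rrbracket(\nu) = \llbracket \mathit{BN} \rrbracket(\mu)$.

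The main obstacle is this third step. As literally stated, \thref{theorem:correctness:auxiliary:1} is an existence equivalence for each fixed value $k$, whereas I need a genuine probability-preserving bijection so that the two sums coincide term by term (distinct assignments may share a probability value and produce the same $s$). I would therefore argue explicitly, rather than merely cite, that the kernel variables are in one-to-one correspondence with the probabilistic ground atoms and that each consistent kernel valuation extends to exactly one consistent total assignment, both of which rest on the poly-tree structure and the determinism of the OR-, tree-, and rule-encoding CPTs. A secondary point to dispatch is the degenerate case where $s$ violates the mutual exclusivity enforced by a CPT-like predicate (so that $s$ would assign two distinct reduced values to a single atom variable): there $\mu$ is not a well-defined function, ${\cal M}(p,s)$ is empty, and both sides equal $0$; I would note that the $K$-fixed-domain and CPT-like conditions guarantee this boundary case is handled consistently.
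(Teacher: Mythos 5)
Your proposal follows essentially the same route as the paper's proof: unfold both sides into sums, discard inconsistent total assignments (whose deterministic non-kernel CPTs zero out the product), and then use a probability-preserving one-to-one correspondence between probabilistic assignments $f$ with $g_f(p)=s$ and consistent total assignments $\nu$ with $\nu \models s$. The extra care you take in making that bijection explicit (kernel values determine a unique consistent extension by the poly-tree structure and deterministic CPTs, with \thref{theorem:correctness:auxiliary:2} supplying the model-level equivalence) is precisely what the paper's terse citation of \thref{theorem:correctness:auxiliary:1} implicitly relies on, since that lemma's proof constructs the two inverse maps.
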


\begin{proof}
Let $\Sigma$ be a first-order signature, $\mathbf{dom}$ be a finite domain, $p$ be a $(\Sigma,\mathbf{dom})$-relaxed acyclic \problog{} program,
$(\cal O,D,U)$ be a witness for $p$, $\mathit{BN}$ be the Bayesian Network generated by Algorithm~\ref{figure:acyclic:encoding:algorithm} having $p$ and $(\cal O,D,U)$ as input, and $s$ be a $(\Sigma,\mathbf{dom})$-structure.
Furthermore, let $\mu$ be the $\mathit{BN}$-partial assignment such that 
(1) for any ground atom $a(\overline{c})$, $\mu(X[\transf{a(\overline{c}})]) = \rest{a(\overline{c})}$ iff $a(\overline{c}) \in s$, and
(2) $\mu(v)$ is undefined otherwise.

The probability $\llbracket p \rrbracket (s)$ is $\Sigma_{f \in {\cal M}(p,s)} \mathit{prob}(f)$, where ${\cal M}(p,s)$ is the set of all  assignments $f$ such that $\mathit{WFM}(\mathit{instance}(p,f)) = s$.
Equivalently, ${\cal M}(p,s)$ is the set of all probabilistic assignments $f$ such that $g_f(p) =s$.
Let $K$ be the set of all total assignments that agree with $\mu$ for all variables of the form $X[\transf{a(\overline{c})}]$.
The probability $\llbracket \mathit{BN} \rrbracket(\mu)$ is $\Sigma_{\nu \in K} \llbracket \mathit{BN} \rrbracket(\nu)$.
Since any non-consistent assignment has probability $0$, $\llbracket \mathit{BN} \rrbracket(\mu) = \Sigma_{\nu \in K'} \llbracket \mathit{BN} \rrbracket(\nu)$, where $K'$ is the set of all consistent assignments in $K$.
From this, it follows that $\llbracket p \rrbracket (s) = \llbracket \mathit{BN} \rrbracket(\mu)$ iff $\Sigma_{f \in \{f \mid g_f(p) =s \}} \mathit{prob}(f) = \Sigma_{\nu \in K'} \llbracket \mathit{BN} \rrbracket(\nu)$.
The latter follows trivially from \thref{theorem:correctness:auxiliary:1}, which establishes a one-to-one mapping from $\{f \mid g_f(p) =s \}$ and $K'$ that preserves probabilities.
\end{proof}

\subsubsection{Size of the encoding}

Given a Bayesian Network $\mathit{BN} = (N,E,\mathit{cpt})$, the size of $\mathit{BN}$, denoted $|\mathit{BN}|$, is $|N| + |E| + \Sigma_{n \in N} |\mathit{cpt}(n)|$, where the size of a conditional probability table is just the number of rows in the table (i.e., the number of all assignments).
The size of an atom $a(\overline{c})$ is $|a|$, whereas the size of a rule $h \leftarrow l_1, \ldots, l_n$ is $|h| + \Sigma_{1\leq i\leq n} l_i$.
Finally, the size of a program $p$ is  $\Sigma_{r \in p} |r|$.
The \emph{ground version} of $p$, denoted $\mathit{gv}(p)$, is $\bigcup_{r \in p} \mathit{ground}(p,r)$, namely the relaxed grounding of all the rules in $p$.
Note that $\mathit{ground}(p) \subseteq \mathit{gv}(p)$.  
Given a \problog{} program $p$, we denote by $\mathit{edb}(p)$ the ground  (possibly probabilistic) atoms in $p$, whereas we denote by $\mathit{rules}(p)$ the rules, i.e., $\mathit{edb}(p) = \{ a(\overline{c}) \in p \} \cup \{ \atom{v}{a(\overline{c})} \in p \mid 0\leq v\leq 1\}$ and $\mathit{rules}(p) = \{ r \in p \mid \mathit{body}(r) \neq \emptyset \}$.

Let $p$ be a relaxed acyclic \problog{} program, $(\cal O,D,U)$ be a witness for $p$, $p' = \alpha(\beta_{p,\cal D, U}(p))$ be the transformed program,  $g = \mathit{gv}(p')$ be its ground version, and $\mathit{bn}(p, (\cal O,D,U)) = (N,E,\mathit{cpt})$ be the corresponding Bayesian Network derived by Algorithm~\ref{figure:angerona:algorithm}.
The number of nodes in $N$ is $O(|\mathit{rules}(p')|\cdot |g|)$. 
Indeed, there is a node $X[a(\overline{c})]$ for each ground atom in $g$. 
Moreover, for each rule $r \in p'$ and ground rule $r' \in g$, there are nodes $X[r,\mathit{body}(r'),\mathit{head}(r')]$ and $X[r,\mathit{head}(r')]$.
Finally, the number of intermediate nodes generated by the $\mathit{tree}$ procedure is twice the number of nodes of the form $X[r,\mathit{body}(r'),\mathit{head}(r')]$.

The number of edges in $E$ is  $O(|\mathit{rules}(p')|^2\cdot|g|)$.
Indeed, there is an edge $X[r,a(\overline{c})] \to X[a(\overline{c})]$ for each rule $r$ and ground atom $a(\overline{c})$.
Furthermore, there is an edge $X[b(\overline{v})] \to X[r, I,a(\overline{c})]$ for each rule $r$, ground rule $a(\overline{c}) \leftarrow I$, and atom $\overline{b}(\overline{v}) \in I$.
Finally, the number of edges introduced by the $\mathit{tree}$ procedure is $O(|g|)$.

The size of $\mathit{cpt}$ is $O(|\mathit{rules}(p')|\cdot |g| \cdot \mathit{max}(2,|\mathit{rules}(p)|)^{2+|\mathit{rules}(p)|})$.
Indeed, we have a CPT for each node $n \in N$.
The size of each CPT depends on (1) the number of parents for each node, and (2) the size of the domain associated to each variable.
In the worst case, the size of the domain of each node is $\mathit{max}(2,|\mathit{rules}(p)|)$ (since the size of the domain of a CPT-like predicate depends only on the rules --- there are no CPT-like predicates defined by ground atoms).
The number of parents for intermediate nodes is $2$, and the size of each CPT is $O(\mathit{max}(2,|\mathit{rules}(p)|)^3)$.
The maximum number of parents for nodes of the form $X[r,a(\overline{c})]$ is $1$, and the size of each CPT is $O(\mathit{max}(2,|\mathit{rules}(p)|)^2)$.
The maximum number of parents for nodes of the form $X[r,I,a(\overline{c})]$ is $|\mathit{rules}(p')|$, and the size of each CPT is $O(\mathit{max}(2,|\mathit{rules}(p)|)^{|\mathit{rules}(p')+1|})$.
Similarly, the maximum number of parents for nodes of the form $X[a(\overline{c})]$ is $1+|\mathit{rules}(p)|$, and the size of each CPT is $O(\mathit{max}(2,|\mathit{rules}(p)|)^{|\mathit{rules}(p')+2|})$.

As a result, $|\mathit{bn}(p, (\cal O,D,U))|$ is $O(|\mathit{rules}(p')|^2\cdot |g| \cdot \mathit{max}(2,|\mathit{rules}(p)|)^{2+|\mathit{rules}(p)|})$.
Furthermore, since $|g| \in O(|\mathit{edb}(p')|^{|\mathit{rules}(p')|})$, it follows that $|\mathit{bn}(p, (\cal O,D,U))|$ is $O(|\mathit{rules}(p')|^2\cdot |\mathit{edb}(p')|^{|\mathit{rules}(p')|} \cdot \mathit{max}(2,|\mathit{rules}(p)|)^{2+|\mathit{rules}(p)|})$.
Finally, since $|\mathit{edb}(p')| \in O(|\mathit{edb}(p)|)$ and $|\mathit{rules}(p') \in O(|\mathit{rules}(p)|)$, we can simplify the result as follows:
$|\mathit{bn}(p, (\cal O,D,U))|$ is $O(|\mathit{rules}(p)|^2\cdot |\mathit{edb}(p)|^{|\mathit{rules}(p)|} \cdot \mathit{max}(2,|\mathit{rules}(p)|)^{2+|\mathit{rules}(p)|})$.

\subsubsection{Complexity Proofs}
We first define the inference problem \textsc{Inf}.
Afterwards, we analyse its complexity.

\begin{problem}
\textsc{Inf} denotes the following decision problem:

\noindent
{\bf Input:} A first-order signature $\Sigma$, a finite domain $\mathbf{dom}$, a  $(\Sigma,\mathbf{dom})$-relaxed acyclic \problog{} program $p$, a set of ground literals $E$, and a ground atom $a(\overline{c})$.

\noindent
{\bf Output:} The probability of $a(\overline{c})$ given evidence in $E$.
\end{problem}

The data complexity of $\textsc{Inf}(\Sigma, \mathbf{dom}, p, E, a)$ for relaxed acyclic programs can be obtained by (1) fixing $\mathit{rules}(p)$ and varying only $\mathit{edb}(p)$ (and indirectly $\mathbf{dom}$), and (2) requiring the program to be relaxed acyclic.

The data complexity of the \textsc{Inf} problem for relaxed-acyclic programs $p$ is the complexity of the following decision problem:

\begin{problem}
Let $\Sigma$ be a first-order signature $\Sigma$, $R$ be a fixed set of \problog{} rules over $\Sigma$, $E$ be a set of ground literals,  $a(\overline{c})$ be a ground atom, and $(\cal O,D,U)$ be an ordering template, a disjointness template, and a uniqueness template.
$\textsc{Inf}_{\Sigma,R,E,a(\overline{c}),(\cal O,D,U)}^{ra}$ denotes the following problem:

\noindent
{\bf Input:} A set of probabilistic atoms $E'$ such that (1) atoms in $E$ and $a\overline{c}$ refer only to constant values in $E'$ and $R$, and (2) $R \cup E'$ is a relaxed acyclic \problog{} program and $(\cal O,D,U)$ is a witness for the acyclicity of $R \cup E'$.

\noindent
{\bf Output:} The probability of $a(\overline{c})$ given evidence in $E$.
\end{problem}

\begin{proposition}\thlabel{theorem:complexity}
$\textsc{Inf}_{\Sigma,R,E,a(\overline{c}),(\cal O,D,U)}^{ra}$ is in \textsc{Ptime}. % in terms of data complexity.
\end{proposition}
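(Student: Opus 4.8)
The plan is to reduce the inference problem to belief propagation over the Bayesian Network $\mathit{bn}(p,(\cal O,D,U))$ produced by Algorithm~\ref{figure:acyclic:encoding:algorithm}, and then argue that every stage of this reduction runs in polynomial time in $|E'|$ while the rule set $R$, the templates, and the evidence/query are fixed. Concretely, given the input $E'$, I would first form the relaxed acyclic program $p = R \cup E'$, then construct $\mathit{bn}(p,(\cal O,D,U))$; by Proposition~\ref{theorem:correctness}, the distribution encoded by the network agrees with $\llbracket p \rrbracket$, so computing the conditional probability of $a(\overline{c})$ given the literals in $E$ reduces to conditioning the network on the corresponding partial assignment (encoding each evidence literal by fixing the value of the variable $X[\transf{b(\overline{v})}]$) and reading off the marginal of $X[\transf{a(\overline{c})}]$. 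Correctness of this reduction is immediate from Propositions~\ref{theorem:correctness} and~\ref{theorem:correctness:auxiliary:1}; the real content is the time bound.

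The key steps, in order, are: (1) establish that constructing the transformed program $p' = \alpha(\beta_{p,\cal D,U}(p))$ and its ground version $\mathit{gv}(p')$ takes polynomial time in $|E'|$ --- here I rely on the fact that relaxed grounding (Section~\ref{app:atklog:lite:extended}) only grounds positive literals and that, with $\mathit{rules}(p)$ fixed, $|\mathit{gv}(p')| \in O(|\mathit{edb}(p')|^{|\mathit{rules}(p')|})$, a polynomial of fixed degree; (2) invoke the size analysis already carried out in the ``Size of the encoding'' subsection, which gives $|\mathit{bn}(p,(\cal O,D,U))| \in O(|\mathit{rules}(p)|^2\cdot |\mathit{edb}(p)|^{|\mathit{rules}(p)|} \cdot \mathit{max}(2,|\mathit{rules}(p)|)^{2+|\mathit{rules}(p)|})$, which is polynomial in $|\mathit{edb}(p)|$ once $\mathit{rules}(p)$ is fixed; (3) observe that Algorithm~\ref{figure:acyclic:encoding:algorithm} itself runs in time polynomial in the size of its output, so the network is not only small but cheaply built; and (4) apply Proposition~\ref{theorem:bayesian:network:polytree}, which guarantees that $\mathit{bn}(p,(\cal O,D,U))$ is a forest of poly-trees, so that exact inference via belief propagation~\cite{koller2009probabilistic} runs in time linear in $|\mathit{bn}(p,(\cal O,D,U))|$ (each node's message computation is bounded by the size of its CPT, which is already accounted for in the size bound).

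The main obstacle I anticipate is the bounded-treewidth / poly-tree argument interacting correctly with the bounded \emph{domain size} of the CPT-like predicate nodes. Belief propagation is linear in the network size only when the per-node CPT cost is included in that size; since CPT-like predicates introduce variables whose domains have size $\mathit{max}(2,|\mathit{rules}(p)|)$ rather than $2$, I must be careful that this domain size stays \emph{fixed} (it depends only on $\mathit{rules}(p)$, not on $E'$) so that the exponent $2+|\mathit{rules}(p)|$ in the CPT-size bound is a constant. The cleanest way to handle this is to fold the CPT sizes into $|\mathit{bn}|$ exactly as the size analysis does, and then state belief propagation's cost as linear in $|\mathit{bn}|$; this sidesteps any need to re-derive domain bounds. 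A secondary subtlety is conditioning on evidence: I must confirm that fixing evidence variables does not destroy the poly-tree structure (it does not, since conditioning only removes uncertainty at observed nodes and leaves the graph topology unchanged) and that the set of states we condition on has nonzero probability, which is ensured because the $\mathit{pox}$-style check and the strict positivity of the attacker's initial distribution guarantee well-definedness of the conditional.

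Putting these together, the proof would conclude: the total running time is dominated by network construction plus one belief-propagation pass, both polynomial in $|\mathit{edb}(p)| = O(|E'|)$ with the degree determined solely by the fixed parameters $R$ and $(\cal O,D,U)$, whence $\textsc{Inf}_{\Sigma,R,E,a(\overline{c}),(\cal O,D,U)}^{ra} \in \textsc{Ptime}$.
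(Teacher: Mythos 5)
Your proposal is correct and follows essentially the same route as the paper's own proof: transform the program via $\alpha$ and $\beta$, compute the relaxed grounding, compile it into the Bayesian Network of Algorithm~\ref{figure:acyclic:encoding:algorithm}, and run belief propagation, bounding each stage polynomially in $|E'|$ with exponents determined only by the fixed rule set. Your added care about the CPT-like domain sizes and evidence conditioning is sound but already implicit in the paper's size analysis and poly-tree argument (\thref{theorem:bayesian:network:polytree}).
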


\begin{proof}
Let $\Sigma$ be a first-order signature $\Sigma$, $R$ be a fixed set of \problog{} rules over $\Sigma$, $E$ be a set of ground literals,  $a(\overline{c})$ be a ground atom, and  $(\cal O,D,U)$ be an ordering template, a disjointness template, and a uniqueness template.
We consider only inputs $E'$ such that $E' \cup R$ is a relaxed acyclic \problog{} program and  $(\cal O,D,U)$ is a witness for the acyclicity of $R \cup E'$.
The size of $E'$ is the sum of the sizes of all  atoms, where the size of an atom is its cardinality.

Let $e$ be the number of distinct constants occurring in $E' \cup R$ and $r$ be $|R|$.
Computing $\textsc{Inf}(\Sigma, E', p, E, a)$ can be done in the following steps:
\begin{compactenum}
\item Transform the original program $p$ into the program $p'$ (by applying the $\alpha$ and $\beta$ transformations).
\item Construct the ground version $g$ of the program $p'$.
\item Construct the Bayesian Network $\mathit{bn}(p,  (\cal O,D,U))$ from $g$.
\item Perform the inference on $\mathit{bn}(p, (\cal O,D,U))$.
\end{compactenum}
The first step can be performed in linear time in the size of $O(e + r)$.
The second step can be performed in $O(e^r)$ (because the grounding of $p'$ can be done in $O(e^{|\mathit{rules}(p')|})$ and $|\mathit{rules}(p')| \leq r$).
The third step can be performed in $O(r^4 \cdot e^{2 \cdot r} \cdot \mathit{max}(2,r)^{2+r})$ (constructing $N$ and $E$ can be done in $O(r^2 \cdot e^{2 \cdot r})$, whereas the $\mathit{cpt}$ can be constructed in $O(r^4 \cdot e^{r} \cdot \mathit{max}(2,r)^{2+r})$). 
The fourth step can be performed in $O(|\mathit{bn}(p)|)$ since $\mathit{bn}(p)$ is a forest of poly-trees (see \thref{theorem:bayesian:network:polytree}).
In particular, inference can be performed by (1) identifying the poly-tree that contains the atom $a(\overline{x})$ (in $O(|\mathit{bn}(p)|)$), (2) identifying the subset $E'' \subseteq E$ of all atoms in the poly-tree of $a(\overline{x})$ (again in $O(|\mathit{bn}(p)|)$), and (3) performing inference using belief-propagation (in $O(|\mathit{bn}(p)|)$~\cite{koller2009probabilistic}).
Therefore, the fourth step can be performed in $O(r^2 \cdot e^{r} \cdot \mathit{max}(2,r)^{2+r})$.
As a result, the answer to \textsc{Inf} can be computed in $O(r^2 \cdot e^{2 \cdot r} \cdot \mathit{max}(2,r)^{2+r})$.
Furthermore, since $r$ is fixed and the number of constants in $E'$ is $O(|E'|)$, there is a $k \in \mathbb{N}$ such that $e \in O(|E'| +k )$.
From this, it follows that there is a $k \in \mathbb{N}$ such that \textsc{Inf} can be computed in $O(|E'|^{k})$.
Hence, the complexity of $\textsc{Inf}_{\Sigma,R,E,a(\overline{c}),(\cal O,D,U)}^{ra}$ (and the data complexity of \textsc{Inf}) is \textsc{Ptime}.
\end{proof}

\textsc{Inf} for \problog{} is $\#P$-hard. 
This follows from the $\#P$-hardness of inference on arbitrary Bayesian networks (BNs)~\cite{koller2009probabilistic}, which can be encoded in \problog{}.
We now show that inference for \problog{} programs is $\#P$-hard even in terms of data complexity.
We do this using a reduction from the $\#\textsc{Sat}$ problem (counting the number of satisfying assignments of a propositional formula $\phi$), which is $\#P$-hard~\cite{pagourtzis2006complexity}.

\begin{proposition}
\textsc{Inf} is $\#P$-hard in terms of data complexity for \problog{} programs.
\end{proposition}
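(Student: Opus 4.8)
The plan is to establish $\#P$-hardness by a polynomial-time reduction from $\#\textsc{Sat}$, the problem of counting the satisfying assignments of a propositional CNF formula, which is $\#P$-hard. The essential feature of a \emph{data-complexity} lower bound is that the \problog{} rules must be fixed once and for all, independent of the input formula; only the probabilistic ground atoms and the extensional facts (and, indirectly, the domain $\mathbf{dom}$) may encode the instance. So first I would fix a single small rule set that evaluates an \emph{arbitrary} CNF formula whose syntactic structure is supplied as extensional facts, and then show that one call to \textsc{Inf} yields the count.

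Concretely, given $\phi = C_1 \wedge \cdots \wedge C_m$ over variables $x_1, \ldots, x_n$, I would encode its structure by the ground atoms $\mathit{clause}(j)$ for $1 \leq j \leq m$ together with $\mathit{poslit}(j,i)$ (resp.\ $\mathit{neglit}(j,i)$) whenever $C_j$ contains the literal $x_i$ (resp.\ $\neg x_i$). To model a uniformly random truth assignment I would add, for each variable, the probabilistic ground atom $\atom{\sfrac{1}{2}}{\mathit{true}(i)}$. These facts are the only instance-dependent parts of the program; the fixed rules evaluate satisfaction:
\begin{align*}
\mathit{clSat}(j) &\leftarrow \mathit{poslit}(j,i),\, \mathit{true}(i)\\
\mathit{clSat}(j) &\leftarrow \mathit{neglit}(j,i),\, \neg\mathit{true}(i)\\
\mathit{bad} &\leftarrow \mathit{clause}(j),\, \neg\mathit{clSat}(j)\\
\mathit{sat} &\leftarrow \neg\mathit{bad}
\end{align*}
and the query is the nullary atom $\mathit{sat}$ with empty evidence. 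Negation is applied only across the strata $\mathit{true} \prec \mathit{clSat} \prec \mathit{bad} \prec \mathit{sat}$, so there are no negative cycles and this is a legal \problog{} program.

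By \problog{}'s distribution semantics, the independent atoms $\mathit{true}(1), \ldots, \mathit{true}(n)$ induce exactly $2^n$ possible worlds, each of probability $2^{-n}$, in bijection with the truth assignments to $x_1, \ldots, x_n$; the well-founded model of such a world makes $\mathit{sat}$ true precisely when the corresponding assignment satisfies every clause, i.e.\ satisfies $\phi$. Hence $\llbracket p \rrbracket(\mathit{sat}) = \#\phi \cdot 2^{-n}$, and a single invocation of \textsc{Inf} followed by multiplication by $2^n$ recovers the integer $\#\phi$ in polynomial time. Since the rule set above never changes while only $\mathit{edb}(p)$ and $\mathbf{dom}$ grow with $\phi$, this is a polynomial-time (single-query Turing) reduction witnessing hardness for the \emph{data-complexity} version of \textsc{Inf}, which gives the claim.

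The one genuinely delicate step — and the step I would check most carefully — is confirming that the reduction respects the data-complexity restriction: the displayed rules must be literally identical for every $\phi$, so the entire formula, including its size and shape, must be pushed into the extensional facts and the domain rather than into the rules. Encoding an arbitrary-width CNF with finitely many fixed rules is exactly what the relations $\mathit{clause}$, $\mathit{poslit}$, and $\mathit{neglit}$ accomplish, with the fixed rules quantifying implicitly over clause and variable indices via the (growing) domain. A secondary, easier point is the probability computation itself, which reduces to verifying that the world/assignment correspondence is a probability-preserving bijection; this follows directly from the independence of the $\mathit{true}(i)$ atoms and their uniform $\sfrac{1}{2}$ weights.
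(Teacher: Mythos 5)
Your reduction is correct and follows the same strategy as the paper's proof: both reduce from $\#\textsc{Sat}$ by pushing the formula's structure into the extensional facts, modeling a uniformly random truth assignment with probability-$\sfrac{1}{2}$ ground atoms, evaluating satisfaction with a fixed rule set, and recovering the count as $2^n$ times the queried probability. The difference lies in the instance encoding. The paper encodes an \emph{arbitrary} propositional formula as a parse tree (relations $\mathit{conj}$, $\mathit{disj}$, $\mathit{neg}$, $\mathit{prop}$ over subformula identifiers) and evaluates it with a single recursive predicate, including the rule $\mathit{sat}(x) \leftarrow e(x), \mathit{neg}(x,y), \neg\mathit{sat}(y)$; you specialize to CNF, which permits the flat program with strata $\mathit{true} \prec \mathit{clSat} \prec \mathit{bad} \prec \mathit{sat}$. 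Your version buys two things: the paper's recursive rule has a negative cycle at the predicate level (it is only locally stratified, the grounding being acyclic because parse trees are), which sits somewhat uneasily with the paper's own restriction to programs without negative cycles in the rules, whereas your program is genuinely stratified \datalog{}; moreover your query atom $\mathit{sat}$ and empty evidence are literally identical across instances, so the reduction meets even the strictest reading of the data-complexity restriction (the paper's query $\mathit{sat}(\mathit{id}_\phi)$ nominally varies with the instance, though this is easily repaired by fixing the root identifier). The paper's encoding, in exchange, handles arbitrary formulas without appealing to CNF form; both arguments are valid since $\#\textsc{Sat}$ restricted to CNF is already $\#P$-hard.
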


\begin{proof}
We show this by reducing the $\#\textsc{Sat}$ problem to inference for a \problog{} program $p$ where (1) the formula can be encoded in $\mathit{edb}(p)$, and (2) the rules are fixed.
Let $\phi$ be a propositional formula.

\para{First-order Signature} 
Let $\Sigma$ be the signature containing the following predicate symbols:
\begin{compactitem}
\item $\mathit{e}$ of arity $1$, which is used to store the identifiers associated to all sub-expressions of $\phi$,
\item $\mathit{state}$ of arity $1$, which is used to store the propositions in the model, 
\item $\mathit{sat}$ of arity $1$, which is used to denote whether an expression is satisfiable or not,
\item $\mathit{conj}$ of arity $3$ which is used to encode conjunctions,
\item $\mathit{disj}$ of arity $3$ which is used to encode disjunctions, 
\item $\mathit{neg}$ of arity $2$ which is used to encode negations, and
\item $\mathit{prop}$ of arity $2$ used to encode propositions.
\end{compactitem}

\para{Rules}
We now define a fixed set of rules encoding the semantics of propositional logic.
\begin{align*}
\mathit{sat}(x) &\leftarrow e(x), \mathit{prop}(x,y), \mathit{state}(y)\\
\mathit{sat}(x) &\leftarrow e(x), \mathit{neg}(x,y),\neg \mathit{sat}(y)\\
\mathit{sat}(x) &\leftarrow e(x), \mathit{conj}(x,y,z), \mathit{sat}(y),\mathit{sat}(z)\\
\mathit{sat}(x) &\leftarrow e(x), \mathit{disj}(x,y,z), \mathit{sat}(y)\\
\mathit{sat}(x) &\leftarrow e(x), \mathit{disj}(x,y,z), \mathit{sat}(z)
\end{align*}

\para{Encoding a formula $\phi$}
Given a formula $\phi$, we define the following encoding using probabilistic ground atoms.
We first associate to each sub-formula $\psi$ of $\phi$ a unique identifier $\mathit{id}_\psi$.
We also associate a unique identifier $\mathit{id}_p$ to each propositional symbol $p$.
For each sub-formula $\psi \wedge \gamma$, the ground atoms $\mathit{e}(\mathit{id}_{\psi \wedge \gamma})$ and $\mathit{conj}(\mathit{id}_{\psi \wedge \gamma}, \mathit{id}_\psi, \mathit{id}_\gamma)$ are in $E'$.
For each sub-formula $\psi \vee \gamma$, the ground atoms $\mathit{e}(\mathit{id}_{\psi \vee \gamma})$ and $\mathit{disj}(\mathit{id}_{\psi \vee \gamma}, \mathit{id}_\psi, \mathit{id}_\gamma)$ are in $E'$.
For each sub-formula $\neg \psi$, the ground atoms $\mathit{e}(\mathit{id}_{\neg \psi})$ and $\mathit{neg}(\mathit{id}_{\neg \psi}, \mathit{id}_\psi)$ are in $E'$.
For each sub-formula $\psi$ such that $\psi$ is a propositional symbol $p$, the ground atoms $\mathit{e}(\mathit{id}_{\psi})$ and $\mathit{prop}(\mathit{id}_{\psi}, \mathit{id}_p)$ are in $E'$. 
Finally, for each propositional symbol $p$, there is a propositional atom $\atom{\sfrac{1}{2}}{\mathit{state}(\mathit{id}_p)}$ in $E'$.

\para{Reduction}
There are $2^{n(\phi)}$ grounded instances of $ R \cup E'$, where $n(\phi)$ is the number of predicate symbols in $\phi$, each one with probability $\sfrac{1}{2^{n(\phi)}}$.
The grounded instances represent all possible assignments of $\top$ and $\bot$ to the proposition symbols in $\phi$.
It is easy to see that $\mathit{sat}(\mathit{id}_\phi)$ can be derived in a grounded instance iff it represents a model for $\phi$.
Therefore, the probability $\llbracket R \cup E'\rrbracket(\mathit{id}_\phi)$ is $\sfrac{k}{2^{n(\phi)}}$, where $k$ is the number of models that satisfy $\phi$.
\end{proof}

Note that the encoding shown in the previous proof can be tweaked to work for acyclic \problog{} programs but only for propositional formulae without repetitions of propositional symbols.
We remark that the $\#\textsc{Sat}$ problem restricted to formulae without repetitions is no longer $\#P$-hard.
Indeed, it can be solved in \textsc{Ptime} as follows:
given a formula $\phi$ without repetitions, we can construct a poly-tree boolean Bayesian Network $BN$ encoding $\phi$ in $O(|\phi|)$, where the nodes associated to sub-formulae of the from $p$ have a uniform probability distribution (i.e., $p$ is $\top$ with probability $\sfrac{1}{2}$ and $\bot$ with probability $\sfrac{1}{2}$).
Then, the probability associated to the root of $\phi$ is going to be $\sfrac{k}{2^{n(\phi)}}$, where $k$ is the number of satisfying assignments.
Since the inference on $\mathit{BN}$ can be done in linear time in $|\mathit{BN}|$ and $|BN| \in O(|\phi|)$, then the whole problem is in \textsc{Ptime}.

\subsubsection{Expressiveness}

Here we show that any BN that consists of a forest of poly-trees can be represented as a relaxed acyclic program.

\begin{proposition}
Any BN that is a forest of poly-trees can be represented as a relaxed acyclic \problog{} program.
\end{proposition}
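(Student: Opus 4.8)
The plan is to give a constructive proof: given an arbitrary forest of poly-tree Bayesian Networks, I would build a relaxed acyclic \problog{} program whose semantics coincides with the BN's joint distribution, and then exhibit the templates witnessing acyclicity. The natural encoding introduces, for each BN node $X$ with domain $\{v_1, \ldots, v_k\}$, a predicate $\mathit{val}_X$ of some fixed arity so that the atom $\mathit{val}_X(\overline{c}, v)$ records that node $X$ takes value $v$. The conditional probability table of $X$ given its parents $Y_1, \ldots, Y_m$ is encoded by one annotated disjunction per joint parent-assignment: a rule whose body asserts $\mathit{val}_{Y_1}(\cdot, a_1), \ldots, \mathit{val}_{Y_m}(\cdot, a_m)$ and whose annotated-disjunction head distributes probability over the possible values of $\mathit{val}_X$ according to the CPT row for $(a_1, \ldots, a_m)$. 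Root nodes (no parents) become annotated disjunctions over ground atoms. The separate trees of the forest use disjoint predicate symbols, so they contribute independent, disconnected components.

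The key steps, in order, would be: first, fix the encoding precisely, including how to route the parent-value information so that each CPT row is a distinct rule sharing a common body prefix and a mutually exclusive switch suffix, i.e.\ a genuine annotated-disjunction set in the sense of the relaxed-acyclicity machinery in Appendix~\ref{app:atklog:lite:extended}. Second, I would verify that each $\mathit{val}_X$ predicate is $(K, {\cal D}, {\cal U})$-CPT-like: this requires checking that the value position is fixed-domain, that the rules defining $\mathit{val}_X$ are strongly connected for the uniqueness template (which holds because each node's identity is a primary key), that distinct rows of the same CPT are row-distinct, and that distinct parent-assignment blocks are pairwise disjoint or row-distinct. Third, I would apply the transformations $\alpha$ and $\beta_{p,{\cal D},{\cal U}}$ and argue that the transformed, collapsed program $\alpha(\beta_{p,{\cal D},{\cal U}}(p))$ is $({\cal O}, {\cal D}, {\cal U})$-acyclic by showing its dependency graph's undirected structure mirrors the poly-tree structure of the BN: since a poly-tree has no undirected cycles among its nodes, the induced dependency graph has no undirected cycles that fail to be guarded. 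Fourth, I would invoke \thref{theorem:correctness} to conclude that $\llbracket p \rrbracket$ equals the BN's distribution, completing the representation claim.

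The main obstacle will be the second and third steps — certifying CPT-likeness and guardedness rather than merely writing the encoding. The delicate point is that a naive encoding of a CPT as several rules sharing a head predicate $\mathit{val}_X$ can introduce undirected cycles in the ground graph through the shared parent atoms; the relaxed-acyclicity notion was designed precisely to absorb such structure via CPT-like collapsing, but one must check all the side conditions (fixed-domain positions, strong connectivity, and that the blocks for different parent-assignments are $(\mathit{pr}_1,\mathit{pr}_2)$-disjoint or $(\mathit{pr},K)$-row-distinct). I would handle this by arranging the parent-assignment predicates so that each block carries a literal fixing the parent values, making different blocks row-distinct by construction, which discharges the pairwise condition uniformly. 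A secondary subtlety is the choice of ${\cal U}$: I must ensure each node-identifier carries a uniqueness annotation so that the rules are strongly connected; in the simplest encoding where each node is a single nullary-identified relation this is immediate, but for a parametrised family of poly-trees one has to argue the keys propagate, which is where \thref{theorem:join:tree:1} and \thref{theorem:join:tree:2} do the work behind the acyclicity check.
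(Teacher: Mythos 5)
Your construction follows the same route as the paper's own proof: probabilistic atoms/annotated disjunctions for root nodes, one block of switch-guarded rules per joint parent assignment for internal nodes, collapse via the $\beta$ (CPT-like) and $\alpha$ (domination/kernel) transformations, and acyclicity of the transformed program read off from the poly-tree shape of the BN. The first genuine gap is in how you get the collapse. The paper's encoding pads \emph{every} rule of the $i$-th parent-assignment block with \emph{all} earlier blocks' switch atoms, enumerated over all $2^{i-1}$ sign patterns (the literals $\mathit{sw}(\overline{l}_1),\ldots,\mathit{sw}(\overline{l}_{2^{i-1}})$ together with $\mathit{sw}_{X,\overline{v}_i}$). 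This padding is not cosmetic: it is exactly what makes the bodies of different blocks predicate-wise comparable under the domination relation $\sqsubseteq$ after $\beta$ has stripped the value positions, so that $\alpha$'s kernel operation merges all blocks into a \emph{single} rule per BN node, and the dependency graph of $\alpha(\beta(p))$ mirrors the poly-tree with only degenerate two-edge undirected cycles, which are guarded because the variable-free rules are trivially strongly and weakly connected. In your encoding the blocks for different parent assignments remain distinct rules after the transformations; the dependency graph then has parallel edges from each parent into the child, and the resulting undirected unsafe structures have $D_1 \neq D_2$, whose guardedness would require derivable disjointness annotations relating the blocks' switch predicates --- which cannot be derived, since those switch predicates are nullary and each contains the empty tuple. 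Your proposed fix (making blocks ``row-distinct by construction'') addresses a clause of the CPT-likeness definition used by $\beta$, not the domination structure that $\alpha$ needs; moreover, row-distinctness as defined compares occurrences of the \emph{head} predicate inside the blocks' common bodies, and your bodies contain no such occurrences, so that clause is not even met.

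The second gap is your step four. \thref{theorem:correctness} goes in the wrong direction for your purpose: it states that for a relaxed acyclic program $p$, the Bayesian Network produced \emph{from} $p$ by the compilation algorithm (Algorithm~\ref{figure:acyclic:encoding:algorithm}) induces the same distribution as $p$. That compiled network has nodes of the form $X[a]$, $X[r,a]$, $X[r,\mathit{gr},a]$ plus auxiliary tree nodes; it is not the BN you started from, so the theorem does not let you conclude that $\llbracket p \rrbracket$ equals the \emph{original} BN's joint distribution, which is what ``represented'' means in the proposition. The paper instead argues faithfulness of the encoding directly: the probability that the ground literal encoding $X = v$ is derivable equals the CPT entry, the key observation being that the padded earlier-switch literals never affect derivability because a rule exists for every sign pattern, so each block fires exactly according to its parent assignment and its own switch/annotated disjunction. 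This argument is short, but it is a required part of the proof and the theorem you cite cannot substitute for it.
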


\begin{proof}
\newcommand{\id}{\mathit{id}}
\newcommand{\cpt}{\mathit{cpt}}
\newcommand{\sw}{\mathit{sw}}

Let $\bn$ be a BN that is a forest of poly-trees.
We now construct the corresponding relaxed acyclic \problog{} program.
In particular, for each random variable $X$ in $\bn$, we show how to equivalently encode it as \problog{} rules.
Without loss of generality, we assume there is a unique mapping $\id$ from random variables in $\bn$ to predicate symbols identifiers.
With a slight abuse of notation, we use $X$ to refer both to the random variable and to the corresponding symbol $\id(X)$.

\para{Boolean Random Variables without parents}
For each boolean random variable $X$ such that (1) $p(X) = \emptyset$, and (2) $\cpt(X) = \{\top \mapsto v, \bot \mapsto (1-v)\}$, we introduce a probabilistic atom $\atom{v}{X}$.

\para{Non-Boolean Random Variables without parents}
For each non-boolean random variable $X$ with domain $\{ q_1, \ldots, q_n\}$ such that (1) $p(X) = \emptyset$, and (2) $\cpt(X) = \{q_1 \mapsto v_1, \ldots, q_n \mapsto v_n\}$, we introduce an annotated disjunction $\atom{v_1}{X(q_1)}; \ldots; \atom{v_n}{X(q_n)}$.

\para{Boolean Random Variables with parents}
Let $X$ be a boolean random variable $X$ with parents $p(X) = \{Y_1, \ldots, Y_n\}$, $\cpt(X)$ be the function associated with $X$, and $\overline{v}_1, \ldots, \overline{v}_m$ be all possible assignments to the variables in $p(X)$.
We introduce $m$ fresh predicate symbols $\sw_{X,\overline{v}_1}, \ldots, \sw_{X,\overline{v}_m}$.
For each $\overline{v}_i$, we introduce the probabilistic atom $\atom{c_i}{\sw_{X,\overline{v}_i}}$, where $\cpt(X)({\overline{v}_i},\top) = c_i$.
Finally, for each $\overline{v}_i$, we also introduce the following rules:
\begin{align*}
X &\leftarrow Y_1(\overline{v}_i(1)), \ldots, Y_n(\overline{v}_i(n)),\sw(\overline{l}_1), \sw_{X,\overline{v}_i}\\
&\vdots\\
X &\leftarrow Y_1(\overline{v}_i(1)), \ldots, Y_n(\overline{v}_i(n)),\sw(\overline{l}_{2^{i-1}}), \sw_{X,\overline{v}_i}
\end{align*}
where $Y_k(\top) = Y_k$, $Y_k(\bot) = \neg Y_k$, and $Y_1(v) = Y_1(v)$ if $v \not\in \{\top,\bot\}$, $\overline{l}_1, \ldots, \overline{l}_{2^{i-1}}$ are all possible values in $\{\top,\bot\}^{i-1}$, and $\sw(l_1, \ldots, l_{i-1})$ is the list of literals $\sw_{X,\overline{v}_1}(l_1), \ldots, \sw_{X,\overline{v}_{i-1}}(i-1)$.

\para{Non-Boolean Random Variables with parents}
Let $X$ be a non-boolean random variable $X$ with domain $\{q_1, \ldots, q_m\}$ and parents $p(X) = \{Y_1, \ldots, Y_n\}$, $\cpt(X)$ be the function associated with $X$, and $\overline{v}_1, \ldots, \overline{v}_m$ be all possible assignments to the variables in $p(X)$.
We introduce $m$ fresh predicate symbols $\sw_{X,\overline{v}_1}, \ldots, \sw_{X,\overline{v}_m}$.
For each $\overline{v}_i$, we introduce the probabilistic atom $\atom{c_i}{\sw_{X,\overline{v}_i}}$, where $\cpt(X)({\overline{v}_i},\top) = c_i$.
Finally, for each $\overline{v}_i$, we also introduce the following rules:
\begin{align*}
\atom{p_1^i}{X(q_1)}; \ldots; \atom{p_m^i}{X(q_m)} &\leftarrow Y_1(\overline{v}_i(1)), \ldots, Y_n(\overline{v}_i(n)),\\
& \qquad \sw(\overline{l}_1), \sw_{X,\overline{v}_i}\\
&\vdots\\
\atom{p_1^i}{X(q_1)}; \ldots; \atom{p_m^i}{X(q_m)} &\leftarrow Y_1(\overline{v}_i(1)), \ldots, Y_n(\overline{v}_i(n)),\\
& \qquad \sw(\overline{l}_{2^{i-1}}), \sw_{X,\overline{v}_i}
\end{align*}
where $Y_k(\top) = Y_k$, $Y_k(\bot) = \neg Y_k$, and $Y_1(v) = Y_1(v)$ if $v \not\in \{\top,\bot\}$, $\overline{l}_1, \ldots, \overline{l}_{2^{i-1}}$ are all possible values in $\{\top,\bot\}^{i-1}$, $\sw(l_1, \ldots, l_{i-1})$ is the list of literals $\sw_{X,\overline{v}_1}(l_1), \ldots, \sw_{X,\overline{v}_{i-1}}(i-1)$, and $\cpt(\overline{v}_i,q_j) = p_j^i$.

\para{Correctness of the encoding}
The encoding presented above encodes the CPTs for the random variables.
The probability that a random variable $X$ has value $v$ is exactly the same as the probability associated with the ground literal $X(v)$ (with the notation defined above).
The encoding for variables without parents directly encodes the corresponding CPTs.
For variables with parents, the only non-standard part is the use of $\sw(\overline{l}_{2^{i-1}}), \sw_{X,\overline{v}_i}$.
Note, however, that the literals in  $\sw(\overline{l}_{2^{i-1}})$ do not influence the derivation since there is a rule for any possible values for them.
We need them only for the encoding.
Therefore, also the encoding of random variables with parents directly encodes the corresponding CPTs.

\para{Relaxed Acyclicity}
Let $p$ be the program produced by the above construction.
It is easy to see that all predicates associated with non-boolean random variables are CPT-like (as we just encoded their CPTs).
Therefore, the program $\beta(p)$ is obtained by removing all constant values associated with the domains of non-boolean random values.
Finally, the program $\alpha(\beta(p))$ collapses the rules for random variables with parents into a single rule (this follows from our construction and the use of $\sw(\overline{l}_{2^{i-1}}), \sw_{X,\overline{v}_i}$ in the rules' bodies).
The acyclicity of  $p'= \alpha(\beta(p))$ follows from 
(1) $p'$ does not contain free-variables (i.e., both the literals and the ground atoms are only propositional facts),
(2) $\bn$ is a forest of poly-trees, and
(3) each predicate symbol $\sw_{X,\overline{v}_i}$ occurs only in the rule of $X$.
From (1), it follows that all rules are both strongly and weakly connected.
From (2) and (3), it follows that (a) there are no directed cycles in $\graph(p)$, and (b) for all undirected cycles $U$ in $\graph(p)$, there are $U',U''$ such that $U$ is equivalent to $U' \concat U''$ and $U'$ is $P' \xleftarrow{r,i} P \xrightarrow{r,i} P'$ for some $P, P', r, i$.
Since all rules are both strongly and weakly connected, it directly follows that the undirected unsafe structure $\langle P \xrightarrow{r,i} P', P \xrightarrow{r,i} P', P', U''\rangle$ is guarded.
Therefore, $p'$ is acyclic and $p$ is a relaxed acyclic \problog{} program.
\end{proof}

\clearpage

\section{ANGERONA}\label{app:enforcement:mechanism:proofs}
Here, we present additional details about \tool{}.
We also prove its security, complexity, and completeness.

\subsection{Checking Query Security}

In the following, let $D = \langle \Sigma, \mathbf{dom}\rangle$ be a database schema. 
Without loss of generality, we focus only on relational calculus formulae $\phi$ where  no distinct pair of quantifiers binds the same variable.

\para{Normal Form}
We say that a formula $\psi \wedge \neg\gamma$ is \emph{guarded} iff $\mathit{free}(\gamma) \subseteq \mathit{free}(\psi)$.
We say that a relational calculus formula $\phi$ is in \emph{Normal Form} (NF) iff 
(1) $\phi$ uses only existential quantifiers,
(2) negation is used only in sub-formulae of the form $\psi \wedge \neg\gamma$ and it is always guarded, 
(3) for any sub-formula $\psi \vee \gamma$, $\mathit{free}(\psi) = \mathit{free}(\gamma)$, 
(4) no distinct pair of quantifiers binds the same variable, and 
(5) there are no equality and inequality constraints.
Most of the time, domain-independent relational calculus formulae can be easily written in NF by just re-arranging sub-formulae.
We remark that any domain-independent relational calculus formula can be written in NF by 
(1) extending the database schema with two relations $\mathit{eq}$ and $\mathit{neq}$ encoding $=$ and $\neq$ among constants in $\mathbf{dom}$ (this is always possible because $\mathbf{dom}$ is finite),
(2) renaming the quantified variables in a unique way,
(3) replacing all universally quantified sub-formula $\forall x. \phi_e \rightarrow \psi$ with the equivalent existentially quantified version $\neg \exists x.\, \phi_e \wedge \neg \psi$,
(4) replacing each negated sub-formula $\neg \psi$ with the equivalent sub-formula $(\bigwedge_{x \in \mathit{free}(\psi)} \mathit{adom}(x)) \wedge \neg \psi$, where $\mathit{adom}(x)$ is $\bigvee_{R \in D} \bigvee_{1 \leq i \leq |R|} \exists x_1, \ldots, x_{i-1}, x_{i+1}, \ldots, x_{|R|}.\ R(x_1, \ldots, x_{i-1}, x, x_{i+1}, \ldots, x_{|R|}) $, and
(5) replacing sub-formulae of the form $\psi \vee \gamma$ with the equivalent formula $(\bigwedge_{x \in \mathit{free}(\gamma) \setminus \mathit{free}(\psi)} \mathit{adom}(x) \wedge \psi) \vee (\bigwedge_{x \in \mathit{free}(\psi) \setminus \mathit{free}(\gamma)} \mathit{adom}(x) \wedge \gamma)$.
Note that the resulting NF formula is equivalent to the original one (because the rewriting does not modify the formula's semantics).
Therefore, in the following we consider only NF relational calculus formulae.

\para{From Relational Calculus to Logic Programming}
Let $\phi$ be a NF relational calculus sentence.
We denote by $\mathit{sub}(\phi)$ the sequence $\phi_1, \ldots, \phi_n$ of all $\phi$'s sub-formulae ordered from the smallest to the largest, i.e., $\phi_n = \phi$, and by $\mathit{DIsub}(\phi)$ the sequence obtained by removing from the sequence $\mathit{sub}(\phi)$ (1)  all formulae of the form $\neg \psi$ , and (2) removing all sub-formulae consisting just of equality and inequality terms.
Since $\phi$ is in NF, all negated sub-formulae in $\phi$ appear only in NF in the sequence $\mathit{DIsub}(\phi)$.

The function $\mathit{PL}(\phi)$ encodes any NF relational calculus sentence as a set of equivalent logic programming rules.
We associate a unique predicate symbol $H_i$ and a set of rules $r(H_i)$  to each $\psi_i$ in $\mathit{DIsub}(\phi)$ as follows:
\begin{compactitem}
\item If $\psi_i := R(\overline{x}_i)$, for some $R \in \Sigma$, then $r(H_i)$ is $\{H_i(\overline{x}_i) \leftarrow R(\overline{x}_i)\}$, where $\overline{x}_i$ are $\psi_i$'s free variables.
\item If $\psi_i := \psi_j \wedge \neg \psi_j$, then $r(H_i)$ is $\{H_i(\overline{x}_i) \leftarrow H_j(\overline{x}_j), \neg H_k(\overline{x}_k)\}$, where $\overline{x}_i$ are $\psi_i$'s free variables, $\overline{x}_j$ are $\psi_j$'s free variables, and $\overline{x}_k$ are $\psi_k$'s free variables (note that $\overline{x}_i = \overline{x}_j$ and $\overline{x}_k \subseteq \overline{x}_j$).
\item If $\psi_i = \psi_j \wedge \psi_k$, then $r(H_i)$ is $\{H_i(\overline{x}_i) \leftarrow H_j(\overline{x}_j), H_k(\overline{x}_k)\}$, where $\overline{x}_i$ are $\psi_i$'s free variables, $\overline{x}_j$ are $\psi_j$'s free variables, and $\overline{x}_k$ are $\psi_k$'s free variables.
\item If $\psi_i = \psi_j \vee \psi_k$, then $r(H_i)$ contains the rules $H_i(\overline{x}_i) \leftarrow H_j(\overline{x}_i)$ and $H_i(\overline{x}_i) \leftarrow  H_k(\overline{x}_i)$ (note that $\mathit{free}(\psi_i) = \mathit{free}(\psi_j) = \mathit{free}(\psi_k)$).
\item If $\psi_i = \exists x.\, \psi_j$, then $r(H_i)$ is $\{H_i(\overline{x}_i) \leftarrow H_j(\overline{x}_j), c_1, \ldots, c_n\}$, where $\overline{x}_i$ are $\psi_i$'s free variables and $\overline{x}_j$ are $\psi_j$'s free variables (i.e., those in $\overline{x}_i$ and $x$).
\end{compactitem}
Furthermore, we denote $\mathit{head}(\phi)$ the predicate symbol associated to $\psi_m$.

\para{Checking Query Security}
\tool{}'s  enforcement algorithm is shown in Algorithm~\ref{figure:angerona:algorithm} (repeated from \S\ref{sect:enforcement}).

\begin{algorithm}[tp]
\DecMargin{10em}
\DontPrintSemicolon
\KwIn{A system configuration $C$, a system state $s = \langle \mathit{db}, U, P \rangle$, a history $h$, a $C$-\atklog{} model $\mathit{ATK}$, and an action $\langle u,q\rangle$.}
\KwOut{The security decision in $\{\top,\bot\}$.}

\SetKwProg{Fn}{function}{}{}
\SetKwIF{If}{ElseIf}{Else}{if}{}{else if}{else}{endif}
\Begin{
  \For{$\langle u, \psi, l\rangle \in \mathit{secrets}(P,u)$}
  	{
    	  \If{$\mathit{secure}(C, \mathit{ATK}, h, \langle u, \psi, l\rangle)$}
  	  {
		\If{$\mathit{pox}(C, \mathit{ATK}, h, \langle u, q \rangle, \top)$}{
			$h' :=  h \cdot \langle \langle u,q\rangle, \top, \top\rangle $\;
			\If{$\neg \mathit{secure}(C, \mathit{ATK}, h', \langle u, \psi, l\rangle)$}
  	  		{
  	  			\Return{$\bot$}\;
  	  		}
		}  	  
  	  	\If{$\mathit{pox}(C, \mathit{ATK}, h, \langle u,  q \rangle, \bot)$}{
			$h' :=  h \cdot \langle \langle u,q\rangle, \top, \bot \rangle$\;
			\If{$\neg \mathit{secure}(C, \mathit{ATK}, h', \langle u, \psi, l\rangle)$}
  	  		{
  	  			\Return{$\bot$}\;
  	  		}
		}
  	  }
  	}
  \Return{$\top$}
}
\;
%\;
\Fn{$\mathit{secure}(\langle D, \Gamma \rangle, \mathit{ATK},   h, \langle u, \psi, l\rangle)$}{
	%${\cal K} :=	 \llbracket K \rrbracket_{u'}(r)$\;
	%${\cal K} := \mathit{knowledge}(h,u)$\;
		$p := \mathit{ATK}(u)$\;
		%\For{$\phi \in {\cal K}$}
		\For{$\phi \in \mathit{knowledge}(h,u)$}
		{
			$p := p \cup \mathit{PL}(\phi) \cup \{\mathit{evidence}(\mathit{head}(\phi), \mathtt{true})\}$\;
		}
		$p := p \cup \mathit{PL}(\psi)$\;
		\Return{$\llbracket p\rrbracket_D(\mathit{head}(\psi)) < l$}	
%	}
} 
\;
\Fn{$\mathit{pox}(\langle D, \Gamma \rangle, \mathit{ATK},   h, \langle u,\psi \rangle, v)$}{
	%${\cal K} := \mathit{knowledge}(h,u)$\;
		$p := \mathit{ATK}(u)$\;
	%\For{$\phi \in {\cal K}$}
	\For{$\phi \in \mathit{knowledge}(h,u)$}
	{
		$p := p \cup \mathit{PL}(\phi) \cup \{\mathit{evidence}(\mathit{head}(\phi), \mathtt{true})\}$\;
	}
	\If{$v = \top$}{
		$p := p \cup \mathit{PL}(\psi)$\;
		\Return{$\llbracket p\rrbracket_D(\mathit{head}(\psi)) > 0$}
	}
	\Else
	{
		$p := p \cup \mathit{PL}(\neg \psi)$\;
		\Return{$\llbracket p\rrbracket_D(\mathit{head}(\neg \psi)) > 0$}	
	}	
} 
\caption{\tool{} Enforcement Algorithm (repeated from \S\ref{sect:enforcement}).}
\label{figure:angerona:algorithm}
\end{algorithm}

\subsection{Security Proof}

Let $C = \langle D, \Gamma\rangle$ be a system configuration and $h$ be a $C$-history. %, $\langle \mathit{db}, U, P\rangle$ be a $C$-state, and $p$ be a $D$-knowledge program.
The set $\mathit{knowledge}(h,u)$ is $\{ \phi \mid \exists i.\, h(i) = \langle \langle u, \phi \rangle, \top, \top \rangle \} \cup \{ \neg \phi \mid \exists i.\, h(i) = \langle \langle u, \phi \rangle, \top, \bot \rangle \}$.

We first prove that \tool{}'s result depends just on the queries in the history, and not on the actual database's state. 

\begin{proposition}\thlabel{theorem:angerona:same:result:knowledge}
Let $C = \langle D, \Gamma\rangle$ be a system configuration,  $\mathit{ATK}$ be a $C$-\atklog{} model, and $\langle u,q\rangle$ be a $C$-query.   
$\tool{}(C,s,h,\mathit{ATK},\langle u,q\rangle) = \tool{}(C,s',h',\mathit{ATK},\langle u,q\rangle)$ whenever $\mathit{knowledge}(h,u) = \mathit{knowledge}(h',u)$.
\end{proposition}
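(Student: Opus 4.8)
The plan is to prove the statement by a direct structural analysis of the enforcement algorithm (Algorithm~\ref{figure:angerona:algorithm}), showing that $\tool{}$ reads the state $s = \langle \mathit{db}, U, P\rangle$ only through the set $\mathit{secrets}(P,u)$ and reads the history $h$ only through the set $\mathit{knowledge}(h,u)$. In particular, the database component $\mathit{db}$ and the user set $U$ are never inspected, so $\tool{}$'s output factors as a function of $(C, \mathit{ATK}, \langle u,q\rangle, \mathit{secrets}(P,u), \mathit{knowledge}(h,u))$. Once this factorization is established, fixing $C$, $\mathit{ATK}$, $\langle u,q\rangle$ and assuming $\mathit{knowledge}(h,u) = \mathit{knowledge}(h',u)$ (together with the fact that the two states induce the same secrets for $u$, i.e.\ $\mathit{secrets}(P,u) = \mathit{secrets}(P',u)$) forces the two invocations to follow identical control flow and return the same decision.

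The first key step is a compositionality lemma for $\mathit{knowledge}$: directly from its definition as an existential over the positions of a history, $\mathit{knowledge}(h_1 \cdot h_2, u) = \mathit{knowledge}(h_1,u) \cup \mathit{knowledge}(h_2,u)$ for all histories $h_1, h_2$. Specializing to a single appended event gives $\mathit{knowledge}(h \cdot \langle \langle u,q\rangle, \top, \top\rangle, u) = \mathit{knowledge}(h,u) \cup \{q\}$ and $\mathit{knowledge}(h \cdot \langle \langle u,q\rangle, \top, \bot\rangle, u) = \mathit{knowledge}(h,u) \cup \{\neg q\}$. This matters because the main loop internally extends the current history with exactly these two syntactically fixed events before recursively calling its subroutines; the lemma guarantees that the knowledge sets of these extensions are determined by $\mathit{knowledge}(h,u)$ alone, and are therefore equal for $h$ and $h'$ whenever the original knowledge sets coincide.

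The second step is to show that the subroutines $\mathit{secure}$ and $\mathit{pox}$ depend on their history argument only through its $u$-knowledge. By inspection of their pseudocode, each assembles a \problog{} program $p$ by starting from $\mathit{ATK}(u)$ and taking the set-union, over every $\phi \in \mathit{knowledge}(h,u)$, of the rules $\mathit{PL}(\phi)$ and the evidence statement $\mathit{evidence}(\mathit{head}(\phi), \mathtt{true})$, then adjoining $\mathit{PL}(\psi)$ (respectively $\mathit{PL}(\neg\psi)$), and finally testing a threshold on $\llbracket p \rrbracket_D$. Since $p$ is built by unions over the \emph{set} $\mathit{knowledge}(h,u)$, it is independent of the iteration order and of every aspect of $h$ other than this set; moreover neither subroutine is passed $s$ or $\mathit{db}$. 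Hence both return values are functions of $(C, \mathit{ATK}, \mathit{knowledge}(h,u), \psi, l)$ (and, for $\mathit{pox}$, the boolean $v$) only, so equality of the assembled programs already forces equality of the returned booleans without any further probabilistic reasoning.

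Combining these, the main loop ranges over $\mathit{secrets}(P,u)$ and, for each secret, issues calls to $\mathit{secure}$ and $\mathit{pox}$ on $h$ and on its two fixed extensions; by the two steps above every such call yields the same answer under $(s,h)$ as under $(s',h')$, so the branch taken at every conditional---and thus the returned value---is identical. The only remaining dependence on the state is the enumeration set $\mathit{secrets}(P,u)$, which coincides for the two states, so the decisions agree. I expect the main obstacle to be purely a matter of careful bookkeeping: tracking the internally appended events through the compositionality lemma and being explicit that the program construction in the subroutines is order-insensitive (a union of sets), rather than any deep argument.
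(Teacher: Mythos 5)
Your proof is correct and takes essentially the same route as the paper's: both arguments reduce to the observation that \tool{}'s decision is a function only of $C$, $\mathit{ATK}$, $\langle u,q\rangle$, $\mathit{secrets}(P,u)$, and $\mathit{knowledge}(h,u)$, because $\mathit{secure}$ and $\mathit{pox}$ assemble their \problog{} programs as set-unions over the knowledge set (the paper merely wraps this observation in a proof by contradiction). You are in fact more explicit than the paper on two points it glosses over: the knowledge sets of the internally appended events (your compositionality lemma for $\mathit{knowledge}$) and the hidden assumption $\mathit{secrets}(P,u)=\mathit{secrets}(P',u)$, which the proposition's statement omits but which holds in every place the paper invokes the result.
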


\begin{proof}
Let $C = \langle D, \Gamma\rangle$ be a system configuration, $\mathit{ATK}$ be an \atklog{} model, and $\langle u,q\rangle$ be a $C$-query.   
Furthermore, let $s,s'$ be $C$-states and $h,h'$ be $C$-histories such that  $\mathit{knowledge}(h,u) = \mathit{knowledge}(h',u)$.
Assume, for contradiction's sake, that $\tool{}(C,s,h,\mathit{ATK},\langle u,q\rangle) \neq \tool{}(C,s',h',\mathit{ATK},\langle u,q\rangle)$.
Without loss of generality, assume that $\tool{}(C,s,h,\mathit{ATK},\langle u,q\rangle) = \top$ and $\tool{}(C,s',h',\mathit{ATK},\langle u,q\rangle) = \bot$.
This happens iff either the result of $\mathit{pox}$ or $\mathit{secure}$ is different in the two cases.
The results of $\mathit{pox}$ and $\mathit{secure}$, however, depend just on the attacker's initial beliefs, the query $q$, and the set of formulae derive $\mathit{knowledge}(h,u)$.
From this and $\mathit{knowledge}(h,u) = \mathit{knowledge}(h',u)$, it follows that $\tool{}(C,s,h,\mathit{ATK},\langle u,q\rangle) = \tool{}(C,s',h',\mathit{ATK},\langle u,q\rangle)$.
\end{proof}

We now prove a key result for our security proof, namely that considering only the sentences in $\mathit{knowledge}(h,u)$ is enough to determine secrecy-preservation.

\begin{proposition}\thlabel{theorem:knowledge:is:enough}
Let $C = \langle D, \Gamma\rangle$ be a system configuration,  $\mathit{ATK}$ be a $C$-\atklog{} model, $f$ be the $C$-\acf{} obtained by parametrizing Algorithm~\ref{figure:angerona:algorithm} with $C$ and $\mathit{ATK}$, $u \in {\cal U}$ be a user, and $r = \langle s,h \rangle$ be a $(C,f)$-run.   
The following fact holds: $\llbracket r \rrbracket_{\sim_u} = \{\mathit{db} \in \Omega_D^\Gamma \mid \bigwedge_{\phi \in \mathit{knowledge}(h,u)} [\phi]^{\mathit{db}} = \top\}$. 
\end{proposition}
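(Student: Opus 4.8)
The plan is to prove the set equality by double inclusion, abbreviating $r = \langle \langle \mathit{db}, U, P\rangle, h\rangle$ and $K = \{\mathit{db}' \in \Omega_D^\Gamma \mid \bigwedge_{\phi \in \mathit{knowledge}(h,u)} [\phi]^{\mathit{db}'} = \top\}$. The only ingredients I will need are the definition of indistinguishability ($r \sim_u r'$ iff $h|_u = h'|_u$), the compatibility conditions for runs (in particular that an authorized event $\langle \langle u,\psi\rangle, \top, \mathit{res}\rangle$ in a run with database $\mathit{db}'$ forces $\mathit{res} = [\psi]^{\mathit{db}'}$, and a denied event forces $\mathit{res} = \dagger$), the observation that $\mathit{knowledge}(h,u)$ reads off only authorized $u$-events so that $\mathit{knowledge}(h,u) = \mathit{knowledge}(h|_u, u)$, and \thref{theorem:angerona:same:result:knowledge}, which says that $\tool{}$'s decision depends on a history only through $\mathit{knowledge}$.

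For the inclusion $\llbracket r \rrbracket_{\sim_u} \subseteq K$ I would take any $\mathit{db}'$ witnessed by a run $r' = \langle \langle \mathit{db}', U', P'\rangle, h'\rangle \in [r]_{\sim_u}$, so $h'|_u = h|_u$ and hence $\mathit{knowledge}(h',u) = \mathit{knowledge}(h,u)$. Each $\phi \in \mathit{knowledge}(h,u)$ arises from an authorized $u$-event of $h'$: if $\phi$ is a positive sentence it comes from an event with result $\top$, and compatibility of $r'$ gives $[\phi]^{\mathit{db}'} = \top$; if it is $\neg\psi$ it comes from an event with result $\bot$, and compatibility gives $[\psi]^{\mathit{db}'} = \bot$, i.e. $[\neg\psi]^{\mathit{db}'} = \top$. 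Since $\mathit{db}' \in \Omega_D^\Gamma$ by the definition of a $C$-system state, this places $\mathit{db}'$ in $K$.

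The substantive direction is $K \subseteq \llbracket r \rrbracket_{\sim_u}$, and this is where I expect the real work. Given $\mathit{db}' \in K$, I would exhibit the concrete run $r' = \langle \langle \mathit{db}', U, P\rangle, h|_u\rangle$, reusing $U$ and $P$ from $r$; since $(h|_u)|_u = h|_u$, we get $r' \sim_u r$ as soon as $r'$ is shown to be a genuine $(C,f)$-run. The crux is verifying that $h|_u$ is compatible with $\langle \mathit{db}', U, P\rangle$ and $f$, which I would check event by event. Writing $e_j = \langle \langle u, \psi_j\rangle, a_j, \mathit{res}_j\rangle$ for the $j$-th event of $h|_u$ and $i_j$ for its position in $h$, the key combinatorial identity is $(h|_u)^{j-1} = (h^{i_j-1})|_u$, so the prefix seen before $e_j$ carries the same $\mathit{knowledge}$ in both histories. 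Applying \thref{theorem:angerona:same:result:knowledge} then equates $f$'s decision on $(h|_u)^{j-1}$ from state $\langle \mathit{db}', U, P\rangle$ with its decision on $h^{i_j-1}$ from $r$'s state, which by compatibility of $r$ equals $a_j$; this discharges the decision condition. The result condition is pointwise: $a_j = \bot$ forces $\mathit{res}_j = \dagger$ by compatibility of $r$, while $a_j = \top$ with $\mathit{res}_j = \top$ (resp. $\bot$) puts $\psi_j$ (resp. $\neg\psi_j$) into $\mathit{knowledge}(h,u)$, so $\mathit{db}' \in K$ yields $[\psi_j]^{\mathit{db}'} = \mathit{res}_j$ as required.

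The main obstacle will be the correct treatment of $u$'s \emph{denied} events inside $h|_u$: the projection retains them, yet they contribute nothing to $\mathit{knowledge}$, so I must confirm that it is the decision condition, not any result condition, that keeps them compatible. This is exactly where \thref{theorem:angerona:same:result:knowledge} does the load-bearing work, by making $f$'s denial independent of the (differing) database state; verifying the prefix identity $(h|_u)^{j-1} = (h^{i_j-1})|_u$ cleanly so that the hypothesis of that proposition applies at every step is the technical heart of the argument.
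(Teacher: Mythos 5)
Your proposal is correct and follows essentially the same route as the paper's proof: both directions use the same witness construction $\langle \langle \mathit{db}', U, P\rangle, h|_u\rangle$, the same appeal to \thref{theorem:angerona:same:result:knowledge} to discharge the decision condition, and the same case analysis on authorized/denied events for the result condition. The only cosmetic difference is that the paper verifies compatibility by induction on prefixes $h^i$ of the full history, whereas you index the events of $h|_u$ directly and relate them back to $h$ via the prefix identity $(h|_u)^{j-1} = (h^{i_j-1})|_u$ — the same argument in different bookkeeping.
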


\begin{proof}

\para{$(\Rightarrow)$} 
Let $C = \langle D, \Gamma\rangle$ be a system configuration,  $\mathit{ATK}$ be an \atklog{} model, $f$ be the $C$-\acf{} obtained by parametrizing Algorithm~\ref{figure:angerona:algorithm} with $C$ and $\mathit{ATK}$, $u \in {\cal U}$ be a user,  and $r = \langle \langle \mathit{db},U,P \rangle,h \rangle$ be a $(C,f)$-run. 
Furthermore, let $r'  = \langle \langle \mathit{db}',U',P' \rangle,h' \rangle$ be a run in $[ r ]_{\sim_u}$.
From this, it follows that $h|_{u} = h'|_{u}$.
From this,  it follows that all queries in $\mathit{knowledge}(h,u)$ have the same result in $\mathit{db}$ and $\mathit{db}'$.
Therefore, $\mathit{db}' \in \{\mathit{db} \in \Omega_D^\Gamma \mid \bigwedge_{\phi \in \mathit{knowledge}(h,u)} [\phi]^{\mathit{db}} = \top\}$.

\para{$(\Leftarrow)$}
Let $C = \langle D, \Gamma\rangle$ be a system configuration, $\mathit{ATK}$ be an \atklog{} model, $f$ be the $C$-\acf{} obtained by parametrizing Algorithm~\ref{figure:angerona:algorithm} with $C$ and $\mathit{ATK}$, $u \in {\cal U}$ be a user,  and $r = \langle \langle \mathit{db},U,P \rangle,h \rangle$ be a $(C,f)$-run. 
Furthermore, let $\mathit{db}'$ be a database state in $\{\mathit{db}' \in \Omega_D^\Gamma \mid \bigwedge_{\phi \in \mathit{knowledge}(h,u)}[\phi]^{\mathit{db}'}\}$.
We now construct a $C$-state  $s$ and an history $h'$ such that (1) $\langle s, h' \rangle$ is a run, (2) $s.\mathit{db} = \mathit{db}'$, and (3) $\langle s, h' \rangle \in [ r ]_{\sim_u}$.
The $C$-state $s$ is defined as $\langle \mathit{db}', U, P\rangle$, whereas the history $h' = h|_{u}$.
We claim that $\langle s, h' \rangle$ is a run.
From this and  $h' = h|_{u}$, it follows that $\langle s, h' \rangle \sim_{u} r$.
From this, it follows that $\mathit{db}' \in \llbracket r \rrbracket_{\sim_u}$.

To prove our claim that $\langle \langle \mathit{db}', U, P\rangle, h|_u \rangle$ is a run, we prove the stronger fact that for all $0 \leq i \leq |h|$, $\langle \langle \mathit{db}', U, P\rangle, h^i|_u \rangle$ is a run.
We prove this by induction on $i$.

\para{Base case}
The base case $\langle \langle \mathit{db}', U, P\rangle, h^0|_u \rangle$ is trivial since $\mathit{db}' \in \Omega_D^\Gamma$ and $h^0|_u = \epsilon$, therefore $\langle \langle \mathit{db}', U, P\rangle, h^0|_u \rangle$ is a run.

\para{Induction Step}
Assume that $\langle \langle \mathit{db}', U, P\rangle, h^{i-1}|_u \rangle$ is a run.
We now show that $\langle \langle \mathit{db}', U, P\rangle, h^{i}|_u \rangle$ is a run as well.
Let $\langle \langle u',q\rangle, a, \mathit{res} \rangle$ be the last $C$-event in $h^i$.
There are two cases:
\begin{compactitem}
\item $u' \neq u$.
From this, $h^i|_u = h^{i-1}|_u$.
Therefore, $\langle \langle \mathit{db}', U, P\rangle, h^{i-1}|_u \rangle = \langle \langle \mathit{db}', U, P\rangle, h^{i}|_u \rangle$ and our claim directly follows from the induction hypothesis.

\item $u' = u$.
From this, $h^i|_u = h^{i-1}|_u \concat \langle \langle u,q\rangle, a, \mathit{res} \rangle$.
Assume, for contradiction's sake, that $\langle \langle \mathit{db}', U, P\rangle, h^{i}|_u \rangle$ is not a run.
From $\langle \langle \mathit{db}', U, P\rangle, h^{i-1}|_u \rangle$ is a run (which directly follows from the induction hypothesis), it follows that there are only three cases:
\begin{compactenum}
\item $f(\langle \mathit{db}', U,P\rangle, \langle u,q\rangle, h^{i-1}|_u ) \neq a$.
Since $a$ is the security decision associated with the last event in $h^i$, it follows that $a = f(\langle \mathit{db}, U,P\rangle, \langle u,q\rangle, h^{i-1} )$.
From this, it follows that $f(\langle \mathit{db}, U,P\rangle, \langle u,q\rangle, h^{i-1} ) \neq f(\langle \mathit{db}', U,P\rangle, \langle u,q\rangle, h^{i-1}|_u )$.
From $\mathit{knowledge}$'s definition, it follows that $\mathit{knowledge}(h^{i-1}, u) = \mathit{knowledge}(h^{i-1}|_{u}, u)$.
From this and \thref{theorem:angerona:same:result:knowledge}, it follows that $f(\langle \mathit{db}, U,P\rangle, \langle u,q\rangle, h^{i-1} ) = f(\langle \mathit{db}', U,P\rangle, \langle u,q\rangle, h^{i-1}|_u )$, leading to a contradiction.

\item $a = \bot$ but $\mathit{res} \neq \dagger$. This contradicts the fact that the history is derived from $h$, which comes from a proper run.

\item $a = \top$ but $\mathit{res} \neq [q]^{\mathit{db}'}$.
From $\mathit{res} \neq [q]^{\mathit{db}'}$ and $\mathit{res} = [q]^{\mathit{db}}$ (since $\mathit{res}$ comes from the run $r$), it follows that $[q]^\mathit{db} \neq  [q]^{\mathit{db}'}$.
There are two cases: 
\begin{compactitem}
\item $[q]^\mathit{db} = \top$. From this and the definition of $\mathit{knowledge}$, it follows that $q \in \mathit{knowledge}(h,u)$.
Therefore, $[q]^{\mathit{db}'} = \top$ follows from $\mathit{db}' \in \{\mathit{db}' \in \Omega_D^\Gamma \mid \bigwedge_{\phi \in \mathit{knowledge}(h,u)}[\phi]^{\mathit{db}'}\}$, leading to a contradiction.

\item $[q]^\mathit{db} = \bot$. From this and the definition of $\mathit{knowledge}$, it follows that $\neg q \in \mathit{knowledge}(h,u)$.
From this and $\mathit{db}' \in \{\mathit{db}' \in \Omega_D^\Gamma \mid \bigwedge_{\phi \in \mathit{knowledge}(h,u)}[\phi]^{\mathit{db}'}\}$, it follows that $[\neg q]^{\mathit{db}'} = \top$.
From this, $[q]^{\mathit{db}'} = \bot = [q]^\mathit{db}$, leading to a contradiction.
\end{compactitem} 

\end{compactenum}
\end{compactitem}
This completes the proof of our claim.
\end{proof}

We now prove that \tool{} provides the desired security guarantees.

\begin{proposition}
Let $C = \langle D, \Gamma\rangle$ be a system configuration, $\mathit{ATK}$ be a $C$-\atklog{} model, $f$ be the $C$-\acf{} obtained by parametrizing Algorithm~\ref{figure:angerona:algorithm} with $C$ and $\mathit{ATK}$, and 
$\mathit{ATK}' =  \lambda u \in {\cal U}. \llbracket \mathit{ATK}(u) \rrbracket_D$ be the $(C,f)$-attacker model associated to $\mathit{ATK}$.    
The \acf{} shown in Algorithm~\ref{figure:angerona:algorithm}, parametrized with $\mathit{ATK}$,  provides \confidentiality{} with respect to $C$ and $\mathit{ATK}'$.
\end{proposition}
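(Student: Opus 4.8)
Looking at this problem, I need to prove that the PDP defined by Algorithm~\ref{figure:angerona:algorithm} (ANGERONA) provides data confidentiality with respect to the configuration $C$ and the attacker model $\mathit{ATK}'$ induced by the AtkLog model $\mathit{ATK}$.

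\textbf{Overall approach.} The plan is to unfold the definition of data confidentiality and reduce it to a statement about individual secrets along a single run. By the definition of \confidentiality{}, I must show that for every $(C,f)$-run $r = \langle \langle \mathit{db}, U, P\rangle, h\rangle$, every user $u \in U$, and every secret $s = \langle u, \psi, l\rangle \in \mathit{secrets}(P,u)$, the run $r$ is secrecy-preserving for $s$ and $\mathit{ATK}'$. Fixing such $r$, $u$, and $s$, secrecy-preservation requires that for all $0 \leq i < |r|$, if $\llbracket \mathit{ATK}' \rrbracket(u, r^i)(\psi) < l$ then $\llbracket \mathit{ATK}' \rrbracket(u, r^{i+1})(\psi) < l$. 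The key is that the transition from $r^i$ to $r^{i+1}$ is governed by exactly one event $h(i+1)$, and $f$ authorized that event.

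\textbf{Key steps.} First I would reduce to the nontrivial case: if the $(i{+}1)$-st event is issued by a user $u' \neq u$, then $h^{i+1}|_u = h^i|_u$, so by \thref{theorem:knowledge:is:enough} the knowledge sets coincide and $\llbracket r^i \rrbracket_{\sim_u} = \llbracket r^{i+1} \rrbracket_{\sim_u}$, giving equal beliefs trivially. So assume the event is $\langle \langle u, q\rangle, a, \mathit{res}\rangle$ for the acting user $u$. If $a = \bot$ (the query was denied), then again $\mathit{knowledge}(h^{i+1}, u) = \mathit{knowledge}(h^i, u)$ since denied queries contribute nothing to $\mathit{knowledge}$, so the beliefs are unchanged and the implication holds. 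The substantive case is $a = \top$: the event was authorized by $f$, meaning $\tool{}(C, s, h^i, \mathit{ATK}, \langle u, q\rangle) = \top$. I would then argue by contraposition on the belief at step $i{+}1$. Suppose $\llbracket \mathit{ATK}' \rrbracket(u, r^i)(\psi) < l$ but $\llbracket \mathit{ATK}' \rrbracket(u, r^{i+1})(\psi) \geq l$. Using \thref{theorem:knowledge:is:enough}, the belief $\llbracket \mathit{ATK}' \rrbracket(u, r^{j})(\psi)$ equals $\llbracket \mathit{ATK}(u)\rrbracket_D(\psi \mid \mathit{K}_j)$ where $\mathit{K}_j$ is the set of states satisfying $\mathit{knowledge}(h^j, u)$; this is precisely what the $\mathit{secure}$ subroutine computes via the evidence statements and $\mathit{PL}$ translation. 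Since the actual result was $\mathit{res} = [q]^{\mathit{db}}$, and $\mathit{db} \in \llbracket r^i \rrbracket_{\sim_u}$ witnesses that $q$ (or $\neg q$) is satisfiable given $h^i$, the corresponding $\mathit{pox}$ check returns true, so ANGERONA actually invoked $\mathit{secure}$ on the extended history $h' = h^i \cdot \langle \langle u,q\rangle, \top, \mathit{res}\rangle$. If that $\mathit{secure}$ call returned false, the algorithm would have returned $\bot$, contradicting $a = \top$; hence $\mathit{secure}$ returned true, i.e. $\llbracket p\rrbracket_D(\mathit{head}(\psi)) < l$, which by the correctness of the $\mathit{PL}$ encoding and evidence conditioning equals $\llbracket \mathit{ATK}' \rrbracket(u, r^{i+1})(\psi) < l$, contradicting our assumption.

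\textbf{Main obstacle.} The delicate part is justifying that the probability $\llbracket p\rrbracket_D(\mathit{head}(\psi))$ computed inside $\mathit{secure}$ really equals the conditional belief $\llbracket \mathit{ATK}(u)\rrbracket_D(\psi \mid \mathit{K}_{i+1})$ of the AtkLog semantics. This requires three things to line up: (1) that $\mathit{PL}(\phi)$ faithfully encodes each relational calculus sentence $\phi$ so that $\mathit{head}(\phi)$ holds in a structure iff $\phi$ does (a standard but necessary correctness claim about the translation), (2) that adding $\mathit{evidence}(\mathit{head}(\phi), \mathtt{true})$ for each $\phi \in \mathit{knowledge}(h,u)$ correctly conditions on the event $\mathit{K}_{i+1}$ via \problog{}'s evidence semantics from Appendix~\ref{app:problog:introduction}, and (3) that conditioning is well-defined, i.e. $\mathit{K}_{i+1}$ has nonzero probability --- which is exactly what the preceding $\mathit{pox}$ check guarantees and why the algorithm gates the $\mathit{secure}$ call behind $\mathit{pox}$. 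I would state these correctness lemmas for $\mathit{PL}$ and the evidence conditioning explicitly and lean on \thref{theorem:knowledge:is:enough} to identify $\llbracket r^{j}\rrbracket_{\sim_u}$ with the model set of $\mathit{knowledge}(h^j, u)$, closing the gap between the symbolic knowledge-set computation in the algorithm and the indistinguishability-based semantics in the definition of $\mathit{ATK}'$.
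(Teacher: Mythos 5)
Your proposal is correct and takes essentially the same approach as the paper's proof: a case split on the single transition event (query by another user, denied query, authorized query), with the authorized case closed by observing that the actual database state makes the relevant $\mathit{pox}$ check succeed, so the corresponding $\mathit{secure}$ call must have returned true, which---via the identification of $\llbracket r \rrbracket_{\sim_u}$ with the set of states satisfying $\mathit{knowledge}(h,u)$---is exactly the statement that the updated belief stays below $l$. The auxiliary facts you flag (correctness of the $\mathit{PL}$ translation and evidence conditioning, and well-definedness of conditioning as guaranteed by the $\mathit{pox}$ gate) are precisely what the paper isolates as its separate soundness/completeness propositions for $\mathit{secure}$ and $\mathit{pox}$.
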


\begin{proof}
Let $C = \langle D, \Gamma\rangle$ be a system configuration, $\mathit{ATK}$ be a $C$-\atklog{} model, $f$ be the $C$-\acf{} obtained by parametrizing Algorithm~\ref{figure:angerona:algorithm} with $C$ and $\mathit{ATK}$, and   
$\mathit{ATK}' =  \lambda u \in {\cal U}. \llbracket \mathit{ATK}(u) \rrbracket_D$ be the $(C,f)$-attacker model associated to $\mathit{ATK}$.   
Furthermore, let $f$ be the \acf{} shown in Algorithm~\ref{figure:angerona:algorithm}.
Assume, for contradiction's sake, that $f$ does not provide confidentiality with respect to $C$ and $\mathit{ATK}'$.
From this, it follows that there is a run $r =\langle \langle \mathit{db}, U, P\rangle, h \rangle$, a user $u \in U$, a secret $\langle u, \phi, l \rangle \in P$, and a $0 \leq i \leq |h|-1$ such that  $\llbracket \mathit{ATK}' \rrbracket(u,r^i) ( \llbracket \phi \rrbracket) < l$ and $\llbracket \mathit{ATK}' \rrbracket(u,r^{i+1}) ( \llbracket \phi \rrbracket) \geq l$.
As a result, the $i$-th $C$-query is the one that leaked information.
There are two cases:
\begin{compactitem}
\item The $C$-query is $\langle u,q\rangle$, for some $q$.
There are three cases:
\begin{compactitem}
\item The result of the PDP is $\top$ and the query $q$ holds in the current state.
From this, it follows that there is a database state (namely, the one in $r$) where the query $q$ holds and the database is consistent with $\mathit{knowledge}(h^i,u)$.
From this, it follows that $\llbracket \mathit{ATK}' \rrbracket (u,r^{i})(\llbracket q\rrbracket) > 0$.
From this and \thref{theorem:pox:sound:complete}, it follows that $\mathit{pox}(C, \mathit{ATK} ,h^i,\langle u,q \rangle, \top)$ returns true.
From this and the fact that the query has been authorized, it follows that $\mathit{secure}(C, \mathit{ATK} ,h^{i+1},\langle u, \phi, l \rangle)$ returned true.
From this, the fact that $r$ is actually a run (since $q$ is satisfied in $r.\mathit{db}$), and \thref{theorem:secure:sound:complete}, it follows that  $\llbracket \mathit{ATK}' \rrbracket(u,r^{i+1}) ( \llbracket \phi \rrbracket) < l$, leading to a contradiction.

\item The result of the PDP is $\top$ and the query $q$ does not hold in the current state.
From this, it follows that there is a database state (namely, the one in $r$) where the query $\neg q$ holds and the database is consistent with $\mathit{knowledge}(h^i,u)$.
From this, it follows that $\llbracket \mathit{ATK}' \rrbracket (u,r^{i})(\llbracket \neg q\rrbracket) > 0$.
From this and \thref{theorem:pox:sound:complete}, it follows that $\mathit{pox}(C, \mathit{ATK} ,h^i,\langle u,q \rangle, \bot)$ returns true.
From this and the fact that the query has been authorized, it follows that $\mathit{secure}(C, \mathit{ATK} ,h^{i+1},\langle u, \phi, l \rangle)$ returned true.
From this, the fact that $r$ is actually a run (since $q$ is satisfied in $r.\mathit{db}$), and \thref{theorem:secure:sound:complete}, it follows that  $\llbracket \mathit{ATK}' \rrbracket(u,r^{i+1}) ( \llbracket \phi \rrbracket) < l$, leading to a contradiction.

\item The result of the PDP is $\bot$ (namely the query is not authorized).
From this and \thref{theorem:angerona:same:result:knowledge}, it follows that for any run $r' \sim_u r^i$ the PDP result is the same.
Therefore, $\llbracket  r^{i} \rrbracket_{\sim_u} = \llbracket  r^{i+1} \rrbracket_{\sim_u}$.
From this, $\llbracket \mathit{ATK}' \rrbracket(u,r^i) ( \llbracket \phi \rrbracket) < l$, and \atklog{} semantics, it follows that $\llbracket \mathit{ATK}' \rrbracket(u,r^{i+1}) ( \llbracket \phi \rrbracket) < l$, leading to a contradiction.
\end{compactitem}

\item The $C$-query is $\langle u',q\rangle$, where $u' \neq u$ and $q$ is a query.
From this, it follows that $\llbracket \mathit{ATK}' \rrbracket(u,r^i) ( \llbracket \phi \rrbracket) = \llbracket \mathit{ATK}' \rrbracket(u,r^{i+1}) ( \llbracket \phi \rrbracket)$ since $u$'s belief does not change in response to a query from another user (since $r^i|_u = r^{i+1}|_u$).
From this and $\llbracket \mathit{ATK}' \rrbracket(u,r^{i}) ( \llbracket \phi \rrbracket) < l$, it follows that both $\llbracket \mathit{ATK}' \rrbracket(u,r^{i+1}) ( \llbracket \phi \rrbracket) < l$ and $\llbracket \mathit{ATK}' \rrbracket(u,r^{i+1}) ( \llbracket \phi \rrbracket) \geq l$, leading to a contradiction.
\end{compactitem}
Since all cases ended in contradiction, this completes the proof of our claim.
\end{proof}

\begin{proposition}\thlabel{theorem:secure:sound:complete}
Let $C = \langle D, \Gamma\rangle$ be a system configuration, $\mathit{ATK}$ be a $C$-\atklog{} model, $f$ be the $C$-\acf{} obtained by parametrizing Algorithm~\ref{figure:angerona:algorithm} with $C$ and $\mathit{ATK}$, 
$\mathit{ATK}' = \lambda u \in {\cal U}. \llbracket \mathit{ATK}(u) \rrbracket_D$ be the $(C,f)$-attacker model associated to $\mathit{ATK}$,
 $r$ be a run in $\mathit{runs}(C,f)$, and $\langle u,\psi,l \rangle$ be a secret in $r.S$.
Then, $\mathit{secure}(C, \mathit{ATK} , h,\langle u, \psi, l\rangle)$ returns $\mathit{true}$ iff $\llbracket \mathit{ATK}' \rrbracket (u,\langle s, h\rangle) (\llbracket \psi \rrbracket) < l$, for any state $s$ that is compatible with $h$.
\end{proposition}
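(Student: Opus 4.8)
The plan is to show that the \problog{} program $p$ assembled inside $\mathit{secure}$ computes exactly the conditioned belief $\llbracket \mathit{ATK}(u)\rrbracket_D(\llbracket\psi\rrbracket \mid \llbracket r\rrbracket_{\sim_u})$, so that the test $\llbracket p\rrbracket_D(\mathit{head}(\psi)) < l$ is equivalent to $\llbracket \mathit{ATK}'\rrbracket(u,\langle s,h\rangle)(\llbracket\psi\rrbracket) < l$. First I would unfold both sides. On the semantic side, by definition $\llbracket\mathit{ATK}'\rrbracket(u,r)(\llbracket\psi\rrbracket) = \llbracket\mathit{ATK}(u)\rrbracket_D(\llbracket\psi\rrbracket \mid \llbracket r\rrbracket_{\sim_u})$, and by \thref{theorem:knowledge:is:enough} the conditioning event is $\llbracket r\rrbracket_{\sim_u} = \{\mathit{db}\in\Omega_D^\Gamma \mid \bigwedge_{\phi\in\mathit{knowledge}(h,u)}[\phi]^{\mathit{db}} = \top\}$. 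Since this set depends only on $\mathit{knowledge}(h,u)$ and $\mathit{secure}$ never inspects the state, the claimed independence from the choice of compatible $s$ is immediate. I would also record here that the conditioning is well defined: because $r$ is a genuine run its database lies in $\llbracket r\rrbracket_{\sim_u}$, and since $\mathit{ATK}(u)$ assigns positive mass to every state of $\Omega_D^\Gamma$, the event has nonzero probability.

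The central lemma I need is the correctness of the translation $\mathit{PL}$ in the probabilistic setting. The rules of $\mathit{PL}(\phi)$ are ordinary (non-probabilistic) \datalog{} rules over fresh predicate symbols, so for every structure of the belief program they add a unique, functionally determined extension in which the fresh head predicate $\mathit{head}(\phi)$ is derivable precisely when $[\phi]^{\cdot} = \top$ on the underlying $D$-part; this is the standard relational-calculus-to-logic-programming correspondence~\cite{abiteboul1995foundations} applied per possible world. Because the extension is deterministic, adding $\mathit{PL}(\phi)$ leaves the probability mass on the original vocabulary untouched: the $D$-marginal of the extended program equals $\llbracket\mathit{ATK}(u)\rrbracket_D$, while additionally $\llbracket p\rrbracket(\mathit{head}(\phi)) = \llbracket\mathit{ATK}(u)\rrbracket_D(\llbracket\phi\rrbracket)$ for each translated sentence. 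I would prove this by induction on the structure of the NF formula $\phi$, maintaining the invariant that each intermediate head predicate holds exactly on the tuples satisfying its subformula.

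With translation correctness in hand, I would analyze the evidence statements. Each $\mathit{evidence}(\mathit{head}(\phi),\mathtt{true})$, under the \problog{} evidence semantics of Appendix~\ref{app:problog:introduction}, restricts to and renormalizes over the structures in which $\mathit{head}(\phi)$ holds; taken together over all $\phi\in\mathit{knowledge}(h,u)$, the surviving event is exactly $\{s'' \mid \text{all } \mathit{head}(\phi) \text{ hold}\}$, whose $D$-projection is $\{\mathit{db}\in\Omega_D^\Gamma \mid \bigwedge_\phi[\phi]^{\mathit{db}}=\top\} = \llbracket r\rrbracket_{\sim_u}$ by the first paragraph. Hence the distribution defined by $p$ on the $D$-vocabulary is $\llbracket\mathit{ATK}(u)\rrbracket_D(\cdot \mid \llbracket r\rrbracket_{\sim_u})$ (adding the deterministic $\mathit{PL}(\psi)$ does not change this marginal). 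Evaluating the final query and using translation correctness for $\psi$ then gives $\llbracket p\rrbracket_D(\mathit{head}(\psi)) = \llbracket\mathit{ATK}(u)\rrbracket_D(\llbracket\psi\rrbracket \mid \llbracket r\rrbracket_{\sim_u})$; comparing both quantities against the threshold $l$ yields both directions of the iff.

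The main obstacle is the translation/evidence lemma rather than the outer argument. In particular, I must show carefully that interleaving the deterministic $\mathit{PL}$ rules with the probabilistic belief program does not create spurious grounded instances or alter the well-founded models on the original predicates — i.e., that the fresh predicates are strictly defined and never occur in the body of any belief-program rule — so that probability mass is preserved and the renormalization performed by the evidence semantics coincides exactly with Bayesian conditioning on $\llbracket r\rrbracket_{\sim_u}$. The compliance of $\mathit{ATK}(u)$ with $C$ (every model lies in $\Omega_D^\Gamma$) is what ties the \problog{}-level conditioning event to the policy-level event $\llbracket r\rrbracket_{\sim_u}$ and guarantees the normalizing sum is positive.
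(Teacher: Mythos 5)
Your proof follows essentially the same route as the paper's: identify the conditioning event with $\llbracket r\rrbracket_{\sim_u}$ via \thref{theorem:knowledge:is:enough}, argue that the \problog{} program assembled by $\mathit{secure}$ computes the corresponding conditional probability, and conclude by unfolding the definition of the attacker-model semantics. The only difference is one of detail: the paper's proof simply asserts that $\mathit{secure}$ returns $\llbracket\mathit{ATK}(u)\rrbracket_D(\llbracket\psi\rrbracket \mid \bigcap_{\phi\in\mathit{knowledge}(h,u)}\llbracket\phi\rrbracket) < l$, whereas you explicitly isolate and sketch the translation/evidence correctness lemma (deterministic $\mathit{PL}$ rules preserve the $D$-marginal, evidence statements implement Bayesian conditioning) and the well-definedness of the conditioning, so your version is, if anything, more complete.
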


\begin{proof}
Let $r = \langle s,h\rangle$ be a run such that $s$ is compatible with $h$.
The $\mathit{secure}$ procedure returns $\llbracket \mathit{ATK}(u) \rrbracket_D(\llbracket \psi \rrbracket | \bigcap_{\phi \in \mathit{knowledge}(h,u)} \llbracket \phi \rrbracket) < l$.
From \thref{theorem:knowledge:is:enough}, it follows that $\llbracket r \rrbracket_{\sim_u} = \{\mathit{db} \in \Omega_D^\Gamma \mid \bigwedge_{\phi \in \mathit{knowledge}(h,u)} [\phi]^{\mathit{db}} = \top\} = \bigcap_{\phi \in \mathit{knowledge}(h,u)} \llbracket \phi \rrbracket$.
We can therefore rewrite $\mathit{secure}$'s result as $\llbracket \mathit{ATK}(u) \rrbracket_D(\llbracket \psi \rrbracket | \llbracket \langle s, h\rangle \rrbracket_{\sim_u}) < l$.
This is exactly the definition of $\llbracket \mathit{ATK}' \rrbracket (u,\langle s, h\rangle) (\llbracket \psi \rrbracket) < l$.
\end{proof}

\begin{proposition}\thlabel{theorem:pox:sound:complete}
Let $C = \langle D, \Gamma\rangle$ be a system configuration, $\mathit{ATK}$ be a $C$-\atklog{} model, $f$ be the $C$-\acf{} obtained by parametrizing Algorithm~\ref{figure:angerona:algorithm} with $C$ and $\mathit{ATK}$, 
$\mathit{ATK}' = \lambda u \in {\cal U}. \llbracket \mathit{ATK}(u) \rrbracket_D$ be the $(C,f)$-attacker model associated to $\mathit{ATK}$,
 $r$ be a run in $\mathit{runs}(C,f)$, and $\langle u,q \rangle$ be a query.
Then, $\mathit{pox}(C, \mathit{ATK} , h,\langle u,q \rangle, v)$ returns $\mathit{true}$ iff $\llbracket \mathit{ATK}' \rrbracket (u,\langle s, h\rangle) (\llbracket \mathit{r}(q,v) \rrbracket) > 0$ for any state $s$ that is compatible with $h$, where $r(q,\top) = q$ and $r(q,\bot) = \neg q$.
\end{proposition}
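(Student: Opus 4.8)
The plan is to prove this proposition by the same strategy used for its sibling \thref{theorem:secure:sound:complete}, since the $\mathit{pox}$ subroutine differs from $\mathit{secure}$ only in (a) the formula whose probability is computed and (b) the final comparison ($> 0$ rather than $< l$). First I would fix a run $r = \langle s, h\rangle$ with $s$ compatible with $h$, and split on the value of $v$. For $v = \top$ the subroutine builds the program $p := \mathit{ATK}(u) \cup \bigcup_{\phi \in \mathit{knowledge}(h,u)} \left(\mathit{PL}(\phi) \cup \{\mathit{evidence}(\mathit{head}(\phi), \mathtt{true})\}\right) \cup \mathit{PL}(q)$ and returns $\llbracket p\rrbracket_D(\mathit{head}(q)) > 0$; for $v = \bot$ it uses $\mathit{PL}(\neg q)$ and returns $\llbracket p\rrbracket_D(\mathit{head}(\neg q)) > 0$. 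Writing $\mathit{r}(q,\top) = q$ and $\mathit{r}(q,\bot) = \neg q$ lets me treat both cases uniformly: the subroutine returns $\llbracket p\rrbracket_D(\mathit{head}(\mathit{r}(q,v))) > 0$.

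The core step is to identify $\llbracket p\rrbracket_D(\mathit{head}(\mathit{r}(q,v)))$ with a conditional probability under $\mathit{ATK}(u)$. By the correctness of the standard relational-calculus-to-logic-programming translation, the fresh predicate $\mathit{head}(\phi)$ holds in a structure exactly when $\phi$ does, so the set of structures complying with the evidence $\{\mathit{evidence}(\mathit{head}(\phi),\mathtt{true}) \mid \phi \in \mathit{knowledge}(h,u)\}$ projects (under $\llbracket \cdot\rrbracket_D$) onto $\bigcap_{\phi \in \mathit{knowledge}(h,u)} \llbracket \phi\rrbracket$. Invoking the \problog{} semantics of evidence statements (Appendix~\ref{app:problog:introduction}), conditioning on these evidence statements performs exactly Bayesian conditioning on that intersection. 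Hence $\llbracket p\rrbracket_D(\mathit{head}(\mathit{r}(q,v))) = \llbracket \mathit{ATK}(u)\rrbracket_D\!\left(\llbracket \mathit{r}(q,v)\rrbracket \,\middle|\, \bigcap_{\phi \in \mathit{knowledge}(h,u)} \llbracket \phi\rrbracket\right)$, and the subroutine's return value is the assertion that this quantity is strictly positive. I would note that the conditioning is well-defined because $\mathit{ATK}$ assigns positive probability to every state in $\Omega_D^\Gamma$ and the conditioning set is non-empty (it contains the state compatible with $h$).

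Finally I would apply \thref{theorem:knowledge:is:enough}, which gives $\llbracket \langle s, h\rangle \rrbracket_{\sim_u} = \{\mathit{db} \in \Omega_D^\Gamma \mid \bigwedge_{\phi \in \mathit{knowledge}(h,u)} [\phi]^{\mathit{db}} = \top\} = \bigcap_{\phi \in \mathit{knowledge}(h,u)} \llbracket \phi\rrbracket$, to rewrite the conditioning set as $\llbracket \langle s, h\rangle\rrbracket_{\sim_u}$. The subroutine's value then becomes $\llbracket \mathit{ATK}(u)\rrbracket_D(\llbracket \mathit{r}(q,v)\rrbracket \mid \llbracket \langle s, h\rangle\rrbracket_{\sim_u}) > 0$, which is precisely the definition of $\llbracket \mathit{ATK}'\rrbracket(u, \langle s, h\rangle)(\llbracket \mathit{r}(q,v)\rrbracket) > 0$. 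Since \thref{theorem:knowledge:is:enough} makes the conditioning set depend only on $\mathit{knowledge}(h,u)$ and not on the particular state, the equivalence holds for \emph{any} $s$ compatible with $h$, matching the quantifier in the statement.

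I expect the only delicate point to be the second paragraph: carefully justifying that the combination of the $\mathit{PL}$ translation and the evidence statements realizes conditioning on $\bigcap_\phi \llbracket\phi\rrbracket$ under the $D$-projected semantics $\llbracket\cdot\rrbracket_D$, rather than on the extended schema $D'$. I would handle this by appealing to the agreement property of schema extensions (each $D'$-state agreeing with a $D$-state, and $\llbracket p\rrbracket_D$ summing over $\mathit{EXT}(s,D,D')$), so that the projection commutes with conditioning. Everything else is a routine transcription of the $\mathit{secure}$ argument with the threshold comparison replaced by positivity.
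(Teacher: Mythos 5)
Your proposal is correct and follows essentially the same route as the paper's own proof: both identify $\mathit{pox}$'s return value with the conditional probability $\llbracket \mathit{ATK}(u)\rrbracket_D(\llbracket \mathit{r}(q,v)\rrbracket \mid \bigcap_{\phi \in \mathit{knowledge}(h,u)} \llbracket \phi\rrbracket) > 0$ and then invoke \thref{theorem:knowledge:is:enough} to rewrite the conditioning set as $\llbracket \langle s,h\rangle\rrbracket_{\sim_u}$, so that the claim reduces to the definition of $\llbracket \mathit{ATK}'\rrbracket$. In fact you are somewhat more careful than the paper, which simply asserts the first identification, whereas you justify it via the correctness of the $\mathit{PL}$ translation, the \problog{} evidence semantics, the $D$-projection, and the well-definedness of the conditioning.
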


\begin{proof}
Let $r = \langle s,h\rangle$ be a run such that $s$ is compatible with $h$.
The $\mathit{pox}$ procedure returns $\llbracket \mathit{ATK}(u) \rrbracket_D(\llbracket \mathit{r}(q,v) \rrbracket | \bigcap_{\phi \in \mathit{knowledge}(h,u)} \llbracket \phi \rrbracket) > 0$.
From \thref{theorem:knowledge:is:enough}, it follows that $\llbracket r \rrbracket_{\sim_u} = \{\mathit{db} \in \Omega_D^\Gamma \mid \bigwedge_{\phi \in \mathit{knowledge}(h,u)} [\phi]^{\mathit{db}} = \top\} = \bigcap_{\phi \in \mathit{knowledge}(h,u)} \llbracket \phi \rrbracket$.
We can therefore rewrite $\mathit{pox}$'s result as $\llbracket \mathit{ATK}(u) \rrbracket_D(\llbracket \mathit{r}(q,v) \rrbracket | \llbracket \langle s, h\rangle \rrbracket_{\sim_u}) < l$.
This is exactly the definition of $\llbracket \mathit{ATK}' \rrbracket (u,\langle s, h\rangle) (\llbracket \mathit{r}(q,v) \rrbracket) > 0$.
\end{proof}

\subsection{Complexity Proof}

First, we formalize literal queries, a fragment of relational calculus that can be composed with relaxed acyclic programs without modifying acyclicity.
Afterwards, we prove that for acyclic \atklog{} models and literal queries, \tool{} has \textsc{Ptime} data complexity.

A literal query is a quantifier-free relational calculus $(\Sigma, \mathbf{dom})$-formula either of the form $R(\overline{c})$ or $\neg R(\overline{c})$, where $R \in \Sigma$ and $\overline{c} \in \mathbf{dom}^{|R|}$.
We say that a literal query $R(\overline{c})$ (or $\neg R(\overline{c})$) is \emph{compatible with a relaxed acyclic program $p$} (with witness $(\cal O,D,U)$) iff $\mu_{p,\cal D,U}(R) \neq \emptyset$ implies  $\overline{c}\downarrow_{K} \in \mathit{pdom}(R,p,\mu_{p,\cal D,U}(R))$.
We now show that literal queries can be composed with acyclic  \problog{} programs without introducing cycles.

\begin{proposition}\thlabel{theorem:composition:literal:queries}
Let $D = \langle \Sigma, \mathbf{dom}\rangle$ be a database schema, $p$ be a relaxed acyclic program, $(\cal O,D,U)$ be a witness for $p$, $R$ be a predicate symbol in $\Sigma$, $\overline{c} \in \mathbf{dom}^{|R|}$ be a tuple, and $\phi$ be a boolean $D$-query.
If $\phi$ is a literal query compatible with $p$, then $p \cup \mathit{PL}(\phi)$ is a relaxed acyclic \problog{} program as well.
\end{proposition}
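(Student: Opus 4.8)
\emph{Proof plan.} Write $q := p \cup \mathit{PL}(\phi)$ and let $\ell$ denote the single body literal of the unique rule of $\mathit{PL}(\phi)$, so that $\ell = R(\overline{c})$ when $\phi = R(\overline{c})$ and $\ell = \neg R(\overline{c})$ when $\phi = \neg R(\overline{c})$; thus $\mathit{PL}(\phi) = \{H \leftarrow \ell\}$, where $H = \mathit{head}(\phi)$ is a fresh nullary predicate symbol that does not occur in $p$. Let $({\cal O},{\cal D},{\cal U})$ be the witness for $p$'s relaxed acyclicity, so $p'' := \alpha(\beta_{p,{\cal D},{\cal U}}(p))$ is $({\cal O},{\cal D},{\cal U})$-acyclic. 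The plan is to show that the \emph{same} triple $({\cal O},{\cal D},{\cal U})$ is a witness for $q$, by establishing that $\alpha(\beta_{q,{\cal D},{\cal U}}(q)) = p'' \cup \{H \leftarrow \ell'\}$ for a ground literal $\ell'$ over $R$, and that adjoining this single pendant rule preserves $({\cal O},{\cal D},{\cal U})$-acyclicity.

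The first and main step is to check that passing from $p$ to $q$ leaves the CPT-like analysis untouched, i.e.\ that $\mu_{q,{\cal D},{\cal U}} = \mu_{p,{\cal D},{\cal U}}$ on every predicate of $\Sigma$ and $\mu_{q,{\cal D},{\cal U}}(H) = \emptyset$. The latter is immediate since $H$ is nullary, so no nonempty index set is available. For the former, the only structural changes caused by $\mathit{PL}(\phi)$ are a fresh head-rule for $H$ and one new body-occurrence of $R$, so only the CPT-like status of $R$ could change. All clauses of the definition of $(K,{\cal D},{\cal U})$-CPT-like that constrain rules having $R$ in the head, all annotated-disjunction-set conditions, and the head-distinctness conditions are unaffected, because $\mathit{PL}(\phi)$ adds no rule with $R$ in the head. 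The $K$-fixed-domain clause survives because $\overline{c}$ consists of constants, so $\overline{c}\downarrow_{K}\in\mathbf{dom}^{|K|}$. The one clause that could genuinely fail is $|\mathit{heads}(R)| = |\mathit{pdom}(R,\cdot,K)|$, since the new body-occurrence contributes $\overline{c}\downarrow_{K}$ to $\mathit{pdom}(R,q,K)$. This is exactly where \emph{compatibility} is needed: by hypothesis $\overline{c}\downarrow_{K}\in\mathit{pdom}(R,p,K)$ whenever $K = \mu_{p,{\cal D},{\cal U}}(R)\neq\emptyset$, whence $\mathit{pdom}(R,q,K) = \mathit{pdom}(R,p,K)$ and the equality is preserved. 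I expect this $\mathit{pdom}$-preservation to be the crux: without compatibility the transformed program could change shape and the reduction below would break.

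Granting $\mu_{q,{\cal D},{\cal U}} = \mu_{p,{\cal D},{\cal U}}$, the transformation $\beta_{q,{\cal D},{\cal U}}$ acts on the $p$-part of $q$ exactly as $\beta_{p,{\cal D},{\cal U}}$ and, on the new rule, only rewrites $\ell$ into a ground literal $\ell'$ by projecting the arguments of $R$ away from $\mu_{p,{\cal D},{\cal U}}(R)$, while leaving $H$ unchanged; hence $\beta_{q,{\cal D},{\cal U}}(q) = \beta_{p,{\cal D},{\cal U}}(p) \cup \{H \leftarrow \ell'\}$. Because $H$ is fresh, the rule $H \leftarrow \ell'$ shares its head with no rule of $p$, so it forms a singleton domination class and neither dominates nor is dominated by any other rule; thus $\alpha$ distributes over the union and $\alpha(\beta_{q,{\cal D},{\cal U}}(q)) = p'' \cup \{H \leftarrow \ell'\}$.

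It remains to verify the four conditions of $({\cal O},{\cal D},{\cal U})$-acyclicity for $p'' \cup \{H \leftarrow \ell'\}$. Compliance with ${\cal O}$, ${\cal D}$, ${\cal U}$ is inherited from $p''$, since every annotation in these $\Sigma$-templates is derived from rules whose heads lie in $\Sigma$ and from ground atoms of $\Sigma$-predicates, and the added rule supplies neither (its head $H$ is fresh and it contributes no ground atoms). Negation-guardedness holds because the body literal $\ell'$ is ground, so if it is negative the inclusion $\mathit{vars}(\ell')\subseteq\bigcup_{l'\in\mathit{body}^+}\mathit{vars}(l')$ is vacuous. For the cycle conditions, note that $\graph(p'' \cup \{H \leftarrow \ell'\})$ is just $\graph(p'')$ together with the extra node $H$ and the single edge from $R$ to $H$; in particular $H$ is a sink incident to exactly one edge. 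No directed cycle passes through a sink, so every directed cycle of the extended graph already lies in $\graph(p'')$ and is covered by the guarded directed unsafe structure witnessing $p''$'s acyclicity. Likewise, any undirected cycle either avoids $H$ — hence is an undirected cycle of $\graph(p'')$, already covered — or reaches the pendant node $H$, which it can only do by traversing and then re-using its unique incident edge; contracting that pendant edge yields an undirected cycle of $\graph(p'')$ whose guarding undirected unsafe structure depends only on its directed sub-paths among $\Sigma$-predicates and therefore covers the original cycle as well. The two delicate points are precisely the $\mathit{pdom}$-preservation of the second paragraph and this pendant-edge contraction for undirected cycles; everything else is bookkeeping, and together they give that $({\cal O},{\cal D},{\cal U})$ witnesses the relaxed acyclicity of $p \cup \mathit{PL}(\phi)$.
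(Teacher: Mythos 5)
Your proof is correct, and its graph-theoretic core is the same as the paper's: both arguments rest on the observation that the fresh predicates introduced by $\mathit{PL}(\phi)$ are pendant sinks in the dependency graph, so every cycle, and hence every unsafe structure, of the extended program reduces, once the $\mathit{PL}$-part is deleted, to one of the transformed $p$. The paper runs this contrapositively: it assumes an unguarded unsafe structure $S$ in $\graph(\alpha(\beta(p \cup \mathit{PL}(\phi))))$, strips the $\mathit{PL}(\phi)$ rules from it (using that those rules are strongly and weakly connected, so stripping them cannot destroy guardedness), and obtains an unguarded structure contradicting $p$'s relaxed acyclicity; you argue directly by exhibiting covering guarded structures, which is equivalent. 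Where you genuinely add something is your first half: the paper's proof tacitly assumes that $\alpha(\beta_{q,{\cal D},{\cal U}}(q))$ decomposes as $\alpha(\beta_{p,{\cal D},{\cal U}}(p))$ together with the transformed $\mathit{PL}$ rules, i.e.\ that adjoining $\mathit{PL}(\phi)$ leaves the CPT-like analysis, and hence both transformations, untouched on $p$'s rules, and it never invokes the compatibility hypothesis at all. You correctly identify that this decomposition is exactly where compatibility is used ($\mathit{pdom}$-preservation, so that the clause requiring $|\mathit{heads}(\cdot)|$ to equal the cardinality of $\mathit{pdom}$ survives the new body occurrence of $R$), which is the only point where the proposition's hypothesis enters; in this respect your write-up is more complete than the paper's own proof. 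Two minor remarks: for $\phi = \neg R(\overline{c})$ the paper's translation produces two pendant rules, $\mathit{fresh}_1 \leftarrow \mathit{fresh}_2$ and $\mathit{fresh}_2 \leftarrow \neg R(\overline{c})$, rather than your single rule, which changes nothing since the chain of fresh predicates is still pendant; and to conclude $\mu_{q,{\cal D},{\cal U}}(R) = \mu_{p,{\cal D},{\cal U}}(R)$ you should also rule out that some strictly larger key set becomes CPT-like in $q$ (this follows because $\mathit{pdom}$ only grows under the union and head projections already lie in it, so CPT-likeness with respect to $q$ implies CPT-likeness with respect to $p$), a maximality corner that the paper glosses over as well.
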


\begin{proof}
Let $D = \langle \Sigma, \Gamma\rangle$ be a database schema, $p$ be a relaxed acyclic program, and $\phi$ be a boolean $D$-query.
Furthermore, let $\phi$ be a literal query.
Assume, for contradiction's sake, that $p \cup \mathit{PL}(\phi)$ is not a relaxed acyclic program.
This happens iff the program $\alpha(\beta(p \cup \mathit{PL}(\phi)))$ is not an acyclic program.
This happens iff the program $\alpha(\beta(p \cup \mathit{PL}(\phi)))$ contains an unguarded unsafe structure $S$ in the dependency graph of $\alpha(\beta(p \cup \mathit{PL}(\phi)))$.
The only interesting case is when $S$ contains rules from  $\mathit{PL}(\phi)$.
If $\phi = R(\overline{c})$, then $\mathit{PL}(\phi)$ contains a single rule $\mathit{fresh} \leftarrow R(\overline{c})$, where $\mathit{fresh}$ is a fresh predicate symbol.
Similarly, if $\phi = \neg R(\overline{c})$, then $\mathit{PL}(\phi)$ contains two rules $\mathit{fresh}_1 \leftarrow \mathit{fresh}_2$ and $\mathit{fresh}_2 \leftarrow \neg R(\overline{c})$, where $\mathit{fresh}_1$ and $\mathit{fresh}_2$ are fresh predicate symbols.
From this and the fact that all rules in $\mathit{PL}(\phi)$ are both strongly and weakly connected for ${\cal U}$, it follows that there is always unguarded unsafe structure $S'$ that can be obtained from $S$ by removing the rules in $\mathit{PL}(\phi)$.
This contradicts the fact that $p$ is a relaxed acyclic program.
\end{proof}

Here, we show that for acyclic \atklog{} models, \tool{} has \textsc{Ptime} data complexity.

\begin{proposition}
Let $C$ be a system configuration and $f$ be the \acf{} shown in Figure~\ref{figure:angerona:algorithm}.
For any acyclic $C$-\atklog{} model $\mathit{ATK}$, any action $(u,q)$ such that $q$ is a literal query compatible with $p$, and any run $r \in \mathit{runs}(C,f)$ such that (1) all sentences in $\mathit{knowledge}(r.h,u)$, for any $u \in r.U$, are  literal queries compatible with $p$, and (2) all secrets in $r.S$ are literal queries compatible with $p$, the algorithm shown in Figure~\ref{figure:angerona:algorithm} has \textsc{Ptime} data complexity.
\end{proposition}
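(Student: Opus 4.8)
The plan is to reduce ANGERONA's runtime to a bounded number of probabilistic-inference queries over relaxed acyclic \problog{} programs, and then invoke \thref{theorem:complexity}, which states that inference for such programs is \textsc{Ptime} in the size of the extensional database. First I would fix the data-complexity setting: the belief-program \emph{rules}, the policy $P$, the query structure of $q$, and the history $h$ are held fixed, while the extensional database — i.e. the probabilistic ground atoms of the belief programs $\mathit{ATK}(u)$ — is the varying input. In particular $\mathit{secrets}(P,u)$ has fixed cardinality, so the outer loop of Algorithm~\ref{figure:angerona:algorithm} executes a constant number of iterations, each performing a constant number of $\mathit{secure}$ and $\mathit{pox}$ calls. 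Hence it suffices to show that a single call of $\mathit{secure}$ (resp. $\mathit{pox}$) runs in \textsc{Ptime}.

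The core step is to show that every \problog{} program $p$ assembled inside $\mathit{secure}$ and $\mathit{pox}$ is relaxed acyclic. Such a $p$ has the form $\mathit{ATK}(u) \cup \bigcup_{\phi \in \mathit{knowledge}(h,u)} \mathit{PL}(\phi) \cup \mathit{PL}(\psi)$ together with evidence statements, where $\psi$ is either the secret or the (possibly negated) query. Since $\mathit{ATK}$ is an acyclic \atklog{} model, $\mathit{ATK}(u)$ is relaxed acyclic with some witness $(\mathcal{O},\mathcal{D},\mathcal{U})$. By hypothesis every $\phi \in \mathit{knowledge}(h,u)$ and $\psi$ is a literal query compatible with $\mathit{ATK}(u)$, so I would apply \thref{theorem:composition:literal:queries} iteratively: each $\mathit{PL}(\phi)$ introduces only fresh head predicates and leaves the annotations $\mu_{p,\mathcal{D},\mathcal{U}}$ of the original predicates untouched, so compatibility is preserved at every step and the same witness continues to certify relaxed acyclicity of the growing program. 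The evidence statements do not alter the dependency structure — they only condition the distribution, which in the BN encoding amounts to fixing the values of the corresponding variables and hence preserves the poly-tree shape guaranteed by \thref{theorem:bayesian:network:polytree}.

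Once $p$ is relaxed acyclic, each subroutine computes $\llbracket p \rrbracket_D(\mathit{head}(\psi))$ and compares it against a fixed rational, which is exactly an instance of the inference problem \textsc{Inf}; by \thref{theorem:complexity} this is \textsc{Ptime} in the size of the extensional database. The one point requiring care is that the exponent in that bound is governed by the number and width of the rules, not by the database size: the literal-query rules $\mathit{fresh} \leftarrow R(\overline{c})$ and $\mathit{fresh} \leftarrow \neg R(\overline{c})$ are variable-free, so they add only a constant amount to each ground graph and do not raise the maximal rule width $w$ determined by the fixed belief program. Thus the relaxed grounding stays of size $O(|\mathit{rules}|\cdot|\mathit{edb}|^{w})$ with $w$ fixed, keeping the inference polynomial. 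Combining the constant number of subroutine calls with their individual \textsc{Ptime} cost, and noting that constructing each $p$ (translating the literal queries, collecting $\mathit{knowledge}(h,u)$ and $\mathit{secrets}(P,u)$) is itself polynomial, yields the claim.

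I expect the main obstacle to be the second step — rigorously justifying that composing the belief program with the literal queries of $\mathit{knowledge}(h,u)$ and with the secret preserves relaxed acyclicity under a single fixed witness, i.e. that compatibility is maintained across the iterated composition and that the fresh-predicate rules never create a guarded cycle. A secondary subtlety is making precise what is fixed versus varying in the data-complexity measurement, so that the number of added rules does not inflate the exponent of the inference bound.
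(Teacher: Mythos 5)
Your proposal is correct and takes essentially the same route as the paper's own proof: it bounds the work by a constant number of $\mathit{secure}$/$\mathit{pox}$ invocations per secret, establishes that each assembled \problog{} program is relaxed acyclic via \thref{theorem:composition:literal:queries}, and then invokes \thref{theorem:complexity} to get \textsc{Ptime} inference in the size of the extensional database. Your extra care — iterating the composition lemma so one witness certifies the whole program, noting that evidence statements only fix variable values in the poly-tree BN, and observing that the variable-free literal-query rules cannot inflate the grounding exponent — merely makes explicit steps the paper's proof leaves implicit.
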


\begin{proof}
Let $B$ be the \problog{} program obtained from $\mathit{ATK}$.
The data complexity of $f$ is its complexity when only the database $r.\mathit{db}$ and $\mathit{edb}(B)$ change.
The algorithm in Figure~\ref{figure:angerona:algorithm} calls three times the $\mathit{secure}$ procedure and twice the $\mathit{pox}$ procedure.
The set $\mathit{knowledge}(h,u)$ can be constructed in $O(|h|)$.
From this, it follows that the program $p$ has size $O(|\mathit{edb}(B)|+|\mathit{rules}(B)|+|h|)$.
Furthermore, since all queries are literal queries, $B$ is a relaxed acyclic \problog{} program, and \thref{theorem:composition:literal:queries}, it follows that $p$ is a relaxed acyclic \problog{} program as well. 
From this and \thref{theorem:complexity}, the inference can be performed in $O(r^2\cdot e^{r} \cdot \mathit{max}(2,r)^{2+r})$, where $r \in O(|\mathit{rules}(B)|+(|r|+|\mathit{edb}(K)|)^{|\mathit{rules}(K)|}) $ and $e \in O(|\mathit{edb}(B)|)$.
Since $\mathit{rules}(B)$, $r$, and $K$ are fixed, then the data complexity of the $\mathit{secure}$ procedure is $O(|\mathit{edb}(B)|^k)$, for some $k \in \mathbb{N}$.
From this and the fact that Algorithm~\ref{figure:angerona:algorithm} three times the $\mathit{secure}$ procedure and twice the $\mathit{pox}$ procedure per secret in $r.S$, it follows that the data complexity of  Algorithm~\ref{figure:angerona:algorithm} is \textsc{Ptime}.
\end{proof}

\subsection{Completeness Proof}

We first introduce the notion of unconditionally secrecy-preserving query. Informally, a query $\langle u',q\rangle$ is unconditionally secrecy-preserving given a run $r$ and a secret $\langle u, \psi, l \rangle$ iff disclosing the result of $\langle u',q\rangle$ in any run $r' \sim_u r$ does not violate the secret.

\begin{definition}
Let $C = \langle D, \Gamma\rangle$ be a configuration, $f$ be a $C$-\acf{}, and $\mathit{ATK}$ be a $(C,f)$-attacker model.    
A query~$q$~is \emph{unconditionally secrecy-preserving} for a $(C,f)$-run $r$, a user $u$,  a secret $\langle u, \psi, l \rangle$,  and $\mathit{ATK}$ iff
 $\llbracket \mathit{ATK} \rrbracket (u,r)( \psi ) < l$ implies that
 \begin{inparaenum}[(a)]
 \item if $\llbracket \mathit{ATK} \rrbracket (u,r)( q ) > 0$, then $\llbracket \mathit{ATK} \rrbracket (u,r)( \psi \mid \{ \mathit{db} \in \llbracket r \rrbracket_{\sim_u} \mid [q]^\mathit{db} = \top \}) < l$, and 
 \item if $\llbracket \mathit{ATK} \rrbracket (u,r)( \neg q ) > 0$, then %$\llbracket \mathit{ATK} \rrbracket (u,\mathit{extend}(r, \langle u',q\rangle, \bot))( \psi) < l$,
 $\llbracket \mathit{ATK} \rrbracket (u,r)( \psi \mid \{ \mathit{db} \in \llbracket r \rrbracket_{\sim_u} \mid [q]^\mathit{db} = \bot \}) < l$.
 \end{inparaenum}
\end{definition}

We say that a \acf{} is \emph{complete} if it authorizes all unconditionally secrecy-preserving queries.
\tool{} is complete.
This directly follows from (1) \thref{theorem:knowledge:is:enough}, (2) the use of exact inference procedures for \problog{} programs, and (3) the fact that \tool{} directly checks whether queries are unconditionally secrecy-preserving or not.

\begin{proposition}
Let $C = \langle D, \Gamma\rangle$ be a system configuration, $\mathit{ATK}$ be a $C$-\atklog{} model, $f$ be the $C$-\acf{} obtained by parametrizing Algorithm~\ref{figure:angerona:algorithm} with $C$ and $\mathit{ATK}$,
% $\mathit{ATK}' = \langle \lambda u \in {\cal U}. \llbracket \mathit{ATK}(u) \rrbracket_D, \lambda u \in {\cal U}.\sim_u \rangle $ be the $(C,f)$-attacker model associated to $\mathit{ATK}$,
 $\mathit{ATK}' = \lambda u \in {\cal U}. \llbracket \mathit{ATK}(u) \rrbracket_D$ be the $(C,f)$-attacker model associated to $\mathit{ATK}$,
 $r = \langle \langle \mathit{db}, U,P \rangle, h \rangle$ be a run, and $\langle u,q\rangle$ be a $C$-query.
If $q$ is unconditionally secrecy-preserving for $u$, for all secrets $\langle u, \psi, l \rangle \in \mathit{secrets}(P,u)$, $r$, and $f$ authorizes $\langle u,q\rangle$. 
\end{proposition}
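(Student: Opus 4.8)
The plan is to prove the contrapositive: if \tool{} denies $\langle u,q\rangle$ on the run $r = \langle\langle \mathit{db}, U, P\rangle, h\rangle$, then there is a secret $\langle u,\psi,l\rangle \in \mathit{secrets}(P,u)$ for which $q$ fails to be unconditionally secrecy-preserving. Inspecting Algorithm~\ref{figure:angerona:algorithm}, the only way $f$ returns $\bot$ is that, for some $\langle u,\psi,l\rangle \in \mathit{secrets}(P,u)$, the guard $\mathit{secure}(C,\mathit{ATK},h,\langle u,\psi,l\rangle)$ succeeds and one of the two conditional branches fires: either (T) $\mathit{pox}(C,\mathit{ATK},h,\langle u,q\rangle,\top)$ holds and $\mathit{secure}(C,\mathit{ATK},h\cdot\langle\langle u,q\rangle,\top,\top\rangle,\langle u,\psi,l\rangle)$ fails, or (B) the symmetric situation with $\bot$ in place of $\top$. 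I will derive a contradiction with the definition of unconditional secrecy-preservation in each case.

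First I would translate the three procedure calls into statements about $\mathit{ATK}' = \lambda u \in {\cal U}.\, \llbracket \mathit{ATK}(u)\rrbracket_D$ using the characterizations already established. By \thref{theorem:secure:sound:complete}, taking the compatible state $\mathit{db}$, the success of $\mathit{secure}$ on $h$ yields $\llbracket \mathit{ATK}'\rrbracket(u,r)(\psi) < l$, which is precisely the hypothesis under which unconditional secrecy-preservation imposes its conclusions. For case (T), \thref{theorem:pox:sound:complete} (with $r(q,\top) = q$) turns the success of $\mathit{pox}(\ldots,\top)$ into $\llbracket \mathit{ATK}'\rrbracket(u,r)(q) > 0$, so clause (a) of the definition applies and guarantees $\llbracket \mathit{ATK}'\rrbracket(u,r)\bigl(\psi \mid \{\mathit{db}' \in \llbracket r\rrbracket_{\sim_u} \mid [q]^{\mathit{db}'} = \top\}\bigr) < l$.

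The key remaining step is to identify the failing $\mathit{secure}$ call on the extended history $h' := h\cdot\langle\langle u,q\rangle,\top,\top\rangle$ with this conditional belief. Because $\mathit{pox}(\ldots,\top)$ succeeded, the set $\{\mathit{db}' \in \llbracket r\rrbracket_{\sim_u} \mid [q]^{\mathit{db}'} = \top\}$ is non-empty (all states have positive initial probability), and the run construction from the $(\Leftarrow)$-direction of the proof of \thref{theorem:knowledge:is:enough} produces a run $r' := \langle s', h'\rangle$ with a state $s'$ compatible with $h'$; the belief value is independent of the chosen compatible state. Then \thref{theorem:secure:sound:complete} applies to $r'$, and the failure of the $\mathit{secure}$ test gives $\llbracket \mathit{ATK}'\rrbracket(u,r')(\psi) \ge l$. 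Now I would invoke \thref{theorem:knowledge:is:enough}: since $\mathit{knowledge}(h',u) = \mathit{knowledge}(h,u)\cup\{q\}$, it yields $\llbracket r'\rrbracket_{\sim_u} = \{\mathit{db}' \in \llbracket r\rrbracket_{\sim_u} \mid [q]^{\mathit{db}'} = \top\}$; since this set is contained in $\llbracket r\rrbracket_{\sim_u}$, a routine iterated-conditioning identity shows $\llbracket \mathit{ATK}'\rrbracket(u,r')(\psi)$ equals exactly the conditional belief appearing in clause (a). Thus clause (a) forces this quantity to be $< l$ while the failed $\mathit{secure}$ test forces it to be $\ge l$ — a contradiction. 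Case (B) is entirely symmetric, using $\neg q$, $r(q,\bot) = \neg q$, and clause (b).

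I expect the main obstacle to be the bookkeeping that makes the $\mathit{secure}$ characterization applicable on the extended history $h'$: \thref{theorem:secure:sound:complete} is phrased for an \emph{arbitrary} state compatible with the history, so one must first exhibit such a state, and this is precisely why each $\mathit{secure}$ call in the algorithm is guarded by the corresponding $\mathit{pox}$ test. A second, more routine, point is verifying that $\mathit{knowledge}(h',u)$ indeed equals $\mathit{knowledge}(h,u)\cup\{q\}$ (respectively $\cup\{\neg q\}$) and that the conditioning is well-defined (the conditioning set has positive probability), both of which follow directly from the definition of $\mathit{knowledge}$ and the success of $\mathit{pox}$. Everything else reduces to chaining the equivalences above.
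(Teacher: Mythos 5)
Your proposal is correct and follows essentially the same route as the paper's own proof: the paper also argues from a denial, uses \thref{theorem:secure:sound:complete} and \thref{theorem:pox:sound:complete} to translate the guard and branch tests into the statements $\llbracket \mathit{ATK}'\rrbracket(u,r)(\psi) < l$ and $\llbracket \mathit{ATK}'\rrbracket(u,r)(q) > 0$, constructs a witness state where $q$ holds that is compatible with the ($u$-projected) history, and then identifies the failing $\mathit{secure}$ call on the extended history with the conditional belief of clause (a) (resp.\ (b)), reaching the same contradiction. The only differences are cosmetic bookkeeping — contrapositive versus contradiction, and your explicit iterated-conditioning identity where the paper instead routes through run indistinguishability and \thref{theorem:angerona:same:result:knowledge} — so nothing of substance separates the two arguments.
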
 

\begin{proof}
Let $C = \langle D, \Gamma\rangle$ be a system configuration, $\mathit{ATK}$ be a $C$-\atklog{} model, $f$ be the $C$-\acf{} obtained by parametrizing Algorithm~\ref{figure:angerona:algorithm} with $C$ and $\mathit{ATK}$, 
$\mathit{ATK}' = \lambda u \in {\cal U}. \llbracket \mathit{ATK}(u) \rrbracket_D$ be the $(C,f)$-attacker model associated to $\mathit{ATK}$, 
$r = \langle \langle \mathit{db}, U,P \rangle, h \rangle$ be a run, and $\langle u,q\rangle$ be a $C$-query.
Assume, for contradiction's sake, that our claim does not hold.
Namely, there is query $\langle u,q\rangle$ such that (1) the query is unconditionally secrecy-preserving $u$, for all secrets $\langle u, \psi, l \rangle \in \mathit{secrets}(P,u)$, $r$, and $\mathit{ATK}'$, and (2) \tool{} (parametrized with $\mathit{ATK}$) does not authorize $\langle u,q\rangle$. 
Since \tool{} does not authorize $\langle u,q\rangle$, it means that $\tool{}(C,\langle \mathit{db}, U,P \rangle, h, \mathit{ATK}, \langle u,q \rangle) = \bot$.
From this, it follows that there is a secret $\langle u, \psi, l \rangle \in \mathit{secrets}(P,u)$ such that:
\begin{compactitem}
\item  $\mathit{secure}(C,\mathit{ATK}, h , \langle u, \psi, l \rangle) = \top$, and
\item one of the two cases hold:
\begin{compactitem}
\item $\mathit{pox}(C,\mathit{ATK},  h, \langle u, q \rangle, \top) = \top$ and $\mathit{secure}(C,\mathit{ATK},  h', \langle u, \psi, l \rangle) = \bot$, where $h' = h \cdot \langle \langle u,q \rangle, \top, top \rangle$, or
\item $\mathit{pox}(C,\mathit{ATK},  h, \langle u, q \rangle, \bot) = \top$ and $\mathit{secure}(C,\mathit{ATK}, \langle \mathit{db}, U,P \rangle, h'', \langle u, \psi, l \rangle) = \bot$, where $h'' = h \cdot \langle \langle u,q \rangle, \top, \bot \rangle$.
\end{compactitem}
\end{compactitem}
Without loss of generality, we assume that $\mathit{pox}(C,\mathit{ATK},  h, \langle u, q \rangle, \top) = \top$ and $\mathit{secure}(C,\mathit{ATK},  h' \langle u, \psi, l \rangle) = \bot$, where $h' = h \cdot \langle \langle u,q \rangle, \top, top \rangle$ (the proof for the other case is identical).
From $\mathit{secure}(C,\mathit{ATK}, h,  \langle u, \psi, l \rangle) = \top$, \thref{theorem:secure:sound:complete}, and the fact that $s$ is compatible with $h$, it follows that $\llbracket \mathit{ATK}' \rrbracket(u,r)(\psi) < l$.
From  $\mathit{pox}(C,\mathit{ATK}, h, \langle u, q \rangle, \top) = \top$, \thref{theorem:secure:sound:complete},  and the fact that $s$ is compatible with $h$, it follows that $\llbracket \mathit{ATK}' \rrbracket(u,r)(q) > 0$.
From this and \thref{theorem:angerona:same:result:knowledge}, it follows that there is a system state $s' = \langle \mathit{db}', U,P\rangle$ such that (1) $q$ holds in $\mathit{db}'$, (2) $s'$ is compatible with $h|_u$, and (3) $r' = \langle s', h|_u\rangle$ is a run indistinguishable from $r$.
From this, $\mathit{secure}(C,\mathit{ATK},  h', \langle u, \psi, l \rangle) = \bot$, $\mathit{secure}(C,\mathit{ATK},  h', \langle u, \psi, l \rangle) = \mathit{secure}(C,\mathit{ATK}, \langle \mathit{db}, U,P \rangle, h'|_u, \langle u, \psi, l \rangle)$, and \thref{theorem:secure:sound:complete}, it follows that  $\llbracket \mathit{ATK}' \rrbracket(u, \langle s', h'|_u\rangle)(\psi) \geq l$.
From this and the definition of $h'|_u$, it follows that  $\llbracket \mathit{ATK}' \rrbracket(u,r')(\psi \mid \{ \mathit{db} \in \llbracket r \rrbracket_{\sim_u} \mid [q]^\mathit{db} = \top \} ) \geq l$.
From this, $r \sim_u r'$, and $\llbracket \mathit{ATK}' \rrbracket(u,r) = \llbracket \mathit{ATK}' \rrbracket(u,r')$ if $r \sim_u r'$, it follows that $\llbracket \mathit{ATK}' \rrbracket(u,r)(\psi \mid \{ \mathit{db} \in \llbracket r \rrbracket_{\sim_u} \mid [q]^\mathit{db} = \top \} ) \geq l$.
Therefore, we have that $\llbracket \mathit{ATK}' \rrbracket(u,r)(\psi) < l$, $\llbracket \mathit{ATK}' \rrbracket(u,r)(q) > 0$, and $\llbracket \mathit{ATK}' \rrbracket(u,r)(\psi \mid \{ \mathit{db} \in \llbracket r \rrbracket_{\sim_u} \mid [q]^\mathit{db} = \top \} ) \geq l$.
This contradicts the fact that $q$ is an unconditionally secrecy-preserving query and completes the proof of our claim.
\end{proof}

}

\end{document}